\newcolumntype{L}[1]{>{\raggedright\arraybackslash}p{#1}}
\newcolumntype{C}[1]{>{\centering\arraybackslash}m{#1}}
\newcolumntype{R}[1]{>{\raggedleft\arraybackslash}p{#1}}
\renewcommand{\epsilon}{\varepsilon}
\newtheorem{theorem}{Theorem}[section]
\newtheorem*{claim*}{Claim}
\newtheorem{lemma}[theorem]{Lemma}
\newtheorem{proposition}[theorem]{Proposition}
\newtheorem{corollary}[theorem]{Corollary}
\theoremstyle{definition}
\newtheorem{definition}[theorem]{Definition}
\newtheorem{remark}[theorem]{Remark}
\newtheorem*{remark*}{Remark}
\def\Ex{\mathop{\mathbf{E}}\nolimits}
\renewcommand{\emptyset}{\varnothing}
\newcommand{\norm}[1]{\left\Vert#1\right\Vert}
\newcommand{\abs}[1]{\left\vert#1\right\vert}
\newcommand{\floor}[1]{\lfloor #1\rfloor}
\newcommand{\oq}{\overline{q}}
\newcommand{\BB}{{\bm B}}
\newcommand{\eps}{\varepsilon}
\newcommand{\NP}{\ensuremath{\mathbf{NP}}}
\newcommand{\RP}{\ensuremath{\mathbf{RP}}}
\newcommand{\defeq}{:=}
\newcommand{\wt}{\text{wt}}
\newcommand{\alphab}{\boldsymbol{\alpha}}
\newcommand{\betab}{\boldsymbol{\beta}}
\newcommand{\rb}{{\bm r}}
\newcommand{\cb}{{\bm c}}
\newcommand{\ib}{{\boldsymbol{i}}}
\newcommand{\emm}{\mathrm{e}}
\newcommand{\sgn}{\mathrm{sgn}}
\newcommand{\ones}{{\bm 1}}
\newcommand{\Jb}{{\bm J}}
\newcommand{\transpose}[1]{#1^{\texttt T}}
\def\*#1{\mathbf{#1}} 
\def\+#1{\mathcal{#1}} 
\def\-#1{\mathrm{#1}} 
\def\=#1{\mathbb{#1}} 
\def\MaxCut{\textnormal{\textsc{Max-Cut}}}
\def\MaxqCut{\textnormal{\textsc{Max-$q$-Cut}}}
\def\Mono{\operatorname{Mono}}
\def\Opt{\mathrm{Opt}}
\def\algbound{\frac{q^{K/3-1}}{4^KK^2}}
\def\searchbound{Kq^K}
\def\acbound{Kq^{K-1}}
\def\prob#1#2#3{\goodbreak\begin{list}{}{\labelwidth\z@ \itemindent-\leftmargin
                        \itemsep\z@  \topsep6\p@\@plus6\p@
                        \let\makelabel\descriptionlabel}
                \item[\it Name]#1
               \item[\it Instance]                #2
                \item[\it Output]#3
                \end{list}}
\title{Inapproximability of counting hypergraph colourings}
  \author{Author(s)}
\author{Andreas Galanis, Heng Guo, Jiaheng Wang}
\thanks{This project has received funding from the European Research Council (ERC) under the European Union's Horizon 2020 research and innovation programme (grant agreement No.~947778).}
\address[Andreas Galanis]{Department of Computer Science, University of Oxford, Wolfson Building, Parks Road, Oxford, OX1 3QD, United Kingdom. \textnormal{E-mail: \url{andreas.galanis@cs.ox.ac.uk}}}
\address[Heng Guo, Jiaheng Wang]{School of Informatics, University of Edinburgh, Informatics Forum, Edinburgh, EH8 9AB, United Kingdom. \textnormal{E-mail: \url{hguo@inf.ed.ac.uk}, \url{jiaheng.wang@ed.ac.uk}}}
\begin{document}

\begin{abstract}
Recent developments in approximate counting have made startling progress in developing fast algorithmic methods for  approximating the number of solutions to constraint satisfaction problems (CSPs) with large arities, using connections to the Lov\'asz Local Lemma. Nevertheless, the boundaries of these methods for CSPs with non-Boolean domain  are not  well-understood. Our goal in this paper is to fill in this gap and obtain strong inapproximability results by studying the prototypical problem in this class of CSPs,  hypergraph colourings. 

More precisely, we focus on the problem of approximately counting $q$-colourings on $K$-uniform hypergraphs with bounded degree $\Delta$. 
An efficient algorithm exists if $\Delta\lesssim \algbound$ (Jain, Pham, and Vuong, 2021; He, Sun, and Wu, 2021).
Somewhat surprisingly however, a hardness bound is not known even for the easier problem of finding colourings. For the counting problem, the situation is even less clear and there is no evidence of the right constant controlling the growth of the exponent in terms of $K$. 

To this end, we first establish that for general $q$ computational hardness for finding a colouring on simple/linear hypergraphs occurs at $\Delta\gtrsim \searchbound$, almost matching the algorithm from the Lov\'asz Local Lemma. Our second and main contribution is to obtain a far more refined bound  for the counting problem that goes well beyond the hardness of finding a colouring and which we conjecture is asymptotically tight  (up to constant factors). We show in particular that for all even $q\geq 4$ it is $\NP$-hard to approximate the number of colourings when $\Delta\gtrsim q^{K/2}$. Our approach is based on considering  an auxiliary weighted binary CSP model on graphs, which is obtained by ``halving'' the $K$-ary hypergraph constraints. This allows us to utilise reduction techniques available for the graph case, which hinge upon understanding the behaviour on random regular bipartite graphs that serve as gadgets in the reduction. The major challenge in our setting is to analyse the induced matrix norm of the interaction matrix of the new CSP which captures the most likely solutions of the system. In contrast to previous analyses in the literature, the auxiliary CSP  demonstrates both symmetry and asymmetry, making the analysis of the optimisation problem severely more complicated and demanding the combination of delicate perturbation arguments and careful asymptotic estimates.
\end{abstract}

\maketitle

\section{Introduction}

Constraint satisfaction problems (CSPs), such as $q$-colourings and  $k$-SAT, are perhaps the most-well studied problems in computer science.
We consider the case where the number of appearances of variables (also called degrees) is bounded by some absolute constant. For this class of CSPs, the Lov\'asz local lemma \cite{EL75} is a classical tool in combinatorics
that provides sharp degree thresholds under which the existence of solutions to CSPs is guaranteed.
After a long line of research \cite{Beck91, Alon91, MR98, CS00, Sri08, Mos09}, 
Moser and Tardos \cite{MT10} showed that, under the same conditions as the local lemma, an efficient algorithm exists to find a solution.
One remarkable aspect of this algorithm in the case of the bounded-degree $k$-SAT problem is that it gives up to lower order terms \cite{GST16} the location of the  algorithmic threshold for finding solutions \cite{onemore},  as the degree varies.

A related computational problem that has been intensively studied recently is to efficiently sample or approximately count solutions,\footnote{Approximate counting and sampling are often computationally equivalent, 
for example in the so-called ``self-reducible'' settings \cite{JVV86}. The local lemma setting is typically not self-reducible. However, reductions still exist \cite{FGYZ21,JPV20} between approximate counting and sampling without degradation of the parameters.} 
instead of merely finding one.
Under local lemma type conditions, there are some barriers, such as the connectivity barrier, for traditional approaches to approximate counting and sampling.
Recently, there have been some surprising developments that managed to bypass these barriers, making great progress towards a sampling version for the local lemma \cite{Moi19,GJL19,GLLZ19,FGYZ21,FHY20,JPV20,JPV21,he2021perfect}. However, there is no sharp threshold established in the sampling setting yet.

A major difference between searching and sampling is that sampling can be computationally harder in the local lemma settings.
For example, for $K$-CNF formulas where each variable appears at most $\Delta$ times,
if $\Delta\le \frac{2^K}{eK}+1$, then there must be a satisfying assignment, and it can be efficiently found;
yet if $\Delta\ge 5\cdot2^{K/2}$, there is no algorithm to sample or approximately count satisfying assignments unless $\NP=\RP$ \cite{BGGGS19},
even when all variables appear positively (the monotone case). 
For monotone $K$-CNFs, the threshold $2^{K/2}$ is sharp up to constants, 
because Hermon, Sly, and Zhang \cite{HSZ19} showed a complementary efficient algorithm for $\Delta\le c2^{K/2}$ where $c$ is a constant.

Our goal in this paper is to detail how this ``sampling-is-computationally-harder'' phenomenon manifests into local-lemma-type hypergraph problems with non-boolean domain and which are not necessarily monotone,  
and to make progress towards carving the computational thresholds for sampling problems in the local lemma setting.
Among the more recent algorithmic developments,
the most promising one to establish a computational transition is the problem of counting hypergraph colourings,
which was the original setting where the local lemma was developed \cite{EL75}.
For a hypergraph $H=(V,E)$, a proper $q$-colouring $\sigma:V\rightarrow \{1,2,\dots,q\}$ assigns a colour to each vertex,
such that no hyperedge is monochromatic under $\sigma$.
Suppose further that $H$ is $K$-uniform and $\Delta$ is the maximum degree of $H$.
In this setting, an efficient sampling algorithm exists if $\Delta\lesssim  \algbound$\footnote{Note that in \cite{JPV21,he2021perfect}, their $\Delta$ is the degree bound for the dependency graph, which is at most $\Delta(K-1)$ in our setting.} \cite{JPV21,he2021perfect},
where $\lesssim$ (and similarly, $\gtrsim$, $\asymp$) hides logarithm or other lower order terms.

Somewhat surprisingly, despite being a canonical and well-studied  problem,
not much is known regarding the computational hardness for bounded-degree hypergraph colourings, even for the search version.
Thus we first show that it is $\NP$-hard to find a proper hypergraph colouring if $q\ge 2$, $K\ge 2$ (but not $q=K=2$), and $\Delta \gtrsim \searchbound$ (see \Cref{thm:search-hardness}), 
and to approximately count if $q\ge 2$, $K\ge 4$, and $\Delta\gtrsim \acbound$ (see \Cref{thm:counting-simple-hardness}).
These bounds almost match the algorithmic Local Lemma threshold.
In fact, \Cref{thm:search-hardness} and \Cref{thm:counting-simple-hardness} still hold when restricting to simple\footnote{A hypergraph is called \emph{simple} (or linear) if the intersection of any two hyperedges has size at most $1$.} hypergraphs.
If we restrict monotone $K$-CNFs to simple hypergraphs, 
the condition of the aforementioned algorithmic result \cite{HSZ19} improves to $\Delta\lesssim \frac{2^K}{K^2}$.
In view of this result and the searching algorithm~\cite{MT10}, 
it seems reasonable to conjecture that the sharp hardness threshold (for both approximate counting and searching) in simple hypergraphs is $\Delta \gtrsim q^{K-1}$,
up to some polynomial factors in $K$.
Our hardness result almost matches it.

Our second and main contribution is a more refined hardness result for approximate counting and sampling, stated as follows.
\begin{theorem}  \label{thm:main-hardness}
  Let $q\ge 4$ be even, $K\ge 4$ be even, and $\Delta\geq 5 q^{K/2}$. It is $\NP$-hard to approximate the number of proper $q$-colourings in $n$-vertex $K$-uniform hypergraphs of maximum degree at most $\Delta$, even within a factor of $2^{cn}$ for some constant $c(q,K)>0$.
\end{theorem}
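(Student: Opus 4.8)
The plan is to prove \Cref{thm:main-hardness} through two reductions. The first is a combinatorial ``halving'' step: since $K$ is even, split every $K$-hyperedge $e$ into two halves $e=e_1\cup e_2$ of size $K/2$, and observe that $e$ is monochromatic under a colouring exactly when $e_1$ and $e_2$ are both monochromatic in the same colour. This leads to an auxiliary weighted binary CSP $\mathcal C=\mathcal C_{q,K}$ on (arbitrary) graphs: the spin set is $\{\ast\}\cup[q]$, a vertex in spin $\ast$ gets weight $q^{K/2}-q$ (the number of non-monochromatic colourings of $K/2$ vertices) and a vertex in spin $c\in[q]$ gets weight $1$, and an edge contributes $B_{ij}=1$ unless $i=j\in[q]$, in which case $B_{ij}=0$. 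Given a graph $G$ of maximum degree $d$, blow up each vertex of $G$ into a block of $K/2$ fresh vertices and each edge into a $K$-hyperedge joining the two incident blocks; the resulting hypergraph $H_G$ has maximum degree $d$ and exactly $Z_{\mathcal C}(G)$ proper $q$-colourings. So it suffices to prove that, for $d\ge 5q^{K/2}$, approximating $Z_{\mathcal C}$ on $n$-vertex maximum-degree-$d$ graphs within a factor $2^{\Theta(n)}$ is $\NP$-hard. Rescaling so that $\ast$ has weight $1$, the colour spins carry a tiny fugacity $\lambda=1/(q^{K/2}-q)$, so $d\asymp q^{K/2}$ is exactly the window in which $\mathcal C$ passes through a phase transition and hardness should emerge, mirroring the monotone $k$-\textsc{Sat} situation of \cite{BGGGS19}.

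For the auxiliary problem I would run the by-now-standard gadget reduction from \MaxCut using uniformly random $d$-regular bipartite graphs $\bm G$ on parts $U,V$ of size $n$ as gadgets, one per \MaxCut vertex, glued along \MaxCut edges. This requires three ingredients. (i) \emph{First moment}: identify $\lim_n \tfrac1n\log\Ex[Z_{\mathcal C}(\bm G)]$ with the maximum over marginal distributions $(\alpha,\beta)$ on $U,V$ of a functional $\Phi(\alpha,\beta)$ built from the vertex weights, the entropies of $\alpha$ and $\beta$, and a term encoding $B$ (the ``induced matrix norm'' of $B$ relative to the weight $q^{K/2}-q$), and describe its maximisers, the \emph{dominant phases}. (ii) \emph{Second moment} (with small-subgraph conditioning): show $\Ex[Z_{\mathcal C}(\bm G)^2]=\mathrm{poly}(n)\,\Ex[Z_{\mathcal C}(\bm G)]^2$, so that $\log Z$ concentrates and a random gadget is ``good'' with high probability. (iii) \emph{Reduction}: wire the gadgets so that the dominant phase of each gadget records the $\pm$-label of its \MaxCut vertex, the edge interaction $B$ rewards opposite phases across a \MaxCut edge, and the leading exponential order of $Z$ of the assembled graph is a strictly increasing function of the maximum cut; this produces the claimed $2^{cn}$ inapproximability gap, hence $\NP$-hardness.

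The technical core, and the step I expect to be the main obstacle, is the analysis of $\Phi$ — equivalently, of the induced matrix norm of $B$ — because $\mathcal C$ exhibits both symmetry and asymmetry: it is invariant under the full symmetric group $S_q$ acting on the colour spins (the colour block of $B$ is the adjacency matrix of $K_q$), yet the spin $\ast$ is singled out by its huge weight $q^{K/2}-q$. One must show: the all-$\ast$ marginal $\alpha=\beta=\delta_{\ast}$ ceases to be a maximiser once $d\gtrsim q^{K/2}$, becoming a saddle of $\Phi$ along the colour directions; the genuine maximisers form a finite family of ``semi-translation-invariant'' phases carrying a positive density of colour spins, in which — roughly — the colour spins used on $U$ and on $V$ are drawn from complementary colour sets of size $q/2$, which is precisely where evenness of $q$ enters; the colour symmetry together with the $U\leftrightarrow V$ flip permutes these phases in a way that lets the reduction in (iii) go through; and the Hessian of $\Phi$ at each maximiser is nondegenerate with the signature the second moment requires. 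Carrying this out calls for a delicate perturbative expansion of $\Phi$ about the symmetric configuration together with asymptotic estimates that stay valid throughout the narrow regime $d\approx 5q^{K/2}$, and controlling the interaction of the symmetric and asymmetric blocks of $B$ in these estimates — which is also what pins down the constant $5$ and the conjecturally optimal exponent $q^{K/2}$ — is the crux of the whole argument.
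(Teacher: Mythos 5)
Your roadmap matches the paper's at the top level: the halving construction yields a $(q{+}1)$-spin model with a special spin (your $\ast$, the paper's spin $0$), and the inapproximability is pushed through the Galanis–\v{S}tefankovi\v{c}–Vigoda (GSV) machinery reducing from \MaxCut{} via random $\Delta$-regular bipartite gadgets. The paper absorbs your vertex weight $q^{K/2}-q$ into an edge weight $t=(q^{K/2}-q)^{1/\Delta}$ by working with $\Delta$-regular instances, and it invokes \cite[Theorem 1.5]{galanis2015inapproximability} (\Cref{prop:GSV-hardness}) as a black box rather than redeveloping the first/second-moment analysis you sketch in (i)–(iii); those are packaging differences, not substantive ones. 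The dominant phases you predict — $q/2$ colours favoured on one side, the complementary $q/2$ on the other, with $\alphab\neq\betab$ — are exactly the $(q/2,q/2,0)$-type fixpoints of \Cref{lem:main}.

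The genuine gap is in how you propose to rule out the symmetric phase. You claim the all-$\ast$ marginal $\delta_\ast$ ``becomes a saddle of $\Phi$ along the colour directions'' and that a perturbative expansion about the symmetric configuration suffices. First, $\delta_\ast$ lies on the boundary of the simplex, and by \Cref{prop:gsv_4_1} the local maxima of $\Phi$ never occur on the boundary, so $\delta_\ast$ is not even a candidate; the real competitor is the interior $(q,0,0)$-type critical point where all $q$ colour spins have equal positive density and $\ast$ has a different one. Second, and more seriously, the paper shows (\Cref{lem:q00_inequal_stable}) that the $(q,0,0)$ fixpoint with $R_0/R_1\neq C_0/C_1$ is Jacobian stable, hence by \Cref{thm:gsv_4_2} a Hessian-dominant \emph{local maximum} of $\Psi_1$ — not a saddle. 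A local perturbative expansion therefore cannot dismiss it; this is precisely what distinguishes the present problem from the colour-symmetric Potts setting of \cite{galanis2015inapproximability}, where a Hessian computation does the job. The paper instead rules out this local maximum by a \emph{global} comparison: it augments the $(q,0,0)$ critical point with auxiliary coordinates $R_3,C_3$ forced to satisfy the stationarity equations, solves a two-variable system to locate a pair $(r_1,c_3)$ with $r_1>c_3$ (\Cref{lem:r1_c3_sol,lem:sym_c3_bound}), and then perturbs the type vector $\mathbf q=(q,0,0)$ to $(q-\eps,0,\eps)$, showing via \Cref{lem:dphi_dqi_expr} that this strictly increases $\overline{\Phi}$ (\Cref{lem:q00_not_max}). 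That interpolation-in-$\mathbf q$ argument is the missing idea in your proposal, and it is the part that justifies the paper's claim that perturbation arguments alone are insufficient here.
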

A few remarks are in order.
\begin{itemize}
  \item First, the threshold in \Cref{thm:main-hardness} is far more refined than the corresponding theorem (\Cref{thm:search-hardness}) for the searching problem.
    The exponent of the sampling threshold we achieve is roughly half of that of the threshold for the searching problem,
    which is analogous to the aforementioned (monotone) $K$-CNF example \cite{BGGGS19}.
    Interestingly, and in contrast to the $K$-CNF case, our colourings threshold is getting close to matching the constant in the exponent in the algorithmic threshold of $\Delta\lesssim  \algbound$ \cite{JPV21,he2021perfect}.
    We conjecture that our hardness threshold is asymptotically tight (up to constant factors),
    namely that for all $q\ge 2$ and $K\ge 2$, there is an efficient algorithm to approximately count the number of proper $q$-colourings in $K$-uniform hypergraphs whenever $\Delta\lesssim q^{K/2}$.
  \item Second, our result applies to only even $K$ for $K$-uniform hypergraphs.
    This is due to a particular halving construction we use in the reduction.
    The hardness results for (monotone) $K$-CNF~\cite{BGGGS19} allow hyperedges with sizes at least $K$.
    This is a stronger assumption and our hardness bound would still apply without changing the order.
    In fact, we expect a slight variant of our construction to work for odd $K$ to achieve a threshold of the same order. (See \Cref{rmk:odd-K}.)
    As we explain soon, the details for even $K$ are already very complicated,
    so for clarity and simplicity we did not pursue the odd $K$ case.
  \item Lastly, our result applies to an even number of colours $q$,     which is analogous to  hardness results for counting in the graph colouring setting \cite{galanis2015inapproximability}. 
    It was left as an open problem in \cite{galanis2015inapproximability} to handle odd $q$ (see also the recent work \cite{polymerschen}),
    and we met the same difficulty in our setting as well. Our hardness proof for counting builds on ideas from \cite{galanis2015inapproximability}, and we focus on the challenges needed to refine them in the hypergraph setting (rather than addressing the parity of $q$). 
    We expect that substantial new ideas are required to resolve the parity of $q$, even in the graph setting.
\end{itemize}

In order to show \Cref{thm:main-hardness},
we first reduce from an auxiliary weighted binary CSP, 
namely a ``spin system'' in graphs. 
Basically, we replace each vertex of the graph by a cluster of $K/2$ vertices in the hypergraph,
and an edge by a hyperedge of size $K$.
This construction is identical to the one in \cite{BGGGS19},
via which one reduces from weighted independent set in graphs to hypergraph independent sets.
However, in order to reduce to the hypergraph $q$-colouring problem, 
the variables of the weighted binary CSP take $q+1$ possible values.
There are $q$ values that correspond to ``pure'' colours,
and one special value that corresponds to a ``mixed'' state.
The interactions among these $q+1$ states are dictated by the hypergraph colouring problem,
and the mixed state behaves very differently from the pure colours; roughly, the pure colours behave symmetrically (as in the graph case) but the mixed state causes asymmetry.

Our next and main step is to show the desired hardness result for this spin system.
We follow an established route of establishing inapproximability for spin systems \cite{DFJ02,mossel2009hardness,Sly10,CCGL12,SS14,galanis2016inapproximability},
and in particular \cite{galanis2015inapproximability},
where the key is to understand the system on random regular bipartite graphs which are used as gadgets in the reduction.
More precisely, we need to analyze what are the most likely configurations of the system  on random regular bipartite graphs,   the so-called dominant phases (given by the normalised counts of the colours on each side of the graph). It was shown in \cite{galanis2015inapproximability} that these are captured by a certain matrix norm of the interaction matrix. These norms are in general very hard to penetrate analytically and it was already a major difficulty in the perfectly symmetric setting of  \cite{galanis2015inapproximability}.
For us, the presence of a special spin together with $q$ symmetric spins makes our spin system very different from all of the spin systems analyzed before and the mixture of symmetry and asymmetry make the analysis substantially harder.
For example, in \cite{galanis2015inapproximability}, to show that the two parts of the graph are unbalanced,
a simple Hessian calculation suffices,
whereas in our setting, there are Hessian stable balanced phases due to the presence of this special spin (that can be favoured against the others).
Also, being Hessian stable means that this phase is locally maximal,
making perturbation arguments hard to carry out.
What we do instead is to directly compare this phase with the dominant phase via a careful interpolation path and a sequence of delicate estimates. This reflects the main difference between our work and previous works,
namely that our estimates and perturbation arguments are significantly more delicate in order to rule out the local-maxima.

The main open problem left is to close the gap between the algorithm of \cite{JPV21,he2021perfect} and our hardness threshold, \Cref{thm:main-hardness},
although we expect that any progress towards a computational transition threshold now would come from the algorithmic side.
Another open problem is to handle the odd $q$ case in \Cref{thm:main-hardness} and similarly in \cite{galanis2015inapproximability}.
When $q$ is odd, the current perturbation-based analysis cannot determine, among a few candidates, which phases are dominant.
New ideas would be required to handle the odd $q$ case.

\section{Hardness for finding colourings in simple hypergraphs}

In this section we show hardness results for finding hypergraph colourings for parameters beyond the local lemma condition.
The key is to find instances that do not have proper colourings.

We will use a configuration model to construct random regular hypergraphs.
With constant probability, the resulting hypergraph is simple \cite{randomhyper,panagiotou}.
Frieze and Mubayi \cite{FM13} showed that if $q>c \left(\frac{\Delta}{\log \Delta}\right)^{\frac{1}{K-1}}$ for some constant $c=c(K)$ that only depends on $K$, 
then any simple $K$-uniform hypergraph with maximum degree $\Delta$ is $q$-colourable.
In particular, their condition holds if $\Delta\le c Kq^{K-1}\ln q$ for some constant $c=c(K)$.
Our next lemma complements their result by showing as an intermediate result that if $\Delta>K q^{K-1}\ln q+1$,
we can find a $K$-uniform hypergraph with maximum degree $\Delta$ which is not $q$-colourable. For our reductions, we use such hypergraphs to obtain a ``disequality'' gadget, as detailed in the lemma below.
\begin{lemma}\label{lem:notsatisfiable}
  Let $q,K\geq 2$ be integers. 
  Then, for all integers $\Delta> K q^{K-1}\ln q+1$, there exists a $q$-colourable $K$-uniform simple hypergraph $H$ with maximum degree $\Delta$ and two distinct vertices $u,v$
  such that the degree of $u$ is $1$, the degree of $v$ is at most $\Delta$,
  and for every $q$-colouring $\sigma$ of $H$ it holds that $\sigma(u)\neq \sigma(v)$.
\end{lemma}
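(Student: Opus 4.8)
The plan is to assemble $H$ from a suitable non-$q$-colourable hypergraph via a local ``vertex-splitting'' operation, and to produce such a hypergraph by a first-moment argument in the configuration model.

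\smallskip
\noindent\emph{Step 1: a disequality gadget by vertex splitting.} Suppose first that we have a $K$-uniform simple hypergraph $G$ of maximum degree at most some integer $d$ that is not $q$-colourable. Passing to an edge-minimal non-$q$-colourable sub-hypergraph, we may assume $G-g$ is $q$-colourable for every hyperedge $g$ of $G$; the resulting $G$ is still $K$-uniform, simple, of maximum degree at most $d$, and has at least one hyperedge. Since $q,K\ge 2$, every hyperedge of $G$ then contains a vertex of degree at least $2$: if some hyperedge $g$ had all its vertices of degree $1$, they would be isolated in $G-g$, and a proper $q$-colouring of $G-g$ could be extended to them leaving $g$ non-monochromatic, contradicting non-$q$-colourability of $G$. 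Pick a hyperedge $g^*=\{z_1,\dots,z_K\}$ with $\deg_G(z_1)\ge 2$ and form $H$ by splitting $z_1$ into two fresh vertices $u,v$: place $u$ into $g^*$ only (so $g^*$ becomes $\{u,z_2,\dots,z_K\}$) and substitute $v$ for $z_1$ in every other hyperedge through $z_1$. Then $\deg_H(u)=1$, $\deg_H(v)=\deg_G(z_1)-1\le d-1$, all other degrees are unchanged, $H$ is $K$-uniform, and $H$ is simple, because two images of $z_1$-hyperedges meet only in $v$ while the modified $g^*$ meets any other hyperedge $e$ inside $g^*\cap e$, which has size at most $1$. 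Contracting $u$ and $v$ back to a single vertex recovers $G$, which is not $q$-colourable, so every proper $q$-colouring $\sigma$ of $H$ satisfies $\sigma(u)\ne\sigma(v)$. On the other hand $H$ \emph{is} $q$-colourable: a proper $q$-colouring of $G-g^*$ (which exists by edge-minimality) must make $g^*$ monochromatic, say with colour $a$, hence colours $v,z_2,\dots,z_K$ all with $a$, and colouring $u$ with any $b\ne a$ leaves $\{u,z_2,\dots,z_K\}$ non-monochromatic. Finally, attaching hyperedges on fresh vertices to any vertex other than $u$ until its degree reaches $\Delta$ preserves simplicity and $q$-colourability without disturbing $u$ or $v$, so we may assume the maximum degree of $H$ equals $\Delta$.

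\smallskip
\noindent\emph{Step 2: existence of the non-colourable hypergraph.} It remains to produce, for some integer $d$ with $Kq^{K-1}\ln q<d\le\Delta$ (one exists since $\Delta>Kq^{K-1}\ln q+1$), a $K$-uniform simple hypergraph of maximum degree at most $d$ that is not $q$-colourable. Let $\mathcal H$ be a random $d$-regular $K$-uniform hypergraph on $n$ vertices from the configuration model, with $n\to\infty$ along multiples of $K$. By \cite{randomhyper,panagiotou}, $\mathcal H$ is simple with probability at least a positive constant $c=c(d,K)$. Let $N$ be the number of proper $q$-colourings of $\mathcal H$. For a colouring $\sigma$ with colour-class sizes $\vec n$, the expected number of monochromatic hyperedges of $\mathcal H$ is, by convexity, at least $(1-o(1))(nd/K)q^{1-K}=:\mu$, and a standard first-moment computation for the configuration model (handling the model's mild dependencies) gives $\Pr[\sigma\text{ proper}]\le e^{-(1-o(1))\mu}$ uniformly in $\sigma$; summing over all $q^n$ colourings, $\Ex[N]\le q^ne^{-(1-o(1))\mu}=\exp(n\ln q-(1-o(1))(nd/K)q^{1-K})\to 0$ precisely when $d>Kq^{K-1}\ln q$. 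Hence for all large $n$, $\mathcal H$ is simple with probability at least $c$ and non-$q$-colourable with probability $1-o(1)$, so both occur simultaneously with positive probability, and any such realisation is the hypergraph needed in Step 1, finishing the proof.

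\smallskip
\noindent\emph{Main obstacle.} Step 1 is essentially bookkeeping; the real work is the first-moment estimate of Step 2 with the \emph{sharp} constant $Kq^{K-1}\ln q$. Since the configuration model is not a product space, one must evaluate $\mu$ carefully and turn $\Ex[N]=o(1)$ into a probabilistic statement that coexists with the (only constant-probability) event that the random regular hypergraph is simple. For $K=2$ one can bypass the configuration model entirely and feed Step 1 an explicit bounded-degree critically non-$q$-colourable graph, e.g.\ $K_{q+1}$, or an odd cycle when $q=2$.
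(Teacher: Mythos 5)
Your proposal is correct and arrives at the same conclusion by the same overall strategy (a first-moment argument in the configuration model to produce a non-$q$-colourable simple regular hypergraph, followed by a local surgery on an edge-minimal sub-hypergraph to extract the disequality gadget), but your Step~1 is a genuine simplification of the gadget construction that the paper imports from \cite{hyper2spin}.

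\textbf{Step 1.} The paper works with an edge-minimal non-colourable $H_0'$, selects a hyperedge $e$, and builds a chain of hypergraphs $H_1,\dots,H_{|S|}$ by progressively substituting fresh vertices $u_1,\dots,u_j$ into $e$, stopping at the first $j$ where $H_j$ becomes colourable; the disequality pair is then $(u_j,v_j)$. Your version observes, correctly, that $j=1$ always suffices: every vertex of a hyperedge $g^*$ in an edge-minimal non-colourable hypergraph must have degree at least $2$ (else it is isolated in $G-g^*$ and could be recoloured to break monochromaticity of $g^*$, producing a colouring of $G$), and every proper colouring of $G-g^*$ necessarily makes $g^*$ monochromatic. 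Splitting one such vertex $z_1$ into $u$ (degree $1$, only in $g^*$) and $v$ (inheriting $z_1$'s other incidences) therefore gives a simple $K$-uniform hypergraph with maximum degree at most $d\le\Delta$, it is $q$-colourable (colour $u$ differently from the monochromatic colour of $g^*$), and every proper colouring forces $\sigma(u)\neq\sigma(v)$ since contracting $u,v$ would recolour $G$. This is exactly the paper's $H_1$, together with the (new, correct) observation that $H_1$ is already colourable, making the iterative search and the ``find the first colourable $H_j$'' step in the paper superfluous. Your simplicity check for the split hypergraph is also correct.

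\textbf{Step 2.} Here your argument coincides in substance with the paper's. One caution: the inference from ``$\Ex[\text{\# monochromatic hyperedges}]\ge\mu$'' to ``$\Pr[\sigma\text{ proper}]\le e^{-(1-o(1))\mu}$'' is \emph{not} a formal consequence of the first-moment method alone; it requires actually evaluating the probability in the pairing model. This is exactly what the paper does: for a fixed colour-class profile $\alphab$ it sums over edge-colour-frequency vectors $\betab$, obtains a Lagrange optimum at $\betab^*\propto\prod_i\alpha_i^{t_{i,\ib}}$ restricted to non-monochromatic tuples, and reaches $\Pr[\sigma\text{ proper}]\approx(1-\norm{\alphab}_K^K)^m\le(1-q^{1-K})^m$, which is what your bound secretly relies on. Your sketch is correct in spirit (and correctly identifies this as ``the real work'' with the sharp constant), but a reader should not take the stated inequality at face value without performing this entropy/Lagrange computation in the configuration model. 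The final handling of the coexistence with simplicity (constant-probability event) is the same as in the paper.

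Overall: the approach is sound, Step 1 is a small but genuine streamlining of the published argument, and Step 2 is the published argument presented at a coarser level of detail.
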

\begin{proof} 
  We first argue that for all $\Delta> K q^{K-1}\ln q$  there is a $\Delta$-regular hypergraph $H_0$ such that $Z_{col}(H_0)=0$,
  where $Z_{col}(H)$ denotes the number of $q$-colourings in $H$.

Let $n$ be such that $m=n\Delta/K$ is an integer. We sample a $K$-uniform $\Delta$-regular hypergraph $H$ according to the following pairing model (see \cite{panagiotou}). Start with a bipartite graph with the points $[n]\times [\Delta]$ on the left and the points $[m]\times [K]$ on the right, and pair the two sides using a uniformly random perfect matching; the vertex set of the  final hypergraph $H$  is obtained in the natural way by projecting the set $[n]\times [\Delta]$ onto $[n]$. Note that it will be convenient to view the hyperedges of $H$ for now as ordered tuples rather than sets; this does not make any difference when considering colourings of $H$ due to the symmetry among possible ordering of the colours within the hyperedge. It is a well-known fact, see for example  \cite[Lemma 2]{randomhyper} or \cite[Theorem 2.4 \& Appendix A.4]{panagiotou}, that the probability that $H$ is simple is bounded away from zero for all sufficiently large $n$.\footnote{We remark that the term ``simple'' has different meanings across the literature. A simple hypergraph in this paper actually corresponds to a configuration without $4$-cycles in the context of \cite[Theorem 2.4]{panagiotou} (where one should plug in $\ell=2$), or a hypergraph without $2$-cycles in \cite{randomhyper}.}

For a colouring $\sigma:[n]\rightarrow[q]$, a colour $i\in [q]$ and a $K$-tuple of colours $\ib=(i_1,\hdots,i_K)\in[q]^K$, let $n\alpha_i$ be the number of vertices with colour $i$, and $m\beta_{\ib}$ be the number of hyperedges whose vertices are coloured according to $\ib$ (i.e., the $j$-th vertex of the hyperedge takes the colour $i_j$). Let $\alphab=\{\alpha_i\}_{i\in [q]}$ and $\betab=\{\beta_{\ib}\}_{\ib\in [q]^K}$, and note that $(\alphab,\betab)\in S_q$, where $S_q$ is the space of all pairs of vectors in $\mathbb{R}^q\times \mathbb{R}^{q^K}$ satisfying 
\begin{equation}\label{eq:Sq}
\begin{gathered}
\mbox{$\sum_{i\in [q]}$}\alpha_i=1,\quad \mbox{$\sum_{\ib\in [q]^K}$}t_{i,\ib}\beta_{\ib}=K\alpha_i \mbox{ for } i\in [q]\\
\alpha_i\geq 0 \mbox{ for } i\in [q],\quad \beta_{\ib}\geq 0 \mbox{ for } \ib\in [q]^K,\quad \beta_{(i,i,\hdots,i)}=0 \mbox{ for } i\in [q],
\end{gathered}
\end{equation}
where for $i\in [q]$ and $\ib\in [q]^K$ we denote by $t_{i,\ib}$ the number of occurrences of colour $i$ in the tuple $\ib$. 
Then, we have 
\[\mathbf{E}[Z_{col}(H)]=\frac{1}{(\Delta n)!}\sum_{(\alphab,\betab)\in S_q;\, n\alphab\in \mathbb{Z}^q, m\betab\in \mathbb{Z}^{q^K}}  \binom{n}{\alpha_1n,\hdots,\alpha_qn}\binom{m}{\beta_1m,\hdots,\beta_{q^K}m}\prod_{i\in [q]} (\Delta \alpha_i n)!,\]
since a term in the sum corresponding to $(\alphab,\betab)$ accounts for the number of ways to choose $\sigma$ and $H$ with  vertex-colour frequencies given by the vector $\alphab$ and edge-colour frequencies given by the vector $\betab$.  Using Stirling's approximation $(2\pi k)^{1/2}(k/\emm)^k\leq k!\leq \emm k^{1/2} (k/\emm)^{k}$ that holds for all integers $k\geq 1$, we obtain by  expanding the terms inside the sum (note that there are at most $n^{q^K+q}$ of them) that
\begin{equation}\label{eq:Zupperbound}
\mathbf{E}[Z_{col}(H)]\leq n^{O(1)} \exp\Big(n \max_{(\alphab,\betab)\in S_q}F(\alphab,\betab)\Big),
\end{equation}
where $F(\alphab,\betab)=-(\Delta-1)h(\alphab)+\frac{\Delta}{K}h(\betab)$ and $h(\cdot)$ is the entropy function (here, we adopt the usual convention that $0\ln 0=0$ which makes $h$ and $F$ continuous and therefore the maximum in \eqref{eq:Zupperbound} well-defined).

For $(\alphab,\betab)\in S_q$, we have that $\alpha_i=\tfrac{1}{K}\mbox{$\sum_{\ib\in [q]^K}$}t_{i,\ib}\beta_{\ib}$ for $i\in [q]$, and hence
\[F(\alphab,\betab)=h(\alphab)+\tfrac{\Delta}{K} G(\alphab,\betab) \mbox{ where } G(\alphab,\betab)=h(\betab)-\mbox{$\sum_{i\in [q]}$}\ln(\alpha_i)\mbox{$\sum_{\ib\in [q]^K}$} t_{i,\ib}\beta_\ib. \]
Note that for a fixed vector $\alphab$, the function $G_{\alphab}(\betab):=G(\alphab,\betab)$ is concave and the method of  Lagrange multipliers yields that the maximum of $G_{\alphab}$  happens at $\betab^*=\{\beta_\ib^*\}_{\ib\in[q]^K}$ that satisfies
\[\beta_{\ib}^*=\frac{\prod_{i\in [q]}(\alpha_i)^{t_{i,\ib}}\prod_{i\in [q]}\mathbf{1}_{\ib^*\neq (i,i,\hdots,i)}}{1-\norm{{\alphab}}^K_K} \mbox{ for $\ib\in [q]^K$}, \quad G_{\alphab}({\betab}^*)=\ln(1-\norm{\alphab}^K_K).\]
It follows that
\begin{equation}\label{eq:Fab243}
F(\alphab,\betab)\leq h(\alphab)+\tfrac{\Delta}{k}\ln(1-\norm{\alphab}^K_K)\leq \ln\Big( q\big(1-\tfrac{1}{q^{K-1}}\big)^{\Delta/K}\Big),
\end{equation}
where the last inequality follows from $h(\alphab)\leq \ln q$ and $\norm{\alphab}^K_K\geq 1/q^{K-1}$, both of which are simple applications of Jensen's inequality. For $\Delta> K q^{K-1}\ln q$, the r.h.s.~of \eqref{eq:Fab243} is negative and therefore $\max_{(\alphab,\betab)\in S_q}F(\alphab,\betab)<0$. From \eqref{eq:Zupperbound}, we conclude that $Z_{col}(H)=0$ with probability $1-\exp(-\Omega(n))$. By a union bound, we obtain a simple $\Delta$-regular hypergraph $H_0$ with $Z_{col}(H_0)=0$, as claimed.

To obtain the final hypergraph $H$ with the desired property, we invoke an argument in \cite[Lemma 28]{hyper2spin} (which in turn was inspired by \cite{onemore}).
We give the details here for completeness.
Given $H_0=(V,\+E)$ with $Z_{col}(H_0)=0$,
we can remove hyperedges from $\+E$ one by one until removing any more hyperedge makes $Z_{col}(H)>0$.
Call the resulting hypergraph $H_0'=(V,\+E_0')$.
Clearly $H_0'$ is simple and has at least one hyperedge.

Choose an arbitrary hyperedge $e\in \+E_0'$.
Let $S\subseteq e$ be the set of vertices with non-zero degree in $H_0'-e$.
If $S = \emptyset$, then $e$ is disconnected from the rest of the graph.
Thus as $H_0'$ is not $q$-colourable, removing $e$ would not make the hypergraph $q$-colourable.
This contradicts to the minimality of $H_0'$ and thus $S\neq \emptyset$.
Denote the vertices in $S$ by $v_1,\dots,v_i$, and the vertices in $e\setminus S$ by $v_{i+1},\dots,v_K$.
We construct $i$ simple hypergraphs $H_1,\dots,H_i$ where for $1\le j\le i$, 
in $H_j$ we introduce new vertices $u_1,\dots,u_j$ and replace the hyperedge $e$ by $e_j\defeq\{u_1,\dots,u_j,v_{j+1},\dots,v_K\}$.
By minimality of $H_0'$ again, $Z_{col}(H_i)>0$ as $e_i$ is disconnected from the rest of $H_i$.
Thus we can find the smallest $j\ge 1$ such that $Z_{col}(H_j)>0$ and $Z_{col}(H_{j-1})=0$ (or $Z_{col}(H_{0}')=0$ if $j=1$).
For any proper colouring $\sigma$ of $H_j$, 
if $\sigma(u_j)=\sigma(v_j)$,
$\sigma$ would be a proper colouring of $H_{j-1}$,
contradicting to the above.
Thus it must hold that for any colouring $\sigma$ of $H_j$, $\sigma(u_j)\neq\sigma(v_j)$.
This is the hypergraph required by the lemma,
with $u=u_j$ and $v=v_j$.
Moreover, the degree of $u_j$ is $1$, and the degree of $v_j$ is at most $\Delta$.
\end{proof}

\Cref{lem:notsatisfiable} leads to the following hardness result,
where we lose a factor $2q$ in the degree bound due to the reduction.
We note that for $q=2$, $K=3$, $\Delta=4$, and simple hypergraphs, $\NP$-hardness is known \cite{DD20}.
However, the main point of the next theorem is that there is a degree bound that scales roughly as $q^K$ and makes the problem $\NP$-hard.

\begin{theorem}\label{thm:search-hardness}
Let $q,K\geq 2$ be integers with $(q,K)\neq (2,2)$. Then, it is $\NP$-hard to find a $q$-colouring on a $K$-uniform simple hypergraph of maximum degree at most $\Delta$, when $\Delta\geq 2Kq^{K}\ln q+2q$.
\end{theorem}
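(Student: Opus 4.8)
The plan is to reduce from the hypergraph colourability problem itself: we will take an arbitrary $K$-uniform simple hypergraph $G$ of maximum degree at most $\Delta'$ (for a suitable $\Delta'$ that makes colourability $\NP$-hard) and, using the disequality gadget from \Cref{lem:notsatisfiable}, amplify it into an instance where deciding the existence of a colouring is hard while keeping the uniformity $K$, simplicity, and maximum degree at most $\Delta$. The key point is that \Cref{lem:notsatisfiable}, applied with the parameter $\Delta_{\mathrm{gad}} := K q^{K-1}\ln q + 2$, gives a simple $q$-colourable $K$-uniform hypergraph $H$ with two vertices $u,v$ such that $\deg_H(u)=1$, $\deg_H(v)\le \Delta_{\mathrm{gad}}$, and $\sigma(u)\ne \sigma(v)$ in every colouring $\sigma$ of $H$. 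This is exactly a ``disequality'' constraint realised within the right uniformity.

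\smallskip
\noindent\textbf{Step 1: choose the base hard instance.} First I would recall that finding a $q$-colouring of a $K$-uniform (simple) hypergraph is $\NP$-hard in general (for all $(q,K)\ne(2,2)$ with $q,K\ge 2$; this is a classical fact, and the footnote about $q=2,K=3,\Delta=4$ from \cite{DD20} handles the remaining small case, while for larger $q$ or $K$ one reduces from graph colouring or from hypergraph $2$-colourability). So fix a base instance $G=(V,E)$ of this problem; by the standard degree-reduction trick for CSPs (replacing each high-degree vertex by many copies connected through equality gadgets) we may assume $G$ has maximum degree bounded by some constant, but in fact we do not even need a degree bound on $G$ yet --- we will impose it at the end.

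\smallskip
\noindent\textbf{Step 2: amplify degrees using the disequality gadget.} The reduction produces a hypergraph $G'$ as follows. For each vertex $w\in V$ whose degree in $G$ exceeds the budget we can afford, we create several copies $w^{(1)},\dots,w^{(r)}$ of $w$, distribute the hyperedges of $G$ incident to $w$ among these copies, and then chain the copies together with \emph{equality} gadgets so that in any colouring all copies receive the same colour. An equality gadget is obtained by composing two disequality gadgets from \Cref{lem:notsatisfiable} through a fresh intermediate vertex: if $x\ne y$ and $y\ne z$ is forced, that does not force $x=z$ for $q>2$, so instead the standard move is to use a fresh auxiliary vertex per ``value'' --- but a cleaner route, and the one I would take, is to observe that for the search problem we do not actually need an equality gadget: we simply use $G$ itself and only enlarge degrees, then argue colourability is preserved. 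Concretely, attach to $G$ one copy of the disequality gadget $H$ (from \Cref{lem:notsatisfiable}) per vertex, identifying the degree-$1$ endpoint $u$ with the base vertex; this raises each vertex's degree by at most $1$, keeps uniformity $K$ and simplicity (the gadget touches $G$ at a single vertex), and --- crucially --- since $H$ is $q$-colourable and only forces $\sigma(u)\ne\sigma(v)$ on its own private vertex $v$, it does not affect whether $G$ is colourable. That alone does nothing useful, so the real content is the following: we chain gadgets to force, for each original edge, an extra ``not-all-equal-to-a-fixed-colour'' constraint, effectively simulating hyperedges of $G$ by hyperedges of the gadget. I would instead follow the cleaner and more robust approach used in \cite{hyper2spin,onemore}: start from \emph{any} $K$-uniform simple hypergraph $G$ for which colouring is $\NP$-hard, and show that replacing each hyperedge by a bundle of $\Theta(q^{K})$ parallel gadget-hyperedges (made simple by routing through fresh vertices) produces an instance $G'$ that is colourable iff $G$ is, of uniformity $K$, simple, and maximum degree $\le 2Kq^{K}\ln q + 2q$.

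\smallskip
\noindent\textbf{Step 3: degree accounting and the $2q$ slack.} The final bookkeeping is where the constant $2Kq^{K}\ln q + 2q$ comes from. Each original hyperedge of $G$ is replaced by/augmented with constantly many gadget hyperedges; a vertex $v$ that plays the role of the forced endpoint in $\Delta_{\mathrm{gad}}$-many gadgets has degree at most $\Delta_{\mathrm{gad}} = Kq^{K-1}\ln q + 2$ inside the gadgets, and we use each base vertex in at most $2q$ many such roles (one per colour class we need to forbid, doubled because chaining a disequality into an equality costs a factor $2$), giving total degree at most $2q(Kq^{K-1}\ln q + 2) + O(1) \le 2Kq^{K}\ln q + 2q$ after absorbing lower-order terms. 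Simplicity is preserved because every gadget is attached to $G'$ through its unique degree-$1$ vertex and otherwise lives on private vertices, so no two hyperedges share more than one vertex. Uniformity is exactly $K$ by construction of the gadget from \Cref{lem:notsatisfiable}.

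\smallskip
\noindent\textbf{Main obstacle.} The delicate point --- and the one I would spend the most care on --- is organising the reduction so that all three structural constraints (uniformity exactly $K$, simplicity, and the degree bound $2Kq^{K}\ln q + 2q$) hold \emph{simultaneously} while the colourability equivalence is transparent. In particular, simulating an ``equality of colours'' constraint from the ``disequality'' gadget of \Cref{lem:notsatisfiable} is not immediate for $q>2$: forcing $\sigma(a)\ne\sigma(b)$ and $\sigma(b)\ne\sigma(c)$ does not force $\sigma(a)=\sigma(c)$. The fix is to use, for each pair to be equated, a fan of $q-1$ disequality gadgets sharing the ``disequal'' side, which pins the common neighbour's colour and thereby propagates equality; this is what inflates the degree by the factor roughly $2q$ and must be done without creating repeated hyperedge intersections. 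Getting this combinatorics exactly right, and checking that the degree of the forced vertex $v$ in each gadget (which \Cref{lem:notsatisfiable} only bounds by $\Delta_{\mathrm{gad}}$, not by $1$) does not blow up the budget when many gadgets are superimposed, is the crux of the argument; everything else is routine CSP-reduction bookkeeping.
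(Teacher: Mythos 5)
Your proposal misses the paper's key simplification and ends up tangled in exactly the difficulty the paper sidesteps. The paper does \emph{not} try to manufacture an equality gadget for $q>2$. Instead, for $q>2$ it reduces from \emph{graph} $q$-colouring in graphs of maximum degree $2q$, which is $\NP$-hard by \cite{EHK98}. Each graph edge $(u,v)$ is replaced by one copy of the disequality gadget from \Cref{lem:notsatisfiable}, identifying $u,v$ with the two special vertices. That's the whole construction: $G$ is $q$-colourable iff the resulting hypergraph is, uniformity and simplicity are automatic, and the degree of any original vertex is at most $2q\cdot(Kq^{K-1}\ln q+1)=2Kq^K\ln q+2q$. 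No equality gadget is ever needed for $q>2$, because graph edges \emph{are} disequality constraints. Your proposal instead starts from hypergraph colourability with unbounded degrees and tries to degree-reduce using equality gadgets, which forces you into the observation you yourself flag as a problem: composing disequalities does not give equality when $q>2$.

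The patches you suggest do not close the gap. The ``fan of $q-1$ disequality gadgets sharing the disequal side'' does not pin a colour: having $\sigma(v)\neq\sigma(w_i)$ for $q-1$ neighbours $w_i$ only constrains $\sigma(v)$ if the $w_i$ are already forced to take $q-1$ distinct colours, which is precisely the kind of ``equality/colour-fixing'' primitive you are trying to build in the first place (and would itself need additional gadgets, blowing up degrees further). The alternative you float, ``replacing each hyperedge by a bundle of $\Theta(q^K)$ parallel gadget-hyperedges,'' would put $\Theta(q^K)$ gadgets on a single vertex, and since the $v$-endpoint of each gadget already has degree up to $\Delta_{\mathrm{gad}}=Kq^{K-1}\ln q+O(1)$, the total degree would be $\Theta(q^K)\cdot Kq^{K-1}\ln q$, far beyond the budget $2Kq^K\ln q+2q$. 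The degree accounting in your Step 3 is hand-waved past this point but the factor is wrong by roughly $q^{K-1}$.

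There is one regime where your instinct to build an equality gadget is correct, and that is $q=2$, where two chained disequality gadgets genuinely force equality; the paper treats this case separately, reducing from $2$-colourability of $K$-uniform simple hypergraphs and using the doubled gadget to split high-degree vertices into a cycle of low-degree copies. Your proposal does not clearly separate $q=2$ from $q>2$ and applies the equality-gadget idea uniformly, which is exactly where it breaks. The missing idea is to change the source problem for $q>2$: reduce from bounded-degree \emph{graph} colouring rather than from unbounded-degree hypergraph colouring, so that one disequality gadget per graph edge suffices.
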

\begin{proof}
For $q>2$, we reduce from the problem of finding $q$-colourings in graphs whose degrees are bounded by $2q$.
The latter problem is shown to be $\NP$-hard by \cite{EHK98}.
Given a graph $G$, we replace each edge $(u,v)$ of $G$ by the hypergraph in \Cref{lem:notsatisfiable},
where $u$ and $v$ are identified with the special vertices in the hypergraph.
Then each such hypergraph is effectively a disequality for the colours of $u$ and $v$.
Call the resulting hypergraph $H$.
Thus $G$ is $q$-colourable if and only if $H$ is $q$-colourable.
The maximum degree of $H$ is $2q(Kq^{K-1}\ln q+1)$.

For $q=2$ and $K>2$, using two copies of the hypergraph from \Cref{lem:notsatisfiable}, we build an ``equality'' gadget, i.e., a simple hypergraph $H$ of maximum degree $\Delta\leq 2(Kq^{K-1}\ln q+1)=K 2^K\ln2+2$ with distinct vertices $u,v$  which both have degree 1 such that for every $q$-colouring $\sigma$ it holds that $\sigma(u)=\sigma(v)$. It is well-known that finding 2-colourings of $K$-uniform simple hypergraphs is $\NP$-hard (or we can use for example \cite{DD20}), and using the equality gadget $H$, for any $K$-uniform simple hypergraph $F$, we can construct a $K$-uniform simple hypergraph $F'$ of maximum degree $\Delta$ such that $F$ is 2-colourable if and only if $F'$ is 2-colourable. 
One possible way to do so is replacing each degree-$d$ vertex $w$ of $F$ with a cycle of length $d$ and then replacing each edge $e$ of the cycle with a distinct copy of the hypergraph $H$ using $u,v$ for the endpoints of the edge $e$; then, for each hyperedge of $F$ that uses $w$, in $F'$ we use instead one of the $d$ vertices of the cycle.
\end{proof}

Note that the result of Frieze and Mubayi \cite{FM13} is also algorithmic.
Thus \Cref{thm:search-hardness} is sharp for simple hypergraphs up to a factor $c q$ where $c=c(K)$ is a constant depending only on $K$.
For general hypergraphs, the algorithm of Moser and Tardos \cite{MT10} applies in this setting when $\Delta\le \frac{q^{K-1}}{e(K-1)}$,
in which case \Cref{thm:search-hardness} almost matches the algorithmic result, up to a factor of $cK^2 q \ln q$ where $c$ is a constant.

For approximate counting, we can avoid the loss of the factor $q$ when $q\ge 2$ and $K\ge 4$.
For a $q$-by-$q$  matrix $\BB=\{B_{ij}\}_{i,j\in [q]}$, the partition function for the $q$-spin system with interaction matrix $B$ in a graph $G=(V,E)$ is given by
\begin{align}\label{eqn:Z_B}
  Z_B(G)\defeq\sum_{\sigma:V\rightarrow \{1,\dots,q\}}\wt(\sigma),
\end{align}
where $\wt(\sigma)\defeq\prod_{(u,v)\in E}B_{\sigma(u)\sigma(v)}$ is the weight of an assignment $\sigma:V\rightarrow \{1,\dots,q\}$ of the $q$ spins to the vertices of $G$. In particular, the $q$-state antiferromagnetic Potts model corresponds to the case where $\BB$ is the matrix whose off-diagonal entries are equal to 1, whereas the diagonal entries equal to some parameter $B<1$ (note, $B=0$ corresponds to $q$-colourings).

We will use the following hardness result about the Potts model.
A \emph{fully polynomial-time randomized approximation scheme} (FPRAS) is an algorithm that takes the accuracy $\eps$ as an extra input,
outputs an $\eps$-approximation, and runs in time polynomially bounded by both the instance size and $1/\eps$.
\begin{lemma}  \label{lem:Potts}
  There is a constant $C_1>5$ such that,
  for any integers $q\ge 2$, $\Delta\ge 2C_1q\ln q$, and $B<1-\frac{C_1 q\ln q}{\Delta}$,
  there is no FPRAS to approximate the $q$-state antiferromagnetic Potts partition function $Z_B$ in graphs with bounded degree $\Delta$,
  unless $\NP=\RP$.
\end{lemma}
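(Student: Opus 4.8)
The plan is to deduce \Cref{lem:Potts} from the existing inapproximability theory for antiferromagnetic spin systems in the tree non-uniqueness regime, so that the only real work is to check that the stated parameter window $(q,\Delta,B)$ lies comfortably inside the non-uniqueness region of the $q$-state antiferromagnetic Potts model on the infinite $\Delta$-regular tree. For $q=2$ the model is the zero-field antiferromagnetic Ising model, for which Sly and Sun \cite{SS14} proved that no FPRAS for $Z_B$ exists on graphs of maximum degree $\Delta$ throughout the tree non-uniqueness region unless $\NP=\RP$; for $q\ge 3$ the analogous statement for the antiferromagnetic Potts model is due to Galanis, \v{S}tefankovi\v{c} and Vigoda \cite{galanis2016inapproximability, galanis2015inapproximability}. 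Granting these, it suffices to verify the non-uniqueness condition for our parameters.

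To carry this out I would use the standard ``one colour versus the rest'' tree recursion. Tracking the ratio $R$ of the marginal probability of a distinguished colour to the marginal probability of any other colour, one step of the recursion from a vertex of the $\Delta$-regular tree to its $d:=\Delta-1$ children reads $R\mapsto g(R):=\bigl(\tfrac{BR+q-1}{R+B+q-2}\bigr)^{d}$. Here $R=1$ is always a fixed point, and it is the \emph{only} one because $g$ is strictly decreasing on $(0,\infty)$: the derivative of $\tfrac{BR+q-1}{R+B+q-2}$ has the sign of $(B-1)(B+q-1)<0$. Non-uniqueness (existence of semi-translation-invariant Gibbs measures) is then implied by this fixed point being repelling, i.e.\ by $|g'(1)|>1$; a one-line computation gives $g'(1)=\tfrac{(\Delta-1)(B-1)}{B+q-1}$, so the condition becomes $\tfrac{(\Delta-1)(1-B)}{B+q-1}>1$. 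Plugging in $1-B>\tfrac{C_1 q\ln q}{\Delta}$ and $B+q-1<q$, and using that $\Delta\ge 2C_1q\ln q$ forces $\Delta$ large (in particular $\tfrac{\Delta-1}{\Delta}\ge \tfrac45$), one gets $\tfrac{(\Delta-1)(1-B)}{B+q-1}>\tfrac{\Delta-1}{q}\cdot\tfrac{C_1q\ln q}{\Delta}=C_1\ln q\cdot\tfrac{\Delta-1}{\Delta}\ge \tfrac45 C_1\ln 2>1$ for every $C_1>5$. In fact a slack of order $q$ rather than $q\ln q$ in the hypothesis on $B$ would already suffice, so the window sits strictly inside the non-uniqueness region with room to spare, which also keeps it inside whatever explicit sub-region the cited reductions require.

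Two points need care, though neither is a conceptual obstacle. First, one must confirm that the cited reductions deliver the full ``no FPRAS unless $\NP=\RP$'' conclusion uniformly over all $q\ge 2$ in this window (rather than a weaker statement such as $\#\mathrm{BIS}$-hardness): for the antiferromagnetic model with $B$ bounded away from $1$ it is the disordered phase that dominates and the reduction of \cite{galanis2016inapproximability} --- respectively \cite{SS14} for $q=2$ --- applies with no parity restriction on $q$. Second, the degenerate endpoint $B=0$, i.e.\ proper $q$-colourings, can be handled separately if one insists on it: \Cref{thm:search-hardness}, applied to graphs and with the instance padded by a disjoint $\Delta$-regular $q$-colourable graph so that the maximum degree is exactly $\Delta$ and $Z_0$ is nonzero iff the original instance is colourable, already rules out an FPRAS for large enough $\Delta$; but since \Cref{lem:Potts} is only ever invoked downstream with $B$ bounded away from $0$ and from $1$, this case is immaterial and I would not dwell on it.

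The ``hard part'' is therefore not really an obstacle but bookkeeping: the mathematical content is the short repelling-fixed-point computation above, and the care needed is to quote the prior inapproximability results with parameters covering every $q\ge 2$ and every $B<1-\tfrac{C_1q\ln q}{\Delta}$ at once --- splitting the Ising case $q=2$ from the Potts case $q\ge 3$ --- the constant $C_1>5$ just recording how much slack that verification demands.
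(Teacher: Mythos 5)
Your proposal has a genuine gap that the paper's own discussion already flags. Immediately after stating \Cref{lem:Potts}, the paper notes that \cite[Theorem 1.2]{galanis2015inapproximability} achieves the stronger threshold $B<1-\frac{q}{\Delta}$ ``\emph{but only holds for even $q$}. We want to deal with general $q$, and thus settle with this weaker version.'' Your plan for $q\ge 3$ is precisely to invoke the Galanis--\v{S}tefankovi\v{c}--Vigoda antiferromagnetic-Potts inapproximability result after checking tree non-uniqueness via the repelling fixed point at $R=1$, and you assert that this reduction ``applies with no parity restriction on $q$.'' That assertion is false: the GSV-style reduction (as encapsulated in \Cref{prop:GSV-hardness}) needs more than tree non-uniqueness. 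It needs the dominant phases on the random $\Delta$-regular bipartite gadget to satisfy $\alphab\neq\betab$, be permutation symmetric, and be Hessian dominant, and establishing this for the Potts model is precisely the part that is only carried out for even $q$ in \cite{galanis2015inapproximability} (and remains open for odd $q$, as the paper reiterates elsewhere). Your repelling-fixed-point computation $|g'(1)|>1$ is the correct statement of tree non-uniqueness, but for $q\ge 3$ multi-spin antiferromagnetic models that condition is \emph{not} known to be sufficient for $\NP$-hardness of approximation; for $q=2$ it is (Sly--Sun), but that only covers the Ising case.

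The paper's actual proof of \Cref{lem:Potts} takes a different route designed exactly to sidestep the parity issue. It reduces from \MaxCut{} via \MaxqCut{} (Propositions~\ref{prop:max-cut} and \ref{prop:max-q-cut}): given an FPRAS for $Z_B$ with $B$ small enough, one can sample a near-maximum $q$-cut with probability $\ge 1/2$ because the Gibbs weight concentrates on near-optimal configurations (Lemma~\ref{lem:Potts-bounded}), and then a parallel-edge stretching $B\mapsto B^s$ lifts the statement to the full degree range $\Delta$ and parameter window $B<1-\frac{C_1q\ln q}{\Delta}$. This argument is entirely combinatorial, works for all $q\ge 2$ with no parity restriction, and is why the lemma only achieves the extra $C_1\ln q$ slack compared to the sharp GSV bound. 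If you want to push your route through you would need to resolve the odd-$q$ dominant-phase problem for the Potts model, which is not a bookkeeping issue but the open problem the paper itself flags.
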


The proof of \Cref{lem:Potts} is quite a detour from the problems we focus on, 
so we postpone it to \Cref{sec:Potts}.
We note that \Cref{lem:Potts} is weaker than the inapproximability result in \cite[Theorem 1.2]{galanis2015inapproximability},
which achieves $B<1-\frac{q}{\Delta}$ but only holds for even $q$.
We want to deal with general $q$, and thus settle with this weaker version.

\begin{theorem}\label{thm:counting-simple-hardness}
  There is a constant $C_1>5$ such that,
  for any integers $q\ge 2$, $K\ge 4$, and $\Delta\ge C_1Kq^{K-1}\ln q$,
  unless $\NP=\RP$, there is no FPRAS for the number of 
  $q$-colourings in $K$-uniform simple hypergraphs of maximum degree at most $\Delta$.
\end{theorem}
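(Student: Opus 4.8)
\emph{Proof plan.} This theorem is not hard given \Cref{lem:Potts} (whose proof, deferred to \Cref{sec:Potts}, contains the real work); the plan is to reduce from the antiferromagnetic Potts model of \Cref{lem:Potts} via a ``tail gadget'' that turns a graph edge into a $K$-uniform hyperedge. Fix $q\ge 2$, $K\ge 4$, let $C_1>5$ be the constant of \Cref{lem:Potts}, and set $B:=1-q^{-(K-2)}$, so that $B\in[1/2,1)$. Given $\Delta\ge C_1Kq^{K-1}\ln q$, I would first pick an integer $\Delta_G$ with $C_1q^{K-1}\ln q<\Delta_G\le\Delta$; such an integer exists because $\Delta-C_1q^{K-1}\ln q\ge C_1(K-1)q^{K-1}\ln q>1$. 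For this choice one has $\Delta_G\ge 2C_1q\ln q$ (as $q^{K-2}\ge 2$) and $B<1-\frac{C_1q\ln q}{\Delta_G}$ (this last inequality is equivalent to $\Delta_G>C_1q^{K-1}\ln q$). Hence \Cref{lem:Potts} applies and gives that, unless $\NP=\RP$, there is no FPRAS for $Z_B$ on graphs of maximum degree at most $\Delta_G$; moreover we may restrict to \emph{simple} such graphs (see the obstacle below).

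Next I would describe the reduction. Given a simple graph $G=(V,E)$ of maximum degree at most $\Delta_G$, construct the $K$-uniform hypergraph $H=H(G)$ by replacing each edge $e=\{u,v\}\in E$ with the hyperedge $e'=\{u,v,f^e_1,\dots,f^e_{K-2}\}$, where $f^e_1,\dots,f^e_{K-2}$ are $K-2$ new vertices used only for $e$. Then $H$ is $K$-uniform and simple: two distinct hyperedges $e'$ and $\hat e'$ meet only in the common endpoints of the underlying edges of $G$, of which there is at most one. The maximum degree of $H$ is $\Delta_G\le\Delta$, since the tail vertices have degree $1$ while $\deg_H(v)=\deg_G(v)$ for $v\in V$.

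The computation at the heart of the reduction is the identity $Z_{col}(H)=q^{(K-2)|E|}\,Z_B(G)$. For a fixed $\tau\colon V\to[q]$, the colours of the tails of the distinct hyperedges are chosen independently, and for the hyperedge $e'$ the number of tail-colourings that make $e'$ non-monochromatic is $q^{K-2}$ if $\tau(u)\ne\tau(v)$ and $q^{K-2}-1$ if $\tau(u)=\tau(v)$; summing over $\tau$ and pulling out $q^{(K-2)|E|}$ leaves $\sum_{\tau}\prod_{\{u,v\}\in E}\bigl(1-q^{-(K-2)}\mathbf{1}[\tau(u)=\tau(v)]\bigr)$, which is precisely $Z_B(G)$ for the antiferromagnetic Potts matrix with diagonal entry $B=1-q^{-(K-2)}$. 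Since $B\ge 1/2$ we have $Z_B(G)\ge B^{|E|}>0$, so an $\eps$-approximation to $Z_{col}(H)$, divided by the explicit integer $q^{(K-2)|E|}$, is an $\eps$-approximation to $Z_B(G)$; as $|H(G)|$ is polynomial in $|G|$ for fixed $K$, an FPRAS for counting $q$-colourings of $K$-uniform simple hypergraphs of maximum degree at most $\Delta$ would give an FPRAS for $Z_B$ on simple graphs of maximum degree at most $\Delta_G$, contradicting the hardness established in the first step unless $\NP=\RP$.

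\textbf{Main obstacle.} The only point requiring care is that simplicity of $H(G)$ is exactly what fails if $G$ has a multi-edge (the two resulting hyperedges would share both endpoints), so one must invoke \Cref{lem:Potts} in a form whose hard instances are simple graphs — this is the case for the constructions behind it (following \cite{galanis2015inapproximability}), and I would note it explicitly when proving \Cref{lem:Potts} in \Cref{sec:Potts}. Apart from that, the argument is a short weight computation together with the elementary arithmetic above; in particular the factor $K$ in the hypothesis is much more room than is needed, since $B=1-q^{-(K-2)}$ already lies near the top of the hard Potts window at $\Delta_G\approx C_1q^{K-1}\ln q$, so no amplification of the gadget (e.g.\ using several parallel tails) is necessary.
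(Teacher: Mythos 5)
Your weight computation is correct, and the arithmetic placing $B = 1-q^{-(K-2)}$ inside the hard Potts window at degree $\Delta_G \in (C_1 q^{K-1}\ln q, \Delta]$ is also fine. The problem is precisely the ``main obstacle'' you flag, and it is fatal to the proof as written rather than a footnote to be discharged later: the hard instances of \Cref{lem:Potts}, as proved in \Cref{sec:Potts}, are \emph{multigraphs}. The proof of \Cref{lem:Potts} takes a simple $(2q+1)$-regular graph from \Cref{lem:Potts-bounded} and replaces each edge by $s=\lfloor\Delta/(2q+1)\rfloor$ parallel edges in order to push the weight $B$ from the tiny $q^{-1/C_0}$ regime up to $1-\Theta(q\ln q/\Delta)$. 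That parallel-edge amplification is essential — at degree $\Delta_G\asymp q^{K-1}\ln q$ the base hardness from the $\MaxqCut$ reduction only reaches $B\ll 1$, far below your $1-q^{-(K-2)}$ — and there is no ``simple-graph form'' of \Cref{lem:Potts} available to invoke. Your tail gadget $e\mapsto\{u,v,f^e_1,\dots,f^e_{K-2}\}$ contains both endpoints $u,v$, so two parallel edges between $u$ and $v$ yield two hyperedges that share two vertices, destroying simplicity of $H(G)$. Trying to repair this by subdividing (a path of $\ell$ such hyperedges through $\ell-1$ intermediate vertices) changes the effective weight: one subdivision flips the sign to ferromagnetic, and two subdivisions restore antiferromagnetism but at $1-B\asymp q^{-(3K-4)}$, which no longer sits in the hard window at your target degree.

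The paper avoids this by not using the hyperedge itself as the edge gadget. It builds, via \Cref{lem:notsatisfiable}, a ``disequality'' hypergraph $H$ in which the special vertex $u$ has degree $1$, and replaces each graph edge $(u,v)$ by a path of \emph{three} copies of $H$ through two fresh vertices $w_1,w_2$ (with fresh internal vertices per copy). Because each endpoint sits in exactly one hyperedge per copy, parallel edges in $G$ give rise to hyperedges that meet only at the shared endpoint, so $H_G$ is simple even when $G$ is a multigraph. The price is that the effective Potts weight is $B=1-\tfrac{1}{q^2-3q+3}$, independent of $K$, rather than your $1-q^{-(K-2)}$; this still lies below the $1-\tfrac{C_1 q\ln q}{\Delta}$ threshold for $\Delta\ge C_1 K q^{K-1}\ln q$ with $K\ge 4$, so the reduction closes. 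To salvage your route you would need either (i) a genuinely different proof that \Cref{lem:Potts} holds on simple graphs at the relevant weight, or (ii) a degree-$1$ endpoint gadget of the kind the paper builds from \Cref{lem:notsatisfiable}.
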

\begin{proof}
  We reduce the partition function of the $q$-state antiferromagnetic Potts model with $B=1-\frac{1}{q^2-3q+3}$ in graphs with bounded degree $\Delta$
  to the problem of counting $q$-colourings in $K$-uniform simple hypergraphs of maximum degree at most $\Delta$.
  Note that if $K\ge 4$ and $\Delta\geq C_1Kq^{K-1}\ln q$, where 
  $C_1$ is from \Cref{lem:Potts},
  then $B<1-\frac{C_1 q\ln q}{\Delta}$.
  Thus the reduction implies the theorem via \Cref{lem:Potts}.

  The reduction goes as follows.
  For each edge $(u,v)$ in a $\Delta$-regular graph $G=(V,E)$,
  we replace it by a gadget using the hypergraph $H$ in \Cref{lem:notsatisfiable},
  whose degree bound is $\Delta_0=Kq^{K-1}\ln q+1$.
  To be more specific, we introduce new vertices $w_1$ and $w_2$.
  We add three copies of the hypergraph $H$ with special vertices $(u,w_1)$, $(w_2,w_1)$, and $(v,w_2)$, respectively.
  Do this for all edges in $G$.
  Then, the degrees of $u$ and $v$ are still $\Delta$,
  the degrees of $w_1$'s are at most $2\Delta_0 < \Delta$,
  and the degrees of $w_2$'s are at most $\Delta_0+1 < \Delta$.
  All other newly introduced vertices have degrees at most $\Delta_0 < \Delta$.
  Thus, the degree requirement is met.
  Call the resulting hypergraph $H_G$.

  To finish the reduction,
  we claim that
  \begin{align*}
    Z_{col}(H_G)=C^{|E|}Z_B(G),
  \end{align*}
  where $C$ is a constant depending only on $H$.
  First notice that for any pair of colours $i$ and $j$,
  the number of colourings $\sigma$ of $H$ such that $\sigma(u)=i$ and $\sigma(v)=j$ is a constant,
  due to the symmetry among colours.
  Denote this constant by $C_0$.
  Thus, in the gadget above,
  when the two endpoints $u$ and $v$ have different colours, the number of possible colourings for the gadget is $((q-2)^2+(q-1))C_0^3$;
  when the two endpoints $u$ and $v$ have the same colour, the number of possible colourings for the gadget is $(q-1)(q-2)C_0^3$.
  The claim holds with $C=((q-2)^2+(q-1))C_0^3$.
\end{proof}

In \Cref{thm:counting-simple-hardness}, we could avoid the large constant $C_1$ in the degree bound by using \cite[Theorem 1.2]{galanis2015inapproximability} as the starting point of our reduction,
but doing so will restrict the result to even $q$ only.

For $K$-CNFs on simple hypergraphs,
Hermon, Sly, and Zhang \cite{HSZ19} showed an efficient approximate counting and sampling algorithm if $\Delta\le \frac{c2^K}{K^2}$, where $c$ is a constant.
In view of their result, \Cref{thm:search-hardness} and \Cref{thm:counting-simple-hardness} are potentially sharp for simple hypergraphs, up to some polynomial factor in $K$.

\section{Refined inapproximability for approximate counting}

In this section we show our main theorem, \Cref{thm:main-hardness}, namely a refined inapproximability result for counting. As mentioned earlier, we will do this by first relating it to a multi-spin system on  graphs with ``antiferromagnetic'' interaction matrix $\BB$, and then establishing inapproximability results. 
It is tempting to pursue a strategy similar to that of \Cref{lem:Potts} to show hardness for the spin system defined by $\BB$.
However, that strategy relies on hardness of finding the maximum weight configuration, and somewhat surprisingly, as we shall see soon, that problem for $\BB$ is trivial. 
Instead, we need sharper tools from \cite{galanis2015inapproximability}.

To define the spin system on graphs we will be interested in, we only need to specify its interaction matrix $\BB$ (recall \eqref{eqn:Z_B}).  We use $[q]$ to denote $\{1,\dots,q\}$ and $[\overline{q}]$ to denote $\{0,1,\dots,q\}$ . Let $t\defeq (q^k-q)^{1/\Delta}$, where $k\defeq K/2$, and ${\bm B}=\{B_{ij}\}_{i,j\in[\oq]}$ be the matrix with block form
\begin{align}\label{eq:BBdef}
  \BB=\bigg[\begin{array}{cc} t^2 &  t\transpose{\ones} \\ t\ones &  \Jb \end{array}\bigg],
\end{align}
where $\Jb$ is the $q\times q$ matrix with 0s on the diagonal and 1s elsewhere, and $\ones$ is the $q\times 1$ vector with all ones. In the language of \cite{galanis2015inapproximability}, the matrix $\BB$ is antiferromagnetic and ergodic.\footnote{Antiferromagnetism amounts to checking that $\BB$ has all but one of its eigenvalues negative; it is not hard to see that $\BB$ has $-1$ as an eigenvalue by multiplicity $q-1$, and therefore using trace/determinant we see that the other two eigenvalues have sum equal to $q-1+t^2$ and product $-t^2$. Ergodicity amounts to the fact that $\BB$ is irreducible and aperiodic.}

Let $H$ be a $K$-uniform hypergraph, where $K=2k$ is even, and recall that we use $Z_{col}(H)$ to denote the number of proper $q$-colourings of $H$. 
For any given $\Delta$-regular graph $G=(V,E)$, let $H_G$ be the hypergraph where every vertex $v\in V$ is replaced by $k$ new vertices $v_1,\dots,v_k$,
and each edge $(u,v)$ is replaced by a hyperedge $\{u_1,\dots,u_k,v_1,\dots,v_k\}$ of size $2k$.
Then $H_G$ is $2k$-uniform and $\Delta$-regular.
This construction has been used in~\cite{BGGGS19}, and yields the following lemma in our case.
\begin{lemma}  \label{lem:graph-to-hypergraph}
  Let $G=(V,E)$ be a $\Delta$-regular graph, and $H_G=(V',E')$ be the $2k$-uniform hypergraph constructed as above.
  Then, $    Z_{\bm B}(G) = Z_{col}(H_G)$.
\end{lemma}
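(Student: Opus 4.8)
The plan is to group the proper $q$-colourings of $H_G$ according to the colour pattern they induce on each block of $k$ vertices, and to show that this grouping reproduces exactly the spin-system sum $Z_{\bm B}(G)$. Given a colouring $\tau\colon V'\to[q]$ of $H_G$, for each $v\in V$ look at the tuple $\big(\tau(v_1),\dots,\tau(v_k)\big)\in[q]^k$ and define $\sigma(v)\in[\oq]$ to be $i$ if this tuple is constant equal to $i\in[q]$, and $0$ otherwise. The key structural observation is that the hyperedge $\{u_1,\dots,u_k,v_1,\dots,v_k\}$ of $H_G$ is monochromatic under $\tau$ precisely when both blocks are constant with a common colour $i\in[q]$, i.e.\ precisely when $\sigma(u)=\sigma(v)=i$ for some $i\in[q]$. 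Consequently, $\tau$ is a proper colouring of $H_G$ if and only if the induced $\sigma$ has no edge $(u,v)\in E$ with $\sigma(u)=\sigma(v)\in[q]$; in particular, properness of $\tau$ depends only on $\sigma$. Call a configuration $\sigma\colon V\to[\oq]$ \emph{admissible} if it satisfies this condition.

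Next I would count, for each admissible $\sigma$, how many colourings $\tau$ induce it. For a vertex $v$ with $\sigma(v)\in[q]$ the block is forced to be the constant tuple, while for a vertex $v$ with $\sigma(v)=0$ the block can be any of the $q^k-q$ non-constant tuples in $[q]^k$; these choices are independent across vertices. Hence the number of $\tau$ inducing a given admissible $\sigma$ is $(q^k-q)^{N_0(\sigma)}$, where $N_0(\sigma):=\big|\{v\in V:\sigma(v)=0\}\big|$, and the previous paragraph gives
\[
  Z_{col}(H_G)=\sum_{\sigma\ \mathrm{admissible}} (q^k-q)^{N_0(\sigma)}.
\]

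Finally I would match this with $Z_{\bm B}(G)=\sum_{\sigma\colon V\to[\oq]}\prod_{(u,v)\in E}B_{\sigma(u)\sigma(v)}$. Since the $[q]\times[q]$ block $\Jb$ has zero diagonal, every non-admissible $\sigma$ contributes $0$ to $Z_{\bm B}(G)$. For an admissible $\sigma$, each edge $(u,v)$ contributes a factor $t^2$ if $\sigma(u)=\sigma(v)=0$, a factor $t$ if exactly one of $\sigma(u),\sigma(v)$ is $0$, and a factor $1$ otherwise; summing the exponents of $t$ over all edges gives $\sum_{v:\sigma(v)=0}\deg_G(v)=\Delta\,N_0(\sigma)$ because $G$ is $\Delta$-regular. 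Thus the $Z_{\bm B}$-weight of an admissible $\sigma$ equals $t^{\Delta N_0(\sigma)}$, which by the choice $t=(q^k-q)^{1/\Delta}$ is exactly $(q^k-q)^{N_0(\sigma)}$. Summing over all $\sigma$ yields $Z_{\bm B}(G)=Z_{col}(H_G)$.

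This argument involves no real obstacle beyond careful bookkeeping; the one point that has to be respected is $\Delta$-regularity, which is exactly what makes the edge-exponent count $\Delta\,N_0(\sigma)$ line up with the fibre count $(q^k-q)^{N_0(\sigma)}$, and one should double-check that the notion of ``proper'' transfers in both directions without losing or duplicating configurations.
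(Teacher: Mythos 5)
Your proof is correct and follows essentially the same strategy as the paper's: both define the map $\tau\mapsto\sigma$ by collapsing each $k$-vertex block, count fibres as $(q^k-q)^{N_0(\sigma)}$, and use $\Delta$-regularity together with $t=(q^k-q)^{1/\Delta}$ to match this with $\wt(\sigma)$. You are a bit more explicit than the paper about the forward direction (that a proper $\tau$ always induces an admissible, i.e.\ nonzero-weight, $\sigma$), whereas the paper states the map, proves surjectivity, and computes fibre sizes; this is a difference in presentation rather than in substance.
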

\begin{proof}
  Let $\Omega_\BB$ be the set of all assignments $\sigma$ of $G$ whose weights are non-zero.
  Let $\Omega_{col}$ be the set of all proper $q$-colourings $\tau$ of $H$.
  We will construct a surjective mapping $\varphi$ between $\Omega_{col}$ and $\Omega_B$,
  such that for any $\sigma$, $\abs{\varphi^{-1}(\sigma)} = \wt(\sigma)$.
  This implies the lemma.

  The mapping $\varphi$ is as follows.
  Given $\tau:V'\rightarrow\{1,2,\dots,q\}$,
  let
  \begin{align*}
    \varphi(\tau)(v)\defeq
    \begin{cases}
      i & \text{if $\tau(v_1)=\tau(v_2)=\dots=\tau(v_k)=i$ for some $1\le i\le q$,}\\
      0 & \text{otherwise.}
    \end{cases}
  \end{align*}
  We first show that $\varphi$ is surjective.
  Let $\sigma\in\Omega_B$ and we construct $\tau\in \Omega_{col}$ such that $\varphi(\tau)=\sigma$.
  For any $v$ such that $\sigma(v)\neq 0$,
  $\tau(v_i)=\sigma(v)$  for any $1\le i\le k$.
  If $\sigma(v)=0$,
  then let $\tau(v_i)=1$ for any $1\le i\le k-1$, and $\tau(v_k)=2$.
  It is easy to verify that $\tau$ is a proper $q$-colouring and $\varphi(\tau)=\sigma$ for this construction.

  Next we calculate $\abs{\varphi^{-1}(\sigma)}$.
  Let $n_0(\sigma)$ be the number of vertices assigned $0$ under $\sigma$.
  Then
  \begin{align*}
    \abs{\varphi^{-1}(\sigma)} = \left(q^k-q\right)^{n_0(\sigma)}.
  \end{align*}
  On the other hand,
  since $G$ is $\Delta$-regular,
  \begin{align*}
    \wt(\sigma) = t^{\Delta n_0(\sigma)} = \left(q^k-q\right)^{n_0(\sigma)} = \abs{\varphi^{-1}(\sigma)},
  \end{align*}
  which verifies the properties of $\varphi$.
\end{proof}
\begin{remark}\label{rmk:odd-K}
  For odd $K$,
  we may consider a similar construction, but in addition to clustering half of each hyperedge as a single vertex,
  we leave one vertex in the middle which appears only in this single hyperedge.
  The resulting spin system would have a different matrix $\BB'$,
  but the difference between $\BB'$ and the current $\BB$ is not too much in the sense that the zeros would be replaced by small constants.
  We expect that we may obtain a hardness result for $\BB'$ for $\Delta$ of a similar order.
  However, since the details are already getting very complicated,
  we will only handle $\BB$ in the rest of this paper.
\end{remark}

Given \Cref{lem:graph-to-hypergraph},
all we need to show is that the spin system with interaction matrix $\BB$ is hard to approximate on $\Delta$-regular graphs,
with $\Delta$ in the desired range.
For this, we will use a result by Galanis, Vigoda, and {\v{S}}tefankovi{\v{c}} \cite[Theorem 1.5]{galanis2015inapproximability} which gives a sufficient condition in terms of studying a certain function (that can be formulated in terms of an induced norm of $\BB$).
Note that since $t>1$ the corresponding optimization problem related to $\BB$ is trivial. 
Thus, we cannot use a strategy similar to that of \Cref{lem:Potts} to show hardness for the spin system defined by $\BB$.

The main construction in the gadget to show the hardness is the bipartite random $\Delta$-regular graph.
Let $(\alphab,\betab)$ be a pair of vectors such that for $i\in [\oq]$, $\alpha_i$ and $\beta_i$ denotes the fraction of vertices with colour $i$ on the left and right sides of the bipartite random regular graph.
If we draw a sample $\sigma$ proportional to its weight $\wt(\sigma)$,
then with high probability over the choice of the random graph,
the fraction of colours $(\alphab,\betab)$ will be from one of the dominant phases, for all but an exponentially small probability.
Analyzing these dominant phases lies in the heart of \cite[Theorem 1.5]{galanis2015inapproximability}.

Let $\+G_n$ denote the family of $\Delta$-regular bipartite graphs with $n$ vertices on each side. For a bipartite graph $G$ uniformly drawn from $\+G_n$ and probability vectors $\alphab=\{\alpha_i\}_{i\in [q]}, \betab=\{\beta_i\}_{i\in [q]}$,   we use $Z_\BB^{\alphab,\betab}(G)$ to denote the total weights of assignments whose fractions of colours on the two sides are given by $\alphab,\betab$ respectively.
Consider the function $\Psi_1$ that captures the exponential growth of the expectation of $Z_\BB^{\alphab,\betab}(G)$, i.e.,
\begin{equation}\label{eq:Psipsi}
\Psi_1(\alphab,\betab)\defeq\lim_{n\rightarrow\infty}\frac{1}{n}\log\Ex_{\+G_n}[Z_{\BB}^{\alphab,\betab}(G)].
\end{equation}
The function $\Psi_1$ has a relatively explicit form (see \cite[Section 2]{galanis2015inapproximability}) using entropy-style functions though the exact details are not going to be important and we will in fact use a surrogate function later on (see Section~\ref{sec:dominant-phase}).

Before stating the main result of \cite{galanis2015inapproximability}, we need some further terminology. A \emph{dominant phase} $(\alphab,\betab)$ is a maximizer of the function $\Psi_1(\alphab,\betab)$ and captures the most likely configurations for the spin system with interaction matrix on a random $\Delta$-regular graph. A dominant phase is called \emph{Hessian dominant} if the Hessian of $\Psi_1$ is negative definite. Finally, two dominant phases $(\alphab_1,\betab_1)$ and $(\alphab_2,\betab_2)$ are \emph{permutation symmetric} if 
there is a permutation matrix ${\bm P}$ such that $\BB={\bm P}\BB\transpose{{\bm P}}$ and $(\alphab_1,\betab_1)=({\bm P}\alphab_2,{\bm P}\betab_2)$ or $(\alphab_1,\betab_1)=({\bm P}\betab_2,{\bm P}\alphab_2)$. Now we can state \cite[Theorem 1.5]{galanis2015inapproximability}.\footnote{Technically, \cite[Theorem 1.5]{galanis2015inapproximability} demands the assumption $\NP\neq \RP$ but that is merely to exclude randomised algorithms, the reduction itself is deterministic.}

\begin{proposition}[{\cite{galanis2015inapproximability}}] \label{prop:GSV-hardness}
  Let $\Delta\geq 3$ be an integer, and suppose that $\BB$ is an ergodic interaction matrix  of an antiferromagnetic spin system. Suppose further that the dominant phases $(\alphab,\betab)$  satisfy $\alphab\neq\betab$, are permutation symmetric and Hessian dominant. Then, it is $\NP$-hard to approximate the partition function $Z_\BB(G)$ on $n$-vertex triangle-free $\Delta$-regular graphs $G$, even within a factor of $2^{cn}$ for a constant $c(\BB,\Delta)>0$.
\end{proposition}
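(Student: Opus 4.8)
Since \Cref{prop:GSV-hardness} is quoted verbatim from \cite[Theorem~1.5]{galanis2015inapproximability}, strictly speaking the ``proof'' is a pointer to that work, together with the observation (recorded in the footnote) that the underlying reduction is deterministic, so the hypothesis $\NP\ne\RP$ there can be relaxed to plain $\NP$-hardness. Nonetheless, let me sketch the argument one would run. The engine is the random $\Delta$-regular bipartite graph $G\sim\+G_n$, and the first task is a first- and second-moment analysis of how the weight of $Z_\BB(G)$ distributes over phases $(\alphab,\betab)$. The first moment gives $\Ex_{\+G_n}\!\big[Z_\BB^{\alphab,\betab}(G)\big]=e^{n\Psi_1(\alphab,\betab)+o(n)}$, so $\frac1n\log\Ex_{\+G_n}[Z_\BB(G)]\to\max\Psi_1$, attained exactly at the dominant phases. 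One then computes the exponential rate $\Psi_2(\alphab,\betab)$ of the second moment $\Ex_{\+G_n}\!\big[(Z_\BB^{\alphab,\betab}(G))^2\big]$ and shows $\Psi_2=2\Psi_1$ precisely when the Hessian of $\Psi_1$ at $(\alphab,\betab)$ is negative definite --- this is exactly where the \emph{Hessian dominant} hypothesis is consumed. Feeding this into a Paley--Zygmund/Chebyshev bound then yields that, with probability bounded away from $0$ over $G\sim\+G_n$, $Z_\BB(G)$ is, up to $e^{o(n)}$ factors, concentrated on the dominant phases, each contributing comparably.

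Next one would convert $G$ into a reduction gadget by attaching a constant number of terminal vertices (or pendant edges carrying prescribed boundary spins). Conditioned on a fixed dominant phase, the spin at a terminal is essentially a sample from the colour-frequency vector of the side of $G$ it lies on. Because the dominant phases form a single permutation-symmetry orbit --- any two related by swapping the sides of $G$ and simultaneously permuting colours via some ${\bm P}$ with $\BB={\bm P}\BB\transpose{{\bm P}}$ --- and because $\alphab\ne\betab$, the gadget as seen from its terminals mimics an edge of a two-state (Ising-type) interaction: the two sides of $G$ can be labelled, the terminal distributions depend on the label, and two gadgets are mutually consistent only when their labels are compatible. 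The unbalancedness $\alphab\ne\betab$ is precisely what makes this labelling carry information; a balanced gadget would be symmetric under the side swap and hence useless for encoding.

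Finally one would reduce from a bounded-degree optimisation problem that is $\NP$-hard to approximate, \MaxCut{} being the canonical choice. Given such an instance on a graph $G_0$, one assembles a triangle-free $\Delta$-regular graph $G^\star$ by replacing each vertex of $G_0$ by a bundle of gadget-terminals and each edge by a bipartite gadget, wired so that ``aligning'' the two gadget labels across an edge records whether that edge is cut. Summing $Z_\BB(G^\star)$ over the choice of dominant phase in each bundle and using the concentration from the first step shows that $\log Z_\BB(G^\star)$ equals, up to an $o(|V(G^\star)|)$ additive error and an explicit affine function of $|V(G_0)|,|E(G_0)|$, a quantity of the form $\lambda\cdot\mathrm{maxcut}(G_0)$ with $\lambda>0$ bounded away from $0$; hence a factor-$2^{cn}$ approximation to $Z_\BB(G^\star)$, for small enough $c=c(\BB,\Delta)$, would approximate $\mathrm{maxcut}(G_0)$, a contradiction. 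Triangle-freeness survives because $G\sim\+G_n$ is locally tree-like, so the constantly many short cycles can be excised without disturbing the phase analysis, and $\Delta$-regularity is arranged by padding. The main obstacle will be the second-moment step: one must bound $\Psi_2$ globally, ruling out partially overlapping pairs of configurations whose joint weight would exceed $\big(\Ex_{\+G_n}Z_\BB^{\alphab,\betab}(G)\big)^2$, and it is here that antiferromagnetism, ergodicity, Hessian dominance and permutation symmetry all conspire. In \cite{galanis2015inapproximability} this occupies the bulk of the work and is imported wholesale here.
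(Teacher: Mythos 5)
Your proposal is correct and matches the paper's approach exactly: the paper does not prove Proposition~\ref{prop:GSV-hardness} but imports it verbatim from \cite[Theorem~1.5]{galanis2015inapproximability}, noting only in the footnote that the underlying reduction is deterministic, so the conclusion can be phrased as $\NP$-hardness rather than ``no FPRAS unless $\NP=\RP$''. Your high-level sketch of the cited argument---first and second moment analysis of $Z_\BB^{\alphab,\betab}(G)$ on random bipartite $\Delta$-regular gadgets, with Hessian dominance and permutation symmetry controlling the second moment and $\alphab\neq\betab$ enabling a reduction from bounded-degree \MaxCut{}---is an accurate summary of the machinery in that reference, which the present paper takes as a black box.
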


The key ingredient in  the conditions of \Cref{prop:GSV-hardness} is the condition that $\alphab\neq \betab$; this enables a reduction in \cite{galanis2015inapproximability} to the $\MaxCut$ problem; the Hessian dominance and the permutation symmetry condition are more on the technical side, but is one of the main reasons that complicates the overall arguments (this was already prevalent in \cite{galanis2015inapproximability}).   

The main challenge to show our inapproximability results is to establish the conditions of \Cref{prop:GSV-hardness} for $\BB$ and the relevant range for $\Delta$, which is the scope of the following lemma.

\begin{lemma}  \label{lem:dominant-phase}
  Let $q\ge 4$ be even, $k\ge 2$, and $\Delta=5q^k+1$. Then the dominant phases of the spin system with interaction matrix $\BB$ (defined in \eqref{eq:BBdef}) satisfy the conditions of \Cref{prop:GSV-hardness}.
\end{lemma}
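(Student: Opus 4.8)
The plan is to verify each of the three hypotheses of \Cref{prop:GSV-hardness} for the matrix $\BB$ in \eqref{eq:BBdef} in the regime $\Delta = 5q^k+1$: (i) the dominant phases $(\alphab,\betab)$ satisfy $\alphab \neq \betab$; (ii) they are permutation symmetric; (iii) they are Hessian dominant. Permutation symmetry for $\BB$ is essentially automatic: the only symmetries of $\BB$ are permutations fixing the spin $0$ and permuting $[q]$ arbitrarily, so it suffices to know that every dominant phase is, after applying such a permutation, one of a short explicit list of candidates; the genuine content is therefore in pinning down that list and in showing (i) and (iii). The first step is thus to replace $\Psi_1$ by a tractable surrogate (as the excerpt promises in Section~\ref{sec:dominant-phase}) and set up the variational problem. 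The natural move is to note that, for fixed vertex frequencies $(\alphab,\betab)$, the optimal edge frequencies are obtained in closed form (exactly as in the proof of \Cref{lem:notsatisfiable}), reducing the maximization of $\Psi_1$ to maximizing something of the form $h(\alphab)+h(\betab)+\Delta\log\big(\transpose{\alphab}\BB\betab\big)$ over the product of simplices on $[\oq]$. This is precisely the induced-norm picture of \cite{galanis2015inapproximability}: the dominant phases are the maximizers of $\transpose{\alphab}\BB\betab$ after the entropy corrections, and because $\BB$ is antiferromagnetic one expects the maximizers to put most mass on few coordinates on each side, in an ``anti-aligned'' fashion.

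The core of the argument is then a case analysis driven by how the special spin $0$ behaves. Because $t=(q^k-q)^{1/\Delta}>1$ is only slightly above $1$ (indeed $t^\Delta = q^k-q$ is a large but fixed number while $\Delta$ is comparably large, so $t = 1 + \Theta(k\log q/\Delta)$), the $t^2$ and $t\ones$ blocks are perturbations of the all-ones pattern, and one should think of $\BB$ as a perturbation of the rank-deficient matrix $\big[\begin{smallmatrix}1 & \transpose{\ones}\\ \ones & \Jb\end{smallmatrix}\big]$. The candidate dominant phases fall into two families. The first is the ``colouring-type'' family inherited from the antiferromagnetic Potts/colouring analysis on bipartite graphs: up to permutation, $\alphab$ concentrates on a single colour $i\in[q]$ and $\betab$ concentrates on the complementary colours (or some bipartite-unbalanced split thereof), giving $\alphab\neq\betab$ — this is the phase we want to be dominant, and for it Hessian dominance should follow from a perturbation of the computation already done in \cite{galanis2015inapproximability} for the Potts model, since for $\Delta \gtrsim q^k \gg q$ the colouring interaction is deep in the regime where that analysis applies. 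The second, dangerous family consists of phases that exploit the special spin: the symmetric phase $\alphab=\betab=(1,0,\dots,0)$ putting all mass on spin $0$ (which is Hessian stable, hence a genuine local maximum, because $t^2>1$ makes $0$ self-attracting), and mixed phases interpolating between ``all $0$'' and the colouring phase. The whole difficulty is to show these $0$-heavy phases are strictly sub-dominant, i.e. that the $\Psi_1$-value of the colouring phase strictly exceeds that of every $0$-heavy critical point.

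The main obstacle, as the introduction flags, is exactly this last point: the all-$0$ phase is Hessian-stable, so one cannot perturb it away locally, and one must instead compare global values. The plan is to construct an explicit interpolation path in $S_q$ from the all-$0$ phase to the colouring phase — say, moving mass linearly from spin $0$ into a single colour on side $L$ and into its complement on side $R$ — and show that $\Psi_1$ (via the surrogate) is strictly increasing along this path, using that the entropy gain from spreading colour mass on the $R$ side and the gain in $\transpose{\alphab}\BB\betab$ from creating anti-aligned structure together dominate the entropy cost of concentrating on side $L$; this requires quantitative estimates, namely expanding $\log(\transpose{\alphab}\BB\betab)$ to second order and bounding the entropy terms, where the factor $\Delta=5q^k+1$ and the precise value $t^\Delta=q^k-q$ must be used with care to make the inequality go the right way (the constant $5$ is presumably what makes the margin positive). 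One then argues that any critical point not of colouring type lies ``below'' a point on such a path, or more directly bounds $\Psi_1$ at any $0$-supported configuration by a quantity strictly less than $\Psi_1$ at the colouring phase. Once the colouring phase is established as the unique dominant family up to permutation, $\alphab\neq\betab$ and permutation symmetry are immediate, Hessian dominance is the perturbed Potts computation, and \Cref{prop:GSV-hardness} applies, yielding \Cref{lem:dominant-phase} and hence (via \Cref{lem:graph-to-hypergraph}) \Cref{thm:main-hardness}.
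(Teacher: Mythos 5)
Your high-level framing is right — the task is to reduce to a surrogate induced-norm optimisation, identify the dominant phase, and separately rule out a $0$-spin-related local maximum that a Hessian argument cannot dismiss — but the specific structure you describe for both the dominant phase and the dangerous phase is incorrect, and this matters for whether the plan can be executed.

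First, the dominant phase. You claim $\alphab$ concentrates on a single colour $i\in[q]$ while $\betab$ spreads over the complementary colours. The paper's analysis (via \Cref{lem:fp-supp-count}) shows that at any critical point the $q$ colour-coordinates of $\rb$ take at most three distinct values, and \Cref{lem:main} establishes that the unique maximiser has type $(q/2,q/2,0)$: half the colours on side $L$ take one value and half another (with the roles swapped on side $R$), plus a separate value for spin $0$. This is the ``unbalanced bipartite'' colouring picture, not a single-colour concentration. The evenness of $q$ enters exactly here, in that the maximiser of the relaxed problem over real triples $(q_1,q_2,q_3)$ lands on an integer point only when $q$ is even.

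Second, and more seriously, the dangerous phase is not $\alphab=\betab=(1,0,\dots,0)$. By \Cref{prop:gsv_4_1} there are no local maxima of $\Psi_1$ on the boundary of the simplex, so a vertex of the simplex cannot be Hessian-stable (the entropy derivative blows up there); your claim that it is a genuine local maximum ``because $t^2>1$'' conflates the trivial weight-maximisation problem with the variational problem for $\Psi_1$. The genuinely stable obstruction is the $(q,0,0)$-type fixpoint with $R_0/R_1\neq C_0/C_1$: an interior critical point where all $q$ colours have equal frequency but the ratio of spin-$0$ mass to colour mass differs between the two sides. The system reduced to this subclass is a $2$-spin system which, in the relevant range $d\geq 5q^k$, lies in its non-uniqueness region (\Cref{lem:q00_2spin_non}), giving an asymmetric Jacobian-stable fixpoint (\Cref{lem:q00_inequal_stable}). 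Crucially, this fixpoint already has $\alphab\neq\betab$, so the condition $\alphab\neq\betab$ alone cannot be used to exclude it; one really must compare $\Psi_1$-values.

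Third, the comparison is not done by an interpolation in configuration space ``moving mass from spin $0$ into a colour''. The paper instead relaxes the triple $\mathbf{q}=(q_1,q_2,q_3)$ to real values, introduces auxiliary values $R_3,C_3$ consistent with the stationarity equations \eqref{equ:phi_deri_zero_R}--\eqref{equ:phi_deri_zero_C}, and shows $\partial\overline{\Phi^S}/\partial q_1-\partial\overline{\Phi^S}/\partial q_3<0$ at $(q,0,0)$, so that shifting an infinitesimal portion of the colours from the $R_1$-group to a new $R_3$-group increases $\overline{\Phi}$ (\Cref{lem:q00_not_max}). Establishing this sign requires exhibiting a solution $(r_1,c_3)$ of the system \eqref{equ:r1_c3_sys} with $r_1>c_3>y_E$ and verifying it corresponds to the case $R_0/R_1\neq C_0/C_1$ (\Cref{lem:r1_c3_sol}, \Cref{lem:sym_c3_bound}), which is where the delicate estimates and the specific value $d=5q^k$ are used. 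Your proposed path-in-simplex argument would have to contend with the Hessian stability of the $(q,0,0)$ fixpoint — the second-order expansion you suggest would show $\Psi_1$ is locally maximal there, not that the path is monotone — so as stated the plan does not close the gap.
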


\Cref{thm:main-hardness} follows from \Cref{lem:graph-to-hypergraph}, \Cref{prop:GSV-hardness}, and \Cref{lem:dominant-phase}. It remains to analyse the dominant phases of $\BB$ and establish \Cref{lem:dominant-phase}, which is the focus of \Cref{sec:dominant-phase}.


\section{Analysis of the dominant phases}
\label{sec:dominant-phase}

In this section we analyze the dominant phase.
We will state the main lemmas in this section and in \Cref{sec:3-values}.
However, because the calculations are often very heavy,
many of the lemmas are not immediately proved.
The sections in which their proofs appear can be found in \Cref{tab:lemma-proof} at the end of \Cref{sec:3-values}.

Let $q,\Delta\geq 3$ be integers. To prove \Cref{lem:dominant-phase}, we need to analyse the function $\Psi_1$ from \eqref{eq:Psipsi}. The function $\Psi_1$ turns out to be inconvenient to work with, but there is a simpler surrogate function $\Phi$ from \cite{galanis2015inapproximability} that we can use. For vectors $\rb=\{R_i\}_{i\in [\oq]}$ and  $\cb=\{C_i\}_{i\in [\oq]}$ with nonnegative entries, let 
\begin{equation}\label{def:Phi}
\Phi(\rb,\cb):=\Delta\ln\frac{\transpose{\rb}\BB\cb}{\norm{\rb}_{p}\norm{\cb}_{p}}, \mbox{ where } p=\Delta/(\Delta-1).
\end{equation}
It is not hard to see that for the matrix $\BB$ defined in \eqref{eq:BBdef}, the critical points of $\Phi$ satisfy the following equations:\footnote{Here, and elsewhere, we use the notation $x_i\propto y_i$ for $i\in [\oq]$ to denote that $x_i=Ay_i$ for $i\in [\oq]$, for some arbitrary $A$.}
\begin{equation}\label{equ:recursion}
    \begin{aligned}
        R_0 &\propto t^d\Big(tC_0+\sum_{j\in [q]; j\neq i}C_i\Big)^d,\quad
        R_i \propto \Big(tC_0+\sum_{j\in [q]; j\neq i}C_i\Big)^d & \text{ for } i\in [q];\\
        C_0 &\propto t^d\Big(tR_0+\sum_{i\in [q]; i\neq j}R_i\Big)^d,\quad
        C_j \propto \Big(tR_0+\sum_{i\in [q]; i\neq j}R_i\Big)^d & \text{ for } j\in [q],
    \end{aligned}
\end{equation}
where $t=(q^k-q)^{1/\Delta}$ and $d\defeq\Delta-1$. 
The equations in \eqref{equ:recursion} are often called the ``tree recursions'', because they are the same as the recursion for marginal probabilities on an infinite $d$-ary tree.
Note that $1\le t\le 1.0312$ for any $q\ge 4$, $k\ge 2$ and $d\ge 5q^k$. The connection between the functions $\Psi_1$ and $\Phi$ is detailed in the following result from \cite{galanis2015inapproximability}, applied to our setting.

\begin{proposition}[{\cite[Theorem 4.1]{galanis2015inapproximability}}] \label{prop:gsv_4_1}
Let $q,\Delta\geq 3$ be integers, and let $p=\Delta/(\Delta-1)$. Then, the local maxima of $\Phi$ and $\Psi_1$ happen at critical points, i.e., there are no local maxima on the boundary. 
The transformation $(\rb,\cb)\mapsto (\alphab,\betab)$ given by $\alpha_i=R_i^{p}/\norm{\rb}^p_p$ and  $\beta_i=C_j^{p}/\norm{\cb}^p_p$ for $i\in[\oq]$ yields a one-to-one correspondence between the critical points of $\Phi$ and $\Psi_1$. Moreover, for the corresponding critical points $(\rb,\cb)$ and $(\alphab,\betab)$ it holds that $\Psi_1(\alphab,\betab)=\Phi(\rb,\cb)$.
\end{proposition}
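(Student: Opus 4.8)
The plan is to bypass the combinatorial meaning of $\Psi_1$ and work instead with the explicit formula for it coming from the first-moment computation, then match its critical-point structure with that of $\Phi$ through the substitution in the statement. Concretely, expanding $\Ex_{\+G_n}[Z_\BB^{\alphab,\betab}(G)]$ in the pairing model and applying Stirling's approximation (this is the derivation carried out in \cite[Section~2]{galanis2015inapproximability}) yields, for $\alphab,\betab$ ranging over the product of simplices on $[\oq]$, the closed form
\[\Psi_1(\alphab,\betab)=-(\Delta-1)\big(h(\alphab)+h(\betab)\big)+\Delta\,\Gamma(\alphab,\betab),\qquad \Gamma(\alphab,\betab):=\max_{\gamma}\Big(h(\gamma)+\sum_{i,j\in[\oq]}\gamma_{ij}\ln B_{ij}\Big),\]
where $h(\cdot)$ is the entropy function and the maximum is over couplings $\gamma=\{\gamma_{ij}\}$ with nonnegative entries, total mass $1$, row sums $\alphab$ and column sums $\betab$ (entries with $B_{ij}=0$ are forced to $0$).

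First I would analyse the inner transportation problem. For fixed $\alphab,\betab$ with all coordinates positive, the objective defining $\Gamma$ is strictly concave on the transportation polytope (which is nonempty since $\BB$ is ergodic, hence irreducible), so it has a unique maximiser $\gamma^{*}$, and the KKT conditions force the Gibbs form $\gamma^{*}_{ij}=x_iy_jB_{ij}$ for positive multipliers $x=x(\alphab,\betab)$, $y=y(\alphab,\betab)$ — unique up to the gauge $(x,y)\mapsto(\lambda x,\lambda^{-1}y)$ — pinned down by the marginal equations $\alpha_i=x_i(\BB y)_i$ and $\beta_j=y_j(\BB x)_j$. Substituting back gives $\Gamma(\alphab,\betab)=-\sum_i\alpha_i\ln x_i-\sum_j\beta_j\ln y_j$, and by the envelope theorem $\partial_{\alpha_i}\Psi_1=(\Delta-1)(\ln\alpha_i+1)-\Delta\ln x_i+\mathrm{const}$, and similarly for $\beta_j$. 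Hence the interior critical points of $\Psi_1$ are exactly those $(\alphab,\betab)$ for which $\alpha_i^{\Delta-1}\propto x_i^{\Delta}$ and $\beta_j^{\Delta-1}\propto y_j^{\Delta}$.

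Next I would set up the bijection and the value identity; here I use $p=\Delta/(\Delta-1)$, $d=\Delta-1$, so $\Delta/p=d$, $p-1=1/d$ and $p(\Delta-1)=\Delta$. Given a critical point $(\rb,\cb)$ of $\Phi$, which by the scale-invariance of both $\Phi$ and the substitution we may normalise to $\norm{\rb}_p=\norm{\cb}_p=1$, the equations \eqref{equ:recursion} read $R_i=a(\BB\cb)_i^{d}$ and $C_j=b(\BB\rb)_j^{d}$ for constants $a,b>0$. Setting $\alpha_i=R_i^{p}$, $\beta_j=C_j^{p}$ and $\gamma^{*}_{ij}=R_iC_jB_{ij}/(\transpose{\rb}\BB\cb)$, the identities $R_i(\BB\cb)_i=a^{-1/d}R_i^{p}$ and $\transpose{\rb}\BB\cb=a^{-1/d}$ show that $\gamma^{*}$ has row sums $\alphab$ and column sums $\betab$; since it is of Gibbs form with $x_i\propto R_i$, $y_j\propto C_j$, it is the optimal coupling, and then $\alpha_i^{\Delta-1}=R_i^{\Delta}\propto x_i^{\Delta}$, so $(\alphab,\betab)$ is a critical point of $\Psi_1$. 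Running this backwards — from a critical point of $\Psi_1$, set $R_i=\alpha_i^{1/p}$, $C_j=\beta_j^{1/p}$, and check $R_i\propto(\BB\cb)_i^{d}$ using $x_i\propto\alpha_i^{1/p}$ and $y_j\propto\beta_j^{1/p}$ — gives the inverse map, establishing the one-to-one correspondence. For the values, plugging $\gamma^{*}$ into the formula of the first paragraph and using $\ln R_i=\tfrac1p\ln\alpha_i$, $\ln C_j=\tfrac1p\ln\beta_j$ yields $\Gamma(\alphab,\betab)=\tfrac1p\big(h(\alphab)+h(\betab)\big)+\ln(\transpose{\rb}\BB\cb)$, hence $\Psi_1(\alphab,\betab)=\big(\Delta/p-(\Delta-1)\big)\big(h(\alphab)+h(\betab)\big)+\Delta\ln(\transpose{\rb}\BB\cb)=\Phi(\rb,\cb)$, the entropy terms cancelling exactly because $\Delta/p=\Delta-1$.

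Finally I would rule out local maxima on the boundary, after which ``local maxima occur at critical points'' is immediate on the interior. For $\Phi$, if $C_j=0$ then $\partial_{C_j}\ln\norm{\cb}_p=C_j^{p-1}/\norm{\cb}_p^{p}=0$ (as $p>1$) while $\partial_{C_j}\ln(\transpose{\rb}\BB\cb)=(\BB\rb)_j/(\transpose{\rb}\BB\cb)>0$ by irreducibility of $\BB$, so the outward derivative is strictly positive; a symmetric statement holds for $\rb$, and the remaining low-dimensional faces (where $\rb$ or $\cb$ is supported on a single coordinate) are handled by the same derivative computation together with ergodicity of $\BB$. For $\Psi_1$, as $\alpha_i\to0$ with the other coordinates bounded away from $0$ the marginal equations force $x_i\asymp\alpha_i$, so $\partial_{\alpha_i}\Psi_1=(\Delta-1)\ln\alpha_i-\Delta\ln x_i+O(1)=-\ln\alpha_i+O(1)\to+\infty$, again leaving no boundary maximum. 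I expect the main obstacle to be exactly this boundary analysis — controlling $\Psi_1$ and $\Phi$ uniformly near every face, in particular the degenerate corners where several coordinates vanish simultaneously, which is where ergodicity of $\BB$ is genuinely needed; the critical-point correspondence and the value identity reduce, once the Lagrangian duality for the inner problem is in hand, to unwinding the algebra and the relation $\Delta/p=\Delta-1$.
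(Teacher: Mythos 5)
The paper does not prove this proposition at all --- it is imported verbatim as \cite[Theorem 4.1]{galanis2015inapproximability} and used as a black box, so there is no in-paper argument to compare against. Your reconstruction follows essentially the same route as the original source: the first-moment expansion of $\Psi_1$ into entropy terms plus a max-entropy transportation problem, the Gibbs/matrix-scaling form $\gamma^*_{ij}=x_iy_jB_{ij}$ of the optimal coupling, the envelope-theorem computation of $\partial_{\alpha_i}\Psi_1$, and the exponent identity $p(\Delta-1)=\Delta$ that makes the entropy terms cancel in the value identity $\Psi_1(\alphab,\betab)=\Phi(\rb,\cb)$. I checked the algebra in the correspondence (the row-sum computation $R_i(\BB\cb)_i=a^{-1/d}R_i^p$, the normalisation $\transpose{\rb}\BB\cb=a^{-1/d}$, and both directions of the bijection) and it is correct. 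The one place the argument is thinner than it should be is the boundary analysis for $\Phi$ at the degenerate corners: when $\rb$ is supported on a single coordinate $j\in[q]$, the outward derivative $(\transpose{\BB}\rb)_j$ at $C_j=0$ vanishes, so the generic ``positive outward derivative'' computation fails there and one must instead either observe $\Phi=-\infty$ (if $\cb$ is also concentrated on $j$) or perturb on the $\rb$ side; you flag this but do not carry it out. Similarly, the existence of an interior point of the transportation polytope (needed for the KKT/Gibbs form) deserves a sentence. These are minor and fixable; the proof is otherwise a faithful and correct reconstruction of the cited result.
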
 

The function $\Phi$ is still multi-dimensional ($2q$), but fortunately we can reduce its dimensions significantly down to 11 by studying the structure of fixpoints to the system \eqref{equ:recursion}. A first observation is that $R_i<R_j$ implies $C_i>C_j$, and $R_i=R_j$ implies $C_i=C_j$, where $i,j\neq 0$. The next lemma is similar to \cite[Lemma 7.6]{galanis2015inapproximability}. 

\begin{lemma} \label{lem:fp-supp-count}
  Let $(R_0,R_1,\cdots,R_q,C_0,C_1,\cdots,C_q)$ be a positive fixpoint of \eqref{equ:recursion}.
  Then the number of distinct values in $\{R_i\}_{1\le i\le q}$ and $\{C_i\}_{1\le i\le q}$ is at most $3$.
\end{lemma}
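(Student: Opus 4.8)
The plan is to adapt the strategy of \cite[Lemma 7.6]{galanis2015inapproximability}: collapse the $2q$-dimensional fixpoint system \eqref{equ:recursion} to a single one-variable fixed-point equation, and then bound the number of its solutions by a convexity argument. Fix a positive fixpoint and set $S_R\defeq\sum_{i\in[q]}R_i$, $S_C\defeq\sum_{j\in[q]}C_j$, $c_1\defeq tR_0+S_R$ and $c_2\defeq tC_0+S_C$, which are positive constants. Restricting \eqref{equ:recursion} to the indices in $[q]$, there are positive constants $A,B$ with $R_i=A(c_2-C_i)^d$ and $C_i=B(c_1-R_i)^d$ for every $i\in[q]$ (the $0$-coordinates obey different equations, but these never enter once we only track $i\in[q]$). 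Since $y\mapsto B(c_1-y)^d$ is strictly decreasing on $\{y<c_1\}$, each $C_i$ is a strictly decreasing function of $R_i$, so $\{R_i\}_{i\in[q]}$ and $\{C_i\}_{i\in[q]}$ have the same number of distinct values and it is enough to bound the former. Substituting one equation into the other, every $R_i$ ($i\in[q]$) is a fixed point of
\[ h(y)\defeq A\bigl(c_2-B(c_1-y)^d\bigr)^d. \]

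I would next locate these fixed points and study the convexity of $h$. The ``bases'' $c_1-R_i=tR_0+\sum_{j\in[q],\,j\neq i}R_j$ and $c_2-C_i=tC_0+\sum_{j\in[q],\,j\neq i}C_j$ are sums of strictly positive quantities, so all the $R_i$ lie in the open interval $I\defeq\bigl(c_1-(c_2/B)^{1/d},\,c_1\bigr)$, on which $h$ is smooth and, writing $u\defeq c_1-y$ and $w\defeq c_2-Bu^d$, both $u>0$ and $w>0$. A routine differentiation gives $h'(y)=ABd^2\,w^{d-1}u^{d-1}>0$ and
\[ h''(y)=ABd^2(d-1)\,w^{d-2}u^{d-2}\bigl(B(d+1)u^d-c_2\bigr), \]
so that $\sgn h''(y)=\sgn\bigl(B(d+1)(c_1-y)^d-c_2\bigr)$; since $d=\Delta-1\ge2$, the latter is strictly decreasing in $y$ on $I$. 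Hence $h$ is convex on an initial sub-interval of $I$ and concave on the rest (either part possibly empty), so $h$ has at most one inflection point on $I$.

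To conclude, put $\phi(y)\defeq h(y)-y$. Then $\phi''=h''$ changes sign at most once on $I$, so $\phi'$ is unimodal and has at most two zeros, whence $\phi$ has at most three zeros on $I$ (by Rolle, between consecutive zeros of $\phi$ lies a zero of $\phi'$). Every value attained by $R_1,\dots,R_q$ is a zero of $\phi$ in $I$, so there are at most three of them, and the same bound for $C_1,\dots,C_q$ follows from the monotone correspondence above. The only place where the specific matrix $\BB$ of \eqref{eq:BBdef} is used is the sign of $h''$, and this comes out cleanly because the ``$t$-block'' of $\BB$ merely rescales the $0$-coordinates, leaving the recursion restricted to $[q]$ structurally the same as the colourings recursion; the step requiring the most care is the positivity bookkeeping --- that $A,B,u,w$ are strictly positive on $I$ --- which is exactly where the hypothesis of a \emph{positive} fixpoint is used.
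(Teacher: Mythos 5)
Your proof is correct and reaches the same bound by the same broad strategy --- reduce to a one-variable self-map whose fixed points are exactly the values $R_i$, then bound their count by a derivative sign-change argument --- but the execution differs in two ways worth noting. The paper first normalises so that $R_0+\sum_i R_i = 1$ and $C_0 + \sum_j C_j = 1$, works with the ratio $R_i/R_0$, and substitutes $x=(R_i/R_0)^{1/d}$ to peel off one layer of the $d$-th power; the resulting function $f$ satisfies $f'(x)=g(x)^{d-1}-1$ for a degree-$(d+1)$ polynomial $g$, positivity of $g$ reduces $f'=0$ to $g=1$, and Descartes' rule of signs caps $g-1$ at two positive roots. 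You instead keep the raw variables, observe that each $R_i$ is a fixed point of the composed map $h(y)=A\bigl(c_2-B(c_1-y)^d\bigr)^d$, and push the analysis one level further to $h''$, showing $\sgn h''$ is governed by the strictly monotone quantity $B(d+1)(c_1-y)^d - c_2$, so $h''$ changes sign at most once on $I$; Rolle then gives at most three zeros of $h-\mathrm{id}$. Both arguments hinge on the same positivity bookkeeping --- the positive fixpoint forces $c_1-R_i>0$ and $c_2-C_i>0$, which in the paper's parametrisation is the inequality $x^d<R'$ making $g>0$, and in yours is the statement that $u,w>0$ on $I$. Your version avoids the $d$-th root change of variable at the cost of going one derivative deeper; the paper's version trades a slightly heavier substitution for a purely algebraic (Descartes) finish. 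Either route is sound.
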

\begin{proof}
  Let $R:=\sum_{i=1}^{q}R_i$ and $C:=\sum_{i=1}^{q}C_i$. 
  Suppose all variables are normalized so that $R_0+R=C_0+C=1$.
  Then for any $i\in[q]$, we have that
  \begin{align*}
    \frac{R_i}{R_0} & = t^{-d}\Big( \frac{(t-1)C_0+1-C_i}{(t-1)C_0+1} \Big)^d = t^{-d}\Big( \frac{(t-1)+C_0^{-1}-C_i/C_0}{(t-1)+C_0^{-1}} \Big)^d = t^{-d}\Big( 1-\frac{1}{t^dC'}\Big( 1-\frac{R_i}{R_0R'} \Big)^d \Big)^d,
  \end{align*}
  where $C'=(t-1)+C_0^{-1}$ and $R'=(t-1)+R_0^{-1}$.
  Let $x=\left( R_i/R_0 \right)^{1/d}$ and note that $x\in [0,1]$. Then the above equation becomes $f(x)=0$, where $f(x): = t^{-1}\Big(1-\tfrac{1}{t^dC'}\big( 1-\tfrac{x^d}{R'}\big)^d\Big)-x$. We have that
  \begin{align*}
    f'(x):=(g(x))^{d-1}-1, \mbox{ where } g(x):=\Big(\tfrac{d^2}{t^{d+1}R'C'}\Big)^{1/(d-1)}\Big( 1-\frac{x^d}{R'}\Big)x.
  \end{align*}
  Note that $g(x)>0$ on the interval $[0,1]$ because $x^d=\frac{R_i}{R_0} < (t-1)+R_0^{-1}=R'$. Using that $(g(x))^{d-1}-1=(g(x)-1)(g(x)^{d-2}+\hdots+1)$, we therefore obtain that the roots of $f'(x)=0$ can only come from the roots of $g(x)-1$, 
  which has at most two roots by the Descartes' rule of signs.  Hence $f'(x)$ changes its sign at most twice in the interval of $[0,1]$ and $f(x)$ has at most $3$ roots over $[0,1]$, showing that the $R_i$'s for $i\in [q]$ can only be supported on three different values. The statement for the $C_i$'s follows by an analogous argument. 
\end{proof}

The above lemma motivates the following definition. 

\begin{definition} \label{def:fixpoint_type}
  Let $(R_0,R_1,\cdots,R_q,C_0,C_1,\cdots,C_q)$ be a positive fixpoint. We call the fixpoint $m$-supported, if the number of distinct values in $\{R_i\}_{1\leq i\leq q}$ is $m$, where $m\in\{1,2,3\}$. We call the fixpoint is of type $(q_1,q_2,q_3)$ where $q_1+q_2+q_3=q$, if the multiplicities of different numbers in $\{R_i\}_{1\leq i\leq q}$ are $q_1,q_2,q_3$ respectively.\footnote{Any permutation over $q_1,q_2,q_3$ is considered equivalent. E.g., $(q/2,q/2,0)$ and $(q/2,0,q/2)$ are regarded as the same type.} In case that the fixpoint is $2$ or $1$-supported, let one or two of $q_i$'s take zero respectively. 
\end{definition}

From now on we may also abuse the notation $R_i$ (also $C_i$, $i=1,2,3$) by absorbing all the same values, and hence $R_1$ stands for the value that $q_1$ of $R$'s (except $R_0$) take, rather than the value of $R$ on the first index in the fixpoint. 

The main lemma of this section can be stated as follows. 
\begin{lemma}\label{lem:main}
  Suppose $q\geq 4$ is even, $k\geq 2$ and $d=5q^k$. The maximum of $\Psi_1$ over $(q_1,q_2,q_3)$-type fixpoints is attained uniquely, when $(q_1,q_2,q_3)=(q/2,q/2,0)$. 
\end{lemma}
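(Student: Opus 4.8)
The plan is to reduce the high-dimensional optimisation of $\Psi_1$ (equivalently $\Phi$ via \Cref{prop:gsv_4_1}) over $(q_1,q_2,q_3)$-type fixpoints to a tractable comparison by exploiting two features of the matrix $\BB$: the near-triviality of the parameter $t$ (since $1\le t\le 1.0312$ for the relevant range of $\Delta$) and the symmetry among the $q$ ``colour'' spins. First I would set up coordinates for a $(q_1,q_2,q_3)$-type fixpoint: by \Cref{lem:fp-supp-count} and \Cref{def:fixpoint_type} such a fixpoint is described by the values $R_0,R_1,R_2,R_3,C_0,C_1,C_2,C_3$ together with the multiplicities $q_1,q_2,q_3$, so $\Phi$ restricted to these fixpoints becomes a function of at most $11$ real parameters plus the discrete datum $(q_1,q_2,q_3)$. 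For each fixed type I would use the tree recursions \eqref{equ:recursion} to pin down the fixpoint(s) and evaluate $\Phi$; the goal is to show that, among all admissible $(q_1,q_2,q_3)$ with $q_1+q_2+q_3=q$, the value of $\Phi$ at the best fixpoint of that type is strictly maximised by the balanced bipartition type $(q/2,q/2,0)$.

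The natural strategy is a two-step argument. Step one: show that the maximiser over each type must be $1$- or $2$-supported, i.e.\ genuine $3$-supported fixpoints are never optimal. I expect this follows from a convexity/strict-concavity argument in the spirit of \cite[Lemma 7.6]{galanis2015inapproximability} and \Cref{lem:fp-supp-count}: collapsing two of the three supported values into one (appropriately averaged) can only increase $\Phi$, because of the strict concavity of the relevant entropy terms combined with the $\ell_p$-norm in the denominator of \eqref{def:Phi}. This reduces the problem to comparing the ``symmetric'' type $(q,0,0)$ (all colour spins equal), the ``$0$-heavy'' types where $C_0$ or $R_0$ dominates, and the two-block types $(q_1,q_2,0)$. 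Step two: among the two-block types $(q_1,q_2,0)$, optimise over the split $q_1+q_2=q$. Here the symmetry among colours should make $\Phi$ a smooth function of the ratio $q_1/q$, and I would compute its derivative (or second derivative) with respect to moving one colour from block $1$ to block $2$, showing the optimum is interior at $q_1=q_2=q/2$; this is where the evenness of $q$ enters, exactly as in the graph-colouring analysis of \cite{galanis2015inapproximability}. One also has to rule out that the fully symmetric phase $(q,0,0)$ or a phase favouring the special spin $0$ beats $(q/2,q/2,0)$; for the latter I anticipate needing the delicate interpolation/perturbation argument flagged in the introduction, since (as the authors note) the $0$-favouring phase can be Hessian-stable and hence not defeated by a naive local perturbation.

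Concretely, I would carry out the steps in this order. (i) Parametrise $(q_1,q_2,q_3)$-type fixpoints and write $\Phi$ explicitly in these $11$ variables, recording the reduced tree recursions. (ii) Prove the ``no genuine $3$-support'' reduction by a collapsing/averaging argument, leaving only one- and two-block types. (iii) For two-block types, treat $x := q_1/q \in [0,1]$ as continuous, show $\Phi$ at the optimal fixpoint of type $x$ is symmetric under $x\mapsto 1-x$, and show strict concavity (or a sign condition on the derivative) in $x$ forcing the maximum at $x=1/2$; invoke evenness of $q$ to conclude $(q/2,q/2,0)$ is the integral optimum among these. (iv) Separately compare the value at $(q/2,q/2,0)$ against the symmetric phase and the $0$-dominant phase(s), using asymptotic estimates in $d=5q^k$ and $t=(q^k-q)^{1/\Delta}$ together with an interpolation path between the two candidate fixpoints; show the difference in $\Phi$ has a definite sign. (v) Combine (ii)--(iv) to get that the maximum over all $(q_1,q_2,q_3)$-type fixpoints is attained uniquely at $(q/2,q/2,0)$, and translate back to $\Psi_1$ via \Cref{prop:gsv_4_1}.

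The main obstacle, as the introduction itself stresses, will be step (iv): ruling out the phase that favours the special spin $0$. Unlike in \cite{galanis2015inapproximability}, a single Hessian computation does not suffice because that phase is locally stable; the comparison must be global, done along a carefully chosen interpolation path connecting it to the candidate dominant phase $(q/2,q/2,0)$, with the sign of $\Phi$ along the path controlled by a chain of sharp asymptotic estimates in $d$ and $t$. Getting these estimates to be simultaneously tight enough to separate the two phases — while also covering all intermediate two- and three-block types — is the technically heaviest part, and is presumably why the authors defer most of it to \Cref{sec:3-values}. The concavity arguments in steps (ii)--(iii), by contrast, I expect to be comparatively routine, modulo careful bookkeeping of the $\ell_p$-norm terms.
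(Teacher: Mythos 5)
There is a genuine gap in your proposal, concentrated in steps (ii) and (iv). In step (ii) you propose to rule out genuinely $3$-supported maximisers by a ``collapsing/averaging argument'' akin to Lemma~\ref{lem:fp-supp-count}, i.e.\ by averaging two of the three distinct support values. That is not what the paper does, and it is not clear such an averaging would increase $\Phi$: the surrogate $\Phi$ in \eqref{def:Phi} has a ratio of a bilinear form over a product of $\ell_p$ norms, and there is no obvious Schur-type concavity to lean on. The paper instead introduces the function $\overline{\Phi^S}(\mathbf{q},\mathbf{r},\mathbf{c})$ in \eqref{equ:phi_qrc_def}, treats the multiplicities $\mathbf{q}=(q_1,q_2,q_3)$ as \emph{continuous} variables on the whole simplex $q_1+q_2+q_3=q$, and rules out $3$-maximal triples by a perturbation-in-$\mathbf{q}$ argument (\Cref{lem:dphi_dqi_expr} combined with \Cref{lem:3_max_maximizer}): if the triple were optimal, one would need $\partial\overline{\Phi^S}/\partial q_1=\partial\overline{\Phi^S}/\partial q_2=\partial\overline{\Phi^S}/\partial q_3$, and an explicit computation forces this to fail for any strictly $3$-supported configuration. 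The bookkeeping requires the notion of \emph{$m$-maximal} triples (\Cref{def:maximal}) precisely to separate the true $3$-support case from the degenerate one where two values coincide (which \emph{is} collapsed, trivially, in item (5) of the paper's proof). Your step (iii) is correspondingly a little off: it is not concavity in $x=q_1/q$ that is shown, but a sign condition on a derivative difference (\Cref{lem:2_max_maximizer}) whose only zero forces $q_1=q_3$.

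The larger gap is step (iv), which is also the part you correctly identify as the hard one. You gesture at ``an interpolation path between the two candidate fixpoints'' and ``asymptotic estimates,'' but the key mechanism is missing. The symmetric $(q,0,0)$ phase splits into two subcases. When $R_0/R_1=C_0/C_1$ the fixpoint is Jacobian \emph{unstable} (\Cref{lem:q00_equal_unstable}), hence by \Cref{thm:gsv_4_2} it is not even a local maximum of $\Psi_1$ and can be discarded without comparison. The genuinely hard case is $R_0/R_1\neq C_0/C_1$, which \emph{is} Jacobian stable (\Cref{lem:q00_inequal_stable}), so no local perturbation in $(\mathbf{r},\mathbf{c})$ can beat it. The paper's resolution is not a direct value comparison along a path in $(\mathbf{r},\mathbf{c})$-space; rather, it exploits the fact that when $q_3=0$ the coordinates $R_3,C_3$ are \emph{free} (they do not affect $\overline{\Phi^S}$), and one may \emph{choose} them to satisfy the stationarity equations \eqref{equ:phi_deri_zero_R}--\eqref{equ:phi_deri_zero_C}. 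This requires solving the implicit system \eqref{equ:r1_c3_sys} (Lemmas~\ref{lem:r1_c3_yields_r0_c0}--\ref{lem:r1_c3_sol}) and checking that the resulting solution really belongs to the $R_0/R_1\neq C_0/C_1$ branch (\Cref{lem:sym_c3_bound}). With $R_3,C_3$ so chosen, the sign of $\partial\overline{\Phi^S}/\partial q_1-\partial\overline{\Phi^S}/\partial q_3$ can again be controlled (it is $-\mathrm{sgn}(r_1-c_3)$, by the identity \eqref{equ:dq1_dq3}), and perturbing $\mathbf{q}$ to $(q-\varepsilon,0,\varepsilon)$ strictly increases $\overline{\Phi^S}$. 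This perturbation lives in $\mathbf{q}$-space, which is why Hessian dominance in $(\mathbf{r},\mathbf{c})$ is irrelevant; it is the reason the continuous relaxation of $\mathbf{q}$ is introduced in the first place. Your proposal contains none of this, and without the $R_3,C_3$ freedom trick the $(q,0,0)$ phase cannot be ruled out by the methods you outline.
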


We also need to prove that $2$-maximal triples $(q/2,q/2,0)$ yield unique $\mathbf{r}$ and $\mathbf{c}$ (up to scaling and permutation), and that the corresponding maxima are Hessian dominant. 
\begin{lemma} \label{lem:2_max_stable}
  Suppose $q\geq 4$ is even, $k\geq 2$ and $d\geq 5q^k$. The fixpoints of type $(q/2,q/2,0)$ are unique up to scaling and permutation symmetric. In addition, they are Hessian dominant maxima of $\Psi_1$.
\end{lemma}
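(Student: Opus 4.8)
The plan is to study the tree recursion \eqref{equ:recursion} restricted to fixpoints of type $(q/2,q/2,0)$, which by the symmetry among the two blocks of $q/2$ pure colours collapses to a low-dimensional system in the variables $R_0,R_1,R_2$ (and $C_0,C_1,C_2$ by the observation that $R_i<R_j$ forces $C_i>C_j$). First I would set up this reduced system explicitly: writing $a$ for the common value of the $q/2$ coordinates $R_1$, $b$ for the other $q/2$ coordinates $R_2$, and similarly $a',b'$ on the $C$-side, and normalising $R_0+\tfrac{q}{2}(a+b)=1$, I get a fixpoint equation $F(R_0,a,b)=0$ in three real unknowns. The key structural claim is that there is a \emph{symmetric} solution with $a=b$ (equivalently $a'=b'$), which corresponds to the ``$(q/2,q/2)$ collapsing back to a single value'', and I need to show that among $(q/2,q/2,0)$-type fixpoints this symmetric one is the only solution up to permutation. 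The natural tool is a monotonicity/contraction argument: along the two-step recursion $x\mapsto$ (ratio after going $R\to C\to R$), show the relevant one-variable map is a contraction on the domain dictated by $t\in[1,1.0312]$ and $d=5q^k$ (this is exactly the regime where the Descartes/Rolle bound in \Cref{lem:fp-supp-count} has already pinned the number of values, so here the job is to rule out a genuinely ``split'' pair). Concretely I would parametrise by $u=(a/R_0)^{1/d}$, $v=(b/R_0)^{1/d}$ and show the difference $u-v$ is strictly contracted under the two-step map unless $u=v$; since the reduction to a $(q/2,q/2,0)$ type is assumed, this forces the unique fixpoint and hence (combined with \Cref{lem:main}) the uniqueness of the dominant $(\rb,\cb)$ up to scaling and permutation.

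Next, for permutation symmetry: the matrix $\BB$ from \eqref{eq:BBdef} is invariant under any permutation of the $q$ pure colours (and under nothing that moves the special spin $0$), so the $(q/2,q/2,0)$ fixpoint — which assigns one value to $q/2$ pure colours and another to the remaining $q/2$ — has a stabiliser that acts transitively enough that any two dominant phases of this type differ by a permutation matrix ${\bm P}$ with $\BB={\bm P}\BB\transpose{{\bm P}}$; I would spell out that the only ambiguity is the choice of which $q/2$ pure colours get the larger value and the swap of the two sides of the bipartite graph, both of which are realised by such a ${\bm P}$. This is essentially bookkeeping, parallel to the corresponding step in \cite{galanis2015inapproximability}.

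The substantive part, and the main obstacle, is Hessian dominance: I must show the Hessian of $\Psi_1$ (equivalently, via \Cref{prop:gsv_4_1} and \Cref{prop:GSV-hardness}, the relevant second-order condition on $\Phi$ after accounting for the reparametrisation $\alpha_i=R_i^p/\norm{\rb}_p^p$) is negative definite at this fixpoint. Here the difficulty flagged in the introduction bites: the special spin $0$ can be locally favoured, so a crude ``$h(\alphab)$ is strictly concave plus a small correction'' estimate as in \cite{galanis2015inapproximability} is not available — there are directions (moving mass between $0$ and the pure colours) where the naive bound is flat. The plan is to decompose the tangent space into the permutation-symmetric block (directions that keep the $q/2$-fold symmetry, where the Hessian is the Hessian of the reduced $3$-variable function and can be checked by a direct, $t$-uniform computation using $d$ large) and the complementary block of symmetry-breaking directions. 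On the symmetry-breaking block, negative definiteness should follow from the fact that the tree recursion is a contraction in those directions — which is precisely the contraction estimate established above, now used quantitatively to bound the spectral radius of the linearisation strictly below $1$ and thereby force the corresponding eigenvalues of the Hessian to be strictly negative. The delicate point is controlling the cross terms between the two blocks and the contribution of the special spin uniformly in $q\ge 4$ and $k\ge 2$ with $d=5q^k$; I expect this to require the same kind of careful asymptotic estimates (Taylor expansions in $1/d$ with explicit error control, and using $t=1+O(k\log q/d)$) that the paper advertises, together with possibly a small perturbation argument to handle the boundary case where an eigenvalue is close to zero. If a fully uniform bound proves elusive, the fallback is to verify negative definiteness for $d=5q^k$ exactly and then argue by continuity/monotonicity in $d$ that it persists for all $d\ge 5q^k$, which is the statement actually needed.
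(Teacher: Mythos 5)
Your proposal diverges from the paper's proof in both the uniqueness and the Hessian-dominance parts, and in the latter you are missing the key tool. For uniqueness, the paper does not use a contraction argument: it uses \Cref{lem:2_max_maximizer} to get $r_1=c_3$ at a maximiser of $(q/2,q/2,0)$-type, which reduces everything to a single scalar equation $h(x)=0$ with $x=r_1=c_3$, and the unique root $x>1$ then follows from \Cref{lem:r1_unique} (which shows $h$ is negative at all of its critical points). Your contraction scheme faces an obstacle you do not address: the tree recursion \eqref{equ:recursion} is deliberately in the non-uniqueness regime (cf.\ \Cref{lem:q00_2spin_non}), so a two-step contraction cannot hold in general; restricting attention to the $(q/2,q/2,0)$ slice is not clearly enough, and the algebraic constraint $r_1=c_3$ that actually pins things down never appears in your argument.

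For Hessian dominance, the paper does not compute the Hessian of $\Psi_1$ or $\Phi$ at all. It invokes \Cref{thm:gsv_4_2} (Theorem~4.2 of \cite{galanis2015inapproximability}), which says that Hessian dominance of $\Psi_1$ at a critical point is \emph{equivalent} to Jacobian stability of the corresponding tree-recursion fixpoint, and then \Cref{lem:gsv_4_16}, which converts Jacobian stability into the concrete, checkable condition that the second-largest eigenvalue of an explicit $2(q+1)\times 2(q+1)$ symmetric matrix ${\bm L}$ is $<1/d$ (note: $1/d$, not $1$). The paper writes ${\bm L}$ in block form in terms of three scalars $a,b,c$, identifies its spectrum as $\pm ab$ with multiplicity $q-2$ together with the roots of a cubic $f(z)$, shows $ab$ is the second-largest eigenvalue by a sign check on $f$, and finishes with $ab<1/d$ from the $d$-uniform AM--GM inequality $\tfrac{x^d-1}{x-1}>dx^{(d-1)/2}$. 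Your plan to expand $\Psi_1$ directly on symmetric and symmetry-breaking blocks with Taylor estimates in $1/d$, and to infer Hessian negativity heuristically from Jacobian contraction, is precisely the route \Cref{thm:gsv_4_2} is designed to avoid: without it you lack the precise dictionary between the Jacobian spectrum and the Hessian eigenvalues, and you would have to rediscover the $1/d$ threshold. Your fallback of ``verify at $d=5q^k$ and argue monotonicity in $d$'' is not a proof unless you actually establish the monotonicity, which the paper sidesteps entirely because its eigenvalue bound holds uniformly in $d$.
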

Lemma~\ref{lem:dominant-phase} follows immediately by combining Lemmas~\ref{lem:main} and~\ref{lem:2_max_stable}
.

\subsection{Restricting to three values}\label{sec:3-values}

In order to prove Lemma \ref{lem:main}, we need to determine which type of fixpoints maximizes $\Psi_1$. By using Proposition \ref{prop:gsv_4_1}, the value of $\Psi_1$ corresponding to such a fixpoint in (\ref{equ:recursion}) can be given by the matrix norm (\ref{def:Phi}), which can be seen to be equal to
\begin{equation} \label{equ:phi_qrc_def}
\begin{aligned}
  &\overline{\Phi^S}(\mathbf{q},\mathbf{r},\mathbf{c}):=\\
  &(d+1)\ln\Big(R_0C_0t^2+\left(\mbox{$\sum_{i=1}^{3}$\,}q_iR_i\right)C_0t+\left(\mbox{$\sum_{i=1}^{3}$\,}q_iC_i\right)R_0t+\left(\mbox{$\sum_{i=1}^{3}$\,}q_iR_i\right)\left(\mbox{$\sum_{i=1}^{3}$\,}q_iC_i\right)-\left(\mbox{$\sum_{i=1}^{3}$\,}q_iR_iC_i\right)\Big)\\
  &\hskip 2cm-d\ln\left(R_0^{(d+1)/d}+\mbox{$\sum_{i=1}^{3}$\,}q_iR_i^{(d+1)/d}\right)-d\ln\left(C_0^{(d+1)/d}+\mbox{$\sum_{i=1}^{3}$\,}q_iC_i^{(d+1)/d}\right).
\end{aligned}
\end{equation}
Where we define the vector $\mathbf{r}=(R_0,R_1,R_2,R_3)$ and $\mathbf{c}=(C_0,C_1,C_2,C_3)$. It is worth noting that this function is scale-free with respect to $\mathbf{r}$ and $\mathbf{c}$, as this property will be used intensively in our later proofs. 

The discrete optimization of (\ref{equ:phi_qrc_def}) over all fixpoints of the tree recursion (\ref{equ:recursion}) is difficult to cope with. Instead, we then try to maximize (\ref{equ:phi_qrc_def}) over all nonnegative $\mathbf{q}$ and $\sum_{i=1}^{3}q_i=q$, wishing the maximum to be taken at integer $\mathbf{q}$. This is the main reason such approach can only deal with even $q$. 

For all $\mathbf{q}=(q_1,q_2,q_3)$ with $q_1+q_2+q_3=q, q_i\geq 0$, define
\begin{equation} \label{equ:phi_q_def}
  \overline{\Phi}(\mathbf{q}):=\max_{\mathbf{r},\mathbf{c}}\overline{\Phi^S}(\mathbf{q},\mathbf{r},\mathbf{c})
\end{equation}
where the maximum is taken over $\mathbf{r}=(R_0,R_1,R_2,R_3), \mathbf{c}=(C_0,C_1,C_2,C_3)$ satisfying
\begin{equation} \label{equ:phi_cond}
  \begin{gathered}
R_0C_0t^2+\left(\mbox{$\sum_{i=1}^{3}$\,}q_iR_i\right)C_0t+\left(\mbox{$\sum_{i=1}^{3}$\,}q_iC_i\right)R_0t+\left(\mbox{$\sum_{i=1}^{3}$\,}q_iR_i\right)\left(\mbox{$\sum_{i=1}^{3}$\,}q_iC_i\right)-\left(\mbox{$\sum_{i=1}^{3}$\,}q_iR_iC_i\right)>0,\\
    R_i, C_i\geq 0, i=0,1,2,3.
  \end{gathered}
\end{equation}

Our first step is to verify the maximum in (\ref{equ:phi_q_def}) is well defined, and moreover, the maximum in $\max_{\mathbf{q}}\overline{\Phi}(\mathbf{q})$ can also be taken. This is formalized by the next lemma.
\begin{lemma}[{\cite[Lemma 7.10]{galanis2015inapproximability}}] \label{lem:max_well_defined}
The maximum in \eqref{equ:phi_q_def} is well-defined. In addition, $\max_{\mathbf{q}}\overline{\Phi}(\mathbf{q})$ can be attained in the region where $q_1+q_2+q_3=q, q_i\geq 0$. 
\end{lemma}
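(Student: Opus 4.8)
The plan is to prove the two claims in turn. The one structural fact used throughout is the scale-freeness of $\overline{\Phi^S}$ in $\mathbf{r}$ and in $\mathbf{c}$ separately, noted right after \eqref{equ:phi_qrc_def}: it lets us trade the a priori non-compact search domain for a compact one.

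Fix first a single $\mathbf{q}$ with $q_1+q_2+q_3=q$, $q_i\geq 0$. Using scale-freeness I would normalise $\mathbf{r}$ and $\mathbf{c}$ by $\sum_{i=0}^{3}R_i=\sum_{i=0}^{3}C_i=1$; this excludes only the configurations $\mathbf{r}=\mathbf{0}$, $\mathbf{c}=\mathbf{0}$, which are infeasible for \eqref{equ:phi_cond} anyway. Then $(\mathbf{r},\mathbf{c})$ ranges over the feasible set (where the bilinear ``numerator'' $N$ from \eqref{equ:phi_qrc_def} is strictly positive), a relatively open subset of the compact product of two standard $3$-simplices, on which $\overline{\Phi^S}$ is continuous. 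On this set $\overline{\Phi^S}$ is bounded above: $N$ is a polynomial, hence bounded, and one checks that it is controlled by the total mass on the non-inert coordinates $\{0\}\cup\{i:q_i>0\}$, while each denominator $R_0^{(d+1)/d}+\sum_i q_i R_i^{(d+1)/d}$ is bounded below by the corresponding power of that mass (if a denominator vanishes, every term of $N$ carries a vanishing factor, so $N=0$ and the point is infeasible). It is also not identically $-\infty$: at the explicit point $R_0=C_0=0$, $R_1=R_2=R_3=C_1=C_2=C_3=a>0$ one gets $N=q(q-1)a^2>0$, which is feasible for every $\mathbf{q}$ since $q\geq 4$. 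Finally, $\overline{\Phi^S}$ extends upper semicontinuously to the compact domain (with value $-\infty$ at the boundary points where $N=0$ while both denominators stay positive; the remaining boundary points, where a denominator vanishes jointly with $N$, carry only inert-coordinate mass and, thanks to the cancellation $d\cdot\frac{d+1}{d}=d+1$, give finite limiting values already attained in the interior). Hence the supremum is a finite real number attained at a feasible point, i.e.\ the maximum in \eqref{equ:phi_q_def} is well-defined.

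Next, let $\mathbf{q}$ range over the compact simplex $S:=\{(q_1,q_2,q_3):q_i\geq 0,\ \sum_i q_i=q\}$. I would show $\overline{\Phi}$ is continuous on $S$, whence $\max_{\mathbf{q}}\overline{\Phi}(\mathbf{q})$ is attained by compactness. Continuity follows from a maximum-theorem (Berge) argument: $\overline{\Phi^S}$ is jointly continuous on the feasible region, its normalised $(\mathbf{r},\mathbf{c})$-domain is compact, the feasible set depends continuously on $\mathbf{q}$, and it is never empty (the test point above works for all $\mathbf{q}$). Equivalently, one argues directly: given a maximising sequence $\mathbf{q}^{(n)}\to\mathbf{q}^\star$ with normalised maximisers $(\mathbf{r}^{(n)},\mathbf{c}^{(n)})$, pass to a convergent subsequence and use joint continuity together with the uniform bound $N\geq q(q-1)a^2$ at the test point to see that the limit is feasible and realises the value $\max_{\mathbf{q}}\overline{\Phi}(\mathbf{q})$.

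The only genuine obstacle is that $\overline{\Phi^S}$ degenerates through its $\ln N$ term on the boundary of the feasible set \eqref{equ:phi_cond}, so that the natural domain is open and, before normalising, non-compact. Both issues dissolve once scale-freeness is used to normalise (restoring compactness in $(\mathbf{r},\mathbf{c})$) and the strictly positive value of $N$ at the explicit test configuration is recorded (keeping the optimum bounded away from $-\infty$, uniformly in $\mathbf{q}$). The one place demanding care is the bookkeeping around the inert coordinates with $q_i=0$, where the $(d+1)$ versus $d$ weights in \eqref{equ:phi_qrc_def} must be tracked so that the numerator and the denominators degenerate compatibly; this is routine but fiddly, and mirrors the treatment in \cite[Lemma 7.10]{galanis2015inapproximability}.
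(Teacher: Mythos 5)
The paper's own proof of this lemma is a one-line deferral: ``The argument is verbatim the same as in \cite{galanis2015inapproximability}, the only difference is that the function has slightly different form...'' You instead give a self-contained compactness argument — normalize $\mathbf{r},\mathbf{c}$ to the simplex via scale-freeness, exhibit the explicit feasible test point $R_0=C_0=0$, $R_i=C_i=a$ (giving the $\mathbf{q}$-independent lower bound $(d+1)\ln\big(q(q-1)\big)-2d\ln q$), and then argue attainment. This is the standard route and it is what \cite[Lemma~7.10]{galanis2015inapproximability} does, so the approaches match; you are simply filling in what the paper declines to spell out.

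Two details in your write-up are stated loosely and deserve to be tightened. First, the claimed lower bound $R_0^{(d+1)/d}+\sum_i q_iR_i^{(d+1)/d}\gtrsim\big(R_0+\sum_{i:q_i>0}R_i\big)^{(d+1)/d}$ is not uniform in $\mathbf{q}$: if some $q_i>0$ is tiny and all mass sits on $R_i$, the left side is $q_iR_i^{(d+1)/d}$, so the implicit constant degrades as $q_i\downarrow 0$. For a fixed $\mathbf{q}$ this is harmless (the constant may depend on $\mathbf{q}$), but for the second part of the lemma, where $\mathbf{q}$ varies over the simplex, you must instead observe that the numerator $N$ is itself of the form $R_0A_0+\sum_iq_iR_iA_i$, so the small $q_i$ is matched between $N$ and the denominator and the logs cancel exactly as in the $(d+1)\ln\varepsilon-(d+1)\ln\varepsilon$ computation you already use. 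Second, ``$\overline{\Phi^S}$ extends upper semicontinuously to the compact domain'' is not literally true at boundary points where both $N$ and a denominator vanish — the limit there depends on the direction of approach, so there is no well-defined extended value. The correct phrasing of the idea you are reaching for is the rescaling argument: along a maximizing sequence whose non-inert $R$-mass $m_R^{(n)}\to 0$, set inert coordinates to zero and divide the non-inert ones by $m_R^{(n)}$; this preserves $\overline{\Phi^S}$ (inert coordinates do not appear, and the remaining rescaling is the scale-freeness you noted), puts the sequence back in the compact simplex with denominators bounded away from zero, and thus produces a feasible subsequential limit achieving the supremum unless $N\to 0$ with denominators bounded away from zero, which forces $\overline{\Phi^S}\to-\infty$ and contradicts the lower bound from your test point. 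With those two clarifications, the proof is correct and complete.
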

\begin{proof}
The argument is verbatim the same as in \cite{galanis2015inapproximability}, the only difference is that the function has slightly different form, but still accounts for the relevant parameters $q_1,q_2,q_3$.
\end{proof}

The next trouble we may encounter later is that we are now dealing with all possible $\mathbf{r},\mathbf{c}$ conditioned on (\ref{equ:phi_cond}), instead of just fixpoints of (\ref{equ:recursion}). The good news is that, in contrast to  \cite{galanis2015inapproximability}, we can rule out fairly easily that the maximizer in \eqref{equ:phi_q_def} is at the boundary.
\begin{lemma} \label{lem:bad_triple}
For all triples ${\bm q}=(q_1,q_2,q_3)$, any maximizer in \eqref{equ:phi_q_def} satisfies (a) $R_0,C_0>0$, (b) for any $i$ such that $q_i>0$, it holds that $R_i,C_i>0$, and (c) for distinct $i,j$ such that $q_i,q_j>0$, it holds that  $R_i=R_j$ if and only if $C_i=C_j$. 
\end{lemma}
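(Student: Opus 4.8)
The plan is to analyze the maximization problem in \eqref{equ:phi_q_def} by splitting it into the three claimed properties (a), (b), (c), exploiting the scale-freeness of $\overline{\Phi^S}$ in $\mathbf r$ and $\mathbf c$ and the logarithmic structure of the objective. For (a), I would argue by contradiction: suppose a maximizer has $R_0=0$ (the case $C_0=0$ is symmetric). Then the $t^2$- and $t$-terms involving $R_0$ vanish, and I claim that perturbing $R_0$ from $0$ to a small $\eps>0$ strictly increases $\overline{\Phi^S}$. Concretely, the numerator inside the first logarithm gains a term $\big(\sum_i q_iC_i\big)t\,\eps + O(\eps^2)$, which is strictly positive since condition \eqref{equ:phi_cond} forces $\sum_i q_iC_i>0$ (otherwise the whole expression inside the log would be nonpositive, violating feasibility). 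The denominator picks up only an $\eps^{(d+1)/d}=o(\eps)$ contribution from the $-d\ln(\cdots)$ term; hence the net first-order change in $\overline{\Phi^S}$ is $+\Theta(\eps)$, contradicting maximality. The same $\eps$ vs.\ $\eps^{(d+1)/d}$ comparison underlies (b): if $q_i>0$ but $R_i=0$ at a maximizer, increasing $R_i$ to $\eps$ changes the numerator by $\big((\sum_j q_jC_j)-C_i\big)\eps + O(\eps^2) = \big(\sum_{j\neq i} q_j C_j + (q_i-1)C_i\big)\eps + O(\eps^2)\geq 0$ with the denominator again only gaining $o(\eps)$; one must check the linear coefficient is strictly positive, which holds because either $q\geq 2$ gives another index contributing a positive $C_j$, or if all the relevant $C_j=0$ then feasibility of the column side is itself violated — so a short case analysis using \eqref{equ:phi_cond} for both $\mathbf r$ and $\mathbf c$ closes this.

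For part (c), the direction ``$R_i=R_j \Rightarrow C_i=C_j$'' and its converse: having established via (a),(b) that all coordinates with positive multiplicity are strictly positive, the maximizer is an interior critical point in those coordinates, so I would write the stationarity (Lagrange/KKT) conditions of $\overline{\Phi^S}$ with respect to $R_i$ and $R_j$. Differentiating \eqref{equ:phi_qrc_def} in $R_i$ gives an equation of the form $\frac{q_i\,t C_0 + q_i(\sum_\ell q_\ell C_\ell) - q_i C_i}{N} = \frac{d}{d+1}\cdot\frac{q_i R_i^{1/d}}{\norm{\mathbf r}}$ where $N$ is the expression inside the first log; cancelling the common $q_i$ (legitimate since $q_i>0$) yields $\frac{tC_0 + \sum_\ell q_\ell C_\ell - C_i}{N} = \frac{d}{d+1}\cdot\frac{R_i^{1/d}}{\norm{\mathbf r}}$, and similarly for $j$. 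Subtracting the $i$ and $j$ equations gives $\frac{C_j - C_i}{N} = \frac{d}{d+1}\cdot\frac{R_i^{1/d}-R_j^{1/d}}{\norm{\mathbf r}}$, so $R_i=R_j \iff C_i=C_j$ immediately, since $x\mapsto x^{1/d}$ is strictly monotone and $N,\norm{\mathbf r}>0$. This is exactly the monotone relationship already noted right before Lemma~\ref{lem:fp-supp-count}, now applied to the relaxed optimizer rather than a genuine tree fixpoint.

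The main obstacle I anticipate is \textbf{not} the calculus but the bookkeeping of the degenerate feasibility cases in (a) and (b): one must make sure that when some coordinates are forced to vanish, condition \eqref{equ:phi_cond} still pins down enough positivity (of $\sum q_iC_i$, or of individual $C_i$'s) to guarantee that the linear perturbation coefficient is genuinely positive and not zero. In particular the term $-\sum_i q_i R_i C_i$ in \eqref{equ:phi_cond} has the ``wrong'' sign, so one cannot naively conclude positivity of partial sums; the clean way is to note that $N>0$ together with $R_0,C_0\geq 0$ forces $\big(\sum_i q_iR_i\big)\big(\sum_i q_iC_i\big) > \sum_i q_iR_iC_i \geq 0$, hence both $\sum_i q_iR_i>0$ and $\sum_i q_iC_i>0$, and this is the lever that makes every perturbation argument go through. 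Once these positivity facts are isolated as a preliminary observation, (a), (b), (c) each follow from a one-line first-order (resp.\ stationarity) computation as sketched, with the recurring theme that an $\eps$-sized gain in the numerator always dominates the $\eps^{(d+1)/d}$-sized penalty in the $\ell_p$-norm denominator.
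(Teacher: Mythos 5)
Your overall approach (perturb the boundary coordinates, use the $\eps$ vs.\ $\eps^{(d+1)/d}$ comparison, then stationarity for part (c)) is the same as the paper's, and parts (a) and (c) are essentially correct modulo minor slips (for (a) you dropped the $C_0t^2\eps$ contribution, but the full linear coefficient $t(C_0t+\sum_i q_iC_i)$ is positive because otherwise $N=0$ at $R_0=0$, so the conclusion stands).

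However, there are two genuine problems. First, your ``clean lever'' is false: $N>0$ together with $R_0,C_0\geq 0$ does \emph{not} force $(\sum_i q_iR_i)(\sum_i q_iC_i)>\sum_i q_iR_iC_i$. When $R_0,C_0$ are large, the terms $R_0C_0t^2 + (\sum q_iR_i)C_0t + (\sum q_iC_i)R_0 t$ can dominate a negative contribution from $(\sum q_iR_i)(\sum q_iC_i)-\sum q_iR_iC_i$ (take e.g.\ $q_1=0.5$, $R_1=C_1=1$, $R_2=R_3=C_2=C_3=0$, $R_0=C_0$ large, giving $(\sum q_iR_i)(\sum q_iC_i)=0.25 < 0.5=\sum q_iR_iC_i$ but $N>0$). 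So the positivity facts you want to isolate do not hold.

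Second, and more importantly, part (b) has a gap. The linear coefficient from perturbing $R_i$ away from $0$ is $q_i\bigl(C_0t + \sum_{j\neq i} q_jC_j + (q_i-1)C_i\bigr)$, and since the $q_i$'s are real numbers summing to $q$ (not required to be $\geq 1$ individually), this coefficient can be \emph{nonpositive} when $q_i<1$ and $C_i$ is large — your ``short case analysis'' as sketched does not rule this out. That nonpositive case is precisely where the real work happens: the paper first uses Lemma~\ref{lem:dphi_dqi_expr} to force $C_1>0$, then observes that nonpositivity of the coefficient forces $q_1<1$ and $C_1> C_0$, and finally derives a contradiction from the stationarity condition~\eqref{equ:phi_deri_zero_C} for $C_1$ (which, because $t>1$, gives $C_0>C_1$). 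Your proposal does not anticipate this case at all, so as written it would not close.
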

The problem in \cite{galanis2015inapproximability} that also appears in our setting is that it might be that $q_i,q_j>0$, but $R_i=R_j$ and $C_i=C_j$.  For example, imagine we are now strengthening the restriction \eqref{equ:phi_cond} by adding $R_1=R_2$ and $C_1=C_2$. Then $\overline{\Phi}(q_1+q_2,0,q_3)\leq\overline{\Phi}(q_1,q_2,q_3)$. Such degenerate case makes it difficult to compare between different $\mathbf{q}$ triples because the equality can be taken. This motivates the next definition. 
\begin{definition} \label{def:maximal}
  Let $m=2,3$. A triple $\mathbf{q}$ is called \emph{$m$-maximal}, if exactly $m$ $q_i$'s in $\mathbf{q}$ are non-zero, and there exists $\mathbf{r},\mathbf{c}$ maximizing (\ref{equ:phi_q_def}) such that, $q_i,q_j>0$ and $i\neq j$ imply that $R_i\neq R_j$ and $C_i\neq C_j$. 
  We also call $\mathbf{q}$ \emph{maximal} if it is either $2$- or $3$-maximal.
\end{definition}

Now we connect $m$-maximal triples with fixpoints in \eqref{equ:recursion}. 
\begin{lemma} \label{lem:maximal_to_fixpoint}
  Suppose a triple $\mathbf{q}$ is $m$-maximal. Then there exists $\mathbf{r},\mathbf{c}$ achieving the maximum in \eqref{equ:phi_q_def} and specifying an $m$-supported fixpoint of tree recursion \eqref{equ:recursion} of type $\mathbf{q}$. 
\end{lemma}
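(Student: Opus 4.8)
The plan is to show that the unconstrained maximizer $(\mathbf{r},\mathbf{c})$ guaranteed by $m$-maximality must in fact satisfy the stationarity equations of $\overline{\Phi^S}$, and that these stationarity equations, after collapsing the coordinates with equal multiplicities back to the full $q+1$ coordinates, coincide precisely with the tree recursion \eqref{equ:recursion}. First I would fix an $m$-maximal triple $\mathbf{q}=(q_1,q_2,q_3)$ and take $(\mathbf{r},\mathbf{c})$ to be a maximizer of $\overline{\Phi^S}(\mathbf{q},\cdot,\cdot)$ over the open region \eqref{equ:phi_cond} with the additional property (from Definition~\ref{def:maximal}) that $q_i,q_j>0$, $i\neq j$ implies $R_i\neq R_j$ and $C_i\neq C_j$. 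By Lemma~\ref{lem:bad_triple}, all the relevant coordinates are strictly positive ($R_0,C_0>0$, and $R_i,C_i>0$ whenever $q_i>0$), so the maximizer lies in the interior of the feasible region (after deleting the zero-multiplicity coordinates, which play no role). Hence $\nabla_{\mathbf{r},\mathbf{c}}\,\overline{\Phi^S}=0$ there.

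Next I would write out these stationarity conditions explicitly. Using the scale-freeness of $\overline{\Phi^S}$ in $\mathbf{r}$ and in $\mathbf{c}$ separately (so we may normalize, e.g., $\norm{\mathbf{r}}$ and $\norm{\mathbf{c}}$ however is convenient), differentiating the logarithms gives, for each active coordinate $i$,
\begin{equation*}
\frac{\partial}{\partial R_i}\Big[(d+1)\ln T - d\ln\big(R_0^{(d+1)/d}+\textstyle\sum_j q_j R_j^{(d+1)/d}\big)\Big]=0,
\end{equation*}
where $T$ denotes the bracketed expression appearing inside the first logarithm of \eqref{equ:phi_qrc_def}. Computing $\partial T/\partial R_0 = C_0 t^2 + t\sum_j q_j C_j$ and $\partial T/\partial R_i = q_i\big(C_0 t + \sum_j q_j C_j - C_i\big)$ for $i=1,2,3$, and similarly in the $\mathbf{c}$ variables, one reads off that $R_i^{1/d}\propto \partial T/\partial R_i$ (with the $q_i$ factors cancelling against the corresponding factors on the entropy side). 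Reinserting the multiplicities — i.e. regarding the $m$-supported vector as a genuine length-$(q+1)$ vector in which $q_i$ of the colour-coordinates equal $R_i$ — and recognising that $C_0 t + \sum_{j} q_j C_j - C_i = (t-1)C_0 + \sum_{j'\in[q], j'\neq j} C_{j'}$ once we unfold the sum, these equations become exactly \eqref{equ:recursion}. The maximality hypothesis guarantees the resulting fixpoint is genuinely $m$-supported and of type $\mathbf{q}$, since the $R_i$ (and $C_i$) for the $m$ active indices are pairwise distinct.

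The main obstacle, as I see it, is purely bookkeeping: matching the Lagrange-free stationarity equations of the collapsed function $\overline{\Phi^S}$ with the tree recursion \eqref{equ:recursion} written over the uncollapsed index set $[\oq]$. One must be careful that the factors $q_i$ introduced by grouping equal coordinates cancel correctly on both sides of $\partial \overline{\Phi^S}/\partial R_i = 0$ (they do, because the entropy term also carries a $q_i$), and that the affine rearrangement $C_0 t + \big(\sum_j q_j C_j\big) - C_i = (t-1)C_0 + \big(tR_0\text{-free sum}\big)$ correctly reproduces the "$tC_0 + \sum_{j\neq i} C_j$" shape in \eqref{equ:recursion} up to the overall proportionality constant absorbed by scale-freeness. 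Everything else is a direct consequence of Lemma~\ref{lem:bad_triple} (interiority) and Lemma~\ref{lem:max_well_defined} (existence of the maximizer), so no new idea beyond first-order optimality is needed. This mirrors the corresponding step in \cite{galanis2015inapproximability}, with the only genuine difference being the extra special spin $0$, whose stationarity equation is handled identically since $R_0, C_0>0$ by Lemma~\ref{lem:bad_triple}(a).
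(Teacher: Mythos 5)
Your proof is correct and takes essentially the same route as the paper: invoke $m$-maximality to obtain a maximizer with pairwise-distinct active coordinates, appeal to Lemma~\ref{lem:bad_triple} for interiority so that first-order stationarity holds, and identify the resulting equations \eqref{equ:phi_deri_zero_R}--\eqref{equ:phi_deri_zero_C} (after reinserting multiplicities) with the tree recursion \eqref{equ:recursion}. The paper's own proof is a two-line remark that relies on exactly this identification, with the stationarity equations having been recorded at the start of Section~\ref{sec:maximal}.
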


For $2$ and $3$-maximal triples, the key is the next lemma. 
\begin{lemma} \label{lem:maximal_max}
  Suppose $q\geq 4$ is even. Then the following statements hold:
  \begin{itemize}
      \item[(a)] There does not exist any $3$-maximal triple that maximizes \eqref{equ:phi_q_def}.
      \item[(b)] The only possibility of a $2$-maximal triple to maximize \eqref{equ:phi_q_def} is $(q/2,q/2,0)$ or its permutations, with $R_i/R_j=C_j/C_i$, where $i\neq j$ are the two indices such that $q_i,q_j=q/2$.
  \end{itemize}
\end{lemma}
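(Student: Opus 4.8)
The plan is to reduce both parts to the first‑order (stationarity) conditions satisfied by a maximizing maximal triple, read off against the structural description of fixpoints of \eqref{equ:recursion} from \Cref{lem:fp-supp-count} and \Cref{lem:maximal_to_fixpoint}. Since $\overline{\Phi^S}$ is scale‑free in $\mathbf{r}$ and in $\mathbf{c}$ separately, I would normalise $\norm{\mathbf{r}}_p=\norm{\mathbf{c}}_p=1$ with $p=(d+1)/d$, so that $\overline{\Phi^S}=(d+1)\ln P$ with $P$ the argument of the first logarithm in \eqref{equ:phi_qrc_def}; thus $\overline{\Phi}(\mathbf{q})$ is $(d+1)\log$ of an induced matrix norm of $\BB$. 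Fix a maximizing $m$‑maximal triple $\mathbf{q}$; by \Cref{lem:bad_triple} it has $R_0,C_0>0$ and $R_i,C_i>0$ on the active types, and by \Cref{lem:maximal_to_fixpoint} it is realised by an $m$‑supported fixpoint $(\mathbf{r},\mathbf{c})$ of \eqref{equ:recursion}. The fixpoint equations say exactly that, up to a single global scalar, $\mathbf{r}$ is the entrywise $d$‑th power of $\BB\mathbf{c}$ and $\mathbf{c}$ is the entrywise $d$‑th power of $\BB\mathbf{r}$; since $\BB$ is symmetric, this forces $P=\transpose{\mathbf{r}}\BB\mathbf{c}$ to be proportional to both $\norm{\mathbf{r}}_p^p$ and $\norm{\mathbf{c}}_p^p$. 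Substituting this into the envelope‑theorem expression for $\partial_{q_i}\overline{\Phi}$ and simplifying gives the clean form $\partial_{q_i}\overline{\Phi}=\Theta_i$, where
\[
\Theta_i:=R_i^{(d+1)/d}+C_i^{(d+1)/d}+(d+1)\,\frac{R_iC_i}{P}.
\]
Hence stationarity of $\mathbf{q}$ on the simplex $q_1+q_2+q_3=q$, $q_i\ge 0$ forces $\Theta_i=\Theta_j$ for every pair of active types, while at a $2$‑maximal point $(q_1,q_2,0)$ the one‑sided condition obtained by increasing $q_3$ at the expense of $q_1$ additionally gives $\Theta\le\Theta_1=\Theta_2$ evaluated at any third fixpoint of the two‑supported system with its other quantities frozen (here one must be careful with the $q_3\to0^+$ limit, checking that the directional derivative is again $\Theta_3-\Theta_1$ after optimising the location of the new colour class).

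For part (a): recall from the proof of \Cref{lem:fp-supp-count} that, with the aggregates $R_0,C_0,\sum q_iR_i,\sum q_iC_i$ fixed, a pair‑value $(R_i,C_i)$ is encoded by a single scalar $x=(R_i/R_0)^{1/d}\in(0,1)$ via $R_i=R_0x^d$ and $C_i$ an affine function of $x$, and the admissible $x$ are the at most three roots of the equation $f(x)=0$ studied there. A $3$‑maximal maximizer would then force $\Theta$, as a function of $x$, to take one and the same value at three distinct roots. I would rule this out by a convexity/monotonicity analysis of $x\mapsto\Theta(x)$: two of its three summands are visibly convex in $x$, the third (a fixed multiple of $x^d$ times that affine function) is not, and the point is to show that in the relevant window $d\ge 5q^k$, $t\in[1,1.0312]$ the sum has at most one interior critical point, hence cannot be equal at three points --- which is still consistent with, and must leave room for, the single coincidence $\Theta_1=\Theta_2$ arising in the $2$‑maximal case. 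An equivalent route is to compute the Hessian of $\overline{\Phi}$ at the putative $3$‑supported critical point via the Schur‑complement/envelope formula and exhibit a direction of increase, showing it is a saddle; either way the estimates have the same delicate flavour.

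For part (b): let $(q_1,q_2,0)$ with $q_1,q_2>0$ be a $2$‑maximal maximizer. Since $\BB={\bm P}\BB\transpose{{\bm P}}$ for any permutation matrix ${\bm P}$ fixing the spin $0$, relabelling the two colour classes gives $\overline{\Phi}(q_1,q_2,0)=\overline{\Phi}(q_2,q_1,0)$, so $\psi(q_1):=\overline{\Phi}(q_1,q-q_1,0)$ is symmetric about $q/2$. By the envelope theorem $\psi'(q_1)=\Theta_1-\Theta_2$ along the optimal fixpoint, and the key step is to show this has the sign of $q/2-q_1$ throughout $(0,q)$, so that $\psi$ is strictly unimodal with its peak at $q/2$; as in part (a) this reduces to controlling how the two fixpoint values move as $q_1$ varies, using $t\approx1$ and $d$ large, and the degenerate possibility $R_1=R_2$, $C_1=C_2$ encountered en route is ruled out by \Cref{lem:bad_triple}(c) together with maximality. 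Combined with symmetry, a $2$‑maximal maximizer must therefore be $\mathbf{q}=(q/2,q/2,0)$, which is an \emph{integer} triple precisely because $q$ is even --- the only place the parity hypothesis enters. Finally, at $\mathbf{q}=(q/2,q/2,0)$ the optimisation is invariant under the involution that simultaneously swaps the two colour classes and swaps $\mathbf{r}\leftrightarrow\mathbf{c}$ (again using that $\BB$ is symmetric); since $f$ has at most three roots the candidate fixpoints are finite in number, and comparing their $\overline{\Phi^S}$‑values (as in part (a)) shows the maximizing one is unique up to the symmetries, hence fixed by the involution, i.e.\ $R_1=C_2$ and $C_1=R_2$, which is exactly $R_i/R_j=C_j/C_i$.

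The main obstacle is the pair of one‑variable estimates just highlighted: strict unimodality (or convexity) of $x\mapsto\Theta(x)$ along the fixpoint family for part (a), and the correct sign of $\Theta_1-\Theta_2$ along the edge for part (b). Neither is a soft consequence of convexity --- $\overline{\Phi^S}$ is not even convex in $\mathbf{q}$ (else $\overline{\Phi}$ would be maximized at a vertex of the simplex rather than at the edge midpoint $(q/2,q/2,0)$), the first logarithm seeing a genuine quadratic form in $\mathbf{q}$ --- so both demand explicit asymptotics in the regime $t=(q^k-q)^{1/\Delta}\in[1,1.0312]$, $d=5q^k\gg1$, of exactly the kind the introduction flags. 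By contrast, the boundary and degenerate configurations are handled routinely through \Cref{lem:bad_triple}, and the parity of $q$ intervenes only at the very end to make $(q/2,q/2,0)$ admissible.
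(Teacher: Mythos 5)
Your setup is on the right track: the normalisation $\norm{\rb}_p=\norm{\cb}_p=1$ together with \Cref{lem:dphi_dqi_expr} does give the clean derivative $\Theta_i=R_i^{(d+1)/d}+C_i^{(d+1)/d}+(d+1)R_iC_i/P$, and the parametrisation of the three fixpoint values by a single scalar $x$ (via $R_i=R_0x^d$, $C_i$ affine in $x$) is exactly how the proof of \Cref{lem:fp-supp-count} works. What you are proposing --- strict unimodality of $x\mapsto\Theta(x)$ so that it cannot equalise at three roots (for (a)), and a sign analysis of $\psi'(q_1)=\Theta_1-\Theta_2$ along the edge (for (b)) --- is a genuinely different route from the paper's. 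The paper never analyses $\Theta(x)$ as a one-variable function. Instead, it writes down the two extremal ratios $r_1=(R_1/R_3)^{1/d}$ and $c_3=(C_3/C_1)^{1/d}$, solves the stationarity equations \eqref{equ:phi_deri_zero_R}--\eqref{equ:phi_deri_zero_C} as a linear system in $q_1,q_3$, substitutes back into $\partial_{q_1}\overline{\Phi^S}-\partial_{q_3}\overline{\Phi^S}$, and lands on an explicit algebraic quantity $g(r_1,c_3)=(r_1-c_3)(r_1^d-1)(c_3^d-1)-d(r_1-1)(c_3-1)(r_1^d-c_3^d)$ whose sign was already pinned down in \cite{galanis2015inapproximability}; this gives \Cref{lem:3_max_maximizer}(a) directly. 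When $r_1=c_3$, a further chain of implications (\ref{lem:r_1_eq_c_3}) forces $r_2=c_2$, $q_1=q_3$, $r_0=c_0$, and then $\partial_{q_1}-\partial_{q_2}$ is shown non-zero by a second borrowed algebraic inequality. \Cref{lem:2_max_maximizer} is then the $q_2=0$ specialisation. \Cref{lem:maximal_max} itself is a one-line combination of these with the perturbation clause of \Cref{lem:dphi_dqi_expr}.

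The gap is that the two estimates you flag as ``the main obstacle'' are the entire content of the lemma and are left unproved. Unimodality of $\Theta(x)$ is not obviously easier than the paper's inequality for $g(r_1,c_3)$ (note the third summand of $\Theta$ is degree $d+1$ with a sign-changing second derivative, so there is no soft route), and the sign of $\psi'$ along the edge is not established. There is also a soft circularity at the very end of your part (b): you deduce $R_1=C_2$, $C_1=R_2$ from ``the maximizing fixpoint is unique up to the symmetries, hence fixed by the involution,'' but uniqueness up to the involution is precisely what needs to be shown (an orbit of size two is a priori possible). The paper avoids this by proving $r_1=c_3$ directly from stationarity (\Cref{lem:2_max_maximizer}, via $g(r_1,c_3)\neq 0$ unless $r_1=c_3$), rather than inferring it from a uniqueness claim. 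So what you have is a coherent alternative plan that is plausible in outline but leaves the decisive one-variable inequalities and the uniqueness step unverified; as written it does not constitute a proof.
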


The above  lemma is not yet enough to finish the proof of Lemma~\ref{lem:main} because we have to rule out degenerate cases of all triples, i.e., the triple $(q,0,0)$. This is the main difference with the colour-symmetric setting of \cite{galanis2015inapproximability}. Instead, we have the special colour corresponding to ($R_0,C_0$), which makes the system behave like a $2$-spin system when all ``pure'' colours take the same fraction. What is worse is that, it is possible for the $2$-spin system to have three fixpoints (two of them being symmetric), when the tree recursion lies in the so-called ``non-uniqueness'' region (see Section \ref{sec:q00_stable}). Therefore, we need to discuss such fixpoints by two different cases. 

Before continuing the discussion, let us state another useful result from \cite{galanis2015inapproximability}. 
A fixpoint $x$ of a mapping $f$ is \emph{Jacobian stable} if the Jacobian of $f$ at $x$ has spectral radius less than $1$.
\begin{proposition}[{\cite[Theorem 4.2]{galanis2015inapproximability}}] \label{thm:gsv_4_2}
  A fixpoint of the tree recursion \eqref{equ:recursion} is Jacobian stable if and only if it corresponds to a Hessian dominant local maximum of $\Psi_1$. 
\end{proposition}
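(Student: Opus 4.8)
\emph{Proof idea.} This is \cite[Theorem~4.2]{galanis2015inapproximability}; as a general fact about antiferromagnetic spin systems on $\Delta$-regular trees we simply invoke it, but we sketch the underlying argument. The plan is to pass through the surrogate $\Phi$ of \eqref{def:Phi} and reduce everything to a spectral computation at a fixpoint. First I would quotient out the scaling degeneracy of $\Phi$: since $\Phi(a\rb,b\cb)=\Phi(\rb,\cb)$ for all $a,b>0$, I would fix $\norm{\rb}_p=\norm{\cb}_p=1$, equivalently work in the $(\alphab,\betab)$ coordinates on the product of two simplices supplied by \Cref{prop:gsv_4_1}, so that a critical point of $\Phi$ becomes one of $\Psi_1$ with the same value. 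In these coordinates ``Hessian dominant'' says that the Hessian of $\Psi_1$ is negative definite on the joint tangent space $\{\sum_i\dot\alpha_i=\sum_i\dot\beta_i=0\}$, and by the value-preserving correspondence of \Cref{prop:gsv_4_1} this is equivalent to the constrained Hessian of $\Phi$ being negative definite on the tangent space to the two $\ell_p$-spheres at $(\rb,\cb)$.

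Next --- the computational heart --- I would evaluate that constrained Hessian at a fixpoint of \eqref{equ:recursion} and exhibit a positive-definite congruence bringing it to the form $-D^{1/2}\bigl(I-\widehat J\bigr)D^{1/2}$, where $D$ is a positive diagonal matrix and $\widehat J=\bigl[\begin{smallmatrix}0 & \widehat J_{RC}\\ \widehat J_{CR} & 0\end{smallmatrix}\bigr]$ is the symmetrisation, by that same diagonal conjugation, of the Jacobian of the two-sided update $\cb\mapsto\rb\mapsto\cb$. Concretely, one differentiates $\ln(\transpose{\rb}\BB\cb)$ and the two $\ell_p$-penalty terms twice, substitutes the fixpoint relations \eqref{equ:recursion}, and checks that the mixed $(\rb,\cb)$-blocks assemble into $\widehat J_{RC}$ and $\widehat J_{CR}$ while the diagonal blocks assemble into $I$ after the rescaling by $D$; the choice $p=\Delta/(\Delta-1)$ (i.e.\ $d=\Delta-1$) is precisely what makes the entropy curvature match the edge curvature, and the structure of $\BB$ from \eqref{eq:BBdef} enters only through the concrete entries of $\widehat J$.

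Finally I would read off the equivalence. Since $\widehat J$ has zero diagonal blocks, its spectrum is symmetric about $0$, with nonzero eigenvalues $\pm\sqrt{\mu}$ as $\mu$ ranges over the eigenvalues of $\widehat J_{RC}\widehat J_{CR}$, a matrix similar to the Jacobian of the (composed) tree recursion \eqref{equ:recursion} at the fixpoint. Hence $I-\widehat J\succ 0$ $\iff$ every eigenvalue of $\widehat J$ is $<1$ $\iff$ (by the $\pm$ pairing) the spectral radius of $\widehat J$ is $<1$ $\iff$ the spectral radius of the tree-recursion Jacobian is $<1$, i.e.\ the fixpoint is Jacobian stable. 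Combined with the first step, $(\alphab,\betab)$ is a Hessian-dominant local maximum of $\Psi_1$ precisely when the associated fixpoint of \eqref{equ:recursion} is Jacobian stable.

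The step I expect to be the main obstacle is the middle one: carrying out the second-order expansion and verifying the exact congruence to $-D^{1/2}(I-\widehat J)D^{1/2}$, which needs careful bookkeeping of the mixed derivatives together with the correct treatment of the two scaling constraints (equivalently, the rank-two degeneracy of $\Phi$). Once that identity is in hand, the remainder is the elementary spectral argument above.
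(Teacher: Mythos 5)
The paper does not prove this proposition: it is imported verbatim from \cite[Theorem~4.2]{galanis2015inapproximability}, exactly as you note, so the ``paper's approach'' is simply to cite. Your supplementary sketch is a plausible rendering of the cited argument, and it is consistent with how the paper later operationalises Jacobian stability through Lemma~\ref{lem:gsv_4_16}: there the criterion is that the second-largest eigenvalue of the normalised block matrix $\bm L=\bigl[\begin{smallmatrix}0&\bm A\\ \transpose{\bm A}&0\end{smallmatrix}\bigr]$ is below $1/d$, which is precisely your ``spectral radius of $\widehat J$ below $1$'' after (i) absorbing the factor $d$ coming from differentiating the $d$-th power in \eqref{equ:recursion} and (ii) discarding the leading $\pm1$ eigenpair of $\bm L$, i.e.\ quotienting the scaling degeneracy you flagged as the delicate point.
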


The first kind of fixpoints satisfy $R_0/R_1\neq C_0/C_1$. As stated in the next lemma, such a fixpoint is Jacobian stable, and hence it is a possible candidate to be the maximizer in $\max_{\mathbf{q}}\overline{\Phi}(\mathbf{q})$. Though the proof of stability is not necessary for our main theorem, we still leave it here for future references. 
\begin{lemma} \label{lem:q00_inequal_stable}
  Suppose $d\geq 5q^k$. The fixpoint corresponding to triple $(q,0,0)$ and $R_0/R_1\neq C_0/C_1$ is unique up to scaling and swapping $R$ and $C$. Moreover, it is Jacobian stable.
\end{lemma}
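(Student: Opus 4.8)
\emph{Reduction to a scalar recursion.} A positive fixpoint of type $(q,0,0)$ has $R_1=\dots=R_q$ and $C_1=\dots=C_q$; absorbing these into common values $R_1,C_1$, scale-invariance makes such a fixpoint determined by $(x,y):=(R_0/R_1,\,C_0/C_1)$, and \eqref{equ:recursion} collapses to $x=g(y)$, $y=g(x)$ with
\[
g(z):=t^{d}\Bigl(\tfrac{tz+q}{tz+q-1}\Bigr)^{d}=\Bigl(\tfrac{t^{2}z+tq}{tz+q-1}\Bigr)^{d}.
\]
Thus $g$ is a power composed with a Möbius map $\mu$ of negative determinant $-t^{2}$; on $[0,\infty)$, $\mu>0$ and $g'<0$, so $g$ is smooth, strictly decreasing and critical-point-free there. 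The symmetric fixpoint $x=y$ is the unique root $x^{*}$ of $g(z)=z$ (as $g-\mathrm{id}$ is strictly decreasing), and is precisely the fixpoint with $R_0/R_1=C_0/C_1$; hence the fixpoints of the lemma are exactly the $2$-cycles $\{x_1,x_2\}$ of $g$ (say $x_1<x_2$), the two orderings being exchanged by $R\leftrightarrow C$. So the uniqueness claim says that $g$ has a single $2$-cycle.

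\emph{Uniqueness and localization of the $2$-cycle.} For uniqueness I would use that $g$ has negative Schwarzian derivative: the power $(\cdot)^{d}$ has Schwarzian $-\tfrac{d^{2}-1}{2(\cdot)^{2}}<0$ (here $d\ge2$), Möbius maps have zero Schwarzian, and Schwarzians are subadditive under composition; hence $S(g\circ g)<0$, and since $(g\circ g)'>0$ the function $\bigl((g\circ g)'\bigr)^{-1/2}$ is strictly convex, so $(g\circ g)'$ has no interior local minimum, meets the level $1$ at most twice, and therefore $\psi:=g\circ g-\mathrm{id}$ has at most three zeros on $(0,\infty)$; with $x^{*}$ always a zero and a $2$-cycle present, this forces exactly one $2$-cycle. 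For existence and a workable bound, write $r:=d/q^{k}\ge5$ and use $t^{d+1}=t^{\Delta}=q^{k}-q$ together with $q\ge4$, $k\ge2$ (so $q^{k}\ge16$, $q^{k}-q\ge\tfrac34 q^{k}$): one has $\psi(0^{+})=g(g(0))>0$, while elementary estimates give $g(1.2\,t^{d})\ge t^{d}e^{0.8r}$ — so $x^{*}>1.2\,t^{d}$ — and then $g\bigl(g(1.2\,t^{d})\bigr)\le t^{d}\exp\bigl(\tfrac{4r}{3e^{0.8r}}\bigr)<1.2\,t^{d}$ (the exponent is $<0.13$ for $r\ge5$), i.e.\ $\psi(1.2\,t^{d})<0$. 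Hence a $2$-cycle exists, $|g'(x^{*})|>1$, and by uniqueness its smaller point satisfies $t^{d}<x_1<1.2\,t^{d}$.

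\emph{Jacobian stability.} Linearize the map $f\colon(\rb,\cb)\mapsto(\rb',\cb')$ of \eqref{equ:recursion} at the fixpoint. Since $f$ commutes with simultaneously permuting $\{R_1,\dots,R_q\}$ and $\{C_1,\dots,C_q\}$, the Jacobian respects the splitting of the pure-colour coordinates into the trivial and the standard $S_q$-representation. On the trivial part $Df$ is the Jacobian of $(x,y)\mapsto(g(y),g(x))$, with eigenvalues $\pm\sqrt{g'(x_1)g'(x_2)}$. On the standard part (multiplicity $q-1$ per side), since $\partial\log R_i/\partial\log C_j=d/u_2$ for $j\neq i$ and $0$ for $j=i$ (with $u_i:=tx_i+q-1$), traceless vectors make $Df$ act by the off-diagonal scalars $-d/u_2$ and $-d/u_1$, giving eigenvalues $\pm\sqrt{d^{2}/(u_1u_2)}$. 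Using $x_2=g(x_1)$ one finds $g'(x_1)g'(x_2)=\tfrac{d^{2}(u_1-q+1)(u_2-q+1)}{u_1u_2(u_1+1)(u_2+1)}<\tfrac{d^{2}}{u_1u_2}$, so the spectral radius of $Df$ equals $\sqrt{d^{2}/(u_1u_2)}$ and it suffices to show $u_1u_2>d^{2}$. Now $x_1>t^{d}$ gives $u_1>t^{d+1}+q-1=q^{k}-1$, and $x_1<1.2\,t^{d}$ gives $u_1<1.2(q^{k}-q)+q-1<1.2q^{k}$, whence $u_2=(q^{k}-q)(1+1/u_1)^{d}+q-1>(q^{k}-q)e^{0.8r}\ge\tfrac34 q^{k}e^{0.8r}$; therefore $u_1u_2>\tfrac34(q^{k}-1)q^{k}e^{0.8r}\ge\tfrac{45}{64}q^{2k}e^{0.8r}>r^{2}q^{2k}=d^{2}$, using $\tfrac{45}{64}e^{0.8r}>r^{2}$ for $r\ge5$. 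So the spectral radius lies below $1$, i.e.\ the fixpoint is Jacobian stable.

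\emph{Main obstacle.} Once one descends to the scalar recursion $g$, the structure is clean: uniqueness is the Schwarzian count, and stability reduces to the single inequality $u_1u_2>d^{2}$. The genuinely delicate point will be the quantitative control of the $2$-cycle — establishing $g(1.2\,t^{d})\ge t^{d}e^{0.8\,d/q^{k}}$ and $\psi(1.2\,t^{d})<0$, and then $u_1u_2>d^{2}$, uniformly over all $q\ge4$, $k\ge2$, $d\ge5q^{k}$. The margins are comfortable (a factor $\gtrsim4$ in $u_1u_2/d^{2}$), but because $t$ is only slightly above $1$ and $d/q^{k}$ is a fixed constant, the $(1+1/u)^{d}$-type bounds must be pushed through with some care, in line with the other asymptotic estimates of this section.
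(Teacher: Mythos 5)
Your proof is correct, and while it reaches the same algebraic crux as the paper, it travels a genuinely different route for each of the three subclaims, so a comparison is worthwhile.

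\textbf{Uniqueness.} The paper does not prove uniqueness of the asymmetric fixpoint directly: it reduces the $(q,0,0)$ case to the scalar two-spin system \eqref{equ:q00_2spin}, observes that this is the tree recursion of an antiferromagnetic two-spin Ising model, and then cites the structure theorem from the two-spin literature (at most one symmetric and one asymmetric pair of fixpoints). Your Schwarzian-derivative argument is a self-contained replacement for that citation: since $g$ is a $d$-th power of a M\"obius map, $Sg<0$, hence $S(g\circ g)<0$, the strict convexity of $((g\circ g)')^{-1/2}$ caps the zeros of $g\circ g-\mathrm{id}$ at three, and with $x^*$ accounting for one of them there is room for exactly one $2$-cycle. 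This is the same mechanism that underlies the cited two-spin result, but made explicit.

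\textbf{Eigenvalue identification.} The paper plugs the fixpoint into the explicit block matrix $\bm A$ from Lemma~\ref{lem:gsv_4_16}, diagonalizes $\bm L$, and reads off that the $(q-1)$-fold eigenvalue $ab$ is the second-largest because the remaining pair is $\lambda_2=abrs<ab$. You instead decompose the Jacobian of the tree recursion by $S_q$-equivariance into the trivial (scalar) and standard parts, getting eigenvalues $\pm\sqrt{g'(x_1)g'(x_2)}$ and $\pm d/\sqrt{u_1u_2}$ respectively. These are the same numbers: with $u_i=tx_i+q-1$ one checks $g'(x_1)g'(x_2)=\frac{d^2(u_1-q+1)(u_2-q+1)}{u_1u_2(u_1+1)(u_2+1)}=(d\cdot abrs)^2$ and $d/\sqrt{u_1u_2}=d\cdot ab$, so your inequality $g'(x_1)g'(x_2)<d^2/(u_1u_2)$ is precisely the paper's $rs<1$. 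The representation-theoretic route is more conceptual and avoids the biquadratic, at the cost of having to justify the equivariant splitting and to recompute $g'$ at the cycle points.

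\textbf{Quantitative bound.} The paper's proof that $ab<1/d$, i.e.\ $(tx+q-1)(ty+q-1)>d^2$, uses $ty>t^{d+1}=q^k-q$ together with the lower bound $x>d^2/(q^k-q)$ from Lemma~\ref{lem:q00_2spin_ineq}, and concludes $txy>d^2$ immediately. You instead localize the smaller cycle point as $t^d<x_1<1.2\,t^d$, estimate $u_1\in(q^k-1,\,1.2q^k)$, use the recursion to get an exponential lower bound on $u_2$, and land on $u_1u_2>\tfrac{45}{64}q^{2k}e^{0.8r}>r^2q^{2k}=d^2$. Both work; your constants are tighter in places (and the margin at $r=5$ is closer to a factor $1.5$ than the factor $4$ you estimated, though still comfortable), while the paper's version offloads the heavy lifting to Lemma~\ref{lem:q00_2spin_ineq}, which it needs anyway for Lemma~\ref{lem:q00_2spin_non}.

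One small slip: in the standard-representation block your labeling of $u_1,u_2$ versus the $R$- and $C$-sides is swapped relative to your own definitions, but since the eigenvalues are symmetric in $u_1,u_2$ this has no effect on the argument.
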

For the reason above, we can only go through a very detailed calculation to rule out this case. The equality in $d=5q^k$ from the next lemma is for the sake of simplification in calculation. 
\begin{lemma} \label{lem:q00_not_max}
  Suppose $d=5q^k$. Any fixpoint corresponding to triple $(q,0,0)$ and $R_0/R_1\neq C_0/C_1$ does not maximize (\ref{equ:phi_q_def}). 
\end{lemma}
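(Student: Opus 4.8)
The plan is to show that the value of $\overline{\Phi^S}$ at the asymmetric $(q,0,0)$-fixpoint is strictly below $\overline\Phi(q/2,q/2,0)$; since $\overline\Phi(q/2,q/2,0)\le\max_{\mathbf q}\overline\Phi(\mathbf q)$ this already proves the lemma. Using scale-freeness I normalise the fixpoint so the common pure value on each side equals $1$, so that it is described by $R_0=a$, $C_0=b$ with $a\neq b$, where $(a,b)$ solves the scalar system $a=\phi(b)$, $b=\phi(a)$ with $\phi(x):=t^d\big(1+\tfrac1{tx+q-1}\big)^d$. By Lemma~\ref{lem:q00_inequal_stable} this is unique up to swapping $a$ and $b$, so there is a single number $v^{\star}:=\overline{\Phi^S}\big((q,0,0),\mathbf r,\mathbf c\big)$ to control, and \eqref{equ:recursion} already yields the clean identities $\transpose{\mathbf r}\BB\mathbf c=(ta+q)(tb+q)-q$, $\|\mathbf r\|_p^p=a^p+q$, $\|\mathbf c\|_p^p=b^p+q$ and $ta+q=\tfrac{b^{1/d}}{b^{1/d}-t}$, $tb+q=\tfrac{a^{1/d}}{a^{1/d}-t}$.

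First I would locate the fixpoint. Using $d=5q^k$ and $1\le t\le 1.0312$, one shows that $ta+q$ and $tb+q$ both equal $cq^k(1+o(1))$, where $c$ is a $q$-dependent perturbation of the root of $c\ln c=5$ (so $c\approx 3.77$); equivalently the symmetric fixpoint $a=b=a^{\star}$ has $|\phi'(a^{\star})|=\ln c+o(1)\in(1,2)$. Being only just inside the non-uniqueness region, $\phi\circ\phi$ has, besides $a^{\star}$, exactly one further pair $\{a_1,a_2\}$ with $a_1<a^{\star}<a_2$, and I would pin this pair down to relative accuracy $o(q^{1-k})$ by a second-order expansion of $\phi\circ\phi$ about $a^{\star}$.

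Next, the two values. Feeding the above into $v^{\star}=(d+1)\ln\!\big((ta+q)(tb+q)-q\big)-d\ln(a^p+q)-d\ln(b^p+q)$ (which is \eqref{equ:phi_qrc_def} with $\mathbf q=(q,0,0)$), the leading terms telescope, using $(d+1)\ln t=\ln(q^k-q)$ and $p=(d+1)/d$, leaving $v^{\star}=2\ln(q^k-q)+E^{\star}$ with $E^{\star}=\Theta(q^{1-k})$ explicit (its coefficient depending on $c$ and the asymmetry parameter of Step~1). For the dominant side I would not solve the $(q/2,q/2,0)$-fixpoint but merely evaluate $\overline{\Phi^S}\big((q/2,q/2,0),\cdot\big)$ at the near-rigid configuration $\mathbf r=(1;\rho,\dots,\rho,\epsilon\rho,\dots,\epsilon\rho)$, $\mathbf c=(1;\epsilon\rho,\dots,\epsilon\rho,\rho,\dots,\rho)$ (the favoured $q/2$ colours on one side, the disfavoured $q/2$ on the other, the state $0$ carrying the bulk) and optimise over $\rho,\epsilon\ge 0$: already $\epsilon=0$, $\rho=t^{-d}$ gives $\overline\Phi(q/2,q/2,0)\ge 2\ln(q^k-\tfrac q2)$, and allowing $\epsilon>0$ contributes a further gain of the same order, so $\overline\Phi(q/2,q/2,0)\ge 2\ln(q^k-q)+E_{\mathrm{dom}}$ with $E_{\mathrm{dom}}=\Theta(q^{1-k})$ explicit. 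What remains is the scalar inequality $E^{\star}<E_{\mathrm{dom}}$; after dividing by $q^{1-k}$ it becomes an inequality between explicit functions of $q^k$, which holds for large $q^k$ by the limiting value and is checked directly for the remaining small cases (in fact $q^k\ge 16$ throughout). This is precisely where the interpolation path of the introduction is needed: the asymmetric fixpoint is Hessian dominant by Lemma~\ref{lem:q00_inequal_stable}, hence a genuine local maximum of $\Psi_1$, so it cannot be ruled out by a local perturbation or Hessian argument; instead one joins it to the dominant configuration above by an explicit path in $(\mathbf q,\mathbf r,\mathbf c)$-space and bounds $\overline{\Phi^S}$ along it, the estimates above showing that after an initial controlled dip the value rises above $v^{\star}$.

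The main obstacle is the precision bookkeeping. The two quantities compared agree to leading order $2\ln(q^k-q)$, and the decision is made only at the scale $q^{1-k}$; moreover the asymmetric fixpoint is barely inside the non-uniqueness region, so its location is a delicate perturbation of the symmetric one and $v^{\star}$ sits only $\Theta(q^{1-k})$ above the symmetric value. Every factor entering $v^{\star}$ and $E_{\mathrm{dom}}$ is a $(d+1)$-st power of a quantity within $o(q^{1-k})$ of $1$, so all the expansions have to be carried to that order with their $O(q^{2-2k})$ and smaller errors bounded explicitly; this is the reason the lemma fixes $d=5q^k$ rather than $d\ge 5q^k$, namely to make the constants concrete.
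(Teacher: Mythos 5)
Your plan takes a genuinely different route from the paper: you propose to \emph{directly compare} the value $v^\star$ of $\overline{\Phi^S}$ at the asymmetric $(q,0,0)$-fixpoint with a lower bound for $\overline{\Phi}(q/2,q/2,0)$, via asymptotic expansions in $q^{1-k}$. The paper's actual proof avoids this entirely: it exploits that when $q_3=0$ the values $R_3,C_3$ are free parameters not affecting $\overline{\Phi^S}$, picks them so that the critical-point equations \eqref{equ:phi_deri_zero_R}--\eqref{equ:phi_deri_zero_C} hold (via Lemma~\ref{lem:r1_c3_yields_r0_c0} and Lemma~\ref{lem:r1_c3_sol}), and then shows the derivative $\partial\overline{\Phi^S}/\partial q_1-\partial\overline{\Phi^S}/\partial q_3<0$, so perturbing to $(q-\varepsilon,0,\varepsilon)$ strictly increases the value. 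This ``phantom-coordinate'' derivative argument sidesteps any need to numerically pin down $v^\star$ or to estimate the dominant value $\overline{\Phi}(q/2,q/2,0)$.

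There is a genuine gap in your proposal, however, and it is in Step~1, where you locate the asymmetric fixpoint. You assert that the system \eqref{equ:q00_2spin} is ``only just inside the non-uniqueness region'' so that the asymmetric pair $\{a_1,a_2\}$ can be found to relative accuracy $o(q^{1-k})$ by a second-order expansion of $\phi\circ\phi$ around the symmetric fixpoint $a^\star$. This is false: the paper's Lemma~\ref{lem:q00_2spin_ineq} gives $x>\tfrac{d}{q^k-q}\cdot d$, which with $d=5q^k$ is roughly $25\,q^k$, while the symmetric fixpoint satisfies $ta^\star+q-1<d$ (Lemma~\ref{lem:q00_2spin_eq}), placing $a^\star$ around $3.77\,q^k$ (as you yourself compute from $c\ln c=5$). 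The two are therefore separated by a constant factor of roughly $6$--$7$, not by a vanishing relative error. The system is deep inside non-uniqueness, not near the uniqueness threshold, and the perturbative location of the asymmetric fixpoint --- on which your estimate $E^\star=\Theta(q^{1-k})$ and hence the rest of your comparison depends --- cannot be carried out as you describe. Rescuing the approach would require a genuinely non-perturbative determination of the asymmetric $(a,b)$, at which point the bookkeeping becomes substantially heavier than you anticipate. (A secondary point: once you have the inequality $v^\star<\overline{\Phi}(q/2,q/2,0)\le\max_{\mathbf q}\overline\Phi(\mathbf q)$, there is no need for an ``interpolation path'' at all; your concluding paragraph conflates the comparison with an independent argument, and the ``interpolation'' phrase in the introduction refers to the paper's $(q-\varepsilon,0,\varepsilon)$ move, not a path to the dominant phase.)
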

On the other hand, when $R_0/R_1=C_0/C_1$, things become easier as such fixpoints are not Jacobian stable. 
Thus, by \Cref{thm:gsv_4_2}, these fixpoints do not correspond to local maxima of $\Psi_1$. 
\begin{lemma} \label{lem:q00_equal_unstable}
  Suppose $d\geq 5q^k$. Any fixpoint corresponding to triple $(q,0,0)$ and $R_0/R_1=C_0/C_1$ is Jacobian unstable.
\end{lemma}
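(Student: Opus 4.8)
The plan is to exploit the fact that any fixpoint of type $(q,0,0)$ lives in the invariant subspace of the tree recursion~\eqref{equ:recursion} where the pure coordinates coincide, $R_1=\dots=R_q$ and $C_1=\dots=C_q$; this subspace is mapped to itself since $\BB$ is symmetric under permuting the pure colours. On it, after the usual projective normalisation, the recursion~\eqref{equ:recursion} collapses to the two-dimensional map $(x,y)\mapsto(\phi(y),\phi(x))$ on the ratios $x=R_0/R_1$ and $y=C_0/C_1$, where $\phi(z)\defeq t^d\big((tz+q)/(tz+q-1)\big)^d$. Since $\phi$ is strictly decreasing it has a unique fixpoint $x^*$, and the hypothesis $R_0/R_1=C_0/C_1$ forces $x=y=x^*$; moreover the Jacobian of $(x,y)\mapsto(\phi(y),\phi(x))$ at $(x^*,x^*)$ equals $\left[\begin{smallmatrix}0 & \phi'(x^*)\\ \phi'(x^*) & 0\end{smallmatrix}\right]$, with eigenvalues $\pm\phi'(x^*)$, and these also belong to the spectrum of the Jacobian of the full recursion at the fixpoint. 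Hence it suffices to prove $|\phi'(x^*)|>1$.

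I would then turn the fixpoint equation into a scalar identity. Put $u\defeq tx^*$; using $t^{d+1}=q^k-q$ (recall $\Delta=d+1$ and $t=(q^k-q)^{1/\Delta}$), the equation $\phi(x^*)=x^*$ becomes
\[
  u=(q^k-q)\Big(1+\tfrac{1}{u+q-1}\Big)^{\!d},
\]
and differentiating $\phi$ and then substituting $\phi(x^*)=x^*$ gives the clean formula $\phi'(x^*)=-\,\frac{d\,u}{(u+q)(u+q-1)}$. So the lemma reduces to the single inequality $d\,u>(u+q)(u+q-1)$. The displayed equation hands us for free the lower bound $u\ge q^k-q$ (the bracketed factor is at least $1$), which I would complement with a short case split. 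If $u\ge 4(q^k-q)$, then, using $\ln(1+z)<z$ and taking logarithms in the displayed equation, $\frac{d}{u+q-1}>d\ln\!\big(1+\tfrac1{u+q-1}\big)=\ln\frac{u}{q^k-q}\ge\ln 4$, which exceeds $\frac{u+q}{u}=1+\frac{q}{u}\le 1+\frac1{4(q^{k-1}-1)}\le\frac{13}{12}$ (here $q\ge4$ and $k\ge2$ give $q^{k-1}-1\ge3$); clearing denominators gives $d\,u>(u+q)(u+q-1)$. If instead $u<4(q^k-q)$, then $\frac{(u+q)(u+q-1)}{u}=u+2q-1+\frac{q(q-1)}{u}<4q^k+2q+1\le 5q^k\le d$, using $u<4q^k$, $\frac{q(q-1)}{u}\le\frac{q-1}{q^{k-1}-1}\le 1$, and $2q+1\le q^k$. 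Either way $\phi'(x^*)<-1$, so the Jacobian of the full recursion has spectral radius strictly above $1$ and the fixpoint is Jacobian unstable.

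The main point that needs care is the reduction in the first paragraph: one must verify that ``Jacobian stable'' in the sense of \cite[Theorem~4.2]{galanis2015inapproximability} --- the spectral radius of the Jacobian of the tree recursion~\eqref{equ:recursion}, viewed as a map on normalised message vectors --- is compatible with passing to the two scalar ratios $x,y$ on the symmetric subspace, so that $\pm\phi'(x^*)$ genuinely lie in the spectrum of the full Jacobian. The estimates on $u$ are otherwise routine; they are arranged so that the bound is uniform over all $d\ge5q^k$ and not just $d=5q^k$, the point being that the fixpoint equation always keeps $u$ bounded away from $d$ by a constant factor, which is precisely what $d\,u>(u+q)(u+q-1)$ requires.
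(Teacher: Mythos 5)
Your proposal is correct, but it takes a genuinely different route from the paper's, and it is worth seeing why the paper's route is shorter. You restrict the Jacobian to the colour-symmetric slice $R_1=\dots=R_q$, $C_1=\dots=C_q$, get the $2$-spin derivative $\phi'(x^*)=-\frac{du}{(u+q)(u+q-1)}$ with $u=tx^*$, and show $|\phi'(x^*)|>1$. Via Lemma~\ref{lem:gsv_4_16}, that corresponds to the $\bm L$-eigenvalue $ab$ exceeding $1/d$, where $a=\frac{1}{u+q-1}$, $b=\frac{u}{u+q}$. The paper instead writes out $\bm A=\left[\begin{smallmatrix}b & \sqrt{ab}\transpose{\ones}\\ \sqrt{ab}\ones & a\Jb\end{smallmatrix}\right]$ and picks the eigenvalue $-a$, which lives in the \emph{colour-asymmetric} directions (eigenvectors of $\Jb$ orthogonal to $\ones$, multiplicity $q-1$) --- precisely the directions your $2$-spin reduction throws away. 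Since $b<1$ we have $a>ab$, so the paper's eigenvalue is strictly larger and the threshold condition $a>1/d$ is exactly $u+q-1<d$, which is already Lemma~\ref{lem:q00_2spin_eq}; your stronger inequality $du>(u+q)(u+q-1)$ then needs the extra case split on $u\gtrless 4(q^k-q)$, which you carry out correctly. What your route buys is a tidy conceptual link between Lemma~\ref{lem:q00_2spin_non} (non-uniqueness of the $2$-spin recursion) and this lemma --- $|\phi'(x^*)|>1$ is literally the $2$-spin non-uniqueness statement --- and it avoids writing down $\bm A$. What it costs is longer estimates, plus the embedding step you flag yourself: you need that $\pm\phi'(x^*)$ sit inside the spectrum that Lemma~\ref{lem:gsv_4_16} is about. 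Your invariance argument is sound (the slice $R_1=\dots=R_q$, $C_1=\dots=C_q$ is preserved by \eqref{equ:recursion} because $\BB$ is invariant under permuting pure colours, so the restricted Jacobian's eigenvalues are a subset of the full one's), but you leave it as a flag rather than a check; the cleanest verification is to note that the $2\times2$ symmetric block of $\bm A$, namely $\left[\begin{smallmatrix}b&\sqrt{qab}\\\sqrt{qab}&(q-1)a\end{smallmatrix}\right]$, has eigenvalues $1$ and $-ab$, so $\pm dab$ are indeed Jacobian eigenvalues --- at which point one has essentially reconstructed the paper's matrix and might as well read off $-a$ directly.
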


Now we are ready to prove Lemma \ref{lem:main}, which given the above ingredients can be done by following closely a related argument in \cite{galanis2015inapproximability}.
The main complicacy in the proof is that when we find a maximizer $\mathbf{q}$ of $\overline{\Phi}(\mathbf{q})$,
the corresponding $\mathbf{r}$ (or $\mathbf{c}$) values are not necessarily distinct.
We need to carefully rule out these degenerate cases.
\begin{proof}[Proof of Lemma \ref{lem:main}]
  Denote $MAX:=\max_{\mathbf{q}}\overline{\Phi}(\mathbf{q})$. We first claim that $MAX$ is attained at $\hat{\mathbf{q}}=(q/2,q/2,0)$, and $\hat{\mathbf{q}}$ is maximal. Assuming the claim, Lemma \ref{lem:maximal_to_fixpoint} yields that there exist $\hat{\mathbf{r}},\hat{\mathbf{c}}$ with $\overline{\Phi}(\hat{\mathbf{q}})=\overline{\Phi^S}(\hat{\mathbf{q}},\hat{\mathbf{r}},\hat{\mathbf{c}})$,  specifying a $(q/2,q/2,0)$-type fixpoint of the tree recursion \eqref{equ:recursion}. Hence $MAX=\max\Psi_1$. To show that $\hat{\mathbf{q}}$ is the unique type of  fixpoint achieving the maximum of $\Psi_1$, consider an arbitrary  $\mathbf{q}^*$-type fixpoint achieving the maximum of $\Psi_1$, say $(\mathbf{r}^*,\mathbf{c}^*)$.  Then $\mathbf{q}^*$ must also achieve the maximum in $\max_{\mathbf{q}}\overline{\Phi}(\mathbf{q})$. By Lemma \ref{lem:q00_not_max}, $\mathbf{q}^*\neq(q,0,0)$ and hence it is maximal according to Definition \ref{def:maximal} (using $(\mathbf{r}^*,\mathbf{c}^*)$ as the maximizers; Recall Definition \ref{def:fixpoint_type} that $R_i\neq R_j,C_i\neq C_j$ for $i\neq j$ and $q_i,q_j>0$). Therefore we can apply Lemma \ref{lem:maximal_max} and obtain that $\mathbf{q}^*=\hat{\mathbf{q}}$.  

  It remains to prove the claim above. Let $\mathbf{q}^*$ be any maximizer of $\max_{\mathbf{q}}\overline{\Phi}(\mathbf{q})$.
  \begin{itemize}
    \item[(1)] $\mathbf{q}^*$ has at least two positive entries. This is a consequence of Lemmas \ref{lem:q00_not_max} and  \ref{lem:q00_equal_unstable} (after using \Cref{thm:gsv_4_2}). 
    \item[(2)] In case $\mathbf{q}^*$ has exactly two positive entries, then $\mathbf{q}^*$ must be maximal. Otherwise, suppose $\mathbf{q}^*=(q_1,q_2,0)$ and the maximizer in (\ref{equ:phi_q_def}) is achieved at $\mathbf{r}^*,\mathbf{c}^*$ where $R_1=R_2$ or $C_1=C_2$. By Lemma~\ref{lem:bad_triple} (c), both equalities are true and hence $\overline{\Phi}(\mathbf{q}^*)=\overline{\Phi}((q,0,0))$, contradicting item (1). 
    \item[(3)] In case $\mathbf{q}^*$ has exactly two positive entries, it must holds that $\mathbf{q}^*=\hat{\mathbf{q}}$. This is from item (2), and Lemma~\ref{lem:maximal_max} (b). 
    \item[(4)] If $\mathbf{q}^*$ has all positive entries, then it cannot be $3$-maximal. This is from Lemma \ref{lem:maximal_max} (a). 
    \item[(5)] If $\mathbf{q}^*$ has all positive entries, then $\overline{\Phi}(\mathbf{q}^*)=\overline{\Phi}(\hat{\mathbf{q}})$. This can be proved by the following argument. Let $\mathbf{r}^*,\mathbf{c}^*$ be the maximizer corresponding to $\mathbf{q}^*$. By item (4), $\mathbf{q}^*$ is not $3$-maximal, and using the argument of item (2), there exist distinct $i, j\geq 1$ such that $R_i=R_j$ and $C_i=C_j$ in $\mathbf{r}^*,\mathbf{c}^*$. Let $k\geq 1$ be the remaining index. 
    \begin{itemize}
      \item If $R_i=R_j=R_k$, then by Lemma~\ref{lem:bad_triple} (c),  $C_i=C_j=C_k$, and hence $\overline{\Phi}(\mathbf{q}^*)=\overline{\Phi}(q,0,0)$, contradicting item (1).
      \item If $C_i=C_j=C_k$, then by Lemma~\ref{lem:bad_triple} (c),  $R_i=R_j=R_k$, and hence $\overline{\Phi}(\mathbf{q}^*)=\overline{\Phi}(q,0,0)$, contradicting item (1).
      \item If $R_i\neq R_k$ and $C_i\neq C_k$, we can   ``merge'' the indices $i,j$ to get a new triple $\mathbf{q}':=(q_i+q_j,q_k,0)$. Let $\rb':=(R_0,R_i,R_k,0),\cb':=(R_0,C_i,C_k,0)$. Then \[\overline{\Phi}(\mathbf{q}^*)=\overline{\Phi^S}(\mathbf{q}^*,\mathbf{r}^*,\mathbf{c}^*)=\overline{\Phi^S}(\mathbf{q}',\mathbf{r}',\mathbf{c}')\leq \overline{\Phi}(\mathbf{q}').\] This means that $\mathbf{q}'$ is also a maximizer of $\max_{\mathbf{q}}\overline{\Phi}(\mathbf{q})$ since $\mathbf{q}^*$ is a maximizer. However, $\mathbf{q}'$ has exactly two positive entries. Hence by item (3), $\overline{\Phi}(\mathbf{q}^*)=\overline{\Phi}(\mathbf{q}')=\overline{\Phi}(\hat{\mathbf{q}})$ . 
    \end{itemize} 
  \end{itemize}
  The above arguments imply that for any maximizer $\mathbf{q}^*$, it holds that $\overline{\Phi}(\mathbf{q}^*)=\overline{\Phi}(\hat{\mathbf{q}})$, which means that $\hat{\mathbf{q}}$ is indeed a maximizer. This also indicates all items above apply to $\mathbf{q}^*=\hat{\mathbf{q}}$, and from item (3), we obtain that $\hat{\mathbf{q}}$ is $2$-maximal. This concludes the proof. 
\end{proof}

Before diving into the proofs of all the lemmas above, we want to mention the following observation. The partial derivatives $\partial\overline{\Phi^S}/\partial q_i$, conditioned on $\mathbf{r}$ and $\mathbf{c}$ achieving the maximum in (\ref{equ:phi_q_def}), can be written as follows. (Note that it applies to all triples $\mathbf{q}$, including non-maximal ones.) Based on these partial derivatives, we can argue the non-optimality by perturbing $q_i$'s.
\begin{lemma} \label{lem:dphi_dqi_expr}
Suppose $\mathbf{r}$, $\mathbf{c}$ achieve the maximum in (\ref{equ:phi_q_def}). Then for any $i\in\{1,2,3\}$ such that $q_i>0$, it holds that
\[
  \frac{\partial\overline{\Phi^S}}{\partial q_i}
  =
  \frac{R_iC_0t + R_0C_it + (d-1)R_iC_i+R_i\left(\sum_{j=1}^{3}C_jq_j\right)+C_i\left(\sum_{j=1}^{3}R_jq_j\right)}
  {R_0C_0t^2+\left(\sum_{j=1}^{3}C_jq_j\right)R_0t+\left(\sum_{j=1}^{3}R_jq_j\right)C_0t+\left(\sum_{j=1}^{3}R_jq_j\right)\left(\sum_{j=1}^{3}C_jq_j\right)-\left(\sum_{j=1}^{3}R_jC_jq_j\right)}.
\]
Moreover, if there exists $i,j$ such that $q_i,q_j>0$ and $i\neq j$ and satisfies $\partial\overline{\Phi^S}/\partial q_i-\partial\overline{\Phi^S}/\partial q_j\neq 0$, then the maximum in (\ref{equ:phi_q_def}) is not achieved. 
\end{lemma}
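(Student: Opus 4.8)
The plan is to obtain the displayed formula by differentiating $\overline{\Phi^S}$ directly with respect to $q_i$ and then eliminating the awkward terms using the first-order stationarity of the maximizing pair $(\mathbf{r},\mathbf{c})$ in \eqref{equ:phi_q_def}. What makes this legitimate is that, by \Cref{lem:bad_triple}, any maximizer has $R_0,C_0>0$ and $R_i,C_i>0$ whenever $q_i>0$, and moreover the quantity $W$ (the argument of the first logarithm in \eqref{equ:phi_qrc_def}, which is exactly the denominator in the claimed formula) is strictly positive at a maximizer — otherwise $\overline{\Phi^S}=-\infty$. Hence the maximizer is an interior critical point in all the relevant coordinates, and the corresponding partial derivatives vanish.

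Writing $\overline{\Phi^S}(\mathbf{q},\mathbf{r},\mathbf{c})=(d+1)\ln W-d\ln T_R-d\ln T_C$ with $T_R=R_0^{(d+1)/d}+\sum_{\ell=1}^{3}q_\ell R_\ell^{(d+1)/d}$ and $T_C=C_0^{(d+1)/d}+\sum_{\ell=1}^{3}q_\ell C_\ell^{(d+1)/d}$, I would first compute the raw derivative
\[
  \frac{\partial\overline{\Phi^S}}{\partial q_i}=(d+1)\frac{\partial_{q_i}W}{W}-d\,\frac{R_i^{(d+1)/d}}{T_R}-d\,\frac{C_i^{(d+1)/d}}{T_C},
\]
where $\partial_{q_i}W=R_iC_0t+R_0C_it+R_i\big(\sum_{\ell=1}^{3}q_\ell C_\ell\big)+C_i\big(\sum_{\ell=1}^{3}q_\ell R_\ell\big)-R_iC_i$. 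The two terms $R_i^{(d+1)/d}/T_R$ and $C_i^{(d+1)/d}/T_C$ are what must be rewritten. Then I would invoke the stationarity conditions $\partial_{R_i}\overline{\Phi^S}=\partial_{C_i}\overline{\Phi^S}=0$ for every $i$ with $q_i>0$: writing out $\partial_{R_i}\overline{\Phi^S}=0$ and cancelling the common factor $(d+1)q_i>0$ gives $\big(tC_0+\sum_{\ell=1}^{3}q_\ell C_\ell-C_i\big)/W=R_i^{1/d}/T_R$, so that, multiplying by $R_i$,
\[
  \frac{R_i^{(d+1)/d}}{T_R}=\frac{R_i\big(tC_0+\textstyle\sum_{\ell=1}^{3}q_\ell C_\ell-C_i\big)}{W},\qquad \frac{C_i^{(d+1)/d}}{T_C}=\frac{C_i\big(tR_0+\textstyle\sum_{\ell=1}^{3}q_\ell R_\ell-R_i\big)}{W},
\]
the second identity being the symmetric one from $\partial_{C_i}\overline{\Phi^S}=0$. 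Substituting both into the expression above and collecting everything over the common denominator $W$, the coefficient of each of $R_iC_0t$, $R_0C_it$, $R_i\sum_{\ell}q_\ell C_\ell$ and $C_i\sum_{\ell}q_\ell R_\ell$ comes out to $(d+1)-d=1$, while the coefficient of $R_iC_i$ comes out to $-(d+1)+d+d=d-1$; this is precisely the claimed formula.

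For the ``moreover'' part I would perturb $\mathbf{q}$. Suppose at the maximizer $(\mathbf{r}^*,\mathbf{c}^*)$ we had $\partial\overline{\Phi^S}/\partial q_i>\partial\overline{\Phi^S}/\partial q_j$ for some $i\neq j$ with $q_i,q_j>0$ (relabelling if necessary). Let $\mathbf{q}_\epsilon$ be obtained from $\mathbf{q}$ by increasing $q_i$ by $\epsilon>0$ and decreasing $q_j$ by $\epsilon$. Then $\mathbf{q}_\epsilon$ still has nonnegative entries summing to $q$, and $(\mathbf{r}^*,\mathbf{c}^*)$ remains feasible for $\mathbf{q}_\epsilon$ since $W>0$ is an open condition; hence $\overline{\Phi}(\mathbf{q}_\epsilon)\geq\overline{\Phi^S}(\mathbf{q}_\epsilon,\mathbf{r}^*,\mathbf{c}^*)=\overline{\Phi}(\mathbf{q})+\epsilon\big(\partial\overline{\Phi^S}/\partial q_i-\partial\overline{\Phi^S}/\partial q_j\big)+O(\epsilon^2)>\overline{\Phi}(\mathbf{q})$ for all sufficiently small $\epsilon>0$. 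Thus $\mathbf{q}$ cannot be a maximizer of $\overline{\Phi}$ over $\{q_1+q_2+q_3=q,\ q_\ell\geq 0\}$, which is the content of the ``moreover'' claim (as it is used in the proof of \Cref{lem:main}).

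I do not expect any genuine obstacle here: once stationarity is available the computation is routine bookkeeping. The only points requiring care are (i) justifying that the optimal $(\mathbf{r},\mathbf{c})$ is a true interior critical point so that the stationarity equations may be invoked — this is exactly where \Cref{lem:bad_triple} and the strict positivity of $W$ at the optimum enter — and (ii) tracking the coefficients so that the contributions of $(d+1)\ln W$ and of $-d\ln T_R$, $-d\ln T_C$ combine into the stated numerator.
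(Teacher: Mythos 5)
Your approach is the same as the paper's, and the coefficient bookkeeping in the middle of the argument is correct. However, there is a circular dependency in the justification that you can invoke the stationarity equations $\partial_{R_i}\overline{\Phi^S}=\partial_{C_i}\overline{\Phi^S}=0$ whenever $q_i>0$. You cite \Cref{lem:bad_triple}(b) (``$R_i,C_i>0$ whenever $q_i>0$'') to guarantee interiority, but the paper proves \Cref{lem:bad_triple}(b) using \emph{both} parts of \Cref{lem:dphi_dqi_expr}: the argument there assumes $R_1=0, C_1=0$ at a maximizer, applies the explicit formula for $\partial\overline{\Phi^S}/\partial q_1$ to conclude that derivative vanishes, observes $\partial\overline{\Phi^S}/\partial q_i>0$ for some other $i$, and then invokes the second (``moreover'') part of \Cref{lem:dphi_dqi_expr} to get a contradiction. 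So \Cref{lem:bad_triple}(b) cannot be used as an input to the proof of \Cref{lem:dphi_dqi_expr}.

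The repair is small and is exactly what the paper does: you do not need $R_i,C_i>0$ at all, only the rewriting identities
\[
\frac{R_i^{(d+1)/d}}{T_R}=\frac{R_i\bigl(tC_0+\textstyle\sum_{\ell}q_\ell C_\ell-C_i\bigr)}{W},\qquad \frac{C_i^{(d+1)/d}}{T_C}=\frac{C_i\bigl(tR_0+\textstyle\sum_{\ell}q_\ell R_\ell-R_i\bigr)}{W}.
\]
If $R_i>0$ the first identity follows from $\partial_{R_i}\overline{\Phi^S}=0$ (interior stationarity in that single coordinate, which requires only $R_i>0$, not the full conclusion of \Cref{lem:bad_triple}); if $R_i=0$ it holds trivially, with both sides equal to zero. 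The symmetric statement holds for $C_i$. With this observation your coefficient computation and the ``moreover'' perturbation argument go through unchanged and match the paper's proof.
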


Unproved propositions and lemmas in this subsection can be found later. We make a list of where they are proved. 
\begin{table}
\begin{center}
\begin{tabular}{c|c}
\hline
Proposition/Lemma & Section\\
\hline
Lemma \ref{lem:maximal_max} & Section \ref{sec:maximal}\\
Lemma \ref{lem:2_max_stable} & Section \ref{sec:2max_stable}\\
Lemma \ref{lem:q00_inequal_stable}, Lemma \ref{lem:q00_equal_unstable} & Section \ref{sec:q00_stable}\\
Lemma \ref{lem:q00_not_max} & Section \ref{sec:q00_not_max}\\
Lemma \ref{lem:maximal_to_fixpoint}, Lemma \ref{lem:dphi_dqi_expr} & Section \ref{sec:proof_d_expr_equations}\\
Lemma \ref{lem:bad_triple} & Section \ref{sec:bad_triple}\\
\hline
\end{tabular}
\end{center}
\caption{The sections where the lemmas are proved}
\label{tab:lemma-proof}
\end{table}

\subsection{\texorpdfstring{$2,3$}{2,3}-maximal Triples} \label{sec:maximal}
Let $\mathbf{q}$ be a maximal triple and let $I=\{i\mid q_i>0\}$. From Lemma~\ref{lem:bad_triple} (a) and (b), by taking partial derivatives of $\overline{\Phi^S}$ with respect to non-zero $R_i$ and $C_i$'s and setting them to 0, we get that the maximizer of $\overline{\Phi^S}$ satisfies
\begin{align}
  R_0^{1/d}&\propto C_0t^2+(q_1C_1+q_2C_2+q_3C_3)t,\quad
  &R_i^{1/d}\propto C_0t+q_1C_1+q_2C_2+q_3C_3-C_i \mbox{ for $i\in I$};\label{equ:phi_deri_zero_R}\\
    C_0^{1/d}&\propto R_0t^2+(q_1R_1+q_2R_2+q_3R_3)t,\quad
  &C_i^{1/d}\propto R_0t+q_1R_1+q_2R_2+q_3R_3-R_i \mbox{ for $i\in I$}. \label{equ:phi_deri_zero_C}
\end{align}
First assume $\mathbf{q}$ is $3$-maximal, for any $i\neq j$ it holds that $R_i\neq R_j$ and $C_i\neq C_j$. From Lemma~\ref{lem:bad_triple} (a) and (b), we may assume the following strict ordering
\begin{equation} \label{equ:3_max_order}
\begin{aligned}
  R_1>R_2>R_3>0 && \text{and} && 0<C_1<C_2<C_3.
\end{aligned}
\end{equation}

\begin{lemma} \label{lem:3_max_maximizer}
  Suppose $R_i$'s and $C_i$'s satisfy (\ref{equ:phi_deri_zero_R}), (\ref{equ:phi_deri_zero_C}) and (\ref{equ:3_max_order}). We have the following:
  \begin{itemize}
    \item[(a)] If $R_1/R_3\neq C_3/C_1$, then $\partial\overline{\Phi^S}/\partial q_1-\partial\overline{\Phi^S}/\partial q_3\neq 0$.
    \item[(b)] If $R_1/R_3=C_3/C_1$, then $\partial\overline{\Phi^S}/\partial q_1-\partial\overline{\Phi^S}/\partial q_2\neq 0$.
  \end{itemize}
\end{lemma}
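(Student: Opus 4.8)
The plan is to reduce the claim, via \Cref{lem:dphi_dqi_expr}, to the sign of $N_i-N_j$, where (using the common positive denominator there, and writing $A:=C_0t+\sum_{j}q_jC_j$ and $B:=R_0t+\sum_{j}q_jR_j$) we have $N_i=AR_i+BC_i+(d-1)R_iC_i$. First I would use the scale-freeness of $\overline{\Phi^S}$ together with \eqref{equ:phi_deri_zero_R}--\eqref{equ:phi_deri_zero_C} to normalise $\mathbf r,\mathbf c$ so that $R_i^{1/d}=A-C_i$ and $C_i^{1/d}=B-R_i$ for all $i\in I$; substituting these into $N_i$ collapses it to the symmetric form $N_i=R_i^{p}+C_i^{p}+(d+1)R_iC_i$ with $p:=(d+1)/d$. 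Under this normalisation every $R_i$ $(i\in I)$ is a root of $\Lambda(r):=r^{1/d}+(B-r)^d-A$, so in the $3$-maximal case $R_1>R_2>R_3>0$ are precisely its three simple roots, with $C_i=A-R_i^{1/d}$ and hence $0<C_1<C_2<C_3$ as in \eqref{equ:3_max_order}.

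The key analytic input is the identity $N'(r)=(d+1)\,r\,\Lambda'(r)$, where $N(r):=r^{p}+(B-r)^{d+1}+(d+1)r(B-r)^{d}$, which is a routine differentiation. I would combine it with a Descartes-type shape analysis: $\Lambda'(r)=0$ only when $r^{1/d}(B-r)$ equals a fixed constant, and $r\mapsto r^{1/d}(B-r)$ is unimodal, so $\Lambda'$ changes sign at most twice; since $\Lambda'(0^+)=+\infty$ and $\Lambda$ has three roots, $\Lambda$ is negative near $0$, increases through $R_3$, decreases through $R_2$, and increases through $R_1$. Integrating $N'=(d+1)r\Lambda'$ by parts yields a primitive $Q$ of $-(d+1)\Lambda$ with $Q(R_i)=N_i$, which therefore has a strict local minimum at $R_2$ and is strictly increasing on $[R_2,R_1]$; hence $N_1>N_2$ with no further hypotheses, which is part~(b). (The hypothesis in (b) matters only for how the lemma is invoked in \Cref{lem:maximal_max}.)

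For part~(a) I would write $N_1-N_3=-(d+1)J(A,B)$ with $J(A,B):=\int_{R_3}^{R_1}\Lambda(r)\,dr$. Differentiating under the integral (the boundary terms vanish because $\Lambda(R_1)=\Lambda(R_3)=0$) gives $\partial_AJ=-(R_1-R_3)<0$ and $\partial_BJ=(B-R_3)^d-(B-R_1)^d=C_3-C_1>0$. Moreover the swap $\mathbf r\leftrightarrow\mathbf c$ carries our $3$-maximal fixpoint to another $3$-maximal fixpoint (by symmetry of $\overline{\Phi^S}$) with $A,B$ interchanged and with $R_1,R_3$ — hence $N_1,N_3$ — interchanged, which yields the antisymmetry $J(A,B)=-J(B,A)$. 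Along the segment from $(A,B)$ to $(B,A)$ the derivative of $J$ then equals $(A-B)\big((R_1-R_3)+(C_3-C_1)\big)$, which has constant sign; so if $A\neq B$ then $J$ is strictly monotone along the segment, whence $J(A,B)\neq J(B,A)=-J(A,B)$ and thus $J(A,B)\neq 0$. Finally, when $A=B$ the sets $\{R_i\}$ and $\{C_i\}$ coincide (both are the roots of $\Lambda$) and the orderings force $C_i=R_{4-i}$, so $R_1C_1=R_1R_3=R_3C_3$, i.e.\ $R_1/R_3=C_3/C_1$. Taking the contrapositive: $R_1/R_3\neq C_3/C_1$ implies $A\neq B$, hence $J(A,B)\neq 0$ and $N_1\neq N_3$.

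The step I expect to be the main obstacle is making the segment argument in part~(a) fully rigorous: one has to control the set of parameters $(A,B)$ for which \eqref{equ:recursion} admits three distinct positive fixpoint values of the given type, so that the segment joining $(A,B)$ to $(B,A)$ stays inside it and all roots remain simple (validating the differentiations above). This, together with the Descartes-type bookkeeping in the second paragraph, is where the heavy calculations live, and it is here that the hypotheses $q\geq 4$ even and $d\geq 5q^k$ (the latter through the bound $1\le t\le 1.0312$) are used to pin down the relevant geometry.
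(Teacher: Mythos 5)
Your proof takes a genuinely different and more structural route than the paper's. The normalisation so that $R_i^{1/d}=A-C_i$, $C_i^{1/d}=B-R_i$ is available (the proportionality constants in \eqref{equ:phi_deri_zero_R}--\eqref{equ:phi_deri_zero_C} can be absorbed by rescaling $\mathbf r$ and $\mathbf c$, under which $\overline{\Phi^S}$ and the $\partial\overline{\Phi^S}/\partial q_i$ are invariant), the identity $N'(r)=(d+1)r\Lambda'(r)$ with $N(r)=r^p+(B-r)^{d+1}+(d+1)r(B-r)^d$ checks out, and the shape analysis of $\Lambda$ is essentially the same Descartes-type argument as in \Cref{lem:fp-supp-count}. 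Your part (b) argument is correct and, pleasantly, stronger than the paper's: since $\Lambda>0$ on $(R_3,R_2)$ and $\Lambda<0$ on $(R_2,R_1)$, the primitive $Q$ of $-(d+1)\Lambda$ with $Q(R_i)=N_i$ has a strict minimum at $R_2$, so $N_1>N_2$ and $N_3>N_2$ hold with no hypothesis at all. The antisymmetry $J(A,B)=-J(B,A)$ is also correct (the roots of $\Lambda_{B,A}$ are exactly the $C_i=(B-R_i)^d$ and the values $N_i$ are permuted accordingly), as is the implication $A=B\Rightarrow R_1/R_3=C_3/C_1$ and the computation $\partial_A J=-(R_1-R_3)$, $\partial_B J=C_3-C_1$. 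By contrast, the paper proves both (a) and (b) by explicit elimination: it solves \eqref{eq:sys1sys}--\eqref{eq:sys2sys} for $q_1,q_3$, substitutes into the formula from \Cref{lem:dphi_dqi_expr}, and reduces each claim to the sign of a fixed two-variable polynomial in $(r_1,c_3)$ whose sign is borrowed from the earlier paper of Galanis--\v{S}tefankovi\v{c}--Vigoda.

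The genuine gap is exactly the one you flag in your last paragraph, and I do not think it is a small one. For the segment argument in part (a) to go through, you need that for every $s\in[0,1]$ the function $\Lambda_{A_s,B_s}(r)=r^{1/d}+(B_s-r)^d-A_s$ has three distinct positive simple roots in $(0,B_s)$, so that $J(A_s,B_s)$ is defined, differentiable, and the Leibniz differentiation (which drops the boundary terms because $\Lambda(R_1)=\Lambda(R_3)=0$) is valid. Nothing established so far guarantees this: the ``good region'' of $(A,B)$ for which $\Lambda_{A,B}$ has three roots is carved out by the condition that the local max of $\Lambda$ is positive and the local min negative, and along the line $A+B=\text{const}$ these quantities move in ways that are not controlled by the information at the endpoints. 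In particular at $s=1/2$ one is on the diagonal $A_s=B_s$, which is precisely the locus where $J$ vanishes and where the two-spin sub-system \eqref{equ:q00_2spin} degenerates; there is no a priori reason the midpoint remains in the good region. If two roots coalesce at some $s_0\in(0,1)$, the root functions cease to be differentiable and $J$ need not even extend continuously past $s_0$, so ``monotone on the whole segment'' is unproved. This is where the paper's approach is safer: by eliminating $q_1,q_3$ it never has to leave the given fixpoint, and the sign of the resulting polynomial $g(r_1,c_3)=(r_1-c_3)(r_1^d-1)(c_3^d-1)-d(r_1-1)(c_3-1)(r_1^d-c_3^d)$ is established once and for all by a local algebraic argument rather than a global path argument. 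If you want to rescue the integral approach, you would either need to show the good region is path-connected through the diagonal (a global statement about the tree recursion that looks as hard as the original question), or replace the segment by a direct sign argument for $J(A,B)$ that avoids moving $(A,B)$ at all.

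One further small remark: in part (a) the directional derivative along the segment is $(A-B)\bigl((R_1(s)-R_3(s))+(C_3(s)-C_1(s))\bigr)$, and you assert this ``has constant sign.'' Even granting that the segment stays in the good region, this requires that $R_1(s)>R_3(s)$ and $C_3(s)>C_1(s)$ strictly for all $s$, i.e.\ the roots stay distinct; that is the same unproved condition restated. It is worth separating in your write-up what is proved for the original $(A,B)$ from what must be proved uniformly along the path.
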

For the sake of convenience, we further set
\[
  r_0^d:=R_0/R_3, r_1^d:=R_1/R_3, r_2^d:=R_2/R_3, \text{ and } c_0^d:=C_0/C_1, c_2^d:=C_2/C_1, c_3^d:=C_3/C_1.
\]
which means
\begin{equation} \label{equ:3_max_order_small}
  \begin{aligned}
    r_1>r_2>1 && \text{and} && c_3>c_2>1.
  \end{aligned}
\end{equation}
We will need these notations in later sections too.
With them, from \eqref{equ:phi_deri_zero_R} and \eqref{equ:phi_deri_zero_C}, we obtain that
\begin{equation}\label{equ:small_r_0_c_0}
r_0=\frac{c_0^d t^2+(q_1+q_2c_2^d+q_3c_3^d)t}{c_0^d t+q_1+q_2c_2^d+(q_3-1)c_3^d},\qquad 
c_0=\frac{r_0^d t^2+(q_1r_1^d+q_2r_2^d+q_3)t}{r_0^d t+(q_1-1)r_1^d+q_2r_2^d+q_3}.
\end{equation}
\begin{equation}\label{equ:small_r_1_c_3}
r_1=\frac{c_0^d t+q_1-1+q_2c_2^d+q_3c_3^d}{c_0^d t+q_1+q_2c_2^d+(q_3-1)c_3^d},\qquad
c_3=\frac{r_0^d t+q_1r_1^d+q_2r_2^d+q_3-1}{r_0^d t+(q_1-1)r_1^d+q_2r_2^d+q_3},
\end{equation}
\begin{proof}[Proof of Lemma~\ref{lem:3_max_maximizer}]
From \eqref{equ:phi_deri_zero_R} and \eqref{equ:phi_deri_zero_C}, we get
  \begin{equation}\label{equ:small_r_1_over_r_2_to_r_2} 
    \frac{r_1-1}{r_2-1}=\frac{c_3^d-1}{c_3^d-c_2^d}, \mbox{ yielding that } r_2=\frac{r_1c_3^d-1-c_2^d(r_1-1)}{c_3^d-1}
  \end{equation}
 Similarly, we obtain that
  \begin{equation} \label{equ:small_c_3_over_c_2}
    \frac{c_3-1}{c_2-1}=\frac{r_1^d-1}{r_1^d-r_2^d}, \mbox{ yielding that } r_2^d=\frac{r_1^dc_3-1-c_2(r_1^d-1)}{c_3-1},
  \end{equation}
  
From \eqref{equ:phi_deri_zero_R} and \eqref{equ:phi_deri_zero_C}, we have that $r_2=\frac{c_0^d t+q_1+(q_2-1)c_2^d+q_3c_3^d}{c_0^d t+q_1+q_2c_2^d+(q_3-1)c_3^d}$ which combined  with \eqref{equ:small_r_1_over_r_2_to_r_2} gives 
that 
\begin{equation}\label{eq:sys1sys}
\frac{c_0^d t+q_1+(q_2-1)c_2^d+q_3c_3^d}{c_0^d t+q_1+q_2c_2^d+(q_3-1)c_3^d}=\frac{r_1c_3^d-1-c_2^d(r_1-1)}{c_3^d-1}.
\end{equation}
Symmetrically we obtain that 
  \begin{equation}\label{eq:sys2sys}
    \frac{r_0^d t+q_1r_1^d+(q_2-1)c_2^d+q_3}{r_0^d t+(q_1-1)r_1^d+q_2c_2^d+q_3}
    =\frac{r_1^dc_3-1-r_2^d(c_3-1)}{r_1^d-1}.
  \end{equation}
  We can view \eqref{eq:sys1sys} and \eqref{eq:sys2sys} as a linear system in $q_1$ and $q_3$ after clearing the denominators, which yields that 
\begin{align}
q_1\cdot(r_1^dc_3^d-1)&=c_0^d t+q_2c_2^d+\frac{1-r_1c_3^d}{r_1-1}-c_3^d\left(r_0^d t+q_2r_2^d+\frac{1-c_3r_1^d}{c_3-1}\right),\label{equ:q_1_q_3_system_sol_q_1}\\
  q_3\cdot(r_1^dc_3^d-1)&=r_0^d t+q_2r_2^d+\frac{1-r_1^dc_3}{c_3-1}-r_1^d\left(c_0^d t+q_2c_2^d+\frac{1-r_1c_3^d}{r_1-1}\right),\label{equ:q_1_q_3_system_sol_q_3}
 \end{align}
  
From \eqref{equ:phi_deri_zero_R} and \eqref{equ:phi_deri_zero_C}, we also obtain that
\begin{equation}\label{equ:r0dt}
  r_0^dt=\frac{r_1^d-1}{c_3-1}-(q_1-1)r_1^d-q_2r_2^d-q_3,\qquad
  c_0^dt=\frac{c_3^d-1}{r_1-1}-q_1-q_2c_2^d-(q_3-1)c_3^d.
\end{equation}
We can now show the following:
\begin{equation} \label{lem:r_1_eq_c_3}
\mbox{ if $r_1=c_3$, then (i) $r_2=c_2$, (ii) $q_1=q_3$, and (iii) $r_0=c_0$.}
\end{equation}
The proof of (i) in \eqref{lem:r_1_eq_c_3} is identical to that in \cite[Lemma 7.20]{galanis2015inapproximability} using the expressions for $r_2,r_2^d$ in \eqref{equ:small_r_1_over_r_2_to_r_2} and \eqref {equ:small_c_3_over_c_2}, respectively. From (i) and the assumption that $r_1=c_3$, we obtain from   \eqref{equ:q_1_q_3_system_sol_q_1} and \eqref{equ:q_1_q_3_system_sol_q_3} that
\begin{equation} \label{equ:q_3_minus_q_1}
  q_3-q_1=(r_0^d -c_0^d) \cdot\frac{t}{r_1^{d}-1}. 
\end{equation}
Furthermore, the equations in \eqref{equ:small_r_0_c_0}  can also be regarded as a linear system in $q_1$ and $q_3$. Using the assumption $r_1=c_3$ and $r_2=c_2$, we obtain that 
\begin{equation*}
  q_3-q_1=\frac{t\left[(r_0^d-c_0^d)t^2+(r_0c_0^d-c_0r_0^d+c_0^{1+d}-r_0^{1+d})t+(r_0^d-c_0^d)r_0c_0-(r_0-c_0)r_1^d\right]}{(r_0-t)(c_0-t)(r_1^d-1)}. 
\end{equation*}
which, together with (\ref{equ:q_3_minus_q_1}), implies $r_0=c_0$ and hence $q_1=q_3$.
This finishes proving \eqref{lem:r_1_eq_c_3}.

We are now ready to give the proof of Lemma \ref{lem:3_max_maximizer}. For part (a),  Lemma \ref{lem:dphi_dqi_expr} yields that
\begin{equation*}
\begin{aligned}
  \frac{\partial\overline{\Phi^S}}{\partial q_1}-\frac{\partial\overline{\Phi^S}}{\partial q_3}=\frac{1}{S}\times
  \bigg[&(r_1^d-1)(c_0^dt+q_1+c_2^dq_2)-(c_3^d-1)(r_0^dt+q_3+r_2^dq_2)\\
  &+(d-1)(r_1^d-c_3^d)+r_1^dc_3^d(q_3-q_1)+r_1^dq_1-c_3^dq_3 \bigg]. 
\end{aligned}
\end{equation*}
where $S>0$. Then plug in the expression of $q_1$ and $q_3$ in (\ref{equ:q_1_q_3_system_sol_q_1}) and (\ref{equ:q_1_q_3_system_sol_q_3}), we get
\begin{equation} \label{equ:dq1_dq3}
  \frac{\partial\overline{\Phi^S}}{\partial q_1}-\frac{\partial\overline{\Phi^S}}{\partial q_3}=-\frac{g(r_1,c_3)}{S}
\end{equation}
where $
g(r_1,c_3):=(r_1-c_3)(r_1^d-1)(c_3^d-1)-d(r_1-1)(c_3-1)(r_1^d-c_3^d)$. This quantity was shown to have the same sign as $r_1-c_3$ (see Equation (123) in the proof of Lemma 7.19 in \cite{galanis2015inapproximability}), and specifically, non-zero when $r_1\neq c_3$, concluding part (a). 

Now we prove part (b) of Lemma \ref{lem:3_max_maximizer}. From \eqref{lem:r_1_eq_c_3}, the assumption $r_1=c_3$ implies $r_2=c_2$, $q_1=q_3$ and $r_0=c_0$. Applying Lemma \ref{lem:dphi_dqi_expr} based on these, we get
\begin{align*}
  \frac{\partial\overline{\Phi^S}}{\partial q_1}-\frac{\partial\overline{\Phi^S}}{\partial q_2}&=q_1(1+r_1^d)(1+r_1^d-2r_2^d)+r_2^d(q_2-(2q_2+d-1)r_2^d-2r_0^dt)+r_0^dt+r_1^d(r_0^dt+q_2r_2^d+d-1)\\
  &=-\frac{(d-1)(r_1-1)r_2^{2d}+2(r_1^{d+1}-1)r_2^d-(r_1^{2d+1}+dr_1^{d+1}-dr_1^d-1)}{r_1-1},
\end{align*}
where in the second line we use (\ref{equ:r0dt}). This quantity was shown to be non-zero in the proof of Lemma 7.19 in \cite{galanis2015inapproximability} (from Equation (124) onwards) under \eqref{equ:small_r_1_over_r_2_to_r_2},  concluding part (b). 
\end{proof}

Now we assume $\mathbf{q}=(q_1,q_2,q_3)$ is a $2$-maximal  triple, and assume $q_2=0$ without loss of generality. The result here is analogous to Lemma \ref{lem:3_max_maximizer} (a). 
\begin{lemma} \label{lem:2_max_maximizer}
  Under the assumption that $q_2=0$, suppose $R_i$'s and $C_i$'s ($i\neq 2$) satisfy (\ref{equ:phi_deri_zero_R}), (\ref{equ:phi_deri_zero_C}) and (\ref{equ:3_max_order}). For any $q_1,q_3>0$, it holds that $\partial\overline{\Phi^S}/\partial q_1-\partial\overline{\Phi^S}/\partial q_3\neq 0$, unless $q_1=q_3$ and $R_1/R_3=C_3/C_1$. 
\end{lemma}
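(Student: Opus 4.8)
The plan is to follow closely the proof of Lemma~\ref{lem:3_max_maximizer}(a), exploiting the simplifications available when $q_2=0$. Throughout, only the indices $0,1,3$ are in play; by \eqref{equ:3_max_order} we may assume $R_1>R_3>0$ and $0<C_1<C_3$, so writing $r_0^d:=R_0/R_3$, $r_1^d:=R_1/R_3$, $c_0^d:=C_0/C_1$, $c_3^d:=C_3/C_1$ we have $r_1,c_3>1$, and the condition ``$R_1/R_3=C_3/C_1$'' in the statement becomes simply ``$r_1=c_3$''.

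First I would specialise the critical-point equations \eqref{equ:phi_deri_zero_R}--\eqref{equ:phi_deri_zero_C} to $q_2=0$ and, using the scale-freeness of $\overline{\Phi^S}$, record the $q_2=0$ analogues of \eqref{equ:r0dt}, namely
\[
r_0^d t=\tfrac{r_1^d-1}{c_3-1}-(q_1-1)r_1^d-q_3,\qquad c_0^d t=\tfrac{c_3^d-1}{r_1-1}-q_1-(q_3-1)c_3^d .
\]
Rearranged, these form a $2\times2$ linear system in $(q_1,q_3)$ with determinant $r_1^d c_3^d-1$, which is positive since $r_1,c_3>1$; solving it gives $q_1$ and $q_3$ explicitly in terms of $r_0,r_1,c_0,c_3$ (the $q_2=0$ counterparts of \eqref{equ:q_1_q_3_system_sol_q_1}--\eqref{equ:q_1_q_3_system_sol_q_3}).

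Next, applying Lemma~\ref{lem:dphi_dqi_expr} with $q_2=0$, the quantity $\partial\overline{\Phi^S}/\partial q_1-\partial\overline{\Phi^S}/\partial q_3$ equals $N/S$ with $S>0$ the common denominator of that lemma and $N$ a polynomial in the $R$'s, $C$'s and $q_1,q_3$; dividing through by $R_3C_1$ turns $N$ into precisely the bracketed expression that appears in the proof of Lemma~\ref{lem:3_max_maximizer}(a) with its $q_2,r_2,c_2$ terms deleted. Substituting the solved values of $q_1,q_3$ from the previous step and simplifying — using the two displayed identities to eliminate $r_0,c_0$ — I expect to arrive, exactly as in \eqref{equ:dq1_dq3}, at
\[
\frac{\partial\overline{\Phi^S}}{\partial q_1}-\frac{\partial\overline{\Phi^S}}{\partial q_3}=-\frac{g(r_1,c_3)}{S},\qquad g(r_1,c_3)=(r_1-c_3)(r_1^d-1)(c_3^d-1)-d(r_1-1)(c_3-1)(r_1^d-c_3^d),
\]
the very function analysed in \cite[Lemma~7.19, Eq.~(123)]{galanis2015inapproximability}, where it is shown to have the same sign as $r_1-c_3$. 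Consequently the difference vanishes if and only if $r_1=c_3$, i.e.\ iff $R_1/R_3=C_3/C_1$.

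Finally I would verify that $r_1=c_3$ additionally forces $q_1=q_3$, so that the exceptional locus is exactly the one named in the statement. For this I argue as in the derivation of parts (ii) and (iii) of \eqref{lem:r_1_eq_c_3}, specialised to $q_2=0$: the expression for $q_3-q_1$ coming from the linear system above must agree with the one obtained by reading \eqref{equ:small_r_0_c_0} (with $q_2=0$) as a second linear system in $(q_1,q_3)$, and equating the two forces $r_0=c_0$, whence $q_1=q_3$. The main obstacle is the bookkeeping in the third paragraph, namely verifying that the substitution of the solved $q_1,q_3$ collapses $N/S$ to $-g(r_1,c_3)/S$; but putting $q_2=0$ removes exactly the terms that lengthened the $3$-maximal computation, so this is a strictly easier instance of a calculation already carried out for Lemma~\ref{lem:3_max_maximizer}(a), and I do not expect it to require a new idea.
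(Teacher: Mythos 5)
Your proposal is correct and follows the same route as the paper's proof: observe that with $q_2=0$ the $R_2,C_2$ variables drop out, re-use the explicit expressions for $q_1,q_3$ and the perturbation identity \eqref{equ:dq1_dq3} from the proof of Lemma~\ref{lem:3_max_maximizer}(a), and then re-run the derivation of \eqref{lem:r_1_eq_c_3} to pin down $q_1=q_3$ when $r_1=c_3$. The paper states this more tersely (``one can carry out the proof of Lemma~\ref{lem:3_max_maximizer}(a) once again \ldots and show $q_1=q_3$ by going through the proof of~\eqref{lem:r_1_eq_c_3}''), but the content is identical.
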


\begin{proof}
First, note that the values of $R_2$ and $C_2$ do not affect the value of derivatives $\partial\overline{\Phi^S}/\partial q_1$ and $\partial\overline{\Phi^S}/\partial q_3$ when $q_2=0$. In addition, the expressions of $q_1$ and $q_3$ in (\ref{equ:q_1_q_3_system_sol_q_1}) and (\ref{equ:q_1_q_3_system_sol_q_3}) still hold for $q_2=0$. Therefore, one can carry out the proof of Lemma \ref{lem:3_max_maximizer} (a) once again for this case, showing $\partial\overline{\Phi^S}/\partial q_1-\partial\overline{\Phi^S}/\partial q_3=0$ only when $R_1/R_3=C_3/C_1$. Assuming this, one can show $q_1=q_3$ by going through the proof of \eqref{lem:r_1_eq_c_3}. 
\end{proof}

We conclude this subsection with Lemma \ref{lem:maximal_max}. 
\begin{proof}[Proof of Lemma \ref{lem:maximal_max}]
This comes after Lemma \ref{lem:3_max_maximizer}, Lemma \ref{lem:2_max_maximizer} and the second part of Lemma \ref{lem:dphi_dqi_expr}. 
\end{proof}

\subsection{Stability of Maximal \texorpdfstring{$(q/2,q/2,0)$}{(q/2,q/2,0)} Fixpoints} \label{sec:2max_stable}

In the next two subsections, we focus on the (in)stability of candidate fixpoints that may maximize $\Psi_1$. The condition of Jacobian stability is given by the following Lemma. 
\begin{lemma}[{cf. \cite[{Lemma 4.16}]{galanis2015inapproximability}}] \label{lem:gsv_4_16}
  Suppose $(R_0,R_1,\cdots,R_q,C_0,C_1,\cdots,C_q)$ is a fixpoint of the tree recursion (\ref{equ:recursion}). Let $\alpha_i:=\sum_{j=0}^{q}B_{ij}R_iC_j$ and $\beta_j:=\sum_{i=0}^{q}B_{ij}R_iC_j$. Define the matrix ${\bm A}:=(a_{ij})_{0\leq i,j\leq q}$ as $a_{ij}=B_{ij}R_iC_j/\sqrt{\alpha_i\beta_j}$, and the matrix ${\bm L}:=\left[\begin{smallmatrix} 0&{\bm A}\\{\bm A}^{\top}&0 \end{smallmatrix}\right]$. Then ${\bm L}$ has symmetric real spectrum (symmetry means if $a$ is an eigenvalue then so is $-a$), and $\pm 1$ is a pair of its eigenvalues. The condition for the fixpoint to be stable is that the second largest eigenvalue of ${\bm L}$ is less than $1/d$. 
\end{lemma}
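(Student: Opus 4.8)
The plan is to follow the proof of \cite[Lemma~4.16]{galanis2015inapproximability} almost verbatim; the only new feature here is the special spin $0$, which carries the weights $t^2,t$ rather than $0,1$, so the single thing to verify is that this does not disturb any of the cancellations. The first step is to recast the tree recursion \eqref{equ:recursion} in the uniform form $R_i\propto\bigl(\sum_{j\in[\oq]}B_{ij}C_j\bigr)^d$ and $C_j\propto\bigl(\sum_{i\in[\oq]}B_{ij}R_i\bigr)^d$, which one checks directly from \eqref{eq:BBdef} is equivalent to \eqref{equ:recursion} (the $t^d$ prefactor of $R_0$ being absorbed into the bracket since $B_{00}=t^2$, $B_{0j}=t$). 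Then I would view the composed map $F\colon(\rb,\cb)\mapsto(\rb',\cb')$ as a map on pairs of positive vectors modulo scaling; its fixpoints are exactly the positive fixpoints of \eqref{equ:recursion}, and by definition the fixpoint is Jacobian stable iff the spectral radius of $DF$ on the quotient is $<1$.

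The heart of the argument is computing $DF$ at a fixpoint. Since $F_i$ depends on $\cb$ only through $\sum_jB_{ij}C_j=\alpha_i/R_i$ and involves a $d$-th power, the Jacobian (in actual coordinates, up to the diagonal conjugation relating actual and logarithmic coordinates) is $d\left[\begin{smallmatrix}0&P\\Q&0\end{smallmatrix}\right]$ with $P_{il}=B_{il}R_iC_l/\alpha_i$ and $Q_{jk}=B_{kj}R_kC_j/\beta_j$. Conjugating the $R$-block by $\mathrm{diag}(\sqrt{\alpha_i})$ and the $C$-block by $\mathrm{diag}(\sqrt{\beta_j})$ turns $P$ into $\bm A$ and $Q$ into $\bm A^{\top}$ on the nose, so $DF$ is similar to $d\bm L$; this step uses the fixpoint equations only through $\sum_jB_{ij}C_j=\alpha_i/R_i$, which holds for $\BB$ exactly as for the colour-symmetric matrices of \cite{galanis2015inapproximability}. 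Symmetry of $\bm L$ then gives real spectrum, the block-antidiagonal shape gives symmetry about $0$, and the identity $\bm A\sqrt{\betab}=\sqrt{\alphab}$, $\bm A^{\top}\sqrt{\alphab}=\sqrt{\betab}$ (which is just $\sum_jB_{ij}R_iC_j=\alpha_i$ divided by $\sqrt{\alpha_i}$, and its transpose) produces the eigenpair $\pm1$ with positive eigenvectors.

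Finally I would identify the $\pm1$ eigenpair with the scaling freedom and extract the criterion. Scaling $\cb\mapsto\mu\cb$ sends $\rb'\mapsto\mu^d\rb'$ and $\rb\mapsto\lambda\rb$ sends $\cb'\mapsto\lambda^d\cb'$, so on the two-dimensional scaling subspace the Jacobian acts as $\left[\begin{smallmatrix}0&d\\d&0\end{smallmatrix}\right]$, with eigenvalues $\pm d$ — exactly the contribution of the $\pm1$ eigenpair of $\bm L$ under the similarity $DF\simeq d\bm L$. Since $\BB$ is ergodic, $\bm A\bm A^{\top}$ is irreducible with positive eigenvector $\sqrt{\alphab}$ and eigenvalue $1$, so by Perron--Frobenius the largest eigenvalue of $\bm L$ is $1$ and simple; hence on the quotient by scaling the spectral radius of $DF$ equals $d$ times the second-largest eigenvalue of $\bm L$, and stability is equivalent to this being $<1$, i.e.\ to the second-largest eigenvalue of $\bm L$ being $<1/d$. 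I do not expect a genuine obstacle here: everything is a transcription of \cite[Section~4]{galanis2015inapproximability}, and the only care needed is to check that the $0$-row and $0$-column of $\BB$ behave like the others under the two diagonal conjugations — which they do, precisely because $\alpha_0$ and $\beta_0$ absorb the extra factors of $t$.
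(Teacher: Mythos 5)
The paper does not give its own proof of \Cref{lem:gsv_4_16}; it imports the statement from \cite[Lemma~4.16]{galanis2015inapproximability}, and indeed the lemma does not appear in the proof-location table at the end of \Cref{sec:3-values}. Your reconstruction is correct and is precisely the argument one would transcribe from that reference: the tree recursion in the uniform form $R_i\propto(\sum_j B_{ij}C_j)^d$, the log-coordinate Jacobian $d\left[\begin{smallmatrix}0&P\\Q&0\end{smallmatrix}\right]$ with $P_{il}=B_{il}R_iC_l/\alpha_i$, the diagonal conjugation by $\mathrm{diag}(\alpha_i^{-1/2})$, $\mathrm{diag}(\beta_j^{-1/2})$ producing $\bm L$, the eigenpair $\pm1$ with eigenvectors $(\sqrt{\alphab},\pm\sqrt{\betab})$ realizing the two scaling directions, and Perron--Frobenius via ergodicity of $\BB$ to ensure $1$ is simple so that the quotient spectral radius equals $d$ times the second-largest eigenvalue of $\bm L$. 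You also correctly identify the only thing that needs checking in this non-colour-symmetric setting, namely that the $0$-row/column of $\BB$ (with entries $t^2$ and $t$ instead of $0$ and $1$) is absorbed uniformly into $\alpha_0$, $\beta_0$ under the conjugation — nothing in the cancellations used $B_{ii}=0$ or $B_{ij}\in\{0,1\}$, so the derivation goes through verbatim.
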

We will also need the following lemma which is proved in Section~\ref{sec:proof_unique_root}.
\begin{lemma} \label{lem:r1_unique}
For any $q\geq 4, k\geq 2$ and $d\geq 3q^k$, the function
\begin{equation*}
h(x):=\left(\frac{x^{d+1}-1}{x^d-1}\right)^{d}t^{d+1}-\frac{x^d-1}{x-1}+q'+(q'-1)x^d
\end{equation*}
has exactly one root in the region $(1,\infty)$. 
\end{lemma}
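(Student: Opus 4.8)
The plan is to prove existence via a sign change at the endpoints, and then to reduce uniqueness to the strict monotonicity of a logarithmic surrogate.

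\emph{Notation and endpoints.} Write $P(x)=\tfrac{x^{d}-1}{x-1}=1+x+\cdots+x^{d-1}$ and $u(x)=\tfrac{x^{d+1}-1}{x^{d}-1}=1+\tfrac{x^{d}}{P(x)}$; both extend continuously to $x=1$ and are strictly increasing on $(1,\infty)$, since $u'(x)=\tfrac{x^{d-1}S(x)}{P(x)^{2}}$ with $S(x):=dP(x)-xP'(x)=\sum_{i=0}^{d-1}(d-i)x^{i}>0$. Recall $t^{d+1}=t^{\Delta}=q^{k}-q$ and $1\le t\le 1.0312$; set $q':=q/2$ and $\rho(x):=P(x)-q'-(q'-1)x^{d}$, so $h(x)=u(x)^{d}t^{d+1}-\rho(x)$ and every root of $h$ has $\rho(x)>0$. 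At $x=1$, $h(1)=(1+\tfrac1d)^{d}(q^{k}-q)-(d-q+1)<0$, because $(1+\tfrac1d)^{d}<e$ and $e(q^{k}-q)<3q^{k}-q+1\le d-q+1$ (using $q\ge4$, $k\ge2$, $d\ge3q^{k}$). Since $q'\ge2$ and $P(x)<\tfrac{x^{d}}{x-1}\le(q'-1)x^{d}$ for $x\ge1+\tfrac1{q'-1}$, there $\rho(x)<0$ and $h(x)>0$; so $h$ has a root, all roots lie in $(1,1+\tfrac1{q'-1})\subseteq(1,2]$, and (as $\rho'(x)/x^{d-2}=\sum_{j=1}^{d-1}jx^{j+1-d}-(q'-1)dx$ is strictly decreasing) $\rho$ is unimodal there — increasing to some $x_{p}$, then decreasing, with a unique zero $x_{z}\in(x_{p},1+\tfrac1{q'-1})$. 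Since $h>0$ wherever $\rho\le0$, all roots of $h$ lie in $J:=(1,x_{z})$.

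\emph{Reduction.} On $J$, $h(x)=0$ is equivalent to $\Phi(x):=\ln\rho(x)-\ln(q^{k}-q)-d\ln u(x)=0$. Here $\Phi(1^{+})>0$ (this is exactly $h(1)<0$), $\Phi(x)\to-\infty$ as $x\to x_{z}^{-}$ (as $\rho\to0^{+}$), and using $\tfrac{\rho'(1)}{\rho(1)}=\tfrac d2=d\,\tfrac{u'(1)}{u(1)}$ we get $\Phi'(1^{+})=0$, while a Taylor expansion at $x=1$ yields $\Phi''(1^{+})<0$, so $\Phi$ decreases just to the right of $1$. Hence it suffices to prove the single inequality $\rho'(x)\,u(x)<d\,\rho(x)\,u'(x)$ for all $x\in J$ (equivalently $\Phi'<0$ on $J$): then $\Phi$ is strictly decreasing on $J$ and has a unique zero.

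\emph{The crux.} Substituting $u=\tfrac{P+x^{d}}{P}$, $u'=\tfrac{x^{d-1}S}{P^{2}}$, $S=dP-xP'$ and clearing $P(x)^{2}>0$, the target becomes the polynomial inequality
\[ \bigl(P'(x)-(q'-1)dx^{d-1}\bigr)\bigl(P(x)+x^{d}\bigr)P(x)\;<\;d\,x^{d-1}\bigl(P(x)-q'-(q'-1)x^{d}\bigr)\bigl(dP(x)-xP'(x)\bigr) \]
for $x\in(1,x_{z})\subseteq(1,1+\tfrac1{q'-1})$. This is the heart of the argument and the step I expect to be by far the most delicate: the two sides agree at $x=1$ (the identity $\Phi'(1^{+})=0$) and, more seriously, their leading behaviour in $d$ coincides throughout $(1,x_{z})$ — indeed $\tfrac{\rho'}{\rho}\approx d\,\tfrac{u'}{u}\approx\tfrac dx$ in the interior — so the sign is governed by lower-order corrections everywhere, not just near $x=1$. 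I would establish it by a regime analysis in $x-1$: in the regime $x-1\asymp c/d$ with $c=\Theta(1)$ one inserts the Taylor-type asymptotics $x^{d}\approx e^{c}$, $P\approx d(e^{c}-1)/c$, $P'\approx d^{2}(ce^{c}-e^{c}+1)/c^{2}$, and the inequality reduces, after cancellation, to $c\,e^{c/2}<e^{c}-1$ for all $c>0$ — which holds because the midpoint rule underestimates $\int_{0}^{c}e^{t}\,dt$; for $x-1$ bounded away from $0$ (below $\tfrac1{q'-1}$) one needs sharp two-sided bounds on $P,P',S$ to beat the near-cancellation; and near $x_{z}$ one uses that the right side is a small positive quantity while the left side is negative (since $\rho'(x_{z})<0$). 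Making the error terms in these regimes rigorous is where the hypotheses $q\ge4$, $d\ge3q^{k}$ and $t$ being extremely close to $1$ are genuinely used; the remaining bookkeeping (the endpoint bounds, unimodality of $\rho$, and the Taylor computation giving $\Phi'(1^{+})=0$, $\Phi''(1^{+})<0$) is routine.
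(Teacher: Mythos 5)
Your outline begins correctly (existence by sign change, localising all roots to $(1,1+\tfrac{1}{q'-1})$, and the nice algebraic identity $\rho'(1)/\rho(1)=d\,u'(1)/u(1)=d/2$ so that $\Phi'(1^+)=0$), and it takes a genuinely different route from the paper: you try to prove \emph{strict monotonicity} of the logarithmic surrogate $\Phi=\ln\rho-\ln(q^k-q)-d\ln u$, whereas the paper never aims at monotonicity at all. Its proof instead shows that $h<0$ at every critical point of $h$ by solving $h'(x)=0$ for the transcendental term $\bigl(\tfrac{x^{d+1}-1}{x^d-1}\bigr)^d t^{d+1}$, substituting back into $h$, and reducing everything to the sign of an explicit polynomial $g(x,d,q')$, which is then handled by two elementary checks ($g(x,d,0)<0$ via the AM--GM inequality $(x^d-1)^2>d^2x^{d-1}(x-1)^2$, and $\partial g/\partial q'<0$). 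That substitution trick is precisely what avoids the near-cancellation you run into.

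The gap in your proposal is exactly the step you label ``the crux'': the inequality $\rho'(x)\,u(x)<d\,\rho(x)\,u'(x)$ on $J=(1,x_z)$ is asserted, not proven. You state the plan — split into the regime $x-1\asymp c/d$, an intermediate regime, and the regime near $x_z$, and in the first regime reduce (``after cancellation'') to $c\,e^{c/2}<e^c-1$ — but this is a heuristic. Notice that near $c=0$ the gap $e^c-1-ce^{c/2}=c^3/24+O(c^4)$ is only third order, matching the fact that $\Phi'(1^+)=0$ and (as you say) $\Phi''(1^+)<0$ is what carries the sign there; so the $O(\cdot)$ errors hidden by replacing $x^d,P,P'$ with their exponential approximations must themselves be controlled to third order in $c$, uniformly, before the reduction is legitimate. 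You acknowledge this explicitly (``Making the error terms \ldots rigorous is where the hypotheses \ldots are genuinely used; the remaining bookkeeping \ldots is routine''), but that bookkeeping \emph{is} the proof, and it is not supplied. There is also a secondary concern: strict decrease of $\Phi$ on all of $J$ is a stronger statement than the lemma requires (uniqueness only needs ``no local maximum of $h$ above zero''), and you have not verified that the stronger statement is even true away from $x=1$; the paper deliberately proves only the weaker ``$h<0$ at critical points,'' which is what its algebraic reduction delivers. As written, the proposal does not yet constitute a proof of uniqueness; I would recommend either carrying out the regime estimates in full, or switching to the paper's critical-point substitution, which bypasses the monotonicity question entirely.
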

We are now ready to prove Lemma \ref{lem:2_max_stable}.
\begin{proof}[Proof of Lemma \ref{lem:2_max_stable}]
Define $q':=q/2$. We first prove the uniqueness of $2$-maximal $(q',q',0)$ fixpoint (up to scaling). According to the proof of Lemma \ref{lem:2_max_maximizer}, fixpoints of type $(q', q', 0)$ maximize $\overline{\Phi}$ only when $r_1=c_3$. Now denote $x:=r_1=c_3$. To prove the first part of this lemma, we show there exists exactly one possible $x>1$ when $d\geq 3q^k$. By \eqref{equ:small_r_0_c_0} and \eqref{equ:small_r_1_c_3}, we get
\[
  \frac{r_0/t-r_1}{r_1-1}=\frac{1}{c_3^d-1}.
\]
Combining this with (\ref{equ:r0dt}), $x>1$ satisfies $h(x)=0$, where
\begin{equation*}
h(x):=\left(\frac{x^{d+1}-1}{x^d-1}\right)^{d}t^{d+1}-\frac{x^d-1}{x-1}+q'+(q'-1)x^d.
\end{equation*}
By Lemma \ref{lem:r1_unique}, $h(x)$ has exactly one root $x>1$. 

Next, we construct the matrices ${\bm A}$ and ${\bm L}$. Note that both matrices are scale-free with respect to $R_i$ and $C_i$. Directly plug in the formula in Lemma \ref{lem:gsv_4_16} to get
\[
{\bm A}:=\left[\begin{array}{ccc} c^2 & bc\transpose{\ones}  & ac\transpose{\ones} \\ ac\ones & ab \Jb& a^2\Jb'\\ bc\ones & b^2\Jb'& ab \Jb\end{array}\right].
\]
where $a:=\sqrt{x^{d-1}\frac{x-1}{x^d-1}}$, $b:=\sqrt{\frac{x-1}{x^d-1}}$ and $c:=\sqrt{\frac{x^{d+1}-1-q'(x-1)(x^d+1)}{x^{d+1}-1}}$, $\Jb$ is the $q'\times q'$ matrix with zeros on the diagonal and ones elsewhere, $\Jb'$ is the $q'\times q'$ matrix with ones everywhere, and $\ones$ is the $q'\times 1$ matrix. The eigenvalues of ${\bm L}=\left[\begin{smallmatrix} 0&{\bm A}\\{\bm A}^{\top}&0 \end{smallmatrix}\right]$ consist of $\pm ab$ (each by multiplicity $q-2$) and $\pm \lambda_1,\pm \lambda_2,\pm \lambda_3$, where $\lambda_1,\lambda_2,\lambda_3$ are the zeros of the following cubic function
\[
f(z)=z^3-(q'a^2+q'b^2+c^2)z^2+(2q'-1)a^2b^2z+a^2b^2c^2.
\]
We claim that $ab$ is the second largest eigenvalue. To prove this, recall that $1$ is the eigenvalue of ${\bm L}$. We can assume $\lambda_1=1$ (because $ab<1$, which means $1$ must be among $\lambda_{1,2,3}$) and hence $f(1)=0$. In addition, $f(z)$ is monic and $f(0)>0$. This means it suffices to show $f(-ab)\leq 0$ and $f(ab)\leq 0$, which are true since 
\begin{align*}
f(ab)&=-a^2b^2q(a-b)^2<0, \qquad
f(-ab)=-a^2b^2q(a+b)^2<0.
\end{align*}

It remains to prove $ab=x^{(d-1)/2}\frac{x-1}{x^d-1}<1/d$ which follows from $\frac{x^d-1}{x-1}=x^{d-1}+\hdots+1>dx^{(d-1)/2}$, where the last inequality is  an application of the AM-GM inequality when $x>1$.
\end{proof}

\subsection{(In)stability of \texorpdfstring{$(q,0,0)$}{(q,0,0)} Fixpoints} \label{sec:q00_stable}

Set $x:=R_0/R_1$ and $y:=C_0/C_1$. Then by rewriting the tree recursion, one can see $x,y$ satisfies the system
\begin{equation} \label{equ:q00_2spin}
\begin{aligned}
x=t^d\left(\frac{ty+q}{ty+q-1}\right)^d,\qquad 
y=t^d\left(\frac{tx+q}{tx+q-1}\right)^d.
\end{aligned}
\end{equation}
Before analysing the stability of the original $(q+1)$-spin system, we first need to study this $2$-spin system. By replacing $\beta:=t/q$, $\gamma=(q-1)/t$ and $\lambda=q^d$, the system above is actually the tree recursion of a general anti-ferromagnetic Ising model with parameter $(\beta,\gamma,\lambda)$. It follows that such system has either one solution $(Q^*,Q^*)$ (uniqueness) or three solutions $(Q^+,Q^-),(Q^*,Q^*),(Q^-,Q^+)$ (non-uniqueness) where $Q^+>Q^*>Q^-$ (see \cite[Section 6.2]{martinelli2007fast} or \cite[Theorem 7]{galanis2016inapproximability}). First and foremost, if $d\geq 5q^k$, the system \eqref{equ:q00_2spin} is actually the latter case. 
\begin{lemma} \label{lem:q00_2spin_non}
 When $q\geq 4, k\geq 2$ and $d\geq 5q^k$, the system \eqref{equ:q00_2spin} lies in non-uniqueness region.
\end{lemma}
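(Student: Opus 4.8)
\textbf{Proof proposal for Lemma~\ref{lem:q00_2spin_non}.}
The plan is to reduce the statement to a well-understood criterion for non-uniqueness of the antiferromagnetic two-spin (Ising) tree recursion and then verify that criterion for the specific parameters arising from \eqref{equ:q00_2spin}. First I would make the change of variables explicit: writing $u:=tx$, $v:=ty$, the system becomes $u=t^{d+1}\bigl(\tfrac{v+q}{v+q-1}\bigr)^d$ and symmetrically for $v$, so composing the two halves we obtain a single fixpoint equation $u = F(u)$ where $F(u)=t^{d+1}\Bigl(\tfrac{G(u)+q}{G(u)+q-1}\Bigr)^d$ and $G(u)=t^{d+1}\bigl(\tfrac{u+q}{u+q-1}\bigr)^d$. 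By the standard theory of the antiferromagnetic Ising model on the $d$-ary tree (as cited, \cite[Section 6.2]{martinelli2007fast} or \cite[Theorem 7]{galanis2016inapproximability}), since the off-diagonal parameter is $1$ and the diagonal parameter $\beta=t/q<1$ together with $\gamma=(q-1)/t$ gives $\beta\gamma<1$, there is a unique symmetric fixpoint $(Q^*,Q^*)$, and the system is in the non-uniqueness regime precisely when this symmetric fixpoint is \emph{Jacobian unstable} for the two-step map, i.e.\ when $|F'(Q^*)|>1$, equivalently $d^2 \,|\phi'(Q^*)|\,|\psi'(Q^*)|>1$ where $\phi,\psi$ are the one-step maps; concretely this amounts to showing $d^2\cdot\dfrac{t^2}{(tx^*+q)(tx^*+q-1)}>1$ at the symmetric solution $x^*=y^*$.

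The core of the argument is therefore to locate the symmetric fixpoint $x^*$ closely enough to check this inequality. At $x=y$, \eqref{equ:q00_2spin} reads $x = t^d\bigl(\tfrac{tx+q}{tx+q-1}\bigr)^d$. Since $1\le t\le 1.0312$ and $\lambda=q^d$ is enormous, the factor $\bigl(\tfrac{tx+q}{tx+q-1}\bigr)^d$ must be correspondingly tiny, which forces $tx+q-1$ to be large, hence $x^*$ is of order at least roughly $q^{?}$; I would pin down that $x^* \asymp d/\ln(\cdot)$ or more precisely derive two-sided bounds on $x^*$ by a bootstrap: a crude lower bound on $x^*$ plugged back into the equation yields a sharper one, iterating once or twice. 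The key quantitative claim to extract is an upper bound of the form $tx^*+q \le \Delta$-ish $= (d+1)$-ish up to a mild factor, since then $(tx^*+q)(tx^*+q-1) < d^2 t^2$ and instability follows. Here the hypotheses $d\ge 5q^k$ and $k\ge2$, $q\ge4$ enter: they guarantee $d$ is large enough relative to $q$ (and $t$ close enough to $1$) that the bootstrap bound on $x^*$ stays comfortably below the threshold; I expect the proof to reduce to a short chain of inequalities of the shape ``$x^* \le \frac{d}{\ln q}\cdot(1+o(1))$, hence $t x^* + q \le d \cdot t \cdot(1-\delta)$ for an explicit $\delta>0$ when $d\ge 5q^k$''.

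\textbf{Main obstacle.} The delicate point is that $x^*$ must be controlled \emph{from above} tightly enough that $(tx^*+q)(tx^*+q-1)<(dt)^2$ holds with room to spare, while simultaneously $t=(q^k-q)^{1/\Delta}$ is itself a function of $d$ (via $\Delta=d+1$) that tends to $1$; one must track these two competing effects carefully. Concretely I would (i) prove $x^*\ge$ a crude polynomial-in-$q$ lower bound by noting $\tfrac{tx+q}{tx+q-1}\le \exp\bigl(\tfrac{1}{tx+q-1}\bigr)$ and so $x \le t^d \exp\bigl(\tfrac{d}{tx+q-1}\bigr)$ is incompatible with small $x$; (ii) feed this into $\bigl(\tfrac{tx^*+q}{tx^*+q-1}\bigr)^d = x^* t^{-d}$ to get $d/(tx^*+q-1) \ge \ln x^* - d\ln t$, from which, using $\ln t \le \tfrac{\ln q}{d+1}$ and the lower bound on $x^*$, one deduces $tx^*+q-1 \le \tfrac{d}{c\ln q}$ for a constant $c$; (iii) conclude $(tx^*+q)(tx^*+q-1) \le \bigl(\tfrac{d}{c\ln q}+q\bigr)^2 < d^2 t^2$ since $q\ll d$ and $\ln q\ge \ln 4 >1$, giving $d^2 t^2/[(tx^*+q)(tx^*+q-1)] > 1$, i.e.\ Jacobian instability of the symmetric fixpoint, which by the cited two-spin theory places \eqref{equ:q00_2spin} in the non-uniqueness region. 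The bookkeeping in step (ii), balancing $\ln t$ against $\ln x^*$, is where the hypothesis $d\ge 5q^k$ is genuinely used and is the part requiring care rather than being routine.
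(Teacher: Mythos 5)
Your proposed route --- verify the two-spin non-uniqueness criterion by showing the symmetric fixpoint of the tree recursion is Jacobian unstable --- is precisely the approach the paper explicitly considers and \emph{rejects}: just before Lemma~\ref{lem:q00_2spin_non} they remark that verifying the \cite{LLY13}-style uniqueness condition ``would cause pages of tedious calculation'' and, crucially, would not yield the quantitative location of the asymmetric fixpoint that is needed downstream (in Lemma~\ref{lem:q00_inequal_stable}, where $x>\tfrac{d}{q^k-q}\cdot d$ is used to bound $ab$). Instead, the paper proves Lemma~\ref{lem:q00_2spin_non} by \emph{directly exhibiting} an asymmetric solution: Lemma~\ref{lem:q00_2spin_eq} bounds the symmetric fixpoint from above (it shows $tx^*+q-1<d$ by a self-improving estimate on $p:=tx^*+q-1$), and Lemma~\ref{lem:q00_2spin_ineq} shows the largest root of the two-step recursion exceeds $\tfrac{d^2}{q^k-q}$, which is strictly larger than the symmetric fixpoint; hence a genuinely asymmetric solution exists.

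Beyond the strategic divergence, there is a concrete error in your Jacobian criterion. With the one-step map $\phi(y)=t^d\bigl(\tfrac{ty+q}{ty+q-1}\bigr)^d$, at the symmetric fixpoint $x^*$ one has
\[
\phi'(x^*)=-\,\frac{d\,t\,x^*}{(tx^*+q)(tx^*+q-1)},
\]
so instability is the condition $d\,t\,x^*>(tx^*+q)(tx^*+q-1)$, equivalently $d(p-q+1)>p(p+1)$ with $p:=tx^*+q-1$. Your proposed criterion $d^2 t^2>(tx^*+q)(tx^*+q-1)$ is missing the factor $x^*$ in the numerator; since $t\approx 1$ but $tx^*=p-q+1$ is of order $q^k$, this is not a cosmetic difference. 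In fact $d^2 t^2>p(p+1)$ is a strictly \emph{weaker} condition than instability (it follows from Lemma~\ref{lem:q00_2spin_eq} alone), so establishing it as in your steps~(i)--(iii) would not prove non-uniqueness. To make your route work you would need to show $p(p+1)<d(p-q+1)$, which requires the sharper two-sided localization of $p$ that the bootstrap in your step~(ii) only gestures at; the paper sidesteps exactly this delicacy by locating an asymmetric root directly and comparing it to the symmetric one.
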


One way to prove Lemma \ref{lem:q00_2spin_non} is to verify the non-uniqueness condition in \cite{LLY13}.  However, in our case, that would cause pages of tedious calculation, and we could not get crucial quantitative information about solutions, which is the key to the stability of the original $(q+1)$-spin system. Hence, we show the non-uniqueness by locating the solutions directly, as in the next two lemmas. Also note that, when $x=R_0/R_1=C_0/C_1$, the two-step recursion (\ref{equ:q00_2spin}) can be simplified into the following one-step recursion
\begin{equation} \label{equ:sta_x_sol}
    x=\left(\frac{t^2x+qt}{tx+q-1}\right)^d.
\end{equation}

\begin{lemma} \label{lem:q00_2spin_eq}
Let $(x,x)$ be the solution of (\ref{equ:q00_2spin}) i.e., $x$ be the solution of (\ref{equ:sta_x_sol}). When $q\geq 4, k\geq 2$ and $d\geq 5q^k$, it holds that $tx+q-1<d$. 
\end{lemma}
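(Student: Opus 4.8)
The plan is to analyze the fixed-point equation \eqref{equ:sta_x_sol} directly and show that its (symmetric) solution $x$ cannot be so large as to force $tx+q-1\ge d$. First I would set up the contradiction hypothesis $tx+q-1\ge d$; since $1\le t\le 1.0312$ for the range $q\ge 4$, $k\ge 2$, $d\ge 5q^k$, this forces $x\ge (d-q+1)/t\gtrsim d/t$, i.e.\ $x$ is at least of order $d$. The idea is then to plug a lower bound on $x$ back into the right-hand side of \eqref{equ:sta_x_sol} and derive an \emph{upper} bound on $x$ that is incompatible with the lower bound, i.e.\ run a one-step monotonicity/bootstrapping argument. Concretely, writing $\phi(x)=\left(\tfrac{t^2x+qt}{tx+q-1}\right)^d$, note $\phi$ is increasing in $x$ (the bracket is increasing in $x$ since $t^2(q-1)-q t\cdot t= -t^2<0$ wait — check the sign: the derivative of the bracket has numerator $t^2(tx+q-1)-t(t^2x+qt)=t^2(q-1)-qt^2=-t^2<0$, so the bracket is actually \emph{decreasing} in $x$), hence $\phi$ is decreasing, and as $x\to\infty$ the bracket tends to $t$, so $\phi(x)\to t^d$. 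Thus for the symmetric solution we always have $x=\phi(x)\le \phi(0)$ on one side and $x\ge t^d$ on the other; more usefully, monotone decreasing $\phi$ has a unique fixed point, and we can pin it down: if $x$ were $\ge$ some threshold $x_0$ then $x=\phi(x)\le \phi(x_0)$, so it suffices to exhibit $x_0$ with $\phi(x_0)<x_0$ and $x_0 < (d-q+1)/t$, which then rules out $tx+q-1\ge d$.

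Carrying this out, I would try the estimate at $x_0$ of order $t^{d}\cdot(\text{small polynomial correction})$: for $x$ large, $\tfrac{t^2x+qt}{tx+q-1}=t\cdot\tfrac{x+q/t}{x+(q-1)/t}=t\bigl(1+\tfrac{1/t}{x+(q-1)/t}\bigr)\le t\bigl(1+\tfrac{1}{tx}\bigr)$, so $\phi(x)\le t^d\bigl(1+\tfrac1{tx}\bigr)^d\le t^d\exp(d/(tx))$. Now if $x\ge (d-q+1)/t$, then $d/(tx)\le d/(d-q+1)\le 2$ say (using $d\ge 5q^k\ge 5q\ge q+1$ comfortably), so $\phi(x)\le t^d e^{2}$. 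Since $t\le 1.0312$ and $d\ge 5q^k$, one checks $t^d\le (1.0312)^{d}$, which is exponentially large in $d$ — this is too crude, so I would instead keep $t^d=(q^k-q)$ exactly (because $t=(q^k-q)^{1/\Delta}$ and $\Delta=d+1$, so $t^{d+1}=q^k-q$, hence $t^d=(q^k-q)^{d/(d+1)}\le q^k-q< q^k\le q^k$). Then $\phi(x)\le (q^k-q)\,e^{2}< 8 q^k$. But we assumed $x\ge (d-q+1)/t\ge (5q^k-q+1)/1.0312 > 4q^k$ for $q\ge4,k\ge2$, and $x=\phi(x)\le 8q^k$ — these are the same order, so I would need to sharpen the constant: replace $d\ge 5q^k$ by noting $d/(tx)\le d/(5q^k - q) \le$ something like $1.05$, giving $\phi(x)\le (q^k-q)e^{1.05}< 3q^k$, while $x> 4q^k$, a contradiction. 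So the argument closes with a little room to spare once the exponent in $t^d$ is handled via $t^{d+1}=q^k-q$ rather than bounding $t$ by a constant to the $d$.

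The main obstacle I expect is exactly this tension: a naive bound $t^d\le (1.03)^d$ is exponentially worse than the truth, so the proof must crucially exploit the algebraic identity $t^{\Delta}=t^{d+1}=q^k-q$ to keep $t^d$ polynomial in $q$; after that, it is a matter of choosing the constant slack in $d\ge 5q^k$ carefully (the lemma is stated for $d\ge 5q^k$ which is exactly what gives the needed margin — note the remark in the paper that $d=5q^k$ is chosen "for the sake of simplification in calculation", and here $d\ge 5q^k$ suffices). A secondary, purely bookkeeping, point is to confirm $x$ here denotes the symmetric solution of \eqref{equ:sta_x_sol} (so that $\phi$ is a genuine one-variable decreasing map with a unique fixed point and the bootstrapping is valid), which is exactly the hypothesis of the lemma. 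I would also double-check the monotonicity direction of the bracket $\tfrac{t^2x+qt}{tx+q-1}$ (it is decreasing in $x$, with limit $t$ from above as $x\to\infty$, since $q/t > (q-1)/t$), because the whole bootstrapping rests on $\phi$ being decreasing so that "$x\ge x_0\Rightarrow x=\phi(x)\le\phi(x_0)$".
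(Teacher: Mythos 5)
Your proposal is correct and is essentially the paper's argument: set $p=tx+q-1$, assume $p\ge d$ for contradiction, combine the fixed-point equation $x=t^d(1+1/p)^d$ (so $p=q-1+t^{d+1}(1+1/p)^d$) with the identity $t^{d+1}=q^k-q$ and the estimate $(1+1/p)^d\le e^{d/p}$ to conclude that $p$ (equivalently $x$) is at most about $3q^k$, contradicting $p\ge d\ge 5q^k$ --- the paper writes this in the scaled variables $w=p/q^k$, $c=d/q^k\ge5$, getting $w<\tfrac{1}{q^{k-1}}+e<3$ while $w\ge c\ge 5$. Your one small bookkeeping slip is the line $d/(tx)\le d/(5q^k-q)$, whose right-hand side is unbounded as $d$ grows; what you want is $d/(tx)\le d/(d-q+1)\le 5q^k/(5q^k-q+1)<1.06$ (using that $d/(d-q+1)$ is decreasing in $d$ since $q\ge2$), after which your sharpened estimate $\phi(x)<3q^k$ versus $x>4q^k$ closes the contradiction exactly as you describe.
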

\begin{lemma} \label{lem:q00_2spin_ineq}
When $q\geq 4, k\geq 2$ and $d\geq 5q^k$, there exists a solution $(x,y)$ to (\ref{equ:q00_2spin}) satisfying (a) $x>y$, and (b) $x>\frac{d}{q^k-q}\cdot d$.
\end{lemma}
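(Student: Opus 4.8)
The plan is to recast \eqref{equ:q00_2spin} as a scalar fixpoint problem and then locate a sufficiently large fixpoint. Write
\[
F(u):=t^d\Big(\frac{tu+q}{tu+q-1}\Big)^d=t^d\Big(1+\frac{1}{tu+q-1}\Big)^d,
\]
so that $(x,y)$ solves \eqref{equ:q00_2spin} if and only if $x=F(y)$ and $y=F(x)$, i.e. if and only if $x$ is a fixpoint of $g:=F\circ F$ and $y=F(x)$. The map $F$ is continuous and strictly decreasing on $[0,\infty)$, with $F(u)>t^d$ for all $u\ge0$ and $F(u)\to t^d$ as $u\to\infty$; hence $g$ is continuous and bounded above by $F(t^d)$, so $g(x)-x\to-\infty$ as $x\to\infty$. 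Put $M:=d^2/(q^k-q)$, the quantity in part (b). I claim it suffices to prove the single scalar inequality $g(M)=F\big(F(M)\big)>M$. Indeed, combined with $g(x)-x\to-\infty$, the intermediate value theorem then yields a fixpoint $L>M$ of $g$, and $(x,y):=(L,F(L))$ solves \eqref{equ:q00_2spin}; moreover $F(L)<F(M)<M<L$ (the middle inequality is established en route, in Step~1 below), so $y=F(L)<L=x$, which is (a), while $x=L>M$ is (b). In passing this also re-derives the non-uniqueness recorded in Lemma~\ref{lem:q00_2spin_non}.

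It thus remains to prove $F(F(M))>M$, which I would do in two monotone steps, using only $1+z\le e^z$ and $\ln(1+z)\ge z-z^2/2$ for $z\ge0$, the identity $t^{d+1}=t^{\Delta}=q^k-q$, and the bound $1\le t\le 1.0312$ recorded after \eqref{equ:recursion}. \textbf{Step 1 (upper bound on $F(M)$).} Since $tM+q-1\ge tM\ge M$ (as $t\ge1$), we have $d/(tM+q-1)\le d/M=(q^k-q)/d$, so
\[
F(M)\le t^d\exp\!\Big(\frac{d}{tM+q-1}\Big)\le t^d\exp\!\Big(\frac{q^k-q}{d}\Big)=:U,
\]
and since $q^k-q\le d/5$ this gives $U\le t^d e^{1/5}<1.23(q^k-q)$; as $M>25q^k$ we conclude $U<M$, hence $F(M)<M$. \textbf{Step 2 (lower bound on $F(U)$).} By monotonicity of $F$, $F(F(M))>F(U)$. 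Using $t^{d+1}=q^k-q$ one computes $s:=tU+q-1=(q^k-q)e^{(q^k-q)/d}+q-1$, which satisfies $q^k-q\le s\le q^ke^{1/5}$, and $q^k-q\ge12$ for $q\ge4$, $k\ge2$; the bound $\ln(1+z)\ge z-z^2/2$ then gives
\[
F(U)=t^d\Big(1+\frac1s\Big)^d\ge t^d\exp\!\Big(\frac ds\Big(1-\frac1{2s}\Big)\Big)\ge t^d\,e^{0.78\,d/q^k},
\]
the last step because $d/s\ge d/(q^ke^{1/5})$ and $1-1/(2s)\ge 23/24$, whose product is at least $0.78\,d/q^k$. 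Combining with $t^d=(q^k-q)/t\ge0.96(q^k-q)$ and $q^k-q\ge\tfrac34 q^k$ (valid for $q\ge4$, $k\ge2$), and dividing by $M=d^2/(q^k-q)$, we obtain
\[
\frac{F(F(M))}{M}\ge 0.54\cdot\frac{e^{0.78a}}{a^2},\qquad a:=\frac d{q^k}\ge 5,
\]
and the elementary inequality $0.54\,e^{0.78a}>a^2$ for all $a\ge5$ — it holds at $a=5$, and $a\mapsto e^{0.78a}/a^2$ is increasing on $[5,\infty)$ — completes the argument.

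The main obstacle is that this last chain is genuinely tight at the threshold $d=5q^k$ with $q=4$, $k=2$ (where $a=5$ and $t$ attains its extreme value $\approx1.0312$): the slack at $a=5$ is only a few percent, so one cannot afford crude constants in the two exponential estimates and must keep careful track of the $\Theta(1)$ factors. The saving grace is that larger $d$ is strictly easier, since $M=d^2/(q^k-q)$ grows only polynomially in $d$ whereas $F(F(M))$ grows like $e^{\Theta(d/q^k)}$; so once the worst case $a=5$ is verified, monotonicity of $a\mapsto e^{0.78a}/a^2$ disposes of all $d\ge 5q^k$. The remaining ingredients — continuity and boundedness of $g$, monotonicity of $F$, and the intermediate value step — are routine.
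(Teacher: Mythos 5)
Your proof is correct and follows essentially the same route as the paper: both reduce to the scalar two-step recursion $g=F\circ F$, take the same threshold $M=d^2/(q^k-q)$, and establish $g(M)>M$ by first bounding $F(M)$ from above (your $U$, the paper's $D$ satisfy $F(M)=t^dD<t^de^{1/a}$) and then bounding $F$ at that upper bound from below with an exponential estimate, finishing with a one-variable inequality in $a=d/q^k$ that is checked at $a=5$ and propagated by monotonicity. The only genuine (and minor) deviation is part~(a): the paper deduces $x>y$ by invoking Lemma~\ref{lem:q00_2spin_eq} to rule out the symmetric fixpoint, whereas your chain $F(L)<F(M)<M<L$ gives it self-contained from Step~1, which is a slightly cleaner packaging of the same fact.
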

We give the proof of Lemmas \ref{lem:q00_2spin_eq} and  \ref{lem:q00_2spin_ineq}  in Section \ref{sec:proof_q00_2spin}.
\begin{proof}[Proof of Lemma \ref{lem:q00_2spin_non}]
This directly follows from Lemma \ref{lem:q00_2spin_eq} and Lemma \ref{lem:q00_2spin_ineq}. 
\end{proof}

 Now we are ready to analyse the stability of $(q,0,0)$-type fixpoints. In the following it will be convenient to let $\Jb$ be the $q\times q$ matrix with 0s on the diagonal and 1s elsewhere, and $\ones$ to be the $q\times 1$ vector with all ones.

\begin{proof}[Proof of Lemma \ref{lem:q00_inequal_stable}]
Let $x=R_0/R_1$ and $y=C_0/C_1$ be the solution of (\ref{equ:q00_2spin}) with $x>y$. Set $a:=\sqrt{\frac{1}{tx+q-1}}$, $b:=\sqrt{\frac{1}{ty+q-1}}$, $r:=\sqrt{\frac{ty}{tx+q}}$ and $s:=\sqrt{\frac{tx}{ty+q}}$. By applying the formula in Lemma \ref{lem:gsv_4_16}, the $(q+1)\times(q+1)$ matrix ${\bm A}$ can be written in the block form
\[
{\bm A}=
\bigg[\begin{array}{cc} rs & as \transpose{\ones}\\ br \ones & ab \Jb\end{array}\bigg].
\] The eigenvalues of ${\bm L}=\left[\begin{smallmatrix} 0&{\bm A}\\{\bm A}^{\top}&0 \end{smallmatrix}\right]$ consist of $\pm ab$ (with multiplicity $q-1$ respectively) and $\pm \lambda_1,\pm \lambda_2$, where $\pm \lambda_1,\pm \lambda_2$ are the zeros of the following biquadratic function
\[
f(z)=z^4-((q-1)^2a^2b^2+qb^2r^2+qa^2s^2+r^2s^2)z^2+a^2b^2r^2s^2.
\]
Again, we assume $\lambda_1=1$ (note that $ab\neq 1$). By Vieta's formula, $\lambda_2=abrs$. Since $rs<1$, this means $ab$ is the second largest eigenvalue. 
Now it suffices to prove $ab<1/d$, which is equivalent to showing $(tx+q-1)(ty+q-1)>d^2$. Note that $ty>t^{d+1}=q^k-q$, and Lemma \ref{lem:q00_2spin_ineq} gives $x>d\frac{d}{q^k-q}$. Therefore $(tx+q-1)(ty+q-1)>txy>d^2$.
\end{proof}

\begin{remark} \label{rmk:q00_vs_2spin}
It is worth noting that the Jacobian stable fixpoints of the system (\ref{equ:q00_2spin}) do not necessarily induce $(q,0,0)$-type Jacobian stable fixpoints of the original $(q+1)$-spin system. This is because the eigenvalue $ab$ from the $(q+1)$-spin system is missing in the $2$-spin system. Interestingly, by directly applying results over $2$-spin system (e.g., \cite[Lemma 8]{galanis2016inapproximability}), what we get is $abrs<1/d$ instead of $ab<1/d$. 
There is an interval of $d$ such that the former holds but the latter does not.
Thus here we cannot only analyze the simplified $2$-spin system.
\end{remark}

\begin{proof}[Proof of Lemma \ref{lem:q00_equal_unstable}]
According to the formula in Lemma \ref{lem:gsv_4_16}, we construct the following $(q+1)\times(q+1)$ matrix ${\bm A}$ with block form 
\[
{\bm A}=
\bigg[\begin{array}{cc} b & \sqrt{ab} \transpose{\ones}\\ \sqrt{ab} \ones & a \Jb\end{array}\bigg]
\]
where $a:=\frac{1}{q-1+tx}$, $b:=\frac{tx}{tx+q}$, and $x$ is the solution of equation (\ref{equ:sta_x_sol}). Because ${\bm A}$ is symmetric, the spectral radius of ${\bm L}=\left[\begin{smallmatrix} 0&{\bm A}\\{\bm A}^{\top}&0 \end{smallmatrix}\right]$ is the same as that of ${\bm A}$. It is not hard to see that $-a$ is an eigenvalue of ${\bm A}$ by multiplicity $q-1$. 
From Lemma~\ref{lem:q00_2spin_eq}, we have that $1/d<a$, and $a<1$ from $q\geq 2$ and $x>0$. Therefore, the fixpoint is unstable. 
\end{proof}

\subsection{\texorpdfstring{$(q,0,0)$}{(q,0,0)} Fixpoint Is Not Maximal} \label{sec:q00_not_max}

Let $q_1=q,q_2=q_3=0$ and $R_0/R_1\neq C_0/C_1$. 
Due to stability, it is difficult to analyse this kind of fixpoint's global optimality 
(recall that it corresponds to a local maxima of $\Psi_1$).
However, observe that changing the value of $R_3$ and $C_3$ will not affect the value of $\overline{\Phi^S}$. 
Therefore, we can force $R_3$ and $C_3$ to be subject to \eqref{equ:phi_deri_zero_R} and \eqref{equ:phi_deri_zero_C}. 
As we will show later, doing so allows us to reuse some lemmas we have utilized in our argument regarding $2$-maximal fixpoints, 
among which the most important one is the perturbation argument. 
We define $r_0,r_1,c_0,c_3$ analogously, and without loss of generality, suppose $r_1,c_3>1$. 

The next proposition shows how we choose $r_1$ and $c_3$. 
\begin{lemma} \label{lem:r1_c3_yields_r0_c0}
Let $x=r_1$ and $y=c_3$ be a pair of solutions to the following system
\begin{equation} \label{equ:r1_c3_sys}
\begin{aligned}
    f_1(x,y):=(x-1)\left(\left(1+\frac{x^d(y-1)}{x^d-1}\right)^dt^{d+1}+q-y^d\right)-y^d+1&=0;\\
    f_2(x,y):=(y-1)\left(\left(1+\frac{y^d(x-1)}{y^d-1}\right)^dt^{d+1}+qx^d-x^d\right)-x^d+1&=0,
\end{aligned}
\end{equation}
with $x,y>1$. Then there exists $r_0$ and $c_0$ such that \eqref{equ:small_r_0_c_0} and \eqref{equ:small_r_1_c_3} are satisfied for $q_1=q,q_2=q_3=0$. 
\end{lemma}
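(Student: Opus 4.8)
The plan is to simply \emph{solve} the system formed by \eqref{equ:small_r_0_c_0} and \eqref{equ:small_r_1_c_3}, specialised to $q_1=q$, $q_2=q_3=0$, for the two unknowns $r_0,c_0$, exploiting that this over-determined system (four equations, two unknowns) decouples in a convenient way. Write $x=r_1$ and $y=c_3$. The equation for $r_1$ in \eqref{equ:small_r_1_c_3} involves only $r_1$, $c_0^d$ and $c_3^d$, so solving it linearly for $c_0^d$ forces $c_0^d t=\frac{xy^d-1}{x-1}-q$. Symmetrically, the equation for $c_3$ involves only $c_3$, $r_0^d$ and $r_1^d$, and forces $r_0^d t=\frac{yx^d-1}{y-1}-qx^d$. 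Thus the pair $(x,y)$ already determines the candidate values of $c_0^d$ and $r_0^d$.

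Next I would substitute these forced values into the two equations in \eqref{equ:small_r_0_c_0}. Plugging the expression for $c_0^d t$ into the equation for $r_0$ and cancelling the common factor $\tfrac1{x-1}$ from numerator and denominator collapses that equation to $r_0=\frac{t(xy^d-1)}{y^d-1}$; likewise the equation for $c_0$ collapses to $c_0=\frac{t(yx^d-1)}{x^d-1}$. Raising these to the $d$-th power and matching against the forced values of $r_0^d$ and $c_0^d$ yields exactly
\[
  t^{d+1}\Big(\tfrac{xy^d-1}{y^d-1}\Big)^{d}=\tfrac{yx^d-1}{y-1}-qx^d,
  \qquad
  t^{d+1}\Big(\tfrac{yx^d-1}{x^d-1}\Big)^{d}=\tfrac{xy^d-1}{x-1}-q,
\]
which, after clearing denominators and using the identities $1+\tfrac{x^d(y-1)}{x^d-1}=\tfrac{x^dy-1}{x^d-1}$ and $1+\tfrac{y^d(x-1)}{y^d-1}=\tfrac{y^dx-1}{y^d-1}$, are precisely $f_2(x,y)=0$ and $f_1(x,y)=0$. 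Hence, given a solution $(x,y)$ of \eqref{equ:r1_c3_sys} with $x,y>1$, the right move is to \emph{define} $r_0:=\frac{t(xy^d-1)}{y^d-1}$ and $c_0:=\frac{t(yx^d-1)}{x^d-1}$; these are positive because $x,y>1$ makes $xy^d-1$, $yx^d-1$, $x^d-1$ and $y^d-1$ all positive, and reversing each step above shows that these choices of $r_0,c_0$ satisfy both \eqref{equ:small_r_0_c_0} and \eqref{equ:small_r_1_c_3} with $q_1=q$, $q_2=q_3=0$.

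The only point that deserves care is checking that the $r_0$ and $c_0$ just defined really are the $d$-th roots of the forced quantities $\tfrac1t\big(\tfrac{yx^d-1}{y-1}-qx^d\big)$ and $\tfrac1t\big(\tfrac{xy^d-1}{x-1}-q\big)$ that play the role of $r_0^d$ and $c_0^d$ in \eqref{equ:small_r_1_c_3}; this is exactly what $f_1(x,y)=f_2(x,y)=0$ guarantee, and it simultaneously makes those forced quantities manifestly positive (each equals $t^{d+1}$ times a positive $d$-th power), so no separate positivity verification is needed. I do not anticipate any serious obstacle here: the content is entirely the bookkeeping that reduces the over-determined system to \eqref{equ:r1_c3_sys}, together with the observation that the $r_1$- and $c_3$-equations decouple the unknowns $c_0^d$ and $r_0^d$.
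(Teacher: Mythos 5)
Your proposal is correct and takes essentially the same approach as the paper's proof: you choose $r_0,c_0$ by an explicit formula ($r_0=\tfrac{t(xy^d-1)}{y^d-1}$, $c_0=\tfrac{t(yx^d-1)}{x^d-1}$, which is algebraically identical to the paper's $r_0/t=\tfrac{r_1-1}{c_3^d-1}+r_1$, $c_0/t=\tfrac{c_3-1}{r_1^d-1}+c_3$) and then verify that $f_1=f_2=0$ force consistency with \eqref{equ:small_r_1_c_3} and \eqref{equ:small_r_0_c_0}. Your write-up is a bit more explicit about the reversibility of the substitution and the positivity of the forced quantities, but the underlying argument is the same; incidentally, the published proof contains a small typo (it credits $f_2=0$ with yielding the $c_0^d t$ relation, whereas it is $f_1=0$ that does), which your account gets right.
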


\begin{proof}
The $r_0$ and $c_0$ we choose are defined by 
\begin{align} \label{equ:q00_r0_t}
r_0/t:=\frac{r_1-1}{c_3^d-1}+r_1,\quad  
c_0/t:=\frac{c_3-1}{r_1^d-1}+c_3.
\end{align}
Combining (\ref{equ:q00_r0_t}) with the expression of $f_2(r_1,c_3)=0$, it holds that
\begin{equation*}
c_0^dt+q-c_3^d-\frac{c_3^d-1}{r_1-1}=0, 
\end{equation*}
which is exactly (\ref{equ:r0dt}), and is equivalent to the expression for $r_1$ in \eqref{equ:small_r_1_c_3}. The same argument holds for the $c_3$ expression in \eqref{equ:small_r_1_c_3}. In addition, plugging \eqref{equ:q00_r0_t} back into \eqref{equ:small_r_1_c_3} yields the expressions for $r_0,c_0$ in~\eqref{equ:small_r_0_c_0}.
\end{proof}

Be cautious that we do not assume $R_0/R_1=C_0/C_1$ in \Cref{lem:r1_c3_yields_r0_c0}. Even if we managed to find a pair of solutions $r_1>c_3>1$ to \eqref{equ:r1_c3_sys}, it does not imply that we can find $R_3$ and $C_3$ for the case $R_0/R_1\neq C_0/C_1$, because it is possible for such a pair to correspond to the other case $R_0/R_1=C_0/C_1$. We will handle this in Lemma \ref{lem:sym_c3_bound} after finding a special solution to \eqref{equ:r1_c3_sys}. 

To study the solution of the system \eqref{equ:r1_c3_sys}, 
we need to look into the properties of both functions. 
To clarify the intuition of our approach, 
we plot both functions for the case $q=6,k=3,d=5q^k$ (see Figure \ref{fig:sol}a).
In this setting, the two functions have three intersections in the region $(1,+\infty)^2$: 
one above $y=x$, one near $y=x$ (but still below $y=x$; see Figure \ref{fig:sol}b) and one far below $y=x$. 
Experimentally, only the first two intersections correspond to the case $R_0/R_1\neq C_0/C_1$. 
Hence we would only be interested in them. 
Moreover, as we will see at the end of this subsection,  
a solution such that $x>y$ is required. 
For this purpose, the rest of the subsection endeavours to prove the existence of the intersection near $y=x$ before finishing the proof of Lemma \ref{lem:q00_not_max}. 
Doing so also avoids the need of fully characterising the shape of both curves $f_i(x,y)=0$. 

\begin{figure}[H]
    \centering
    \begin{minipage}{0.45\textwidth}
    \centering
    \begin{tikzpicture}
    \node at (0,0) {\includegraphics[width=0.95\textwidth]{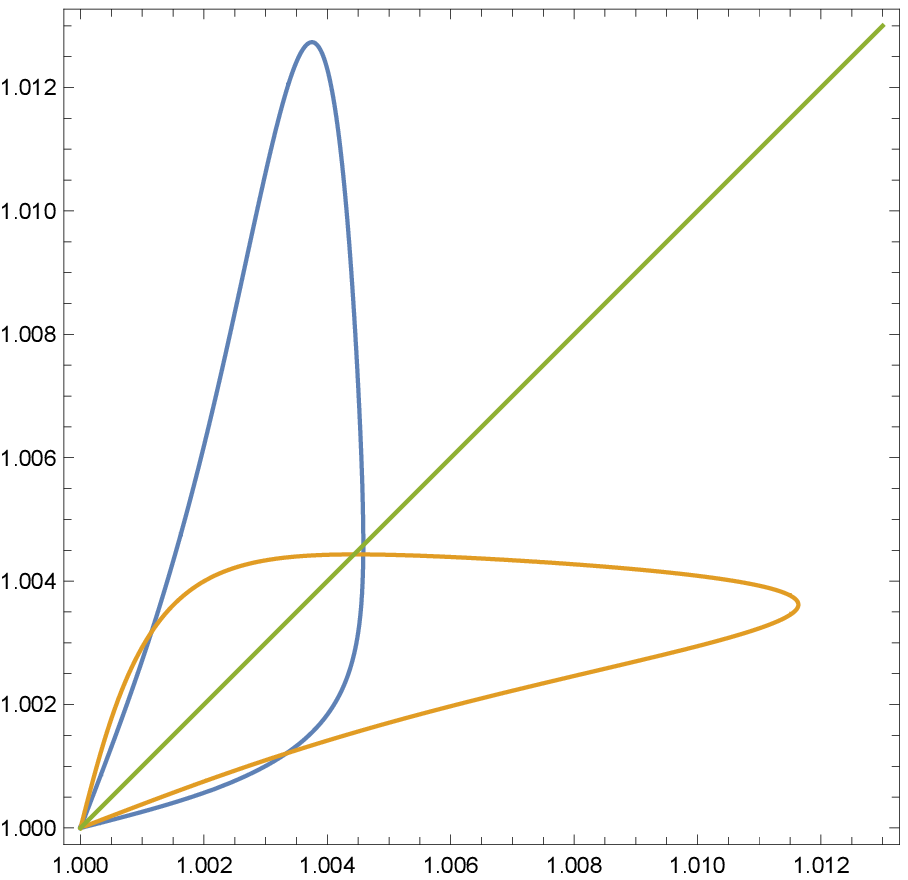}};
    \node at (0.1,2.5) {\footnotesize $f_1(x,y)=0$};
    \node at (2.0,-0.6) {\footnotesize $f_2(x,y)=0$};
    \node at (2.2,1.5) {\footnotesize $y=x$};
    \end{tikzpicture}
    \caption*{\footnotesize (a)}
    \end{minipage}
    \begin{minipage}{0.45\textwidth}
    \centering
    \begin{tikzpicture}
    \node at (0,0) {\includegraphics[width=0.95\textwidth]{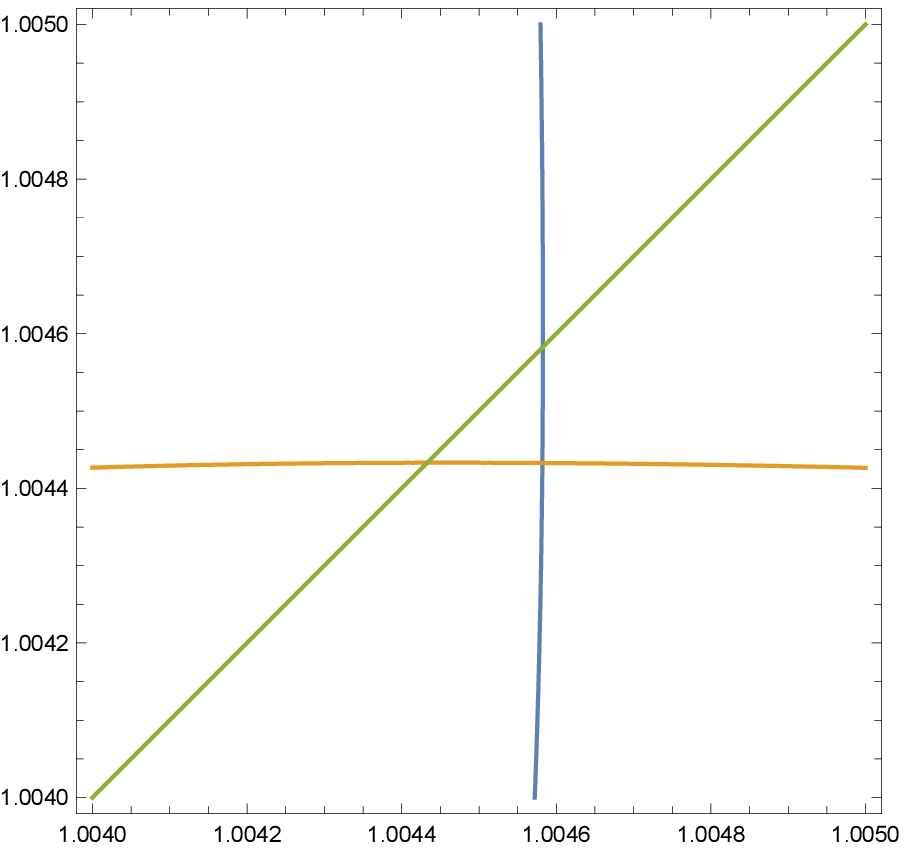}};
    \node at (-0.1,2.5) {\footnotesize $f_1(x,y)=0$};
    \node at (2.3,-0.6) {\footnotesize $f_2(x,y)=0$};
    \node at (2.2,1.3) {\footnotesize $y=x$};
    \end{tikzpicture}
    \caption*{\footnotesize (b)}
    \end{minipage}
    \caption{\footnotesize (a): Shape of the curve $f_1(x,y)=0$, $f_2(x,y)=0$, and $y=x$. (b): Zoom in on the intersection near $y=x$.}
    \label{fig:sol}
\end{figure}

Now we formalize our argument. 
Note that, by mimicking the proof of Lemma \ref{lem:r1_unique}, 
one can show $f_2(x,x)=0$ has exactly one solution $x^{**}>1$. 
Moreover, for any $x\in(1,x^{**})$, $f_2(x,x)<0$, and for any $x>x^{**}$, $f_2(x,x)>0$. A detailed proof is given in Section \ref{sec:proof_unique_root}. 
\begin{lemma} \label{lem:h2_unique}
  For any $q\geq 4, k\geq 2$ and $d\geq 3q^k$, the function
  \[
  h_2(x):=\left(\frac{x^{d+1}-1}{x^d-1}\right)^dt^{d+1}-\frac{x^d-1}{x-1}+(q-1)x^d
  \] 
  has exactly one root $x^{**}$ in the region $x>1$.
\end{lemma}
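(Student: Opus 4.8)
The plan is to recast $h_2(x)=0$ as a level-set equation and reduce the whole statement to the monotonicity of a single ratio. Set $\phi(x):=\frac{x^{d+1}-1}{x^d-1}$, $\psi(x):=\frac{x^d-1}{x-1}$, $G(x):=\phi(x)^d$ and $F(x):=\psi(x)-(q-1)x^d$. Using $t^{d+1}=t^{\Delta}=q^k-q$ we have $h_2(x)=t^{d+1}G(x)-F(x)$, and since $G>0$ on $(1,\infty)$,
\[ h_2(x)=0 \iff \frac{F(x)}{G(x)}=t^{d+1}. \]
I would then prove three facts: (i) $F$ is unimodal on $(1,\infty)$ with $F(1^+)>0$ and $F(x)\to-\infty$, so that $\{x>1:F(x)>0\}$ is an interval $(1,x_1)$; (ii) $F/G$ is strictly decreasing on $(1,x_1)$; (iii) $\lim_{x\to1^+}F(x)/G(x)>t^{d+1}>0=\lim_{x\to x_1^-}F(x)/G(x)$. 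Granting these, the intermediate value theorem together with the strict monotonicity of (ii) yields exactly one solution of $F/G=t^{d+1}$ in $(1,x_1)$; and for $x\ge x_1$ we have $F(x)\le0<t^{d+1}G(x)$, so $h_2(x)>0$ and there are no further roots. This is precisely the lemma.

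Establishing (i) is straightforward: $F'(x)/x^{d-1}=\psi'(x)/x^{d-1}-d(q-1)$, and $\psi'(x)/x^{d-1}=\sum_{j=1}^{d-1}jx^{\,j-d}$ is strictly decreasing on $(1,\infty)$ (each exponent $j-d$ is negative), so $F'$ is positive then negative and $F$ is unimodal; moreover $F(1^+)=d-(q-1)>0$ since $d\ge 3q^k>q$, and $F(x)\sim-(q-1)x^d\to-\infty$. For (iii), $G(1^+)=(1+1/d)^d<e$ while $G(x_1)$ is finite and positive, so $\lim_{x\to x_1^-}F/G=0$, whereas $F(1^+)/G(1^+)>\frac{d-q+1}{e}>q^k-q=t^{d+1}$; the last inequality rearranges to $(3-e)q^k+(e-1)q+1>0$ after using $d\ge 3q^k$, which holds trivially for $q\ge 4,\ k\ge 2$.

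The heart of the argument is (ii), i.e.\ $F'(x)G(x)<F(x)G'(x)$ on $(1,x_1)$, equivalently $\frac{F'(x)}{F(x)}<d\,\frac{\phi'(x)}{\phi(x)}$ (here $F>0$ and $\phi,\phi'>0$ on this interval). To the right of the unique maximizer $x^*$ of $F$ this is immediate, since there $F'\le0<F$ and the right-hand side is positive. On $(1,x^*)$ both sides are positive, and this is the genuine obstacle, because the two sides are genuinely close: as $x\to1^+$ one computes $\phi'/\phi\to\tfrac12$, so the right-hand side tends to $d/2$, while $F'/F\to\tfrac{d}{2}\bigl(1-\tfrac{q}{d-q+1}\bigr)<d/2$, a gap of order only $q$. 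I would therefore bound $F'/F$ from above and $d\,\phi'/\phi$ from below using the explicit forms $\psi'(x)=\frac{(d-1)x^d-dx^{d-1}+1}{(x-1)^2}$ and $\frac{\phi'(x)}{\phi(x)}=\frac{(d+1)x^d}{x^{d+1}-1}-\frac{dx^{d-1}}{x^d-1}$, and exploit that $d$ is large relative to $q$ (guaranteed by $d\ge 3q^k$) to close the comparison uniformly over $x\in(1,x^*)$. This last step is where the bulk of the elementary-but-delicate computation lives and is the main difficulty of the lemma.

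Finally, the same scheme simultaneously proves Lemma~\ref{lem:r1_unique}: writing $h(x)=t^{d+1}G(x)-F_h(x)$ with $F_h(x)=\psi(x)-q'-(q'-1)x^d$, one has $F_h(1^+)=d-q+1$ (since $2q'=q$) and $F_h'(x)/x^{d-1}=\psi'(x)/x^{d-1}-d(q'-1)$ strictly decreasing, so $F_h$ is again unimodal and the analysis is verbatim the same, the extra constant $q'$ only making $F_h$ more positive near $x=1$ and hence only helping the boundary comparison in (iii).
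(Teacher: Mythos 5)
Your reformulation of $h_2(x)=0$ as the level-set equation $F(x)/G(x)=t^{d+1}$ with $F=\psi-(q-1)x^d$ and $G=\phi^d$ is a genuinely different route from the paper's. The paper instead shows that $h_2$ is negative at every critical point (by solving $h_2'(x)=0$ for the power term, substituting back, and sign-analysing the resulting polynomial), which together with the boundary limits forces a unique sign change. Your decomposition cleanly separates the boundary comparison from a single monotonicity claim, and the observation that it subsumes Lemma~\ref{lem:r1_unique} as a special case is a structural simplification the paper's route does not expose. Parts (i) and (iii) of your plan are correct and complete: unimodality of $F$ from the monotonicity of $\psi'(x)/x^{d-1}=\sum_{j<d}jx^{j-d}$, and $(d-q+1)/e>q^k-q$ from $(3-e)q^k+(e-1)q+1>0$ under $d\ge 3q^k$.

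The gap is (ii), and it is a genuine one that you flag yourself. You assert $F'/F<d\,\phi'/\phi$ on $(1,x^*)$ but give only the intention ("bound $F'/F$ from above and $d\phi'/\phi$ from below using the explicit forms, exploit $d\ge 3q^k$") rather than a proof, while acknowledging this is the main difficulty of the lemma. That step \emph{is} the lemma: without it the level-set argument collapses. Moreover, the heuristic that one should lean on $d\gg q$ here is misleading, since the inequality actually holds for all $d>q$ via a two-step chain you did not identify: from $x\psi'=\sum_{j<d}jx^j\le d\psi$ one gets $\psi' (q-1)x^d\le d(q-1)x^{d-1}\psi$ and hence $\psi' F\ge F'\psi$, i.e.\ $F'/F\le\psi'/\psi$ wherever $F>0$; and $\psi'/\psi\le d\,\phi'/\phi$ is equivalent to $\psi/\phi^d$ being non-increasing, which holds because $\psi/\phi^d=g(x)/\phi(x)$ with $g(x)=\frac{(x^d-1)^d}{(x^{d+1}-1)^{d-1}(x-1)}$ decreasing (the same monotonicity the paper establishes in Lemma~\ref{lem:f1-unique}(a) via $d^2(x-1)^2x^{d-1}<(x^d-1)^2$) and $\phi$ increasing. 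So your scheme is salvageable and in fact reduces to the same core inequality the paper uses — but that chain of reductions is the actual proof, and it is missing from your write-up.
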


For $f_1$, we do not need the uniqueness of its intersection with the line $y=x$. 
\begin{lemma} \label{lem:h1_sol}
  For any $q\geq 4, k\geq 2$ and $d\geq 3q^k$, the function
  \[
  h_1(x):=\left(\frac{x^{d+1}-1}{x^d-1}\right)^dt^{d+1}-\frac{x^d-1}{x-1}+q-x^d
  \] 
  has at least one root in the region $x>1$.
  Let $x^*$ be its smallest root. 
  Then $h_1(x)<0$ for $x\in(1,x^*)$.
  Moreover, $x^*>x^{**}$, and consequently $h_2(x^*)>0$. 
\end{lemma}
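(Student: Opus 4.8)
The plan is to reduce everything to a single algebraic identity relating $h_1$ and $h_2$. Writing out the two definitions, one sees that $h_1(x)-h_2(x)=(q-x^d)-(q-1)x^d=-q(x^d-1)$, so that
\[
h_1(x)=h_2(x)-q(x^d-1);
\]
in particular $h_1(x)<h_2(x)$ for every $x>1$. This identity is the engine behind the comparison $x^*>x^{**}$. I will also use that $h_2$ is negative on $(1,x^{**})$ and positive on $(x^{**},\infty)$; this follows from Lemma~\ref{lem:h2_unique} (uniqueness of the root in $(1,\infty)$) together with continuity of $h_2$, the limiting value $h_2(1^+)<0$, and $h_2(x)\to+\infty$ as $x\to\infty$ (both of the latter being instances of the endpoint estimates below).

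First I would pin down the endpoint behaviour of $h_1$. Since $\frac{x^{d+1}-1}{x^d-1}\to\frac{d+1}{d}$ and $\frac{x^d-1}{x-1}\to d$ as $x\to1^+$, the function $h_1$ extends continuously to $x=1$, and, using $t^{d+1}=q^k-q$, $(1+1/d)^d<e<3$, and $d\geq 3q^k$,
\[
h_1(1)=\Big(\tfrac{d+1}{d}\Big)^{d}(q^k-q)-d+q-1<3(q^k-q)-d+q-1\leq -2q-1<0.
\]
For large $x$, I would use the elementary inequality $\frac{x^{d+1}-1}{x^d-1}\geq x$ for $x\geq1$ (equivalent to $x^{d+1}-1-x(x^d-1)=x-1\geq0$) to get $\big(\tfrac{x^{d+1}-1}{x^d-1}\big)^{d}t^{d+1}\geq (q^k-q)x^d$, together with $\frac{x^d-1}{x-1}\leq d\,x^{d-1}$, so that $h_1(x)\geq (q^k-q-1)x^d-d\,x^{d-1}+q$. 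Since $q\geq4$ and $k\geq2$ give $q^k-q-1\geq11>0$, this is positive once $x>d/(q^k-q-1)$. Continuity of $h_1$ on $(1,\infty)$ and the intermediate value theorem then produce a root; moreover $h_1(1)<0$ and the definition of $x^*$ as the smallest root force $h_1<0$ throughout $(1,x^*)$, since $h_1(y)\geq0$ for some $y\in(1,x^*)$ would yield, by the intermediate value theorem on $[1,y]$, a root of $h_1$ in $(1,x^*)$, contradicting minimality.

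Finally, evaluating the identity at $x=x^*$ and using $h_1(x^*)=0$ gives $h_2(x^*)=q((x^*)^d-1)>0$ (as $x^*>1$), which is exactly the ``consequently $h_2(x^*)>0$'' clause. Since $h_2<0$ on $(1,x^{**})$ and $h_2(x^{**})=0$, positivity of $h_2(x^*)$ forces $x^*>x^{**}$, completing the proof. I do not anticipate a genuine obstacle: once the identity $h_1=h_2-q(x^d-1)$ is spotted, the argument is pure sign-bookkeeping. The only point requiring a little care is the two endpoint estimates — in particular dominating $(\tfrac{d+1}{d})^{d}t^{d+1}$ by $d$ via $d\geq 3q^k$, and verifying that $h_1$ (and $h_2$) genuinely extend continuously to $x=1$ so that these limiting values can legitimately be fed into the intermediate value theorem.
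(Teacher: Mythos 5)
Your proof is correct and follows essentially the same strategy as the paper's (very terse) proof: establish the sign of $h_1$ at the endpoints $1^+$ and $+\infty$ to get existence of a root and negativity on $(1,x^*)$, then exploit $h_1<h_2$ on $(1,\infty)$ to deduce the ordering of the roots. You merely state the explicit identity $h_1=h_2-q(x^d-1)$ and evaluate it at $x^*$ to get $h_2(x^*)>0$ first, rather than deducing $x^*>x^{**}$ directly from $h_1<h_2\le 0$ on $(1,x^{**}]$; the two orderings of that last step are logically equivalent, and your version is a clean way to get the "consequently" clause.
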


\begin{proof}
  The first part of the lemma is similar to the proof of Lemma \ref{lem:h2_unique} and Lemma \ref{lem:r1_unique}, by computing $\lim_{x\to 1}h_1(x)<0$ and $\lim_{x\to +\infty}h_1(x)=+\infty$. To prove the second part, note that $h_2(x)>h_1(x)$ for all $x>1$. 
\end{proof}

The next property will be useful later. 
\begin{proposition} \label{prop:f1_f2_boundary}
If $f_1(x,y)=0$, then $x<1+\frac{1}{t^{d+1}-1}$. If $f_2(x,y)=0$, then $y<1+\frac{1}{t^{d+1}-1}$.
\end{proposition}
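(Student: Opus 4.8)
The plan is to solve each defining equation explicitly for $x-1$ (resp.\ $y-1$), writing it as a single fraction, and then to bound the denominator from below by $(t^{d+1}-1)$ times the numerator; the claimed inequalities follow at once. Throughout I work in the regime $x,y>1$ in which the proposition is applied, so that $x^d-1,y^d-1>0$. The one algebraic point that makes everything collapse is the identity
\[
1+\frac{x^d(y-1)}{x^d-1}=\frac{x^dy-1}{x^d-1}=y+\frac{y-1}{x^d-1}>y ,
\]
and its mirror image $1+\frac{y^d(x-1)}{y^d-1}>x$.

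For the bound on $x$, rewrite $f_1(x,y)=0$ as $(x-1)\Big(\big(1+\tfrac{x^d(y-1)}{x^d-1}\big)^{d}t^{d+1}+q-y^d\Big)=y^d-1$. Since $\big(1+\tfrac{x^d(y-1)}{x^d-1}\big)^{d}>y^d$ by the identity above, the bracketed factor is larger than $y^dt^{d+1}+q-y^d=y^d(t^{d+1}-1)+q$, which is strictly positive because $t^{d+1}-1=q^k-q-1>0$ for $q\ge 4$, $k\ge 2$. Dividing and using $q\ge 0$ together with $y^d-1<y^d$,
\[
x-1=\frac{y^d-1}{\big(1+\frac{x^d(y-1)}{x^d-1}\big)^{d}t^{d+1}+q-y^d}<\frac{y^d-1}{y^d(t^{d+1}-1)+q}<\frac{1}{t^{d+1}-1},
\]
which is the first claim. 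For the bound on $y$, $f_2(x,y)=0$ reads $(y-1)\big(\big(1+\tfrac{y^d(x-1)}{y^d-1}\big)^{d}t^{d+1}+(q-1)x^d\big)=x^d-1$; now the bracket exceeds $x^dt^{d+1}+(q-1)x^d=x^d(t^{d+1}+q-1)=x^d(q^k-1)>0$, so
\[
y-1=\frac{x^d-1}{\big(1+\frac{y^d(x-1)}{y^d-1}\big)^{d}t^{d+1}+(q-1)x^d}<\frac{x^d-1}{x^d(t^{d+1}+q-1)}<\frac{1}{t^{d+1}-1},
\]
where the last step uses $q-1>0$.

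I do not expect a genuine obstacle here: once the simplification $1+\frac{x^d(y-1)}{x^d-1}=\frac{x^dy-1}{x^d-1}$ is spotted, both estimates are one line, and the equations themselves force the bracketed factors to be positive so the divisions are legitimate. The only point requiring a little care is that the argument really does use $x,y>1$: this is what makes $x^d-1,y^d-1>0$, what gives $\frac{x^dy-1}{x^d-1}>y$, and what makes the right-hand sides $y^d-1$ and $x^d-1$ positive. This is exactly the region $(1,\infty)^2$ in which \Cref{prop:f1_f2_boundary} is invoked.
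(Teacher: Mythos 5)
Your proof is correct and is essentially the same argument as the paper's: both hinge on the algebraic observation that $1+\frac{x^d(y-1)}{x^d-1}=\frac{x^dy-1}{x^d-1}>y$ (for $x,y>1$) and then bound the bracketed factor from below by $y^d(t^{d+1}-1)+q>0$ (resp.\ $x^d(t^{d+1}+q-1)>0$). The paper phrases it as a contrapositive — assume $x\ge 1+\tfrac{1}{t^{d+1}-1}$ and derive $f_1(x,y)>0$ — while you solve $f_1=0$ directly for $x-1$; this is a cosmetic rephrasing of the same computation.
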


\begin{proof}
Suppose $x\geq 1+\frac{1}{t^{d+1}-1}$. Then
\begin{align*}
f_1(x,y)&\geq\frac{1}{t^{d+1}-1}\left(\left(1+\frac{x^d(y-1)}{x^d-1}\right)^dt^{d+1}+q-y^d\right)-y^d+1\\
&>\frac{1}{t^{d+1}-1}\left(y^dt^{d+1}+q-y^d\right)-y^d+1=\frac{q}{t^{d+1}-1}+1>0.
\end{align*}

A similar argument holds for $f_2$. 
\end{proof}

Then we study the shape of $f_1$ below the line $y=x$. 
\begin{lemma}  \label{lem:f1-unique}
  Let $g(x)\defeq \frac{(x^d-1)^d}{(x^{d+1}-1)^{d-1}(x-1)}$ and assume that $d\ge 3 q^k$. Then
  \begin{itemize}
    \item[(a)] there is a unique $x_0\in(1,\infty)$ such that $g(x_0)=t^{d+1}$;
    \item[(b)] for any $1 < x < x_0$, $\frac{\partial f_1}{\partial y}<0$ for $y\in(1,x]$; and
    \item[(c)] $x_0>x^*$, where $x^*>1$ is the smallest solution to $f_1(x,x)=0$ (see Lemma \ref{lem:h1_sol}).
  \end{itemize}
    Moreover, for any $1 < x < x_0$, $f_1(x,y)$ is decreasing for $y\in(1,x]$.
\end{lemma}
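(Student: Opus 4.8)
The plan is to push all three parts plus the closing claim through a single monotonicity fact about $g$. The first move is a simplification: writing $s_m(x):=\tfrac{x^m-1}{x-1}=1+x+\cdots+x^{m-1}$, we have $x^d-1=(x-1)s_d(x)$ and $x^{d+1}-1=(x-1)s_{d+1}(x)$, so
\[
g(x)=\frac{(x-1)^d s_d(x)^d}{(x-1)^{d-1}s_{d+1}(x)^{d-1}(x-1)}=\frac{s_d(x)^d}{s_{d+1}(x)^{d-1}},
\]
which is a ratio of polynomials (denominator nonzero for $x\geq 1$), extends continuously to $x=1$ with $g(1)=d^d/(d+1)^{d-1}$, and satisfies $g(x)\to 1$ as $x\to\infty$. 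I would then show $g$ is \emph{strictly decreasing} on $(1,\infty)$. Taking logarithmic derivatives and using $s_m'/s_m=\tfrac{mx^{m-1}}{x^m-1}-\tfrac1{x-1}$ gives, after multiplying by $x-1>0$, that $(x-1)g'(x)/g(x)=\tfrac{d^2x^{d-1}}{s_d(x)}-\tfrac{(d^2-1)x^d}{s_{d+1}(x)}-1$. Substituting $z=1/x\in(0,1)$ turns $\tfrac{x^{d-1}}{s_d(x)}$ into $\tfrac1{s_d(z)}$ and $\tfrac{x^{d}}{s_{d+1}(x)}$ into $\tfrac1{s_{d+1}(z)}$; using $s_{d+1}(z)=s_d(z)+z^d$ and $z^d=1-(1-z)s_d(z)$, the inequality $(x-1)g'/g<0$ collapses to $d\,z^{(d-1)/2}<1+z+\cdots+z^{d-1}$, which is exactly AM--GM applied to $1,z,\dots,z^{d-1}$, strict for $z\in(0,1)$. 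Hence $g'<0$ on $(1,\infty)$.

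Part (a) is then the intermediate value theorem: $g(1^+)=d^d/(d+1)^{d-1}=d/(1+1/d)^{d-1}>d/e\ge 3q^k/e>q^k\ge q^k-q=t^{d+1}$ (using $d\ge 3q^k$ and $(1+1/d)^{d-1}<(1+1/d)^d<e$), while $\lim_{x\to\infty}g(x)=1<q^k-q=t^{d+1}$, so the strictly decreasing $g$ crosses the level $t^{d+1}$ exactly once, giving a unique $x_0$. For part (b) and the final sentence, direct differentiation of $f_1$ yields
\[
\frac{\partial f_1}{\partial y}(x,y)=\frac{d(x-1)x^d t^{d+1}}{(x^d-1)^d}\,(x^dy-1)^{d-1}-dxy^{d-1},
\]
which is negative iff $t^{d+1}<\tfrac{(x^d-1)^d}{(x-1)x^{d-1}}\bigl(\tfrac{y}{x^dy-1}\bigr)^{d-1}$. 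Since $y\mapsto\tfrac{y}{x^dy-1}$ has derivative $-(x^dy-1)^{-2}<0$ it is decreasing, so over $y\in(1,x]$ the right-hand side is minimised at $y=x$, where it equals $g(x)$. Therefore $\partial f_1/\partial y<0$ for all $y\in(1,x]$ iff $g(x)>t^{d+1}$, which holds for every $1<x<x_0$ by the monotonicity of $g$; this is precisely (b), and the ``moreover''.

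For part (c), observe $1+\tfrac{x^d(x-1)}{x^d-1}=\tfrac{x^{d+1}-1}{x^d-1}$, so $f_1(x,x)=(x-1)h_1(x)$; hence $x^*$ is the smallest root of $h_1$ (with $h_1<0$ on $(1,x^*)$ by Lemma~\ref{lem:h1_sol}). From $h_1(x^*)=0$, cancelling $(x^*-1)$ inside $\tfrac{(x^*)^{d+1}-1}{(x^*)^d-1}$ and using $s_{d+1}=s_d+x^d$, one gets $t^{d+1}=\bigl(s_{d+1}(x^*)-q\bigr)\,s_d(x^*)^d/s_{d+1}(x^*)^d$; since the left side is positive this forces $s_{d+1}(x^*)-q>0$, and dividing by $g(x^*)=s_d(x^*)^d/s_{d+1}(x^*)^{d-1}$ yields $g(x^*)/t^{d+1}=s_{d+1}(x^*)/(s_{d+1}(x^*)-q)>1$. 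Thus $g(x^*)>t^{d+1}=g(x_0)$, and strict monotonicity of $g$ gives $x^*<x_0$. The main obstacle is really just the strict monotonicity of $g$; once the simplification $g=s_d^d/s_{d+1}^{d-1}$ and the AM--GM reduction are in hand, the rest is bookkeeping, with the only subtlety being that several expressions are $0/0$ at $x=1$, so one should work with the $s_m$ forms throughout.
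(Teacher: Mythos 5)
Your proof is correct and essentially follows the paper's route: establish that $g$ is strictly decreasing on $(1,\infty)$ via an AM--GM inequality, obtain (a) from the intermediate value theorem, obtain (b) by reducing the sign of $\partial f_1/\partial y$ over $y\in(1,x]$ to the worst case $y=x$, where the condition becomes $g(x)\lessgtr t^{d+1}$, and obtain (c) by comparing $x^*$ with $x_0$. The differences are bookkeeping rather than conceptual. Writing $g=s_d^d/s_{d+1}^{d-1}$ with $s_m(x)=(x^m-1)/(x-1)$ and substituting $z=1/x$ turns the log-derivative identity into a clean $d\,z^{(d-1)/2}<s_d(z)$, i.e.\ a direct AM--GM on $1,z,\dots,z^{d-1}$; the paper instead factors $g'$ and invokes the inequality $d^2(x-1)^2x^{d-1}<(x^d-1)^2$ from the proof of Lemma~\ref{lem:2_max_stable}, which is the same AM--GM fact stated for $x>1$. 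For (b), the paper rearranges $\partial f_1/\partial y<0$ into a linear-in-$y$ form and checks the endpoint; you rearrange into $t^{d+1}<$ (an expression decreasing in $y$) and minimise at $y=x$, arriving at the identical condition $g(x)>t^{d+1}$. For (c), you show $g(x^*)=t^{d+1}\cdot s_{d+1}(x^*)/(s_{d+1}(x^*)-q)>t^{d+1}$ and conclude $x^*<x_0$ from strict monotonicity of $g$, while the paper shows $h_1(x_0)=q>0$ and applies Lemma~\ref{lem:h1_sol}; these are dual statements under the bijection $g\colon(1,\infty)\to(1,g(1^+))$. The $s_m$ formulation also avoids the $0/0$ expressions at $x=1$ that the paper handles by taking limits, so your version is a marginally tidier write-up of the same argument.
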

\begin{proof}
  We first show that $g(x)$ is decreasing for $x>1$.
  By direct calculation,
  \begin{align*}
    g'(x) = \frac{(x^d-1)^{d-1}}{x(x-1)^2(x^{d+1}-1)^d}\left( x^d d^2(x-1)^2-x(x^d-1)^2 \right)<0,
  \end{align*}
  where the last inequality has already been shown in the proof of Lemma~\ref{lem:2_max_stable} for $x>1$.   Notice that $\lim_{x\rightarrow 1}g(x) = \frac{d^d}{(d+1)^{d-1}}$ and $\lim_{x\rightarrow\infty}g(x)=1$.
  As $\frac{d^d}{(d+1)^{d-1}}>\frac{d}{e} > q^k > t^{d+1} = q^k-q>2$,
  there is a unique $x_0$ such that $g(x_0)=t^{d+1}$ and for $x\in(1,x_0)$, $g(x)>t^{d+1}$.
  This shows part (a).

  For  part (b), we have $\frac{\partial f_1}{\partial y} = -x d y^{d-1} + \frac{(x-1)d x^d t^{d+1}(x^dy-1)^{d-1}}{(x^d-1)^d}$ and thus, $\frac{\partial f_1}{\partial y}<0$ is equivalent to 
  \begin{align*}
    \left( x^{d+1}-\frac{(x^d-1)^{d/(d-1)}}{(x-1)^{1/(d-1)}t^{\frac{d+1}{d-1}}} \right)y< x.
  \end{align*}
  As the range of $y$ we consider is $(1,x]$, we only need to show that
  \begin{align*}
    x^{d+1}-\frac{(x^d-1)^{d/(d-1)}}{(x-1)^{1/(d-1)}t^{\frac{d+1}{d-1}}} < 1,
  \end{align*}
  which, after rearranging, is equivalent to $g(x)>t^{d+1}$.
  This is guaranteed by part (a) of the lemma.

  To prove the third part, by \Cref{lem:h1_sol}, it suffices to show $h_1(x_0)=f_1(x_0,x_0)>0$. Note that $x_0$ satisfies
  \begin{align*}
    x_0^{d+1}-\frac{(x_0^d-1)^{d/(d-1)}}{(x_0-1)^{1/(d-1)}t^{\frac{d+1}{d-1}}} = 1, \mbox{ or equivalently, }    \Big(\frac{x_0^{d+1}-1}{x_0^d-1}\Big)^{d-1}=\frac{x_0^d-1}{t^{d+1}(x_0-1)}.
  \end{align*}
  By multiplying this with $\frac{x_0^{d+1}-1}{x_0^d-1}$, we have 
   $ \left(\frac{x_0^{d+1}-1}{x_0^d-1}\right)^{d}t^{d+1}=\frac{x_0^{d+1}-1}{x_0-1}$ and plugging into the expression for $f_1(x,x)$ we get $f_1(x_0,x_0)=q(x_0-1)>0$,  yielding part (c). 
\end{proof}

By \Cref{lem:f1-unique} (b) and (c), the partial derivative $\partial f_1/\partial y\neq 0$ at all points $(x,y)$ such that $f_1(x,y)=0$ and $1<y\leq x\leq x^*$.
Applying the implicit function theorem, $f_1$ yields a continuous function between $x$ and $y$ in the region $1<y\leq x\leq x^*$. 
\begin{corollary}
  The set 
  $\mathcal{P}_1^{+}:=(1,1)+\{(x,y):f_1(x,y)=0,x\geq y>1,x\leq x^*\}$ 
  forms a continuous curve from $(1,1)$ to $x^*,x^*$, 
  where $x^*>1$ is the smallest solution to $f_1(x,x)=0$. 
\end{corollary}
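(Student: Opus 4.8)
The plan is to realise the set in the statement as the graph of a continuous function $y=y(x)$ on the interval $(1,x^*]$, and then to check that adjoining the corner point $(1,1)$ closes it up into an arc. First I would fix $x\in(1,x^*]$ and study $y\mapsto f_1(x,y)$ on $(1,x]$. As $y\to 1^+$ the factor $\bigl(1+\tfrac{x^d(y-1)}{x^d-1}\bigr)^d$ tends to $1$, so $f_1(x,y)\to (x-1)\bigl(t^{d+1}+q-1\bigr)>0$ (using $t^{d+1}=q^k-q>2$ and $q\geq 4$). At the other endpoint, a short computation gives $f_1(x,x)=(x-1)h_1(x)$, which by \Cref{lem:h1_sol} is strictly negative for $x\in(1,x^*)$ and equals $0$ at $x=x^*$. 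Since $x^*<x_0$ by \Cref{lem:f1-unique}(c), part (b) of the same lemma applies and $f_1(x,\cdot)$ is strictly decreasing on $(1,x]$. The intermediate value theorem then yields a \emph{unique} $y(x)\in(1,x]$ with $f_1(x,y(x))=0$, with $y(x)<x$ for $x<x^*$ and $y(x^*)=x^*$. By uniqueness, any point $(x,y)$ with $f_1(x,y)=0$, $x\geq y>1$ and $x\leq x^*$ satisfies $y=y(x)$, so the set in the statement is precisely the graph $\{(x,y(x)):x\in(1,x^*]\}$ (translating by $(1,1)$, i.e.\ adjoining the corner, is a harmless relabelling).

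For the continuity of $x\mapsto y(x)$, I would observe that the strict monotonicity just established is exactly the statement that $\partial f_1/\partial y<0$ along the zero set, so the implicit function theorem shows $y(\cdot)$ is $C^1$ on $(1,x^*)$, and its one-sided version gives left-continuity at $x=x^*$. As $x\to 1^+$, the sandwich $1<y(x)<x$ forces $y(x)\to 1$; hence, setting $y(1):=1$, the map $x\mapsto(x,y(x))$ is continuous and injective on the closed interval $[1,x^*]$, and its image is exactly $\mathcal{P}_1^{+}$. This is the desired arc from $(1,1)$ to $(x^*,x^*)$.

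The substance of the proof is entirely contained in \Cref{lem:f1-unique} and \Cref{lem:h1_sol}; the only points that need a little care are the two endpoints---confirming that the arc really limits onto $(1,1)$ (handled by the squeeze $1<y(x)<x$) and that it first meets the diagonal exactly at $x^*$ and nowhere earlier (handled by the identity $f_1(x,x)=(x-1)h_1(x)$ together with \Cref{lem:h1_sol}). No new estimate is required beyond those lemmas.
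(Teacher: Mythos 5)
Your proof is correct and takes the same essential route as the paper, resting on parts (b) and (c) of \Cref{lem:f1-unique} together with the implicit function theorem. In fact it is somewhat more complete than the paper's terse proof: you make explicit the intermediate-value argument (using $f_1(x,1^+)>0$ and the identity $f_1(x,x)=(x-1)h_1(x)$ with \Cref{lem:h1_sol}) that shows $y(x)$ exists and is unique for each $x\in(1,x^*]$, and you handle the endpoint behaviour at $(1,1)$ via the squeeze $1<y(x)<x$, both of which the paper leaves implicit.
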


Regarding the shape of $f_2$, we have the next lemma. 
\begin{lemma} \label{lem:p2_shape}
  For any $1<y<1+\frac{1}{q-1}$, there are at most two $x>1$ such that $f_2(x,y)=0$. 
  Moreover, if $1<y<x^{**}$, where $x^{**}>1$ is the unique value such that $f_2(x^{**},x^{**})=0$ (see Lemma \ref{lem:h2_unique}),
  then there is exactly one $x>y$ such that $f_2(x,y)=0$.
\end{lemma}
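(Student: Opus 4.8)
The plan is to study $f_2(\cdot,y)$, for a fixed $y>1$, as a polynomial in the single variable $x$ on $(1,\infty)$: I first isolate a monotonicity property that yields the ``at most two roots'' bound at once, and then determine the number of roots exceeding $y$ by tracking the sign of $f_2$ at $x=1$, $x=y$ and $x\to\infty$.

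Since $1+\frac{y^d(x-1)}{y^d-1}=\frac{y^dx-1}{y^d-1}$, we may write
\[
f_2(x,y)=(y-1)t^{d+1}\Big(\tfrac{y^dx-1}{y^d-1}\Big)^{d}+\big((y-1)(q-1)-1\big)x^d+1 ,
\]
a polynomial in $x$ of degree at most $d$. Differentiating, $\partial f_2/\partial x=d\,x^{d-1}G(x)$ with $G(x):=(y-1)t^{d+1}\,\tfrac{y^d}{y^d-1}\big(\tfrac{y^d-1/x}{y^d-1}\big)^{d-1}+(y-1)(q-1)-1$. Since $y^d>1$, the map $x\mapsto\frac{y^d-1/x}{y^d-1}$ is strictly increasing and positive on $(1,\infty)$, so $G$ is strictly increasing there; hence $\partial f_2/\partial x$ has at most one zero on $(1,\infty)$, where it changes from negative to positive. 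Therefore $f_2(\cdot,y)$ has at most one critical point on $(1,\infty)$, necessarily a local minimum, and in particular at most two roots $x>1$, which is the first assertion.

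For the ``moreover'' part I would use three evaluations. At $x=1$: $f_2(1,y)=(y-1)(t^{d+1}+q-1)=(y-1)(q^k-1)>0$. At $x=y$: a short computation gives $f_2(y,y)=(y-1)h_2(y)$ with $h_2$ as in \Cref{lem:h2_unique}; since $\lim_{x\to1^+}h_2(x)=(1+\tfrac1d)^dt^{d+1}-d+q-1<0$ (as $t^{d+1}=q^k-q$ and $d$ is large) and $\lim_{x\to\infty}h_2(x)=+\infty$, \Cref{lem:h2_unique} gives $h_2<0$ on $(1,x^{**})$, hence $f_2(y,y)<0$ for $1<y<x^{**}$. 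Finally, the sign of $f_2(x,y)$ as $x\to\infty$ is that of its leading coefficient $L(y):=(y-1)t^{d+1}\big(\tfrac{y^d}{y^d-1}\big)^{d}+(y-1)(q-1)-1$. If $L(y)>0$, then combining $f_2(x,y)\to+\infty$, $f_2(1,y)>0>f_2(y,y)$ and the unimodality above, $f_2(\cdot,y)$ is decreasing then increasing on $(1,\infty)$ with a negative minimum, so it has exactly two roots in $(1,\infty)$, one in $(1,y)$ and one in $(y,\infty)$; hence there is exactly one $x>y$ with $f_2(x,y)=0$. (If instead $L(y)\le0$, then $f_2(\cdot,y)$ is strictly decreasing on $(1,\infty)$ with a single root, which lies in $(1,y)$; so the lemma amounts precisely to ruling this out.)

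The crux, and the step I expect to be the main obstacle, is to show $L(y)>0$ for $y\in(1,x^{**})$. I would split on $\eps:=y-1$, using $x^{**}<1+\frac1{t^{d+1}-1}=1+\frac1{q^k-q-1}$ from \Cref{prop:f1_f2_boundary}. For $\eps\ge\tfrac1{q^k-1}$: since $\big(\tfrac{y^d}{y^d-1}\big)^{d}>1$, $L(y)>(y-1)(t^{d+1}+q-1)-1=\eps(q^k-1)-1\ge0$, strictly. For $1<y<1+\tfrac1{q^k-1}$: here $\big(\tfrac{y^d}{y^d-1}\big)^{d}=(1-y^{-d})^{-d}$ blows up as $y\to1$, and I would bound $y^d-1<(1+\tfrac1{q^k-1})^{d}-1$, which stays bounded by a constant depending only on $q,k$ (at most $\exp(\tfrac{5q}{q-1})$ when $d=5q^k$), and then verify $(y-1)t^{d+1}\big(\tfrac{y^d}{y^d-1}\big)^{d}\ge1$ by explicit estimates --- for instance, via $(1-y^{-d})^{d}\le e^{-d/y^d}$ and $y^d\le e^{d(y-1)}$, reducing to $(q^k-q)(y-1)\ge e^{-d/y^d}$ over that range. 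These are exactly the delicate, parameter-dependent estimates flagged in the introduction; the margin is tightest near $y=1+\tfrac1{q^k-1}$ and for the smallest values of $q^k$, where slightly sharper constants may be needed, but the picture above is robust.
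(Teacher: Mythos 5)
Your argument for the ``at most two roots'' part is a genuinely different and cleaner route than the paper's. After rewriting $f_2(x,y)=(y-1)t^{d+1}\big(\tfrac{y^dx-1}{y^d-1}\big)^d+\big((y-1)(q-1)-1\big)x^d+1$, you factor $\partial f_2/\partial x=d\,x^{d-1}G(x)$ and observe that $G$ is strictly increasing on $(1,\infty)$, hence $f_2(\cdot,y)$ has at most one critical point there (a minimum), hence at most two roots. The paper instead evaluates $f_2$ at critical points via the stationary-point identity, obtains $g(x')=1-x'^{d-1}(y-q(y-1))/y^d$, introduces a threshold $\chi$, and rules out a third root through a four-case analysis on where $x_1$ sits relative to $\chi$. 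Your version is shorter and does not even use the hypothesis $y<1+\tfrac{1}{q-1}$ (the paper needs it to make $y-q(y-1)>0$, so that $\chi$ is well-defined). The ``moreover'' part is organized the same way in both: $f_2(1,y)=(y-1)(q^k-1)>0$, $f_2(y,y)=(y-1)h_2(y)<0$ for $1<y<x^{**}$, and $f_2(x,y)\to+\infty$ as $x\to\infty$, combined with the two-root bound.

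The one place you stop short is exactly the step you flag yourself: $f_2(x,y)\to+\infty$ as $x\to\infty$, i.e.\ $L(y):=(y-1)t^{d+1}\big(\tfrac{y^d}{y^d-1}\big)^d+(y-1)(q-1)-1>0$. You correctly observe this is needed (if $L(y)\le 0$ then $G<0$ everywhere, $f_2(\cdot,y)$ is strictly decreasing and its single root lies in $(1,y)$, so the conclusion would fail), and you sketch a split on $y-1$. The split is incomplete as stated: the easy half ($y-1\ge\tfrac{1}{q^k-1}$) is fine, but the reduction you propose for small $y-1$, namely $(q^k-q)(y-1)\ge e^{-d/y^d}$, breaks down as $y\downarrow 1$, where $y^d\to 1$ and the right-hand side stabilises near $e^{-d}$ while the left-hand side goes to $0$; in that regime you need a different estimate exploiting that $(y-1)(y^d-1)^{-d}\to\infty$. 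Numerically the true margin is narrow (for $q=4,k=2$, the minimum of $(y-1)\big(\tfrac{y^d}{y^d-1}\big)^d$ over the relevant $y$ is about $0.095$ against the required $\tfrac{1}{q^k-q}\approx 0.083$), so this really does demand care. It is worth noting, though, that the paper's own proof simply asserts $\lim_{x\to+\infty}g(x)=+\infty$ without deriving it, so you are identifying a step that the source treats as immediate rather than inventing a new obstacle; your honesty about it is a feature, but to count as a complete proof this positivity of $L(y)$ must be closed.
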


\begin{proof}
  The crucial idea of this proof is to study the sign of $f_2(x,y)$ at its critical points w.r.t. $x$ (i.e., $x'$ such that $\partial f_2(x,y)/\partial x=0$ at $x=x'$). 

  Fix $y$ in the range and define $g(x)\defeq f_2(x,y)$. By direct calculation, if $g'(x)=0$, $x'$ satisfies
  \[
    t^{d+1}\left(1+\frac{y^d(x'-1)}{y^d-1}\right)^d=\frac{x'^{d-1}(y-q(y'-1))(x'y^d-1)}{(y-1)y^d}.
  \]
  Plugging it back to $g$, we get
  \[
    g(x')=1-\frac{x'^{d-1}(y-q(y-1))}{y^{d}}. 
  \]
  Because $y-q(y-1)>0$, for any critical point $x'$ of $g$, 
  \begin{itemize}
    \item[(a)] if $x'<\chi$, then $g(x')>0$;
    \item[(b)] if $x'=\chi$, then $g(x')=0$;
    \item[(c)] if $x'>\chi$, then $g(x')<0$, 
  \end{itemize}
  where $\chi$ is defined by $\chi:=\big(\frac{y^d}{y-q(y-1)}\big)^{1/(d-1)}$.

  As $g(x)=f_2(x,y)$ for the fixed $y$, $g(x)$ is a polynomial in $x$. 
  Moreover, $g(1)=(y-1)(t^{d+1}+q-1)>0$ and $\lim_{x\to +\infty}g(x)=+\infty$. 
  It implies that $g(x)$ must have an even number of roots.
  If $g(x)$ does not have any root greater than $1$ then we are done. 
  Otherwise, let $x_1>1$ be the smallest root and $x_2$ be the largest root. 
  \begin{itemize}
    \item If $x_1<\chi$, this means the next critical point $x'\geq x_1$ cannot be $x_1$, or otherwise, $g(x')=0$, contradicting with item (a) above. Therefore, $g(x')<0$, which means $x'>\chi$. If there exists another zero $x'<x_3<x_2$, then either $g'(x_3)<0$ or $g'(x_3)>0$ (otherwise, it contradicts with item (b)). In the former case, there must exist another critical point $x''$ such that $\chi<x'<x''<x_3$ and $g(x'')>0$, which contradicts to item (c). In the latter case, there must exist another critical point $x'''$ such that $x_3<x'''<x_2$ and $g(x''')>0$, violating item (c) as well. 
    \item If $x_1>\chi$, this means all the critical points $x'$ in $[x_1,x_2]$ must have function value $g(x')<0$, which implies there is not any other root in $(x_1,x_2)$. 
    \item If $x_1=\chi$ and $x_1$ is not a critical point, then $g'(x_1)<0$ and the argument of the previous case still applies. 
    \item If $x_1=\chi$ and $x_1$ is a critical point, then for any other critical point (if exists) $x'>x_1$, it must holds that $g(x')<0$. 
      Namely once $g(x)$ becomes positive as $x$ increases, the sign of $g'(x)$ will not change.
      It implies that $x_2$ is the only root larger than $x_1$ in this case. 
      If no critical point $x'>x_1$ exists, then $x_1=x_2$ is the only root. 
  \end{itemize}
  
  In all cases, $g(x)$ has at most two roots greater than $1$. This finishes the first part of the lemma.

  For the second part,
  notice that if $y<x^{**}$ then $g(y)<0$, and recall $\lim_{x\to \infty}g(x)=\infty$. 
  The number of zeros larger than $y$ must be odd, and by the first part, it must be unique. Proposition \ref{prop:f1_f2_boundary} guarantees that $x^{**}<1+\frac{1}{t^{d+1}-1}<1+\frac{1}{q-1}$.
\end{proof}


We then argue there is a point on $\mathcal{P}_1^+$ (except $(1,1)$) such that $f_2$ takes zero. 
To establish this, we first find a point $E$ with $f_2(x_E,y_E)=0$
such that it lies to the right of $\mathcal{P}_1^+$ (with some extra conditions, and later we will apply Lemma \ref{lem:p2_shape}). 
To simplify the calculation, we only consider the case $d=5q^k$. 
The proof of the next lemma consists some detailed calculations, which we postpone till Section \ref{sec:proof_interior_exterior}. 


\begin{lemma} \label{lem:exterior}
  Suppose $d=5 q^k$. There exists a point $E$ with $f_2(x_E,y_E)=0$ such that it lies to the right of $\mathcal{P}_1^+$. More specifically, (a) $y_E=1+\frac{0.5}{t^{d+1}-1}$; (b) $y_E<x^{**}$; and (c) $x_E>1+\frac{1}{t^{d+1}-1}$. 
\end{lemma}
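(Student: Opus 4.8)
The plan is to fix the second coordinate of $E$ to be $y_E:=1+\tfrac{1}{2(\tau-1)}$, where $\tau:=t^{d+1}=q^k-q$, and then to take $x_E$ to be the unique appropriate root of $g(x):=f_2(x,y_E)$. I would establish the four assertions in the order (b), then the existence of $E$ and (a), then (c), and finally the ``to the right of $\mathcal{P}_1^+$'' statement. Throughout I would lean on the hypothesis $d=5q^k$: it forces the ratio $d/\tau=5q^k/(q^k-q)$ into a narrow band (bounded above by $\tfrac{20}{3}$ and below by $5$, uniformly over even $q\ge4$ and $k\ge2$), which is exactly what makes quantities of the form $(1+\Theta(1/\tau))^{d}\asymp e^{\Theta(d/\tau)}$ controllable to within a multiplicative $1+o(1)$.

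For part (b), i.e.\ $y_E<x^{**}$: by \Cref{lem:h2_unique} (and the paragraph preceding it), $x^{**}$ is the unique root of $h_2$ in $(1,\infty)$, with $h_2<0$ on $(1,x^{**})$ and $h_2>0$ on $(x^{**},\infty)$, so — since $y_E>1$ — it is enough to prove $h_2(y_E)<0$. In $h_2(y_E)=\bigl(\tfrac{y_E^{d+1}-1}{y_E^{d}-1}\bigr)^{d}\tau-\tfrac{y_E^{d}-1}{y_E-1}+(q-1)y_E^{d}$ the first and third summands are positive while $-\tfrac{y_E^d-1}{y_E-1}$ is negative of order $\tau$, and the content is that the negative term dominates. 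I would prove this by first pinning down $y_E^{d}=(1+\tfrac{1}{2(\tau-1)})^{d}$ and $\bigl(\tfrac{y_E^{d+1}-1}{y_E^d-1}\bigr)^{d}$ through two-sided bounds on the relevant logarithms (using $d=5q^k$), and then checking the resulting scalar inequality holds for all parameters.

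Given (b) and $1<y_E<x^{**}$, \Cref{lem:p2_shape} produces a unique $x_E>y_E$ with $f_2(x_E,y_E)=0$; I set $E:=(x_E,y_E)$, which gives (a). For (c), i.e.\ $x_E>1+\tfrac{1}{\tau-1}$: using the identity $f_2(x,x)=(x-1)h_2(x)$ (a direct computation) we get $g(y_E)=(y_E-1)h_2(y_E)<0$ from (b); moreover $\lim_{x\to\infty}g(x)=+\infty$ and $g$ has exactly one zero in $(y_E,\infty)$ by \Cref{lem:p2_shape}, so $g<0$ throughout $(y_E,x_E)$. Since $1+\tfrac{1}{\tau-1}>y_E$, it therefore suffices to show $g(1+\tfrac{1}{\tau-1})=f_2(1+\tfrac{1}{\tau-1},y_E)<0$, and this I would again verify by a direct estimate: with $x=1+\tfrac{1}{\tau-1}$, control $x^d$ and $\bigl(1+\tfrac{y_E^d(x-1)}{y_E^d-1}\bigr)^{d}$ via $d=5q^k$, and check that $-x^d+1$ beats $(y_E-1)\bigl[\bigl(1+\tfrac{y_E^d(x-1)}{y_E^d-1}\bigr)^{d}\tau+(q-1)x^d\bigr]$.

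Finally, ``$E$ lies to the right of $\mathcal{P}_1^+$'' is a short deduction: by (b) and \Cref{lem:h1_sol}, $1<y_E<x^{**}<x^*$, so the continuous curve $\mathcal{P}_1^+$ from $(1,1)$ to $(x^*,x^*)$ passes through the height $y_E$; every point $(\xi,y_E)$ on it has $f_1(\xi,y_E)=0$ (recall \eqref{equ:r1_c3_sys}) and hence $\xi<1+\tfrac{1}{\tau-1}$ by \Cref{prop:f1_f2_boundary}, whereas $x_E>1+\tfrac{1}{\tau-1}$ by (c); so $E$ has strictly larger first coordinate than $\mathcal{P}_1^+$ at that height. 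The main obstacle is the two sharp scalar inequalities $h_2(y_E)<0$ and $f_2(1+\tfrac{1}{\tau-1},y_E)<0$: in each, a positive term of the form $(\text{const})^{d}\cdot\tau$ must be outweighed by a negative term of size $(\text{larger const})\cdot\tau$ whose ratio of constants is bounded but not large, so the exponential factors $(1+\Theta(1/\tau))^{d}$ have to be estimated to within $1+o(1)$ and, crucially, the estimates must be engineered to close simultaneously for every even $q\ge4$ and every $k\ge2$ — this is where essentially all the difficulty of the lemma resides; everything else reduces to the lemmas already in hand.
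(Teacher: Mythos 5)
Your proposal follows essentially the same route as the paper: choose $y_E = 1 + \frac{1}{2(t^{d+1}-1)}$, establish (b) by verifying $f_2(y_E,y_E) < 0$ (equivalently $h_2(y_E)<0$, via the identity $f_2(x,x)=(x-1)h_2(x)$), establish (c) and existence by verifying $f_2\bigl(1+\frac{1}{t^{d+1}-1},y_E\bigr)<0$, and deduce the ``to the right of $\mathcal{P}_1^+$'' statement from Proposition~\ref{prop:f1_f2_boundary}. The only cosmetic difference is that you invoke Lemma~\ref{lem:p2_shape} for existence and uniqueness of $x_E$ whereas the paper uses the IVT directly via $\lim_{x\to\infty}f_2(x,y_E)=+\infty$; the two decisive scalar inequalities you identify, and the use of $d=5q^k$ to pin $d/(t^{d+1}-1)$ to a narrow band so that $(1+\Theta(1/\tau))^d$ can be estimated sharply, are exactly what the paper's computation does.
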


This yields the following lemma.

\begin{lemma} \label{lem:r1_c3_sol}
 Suppose $d=5q^k$. The system (\ref{equ:r1_c3_sys}) has a solution $(x,y)$ such that $x>y>y_E$. 
\end{lemma}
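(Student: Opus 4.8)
The plan is to realise the desired solution of \eqref{equ:r1_c3_sys} as a point where the zero set $\{f_2=0\}$ crosses the curve $\mathcal{P}_1^+$, and to detect this crossing by tracking the sign of $f_2$ along $\mathcal{P}_1^+$. The two inputs that make this work are the ``exterior point'' $E$ produced by \Cref{lem:exterior} and the boundary bound of \Cref{prop:f1_f2_boundary}, which together pin down the horizontal slice $y=y_E$ precisely enough.

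First I would parametrise $\mathcal{P}_1^+$. By \Cref{lem:f1-unique}(b),(c) and the implicit function theorem (as recorded in the corollary defining $\mathcal{P}_1^+$), this curve is the graph of a continuous function $y=\phi(x)$ for $x\in(1,x^*]$ with $\phi(x^*)=x^*$ and $\phi(x)\to 1$ as $x\to 1^+$, and $x\ge y$ holds everywhere on it. Since $y_E=1+\tfrac{0.5}{t^{d+1}-1}<x^{**}<x^*$ by \Cref{lem:exterior}(b) and \Cref{lem:h1_sol}, the value $y_E$ is attained by $\phi$; let $x_P$ be the largest $x\in(1,x^*)$ with $\phi(x_P)=y_E$ (it exists because $\phi^{-1}(y_E)$ is closed and, by $\phi(x)\to 1$ and $\phi(x^*)=x^*\neq y_E$, bounded away from $1$ and from $x^*$). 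By maximality and continuity, $\phi(x)>y_E$ for all $x\in(x_P,x^*]$. Moreover $x_P>y_E$, since $(y_E,y_E)\notin\mathcal{P}_1^+$: indeed $f_1(y_E,y_E)=(y_E-1)h_1(y_E)<0$ using $h_1<0$ on $(1,x^*)$ from \Cref{lem:h1_sol}, while $x\ge y$ on $\mathcal{P}_1^+$.

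Next I would compute the sign of the continuous map $x\mapsto f_2(x,\phi(x))$ at the two ends of $[x_P,x^*]$. At the endpoint, $f_2(x,x)=(x-1)h_2(x)$, so $f_2(x^*,x^*)=(x^*-1)h_2(x^*)>0$ by \Cref{lem:h1_sol}. At the other end I claim $f_2(x_P,y_E)<0$. Fixing $y=y_E$ and writing $g(x):=f_2(x,y_E)$, the discussion around \Cref{lem:h2_unique} gives $g(y_E)=f_2(y_E,y_E)=(y_E-1)h_2(y_E)<0$ (as $y_E<x^{**}$); and by \Cref{lem:p2_shape} (second part), together with \Cref{lem:exterior}(a),(c), which give $f_2(x_E,y_E)=0$ and $x_E>1+\tfrac1{t^{d+1}-1}>y_E$, the point $x_E$ is the \emph{only} zero of $g$ in $(y_E,\infty)$; hence $g<0$ on all of $[y_E,x_E)$. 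Finally $x_P<x_E$, because $f_1(x_P,y_E)=0$ forces $x_P<1+\tfrac1{t^{d+1}-1}$ by \Cref{prop:f1_f2_boundary}, whereas $x_E>1+\tfrac1{t^{d+1}-1}$. So $x_P\in(y_E,x_E)$ and $f_2(x_P,y_E)=g(x_P)<0$. Applying the intermediate value theorem on $[x_P,x^*]$ then yields $x_Z\in(x_P,x^*)$ with $f_2(x_Z,\phi(x_Z))=0$; setting $y_Z:=\phi(x_Z)$ gives $f_1(x_Z,y_Z)=f_2(x_Z,y_Z)=0$, with $y_Z>y_E$ since $x_Z\in(x_P,x^*]$ where $\phi>y_E$, and $x_Z>y_Z$ since the only diagonal point of $\mathcal{P}_1^+$ is $(x^*,x^*)$ (as $x^*$ is the smallest root of $h_1$) and $x_Z<x^*$ while $x_Z\ge y_Z$ on $\mathcal{P}_1^+$. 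Thus $(x,y)=(x_Z,y_Z)$ solves \eqref{equ:r1_c3_sys} with $x>y>y_E>1$, as required (and, since $x,y>1$, \Cref{lem:r1_c3_yields_r0_c0} supplies the associated $r_0,c_0$).

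The main obstacle is the middle step: certifying $f_2(x_P,y_E)<0$, i.e. that the point where $\mathcal{P}_1^+$ reaches height $y_E$ lies strictly between the two branches of $\{f_2=0\}$ in that horizontal slice. This is exactly where the delicate quantitative location of the exterior point $E$ (\Cref{lem:exterior}, whose proof is the heavy computation deferred to \Cref{sec:proof_interior_exterior}) does the work, trapped against \Cref{prop:f1_f2_boundary} to give $x_P<x_E$, and combined with the ``at most two roots / unique root above the diagonal'' structure of \Cref{lem:p2_shape}. One must also take a little care that $\mathcal{P}_1^+$ genuinely attains the level $y_E$ and, by choosing the \emph{largest} such $x_P$, remains above that level on the sub-arc used in the final intermediate value argument, so that the sign change of $f_2$ produced there really occurs at a point with $y>y_E$.
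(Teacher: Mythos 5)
Your proof is correct and follows essentially the same strategy as the paper's: pick the point $M=(x_M,y_E)$ on $\mathcal{P}_1^+$ at the largest abscissa at height $y_E$, show $f_2(x_M,y_E)<0$ via \Cref{lem:p2_shape} and \Cref{lem:exterior}/\Cref{prop:f1_f2_boundary}, show $f_2(x^*,x^*)>0$ from \Cref{lem:h1_sol}, and then apply the intermediate value theorem along the sub-arc of $\mathcal{P}_1^+$ from $M$ to $(x^*,x^*)$, with the maximality of $x_M$ ensuring $y>y_E$ at the crossing. Your write-up is somewhat more explicit (it records $f_i(x,x)=(x-1)h_i(x)$, proves $x_P>y_E$, and spells out why $x>y$ at the crossing, and it also implicitly corrects a typo in the paper, which writes $f_2(x^{**},x^{**})>0$ where $f_2(x^*,x^*)>0$ is meant), but it is not a different argument.
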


\begin{proof}
Consider the following point $M$ on $\mathcal{P}_1^{+}$: $y_M=y_E$, and $x_M$ is the largest one such that $(x_M,y_M)\in\mathcal{P}_1^{+}$.\footnote{$x_M$ is well-defined. This follows from the fact that $(x^d-1)^df_1(x,y_M)$ is a non-zero polynomial with respect to $x$, thus having a finite number of zeros.} Lemma \ref{lem:exterior} (b) asserts that $y_E<x^{**}$, which allows us to invoke Lemma \ref{lem:p2_shape}: for any $y_E<x<x_E$, we have $f_2(x,y_E)<0$. More specifically, $f_2(x_M,y_M)<0$ because $x_M<x^{*}<1+\frac{1}{t^{d+1}-1}<x_E$, where the second and third inequalities come from Proposition \ref{prop:f1_f2_boundary} and Lemma \ref{lem:exterior} (c) respectively. 

Now consider the path of $\mathcal{P}_1^{+}$ between the point $M$ and $(x^{*},x^{*})$. It is continuous and bounded away from both $x=1$ and $y=1$, and the function $f_2(x,y)$ is continuous over $(1,+\infty)\times(1,+\infty)$. This means as one walks along the path, the value of $f_2$ changes continuously; otherwise, it violates the continuity of $f_2$ by a simple $\varepsilon$-$\delta$ argument. Moreover, by the second part of Lemma \ref{lem:h1_sol}, $f_2(x^{**},x^{**})>0$. This means there must be a point $(x,y)$ on the path such that $f_2(x,y)=0$. Moreover, by the choice of $x_M$, it must hold that $y>y_M=y_E$. 
\end{proof}

Now we argue that the solution we find actually satisfies $R_0/R_1\neq C_0/C_1$.  
\begin{lemma} \label{lem:sym_c3_bound}
If $r_1>c_3>1$ yields $R_0/R_1=C_0/C_1$, then it must hold that $c_3<y_E$. 
\end{lemma}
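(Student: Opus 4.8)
The plan is to use the hypothesis $R_0/R_1=C_0/C_1$ to collapse the pair $(r_1,c_3)$ to a one‑parameter family and then bound $c_3$ explicitly. Recall that the fixpoint produced by \Cref{lem:r1_c3_yields_r0_c0} is specified, up to scaling, by the vectors $(r_0,r_1,c_0,c_3)$, with $R_0/R_1=r_0^d/r_1^d$ and $C_0/C_1=c_0^d$ (since $r_i^d=R_i/R_3$ and $c_i^d=C_i/C_1$). Hence $R_0/R_1=C_0/C_1$ is equivalent to $r_0=r_1c_0$; write $x:=c_0^d=C_0/C_1$. Substituting $r_0^d=r_1^d x$ into the second equation of \eqref{equ:small_r_0_c_0} (with $q_1=q$, $q_2=q_3=0$) and cancelling $r_1^d$ collapses it to $c_0=\frac{xt^2+qt}{xt+q-1}$, and since $c_0=x^{1/d}$ this is exactly the one‑step recursion \eqref{equ:sta_x_sol}. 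Moreover $c_0/t=\frac{c_3-1}{r_1^d-1}+c_3>1$ forces $c_0>t\ge 1$, so $x>1$ and $x$ is the solution of \eqref{equ:sta_x_sol}. Setting $s:=tx+q-1$, \Cref{lem:q00_2spin_eq} then yields $s<d$.

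Next I would isolate a clean formula for $c_3$. Plugging $r_0^d=r_1^d x$ into the second equation of \eqref{equ:small_r_1_c_3} and using $xt+q=s+1$ and $xt+q-1=s$ gives, after cancelling $r_1^d$ from numerator and denominator, $c_3=\frac{r_1^d(s+1)-1}{r_1^d s}$, so that $c_3-1=\frac{r_1^d-1}{r_1^d s}$. Since $r_1>1$ we have $0<\frac{r_1^d-1}{r_1^d}<1$, and therefore $c_3<1+\tfrac1s$. Recalling that $y_E=1+\frac{0.5}{t^{d+1}-1}=1+\frac{0.5}{q^k-q-1}$, it now suffices to establish $s\ge 2(q^k-q-1)$.

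The lower bound on $s$ is the crux. Rewriting \eqref{equ:sta_x_sol} in terms of $s$: since $\frac{xt^2+qt}{xt+q-1}=t\cdot\frac{s+1}{s}$, we get $x=t^d(1+1/s)^d$, hence $tx=t^{d+1}(1+1/s)^d=(q^k-q)(1+1/s)^d$ and so $s=tx+q-1=(q^k-q)(1+1/s)^d+(q-1)$. Because $d>s$ (from \Cref{lem:q00_2spin_eq}) and $s\ge 1$, Bernoulli's inequality gives $(1+1/s)^d\ge(1+1/s)^s\ge 2$; therefore $s\ge 2(q^k-q)+(q-1)=2q^k-q-1\ge 2(q^k-q-1)$. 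Combining with the previous step, $c_3<1+\tfrac1s\le 1+\frac{0.5}{q^k-q-1}=y_E$, which is the claim.

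The step I expect to be the main obstacle — and the one that really uses the regime $d=5q^k$, only through the a priori estimate of \Cref{lem:q00_2spin_eq} — is the bound $(1+1/s)^d\ge 2$: the trivial bound $(1+1/s)^d\ge 1$ gives only $s\ge q^k-1$, which is too weak because $q^k-1<2(q^k-q-1)$ once $q^k$ is moderately large, so one must exploit $s<d$ to push $(1+1/s)^d$ past the constant $2$. Everything else is bookkeeping with the recursions \eqref{equ:small_r_0_c_0}–\eqref{equ:small_r_1_c_3} and \eqref{equ:sta_x_sol}.
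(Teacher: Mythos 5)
Your proof is correct, and at the core it follows the same skeleton as the paper's: use the hypothesis $R_0/R_1=C_0/C_1$ to identify $C_0/C_1$ with the unique symmetric fixpoint $x$ of the one-step recursion \eqref{equ:sta_x_sol}, derive an exact formula for $c_3$ in terms of $r_1$ and $s:=tx+q-1$ from \eqref{equ:small_r_1_c_3}, and deduce $c_3<1+1/s$ (your bound $c_3<1+1/s$ is literally the paper's bound $c_3<u$, since $u=c_0/t=1+1/s$). Where you genuinely diverge is in the final step of bounding the symmetric fixpoint. The paper shows $u<y_E$ by plugging $y_E$ into a one-variable function $h(u)$ whose unique root is $u$, and checking the sign $h(y_E)>0$, which reduces to a concrete polynomial inequality in $q,k$ that is verified by inspection. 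You instead bound $s$ from below: writing the fixpoint equation as $s=(q^k-q)(1+1/s)^d+q-1$ and invoking the a priori estimate $s<d$ from \Cref{lem:q00_2spin_eq} together with Bernoulli's inequality to push $(1+1/s)^d\geq(1+1/s)^s\geq 2$, giving $s\geq 2q^k-q-1\geq 2(q^k-q-1)$, which is exactly what is needed. Your route is a bit cleaner conceptually: it eliminates the explicit "check the polynomial inequality" step, replacing it with a monotonicity argument, at the modest cost of explicitly invoking \Cref{lem:q00_2spin_eq}. Both proofs require $d\geq 5q^k$; yours localises the use of that hypothesis entirely in the appeal to \Cref{lem:q00_2spin_eq}, which makes the logical dependence more transparent.
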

  
\begin{proof}
In this case, $C_0/C_1$ is the solution to the one-step recursion (\ref{equ:sta_x_sol}). Define $u:=c_0/t=(C_0/C_1)^{1/d}/t$. By rewriting (\ref{equ:sta_x_sol}), one can see that $u$ is the unique solution to the following equation
\[
  h(u):=u-\left(1+\frac{1}{t^{d+1}u^d+q-1}\right)=0. 
\]
Note that $u>1$. Using this notation, (\ref{equ:q00_r0_t}) yields that $  c_3=\frac{u(r_1^d-1)+1}{r_1^d}$, giving that  $c_3<u$. 

It remains to show $u<1+\frac{0.5}{t^{d+1}-1}$. 
Note that the system \eqref{equ:sta_x_sol} has a unique fixpoint, namely that $h(u)$ has a unique solution over $u>1$.
Because $h(1)<0$ and $\lim_{u\to \infty}h(u)=\infty$, it suffices to prove $h(1+\frac{0.5}{t^{d+1}-1})>0$. After plugging in the expression and clearing the denominator, it turns out to be equivalent to
\[
  1+3q-2q^k+(q^k-q)\left(1+\frac{1}{2q^k-2(q+1)}\right)^{5q^k}>0
\]
which is true for any $q\geq 4$ and $k\geq 2$. 
\end{proof}

We can finally conclude Lemma \ref{lem:q00_not_max}. 
\begin{proof}[Proof of Lemma \ref{lem:q00_not_max}]
  Lemma \ref{lem:r1_c3_sol} guarantees the existence of $r_1>c_3>y_E$ satisfying \eqref{equ:r1_c3_sys} with $x=r_1$ and $y=c_3$. By Lemma \ref{lem:r1_c3_yields_r0_c0}, given $r_1$ and $c_3$, we can choose $R_0,R_1,R_3,C_0,C_1,C_3$ to satisfy \eqref{equ:phi_deri_zero_R} and \eqref{equ:phi_deri_zero_C}. Lemma \ref{lem:sym_c3_bound} implies that for this choice, $R_0/R_1\neq C_0/C_1$. Moreover, the $2$-spin system regarding $R_0/R_1$ and $C_0/C_1$ lies in non-uniqueness region, and hence the values of $R_0/R_1$ and $C_0/C_1$ are unique up to the swap of $R$ and $C$ (see Section \ref{sec:q00_stable}). 
  
  Because $R_3$ and $C_3$ are subject to (\ref{equ:phi_deri_zero_R}) and (\ref{equ:phi_deri_zero_C}), the first part of the proof in Lemma \ref{lem:dphi_dqi_expr} still holds, even when $q_3=0$ (since we only require (\ref{equ:r_i_d1_d})). Therefore the expression of $\partial\overline{\Phi^S}/\partial q_3$ still applies. Based on this, by going through the proof of Lemma \ref{lem:3_max_maximizer} (a), we can see (\ref{equ:dq1_dq3}) still holds, i.e.,
  \[
    \sgn\left(\frac{\partial\overline{\Phi^S}}{\partial q_1}-\frac{\partial\overline{\Phi^S}}{\partial q_3}\right)=-\sgn(r_1-c_3). 
  \]
   Hence under this choice, $\partial\overline{\Phi^S}/\partial q_1-\partial\overline{\Phi^S}/\partial q_3<0$. Now consider a new $\mathbf{q}$ vector $(q-\varepsilon,0,\varepsilon)$. When $\varepsilon$ is small enough, the value of $\overline{\Phi^S}$ increases, and feasibility in (\ref{equ:phi_cond}) still holds. Because the value of $\overline{\Phi^S}$ at $\mathbf{q}=(q,0,0)$ is irrelavent to $R_3,C_3$, and the value of $\overline{\Phi^S}$ is the same for all fixpoints of type $(q,0,0)$ and $R_0/R_1\neq C_0/C_1$, it means $\overline{\Phi}$ does not take the maximum at fixpoints of such type. 
\end{proof}

\begin{remark}
  Our approach in fact jumps out of the local area around the fixpoint. Intuitively, the argument considers a new ``imaginary'' fixpoint where $\varepsilon$ portion of the $q$ entries $R_1$ (resp. $C_1$) is changed into $R_3$ (resp. $C_3$, recall that $R_3$ and $C_3$ are bounded away from $R_1$ and $C_1$), and compares its value of the original induced matrix norm with the one of $(R_0,R_1,\cdots,R_1,C_0,C_1,\cdots,C_1)$. This is another reason why optimizing $\overline{\Phi^S}$ over all nonnegative $\mathbf{q}$'s instead of integer $\mathbf{q}$'s helps a lot. 
\end{remark}

\section{Remaining Proofs} \label{sec:fixpoint_remaining_proofs}

\subsection{Proof of Lemma \ref{lem:Potts}}\label{sec:Potts}

We will consider the following computational problem.
Given a graph $G=(V,E)$, for a $q$-colouring $\sigma:V\rightarrow\{1,\dots,q\}$,
let $\Mono(G,\sigma)$ be the number of monochromatic edges under $\sigma$.

\prob{\MaxqCut{}}{A undirected graph $G=(V,E)$}{$\max_{\sigma:V\rightarrow\{1,\dots,q\}}\{\abs{E}-\Mono(G,\sigma)\}$}

Let \MaxCut{} be the $q=2$ version of \MaxqCut{}.
Alimonti and Kann \cite{AK00} showed the following. 
\begin{proposition}  \label{prop:max-cut}
  There is a constant $\delta_0>0$ such that,
  there is no randomized polynomial-time approximation algorithm for \MaxCut{} in cubic graphs with relative error $\delta_0$ unless $\NP=\RP$.
\end{proposition}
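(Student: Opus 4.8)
To establish this APX-hardness of \MaxCut{} on cubic graphs (due to Alimonti and Kann \cite{AK00}), I would give an approximation-preserving reduction (an \emph{L-reduction}) from a family of \MaxCut{} instances that already has a hard constant-factor gap. I would take as the starting point the APX-hardness of \MaxCut{} on graphs of some fixed maximum degree $B$, which follows from the PCP theorem together with the classical \textsc{Max-SNP}-hardness of \MaxCut{}. Two elementary facts make such a reduction possible: a uniformly random $2$-partition cuts half the edges in expectation, so $\Opt(G)\ge |E(G)|/2$ for every graph $G$; and the construction below blows up the instance only by a linear factor in $|E(G)|$, so additive error terms proportional to the number of edges are harmless.

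The heart of the reduction is a degree-reduction gadget. I would replace each vertex $v$ of degree $d(v)\le B$ by a small graph $X_v$ of maximum degree $3$ containing $d(v)$ designated \emph{port vertices}, each port having exactly two neighbours inside $X_v$, and re-attach the $j$-th edge at $v$ to the $j$-th port of $X_v$. The basic building block forcing two vertices onto the same side of a cut is a path of length two --- its unique maximum cut cuts both edges, so the endpoints must agree --- and chaining such paths into a closed ``necklace'' through $v_1,\dots,v_{d(v)}$ forces all ports of $X_v$ onto one side in any maximum cut. For the reduction to be an L-reduction the gadget must satisfy: \emph{(P1)} there is a cut of $X_v$ putting all its ports on a prescribed side and cutting all but $O(1)$ of its internal edges; and \emph{(P2)} any cut of $X_v$ can be re-routed into a cut as in (P1) losing at most an absolute constant number of internal cut-edges per port/internal vertex whose side is changed --- i.e.\ $X_v$ has good edge-expansion. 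Making the necklace genuinely $3$-regular while retaining (P1) and (P2) is the Papadimitriou--Yannakakis / Berman--Karpinski ``expander replacement'' that \cite{AK00} carries out.

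Given such a gadget, the output graph $G'$ is cubic with $|E(G')|=\Theta_B(|E(G)|)$, and its edges split into one \emph{cross-edge} per edge of $G$ (joining two ports) and the internal gadget edges. For completeness, a cut of $G$ of value $W$ lifts to a cut of $G'$ by colouring each $X_v$ via (P1) with all its ports on $v$'s side: the cross-edges contribute exactly $W$ and the internal edges a fixed amount $M$, so $\Opt(G')\ge\Opt(G)+M$; moreover $\Opt(G')\le|E(G')|=\Theta_B(|E(G)|)\le\Theta_B(\Opt(G))$ using $\Opt(G)\ge|E(G)|/2$. For soundness, from a cut of $G'$ of value $\Opt(G')-k$ one applies (P2) to every gadget; by edge-expansion the total number of internal cut-edges lost is $O(k)$, so the resulting consistent cut of $G'$ still has value $\ge\Opt(G')-O(k)$, and restricting it to the ports yields a cut of $G$ of value $\ge\Opt(G)-O(k)$. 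These two inequalities give the L-reduction, so the hard relative-error constant $\delta_0$ for cubic graphs is the hard constant for the starting family divided by a factor depending only on $B$ and the gadget; since the reduction is deterministic, a randomized algorithm approximating cubic \MaxCut{} within $\delta_0$ would put the underlying \NP-hard gap problem in $\RP$, whence $\NP=\RP$. The one genuinely non-trivial point --- and the main obstacle --- is supplying a gadget meeting (P2) with an absolute expansion constant uniform over all boundary conditions on the ports; this is where the work of \cite{AK00} goes.
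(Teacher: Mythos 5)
The paper does not prove this proposition; it is taken verbatim from Alimonti and Kann \cite{AK00} as a black box, so there is no internal proof to compare your sketch against. Your outline does follow the standard route taken in \cite{AK00} (and, before it, Papadimitriou--Yannakakis): L-reduce from \MaxCut{} on bounded-degree graphs using degree-reduction gadgets, then appeal to the \textsc{Max-SNP}-hardness of \MaxCut{} and the PCP theorem. That part is fine.

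One technical point in your sketch is misleading, though, and it is worth flagging because it is precisely the crux of the argument. The ``necklace'' gadget --- the $2d(v)$-cycle $v_1\,w_1\,v_2\,w_2\cdots v_{d(v)}\,w_{d(v)}\,v_1$ --- does \emph{not} satisfy your property (P2). It does force all ports onto one side in an exact maximum cut (the unique max cut of an even cycle is the bipartition), but it has no edge-expansion: flipping a contiguous arc containing $j$ of the ports changes the cut status of only the two edges at the ends of the arc, so a cut that loses $2$ internal edges can already misplace $\Theta(d(v))$ ports. Consequently the soundness direction of the L-reduction --- ``total internal cut-edges lost is $O(k)$ implies only $O(k)$ ports are misplaced'' --- fails outright for the necklace. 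The expander replacement is therefore not a cosmetic step that ``makes the necklace $3$-regular while retaining (P1) and (P2)''; it is the step that supplies (P2) in the first place. As long as you replace the phrase ``retaining (P1) and (P2)'' with ``so as to achieve (P2), which the bare cycle lacks,'' the sketch correctly reflects what \cite{AK00} does; without that correction a reader would be led to believe the elementary necklace already yields an L-reduction, which it does not.
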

Furthermore, Kann, Khanna, Lagergren, and Panconesi \cite{KKLP97} showed the following reduction.

\begin{proposition}  \label{prop:max-q-cut}
  For any $0\le\delta\le 1$,
  if \MaxqCut{} in $\left(\frac{\Delta(q+1)}{2}+\frac{q-1}{2}\right)$-regular graphs can be approximated within relative error $\frac{\delta}{2(q+1)}$ in polynomial-time,
  then \MaxCut{} can be approximated within $\delta$ in polynomial-time for $\Delta$-regular graphs.
\end{proposition}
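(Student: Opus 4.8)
The plan is to give a polynomial-time gap-preserving reduction. Starting from a $\Delta$-regular instance $G=(V,E)$ of \MaxCut{}, I would build a $\Delta'$-regular instance $G'=(V',E')$ of \MaxqCut{} with $\Delta'=\tfrac{\Delta(q+1)}{2}+\tfrac{q-1}{2}$, together with a polynomial-time decoding map sending any $q$-colouring of $G'$ to a $2$-colouring of $G$. The construction keeps all edges of $G$ and attaches to each original vertex $v\in V$ a fixed ``domination gadget'' — a constant number (in $q,\Delta$) of fresh clique-type vertices joined to $v$ and to one another — engineered so that it is never strictly advantageous, in a $q$-colouring of $G'$, for $v$ to take a colour outside a distinguished pair $\{1,2\}$; then disjoint padding components are added to raise every degree (original, gadget, padding) to exactly $\Delta'$. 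The gadget is designed to have two properties: (i) in any $q$-colouring of $G'$ the gadget-and-padding edges contribute at most a fixed number $A$ of cut edges, independently of how $V$ is coloured, and this maximum is attainable simultaneously with any prescribed $2$-colouring of $V$; and (ii) the total number of gadget-and-padding edges is small relative to $|E|$, which is precisely what forces the degree to blow up only by the factor $\tfrac{q+1}{2}$ as stated (each original vertex gains exactly $\tfrac{(q-1)(\Delta+1)}{2}$ new incident edges).

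Given this, the proof splits into completeness and soundness. For completeness, a cut of $G$ of value $c$ is lifted by colouring $V$ with $\{1,2\}$ according to the cut and completing the gadgets and padding to attain their maximal contribution $A$, yielding a $q$-colouring of $G'$ of value $A+c$; hence $\mathrm{OPT}_q(G')\ge A+\mathrm{OPT}_{\mathrm{cut}}(G)$. For soundness, given any $q$-colouring $\sigma'$ of $G'$ of value $w$, I would ``clean it up'' in polynomial time: processing original vertices one at a time, recolour $v$ to the better of colours $1,2$ and re-optimise its (bounded-size, disjoint) gadget; by the local domination statement in property (i) this never decreases the value. The cleaned colouring uses only $\{1,2\}$ on $V$, so its value equals the cut value $c$ of its restriction to $V$ plus the gadget/padding contribution, which is at most $A$; thus $c\ge w-A$. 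Taking $w=\mathrm{OPT}_q(G')$ gives $\mathrm{OPT}_q(G')\le A+\mathrm{OPT}_{\mathrm{cut}}(G)$, so in fact $\mathrm{OPT}_q(G')=A+\mathrm{OPT}_{\mathrm{cut}}(G)$, and in general the decoding map turns a $q$-colouring of value $w$ into a cut of $G$ of value $\ge w-A$ in polynomial time.

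It then remains to track the relative error, which is where the constant $2(q+1)$ comes from. Write $M=\mathrm{OPT}_{\mathrm{cut}}(G)$; since a uniformly random $2$-colouring cuts half the edges, $M\ge|E|/2=\Delta n/4$. The gadget is sized (this is the quantitative content of property (ii)) so that $A\le(2q+1)M$, whence $A+M\le 2(q+1)M$. Now suppose \MaxqCut{} on $\Delta'$-regular graphs admits a polynomial-time algorithm with relative error $\varepsilon':=\tfrac{\delta}{2(q+1)}$. Run it on $G'$ to obtain $\sigma'$ with $\mathrm{val}(\sigma')\ge(1-\varepsilon')\,\mathrm{OPT}_q(G')=(1-\varepsilon')(A+M)$, then decode to a cut of $G$ of value at least $\mathrm{val}(\sigma')-A\ge(1-\varepsilon')(A+M)-A=M-\varepsilon'(A+M)\ge M-\varepsilon'\cdot 2(q+1)M=(1-\delta)M$. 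Since this decoded cut is feasible, its value is also at most $M$, so it is a relative-error-$\delta$ solution for \MaxCut{} on $\Delta$-regular graphs, as required.

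The \textbf{main obstacle} is the combination of the local domination statement with the tight size accounting. One must choose the per-vertex gadget so that, \emph{simultaneously}, (a) shifting an original vertex to a colour outside $\{1,2\}$ never increases the total cut — a local case analysis on the gadget that must beat the $O(\Delta)$ gain a deviation could produce on the original edges incident to $v$ — and (b) the gadget is nonetheless small enough that the base value $A$ is at most $(2q+1)M$, so that the degree blow-up is exactly $\tfrac{\Delta(q+1)}{2}+\tfrac{q-1}{2}$ and the error factor is exactly $2(q+1)$ rather than something larger. A secondary, more routine nuisance is to choose the padding so that every vertex of $G'$ ends up with precisely degree $\Delta'$ and $\Delta'|V'|$ is even (both are automatic in the application, where $\Delta=3$), and to check that the padding contributes a fixed, easily computed amount to $\mathrm{OPT}_q(G')$ that is absorbed into $A$.
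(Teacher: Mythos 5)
Your outer error-tracking (the lower bound $M\ge\Delta n/4$ and the inequality $A+M\le 2(q+1)M$) is fine, but the central gadget you propose cannot exist, so the soundness direction collapses. In \MaxqCut{} the colours are fully interchangeable labels: for any graph and any permutation $\pi$ of $[q]$, a colouring $\sigma$ and the relabelling $\pi\circ\sigma$ cut exactly the same edges. Your gadget $H_v$ shares only the single vertex $v$ with the rest of $G'$, so for every colour $i\in[q]$ the maximum number of gadget edges cut subject to $\sigma(v)=i$ is the \emph{same} number and is attainable for \emph{every} choice of $i$, not just $i\in\{1,2\}$. Hence there is no ``local domination'': a per-vertex gadget touching only fresh vertices cannot make any distinguished colour pair preferred for $v$, and your property (i) holds for every $q$-colouring of $V$, which empties it of content. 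The cleanup step can therefore strictly decrease the cut: if $\sigma'(v)=3$ while $v$'s $\Delta$ neighbours are split $a$ of colour $1$ and $b$ of colour $2$ with $a+b=\Delta$, all $\Delta$ original edges at $v$ are cut; recolouring $v$ to the better of $\{1,2\}$ removes $\min(a,b)$ (up to $\Delta/2$) of them, while re-optimising $H_v$ gains nothing by the symmetry just described. With your construction the optimum of $G'$ equals $A$ plus the optimal $q$-cut of $G$ (not the optimal $2$-cut), so the decoded assignment is only a near-optimal $q$-colouring of $G$.

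Any correct reduction must break the colour symmetry globally rather than locally, which is exactly what the reduction the paper invokes (Kann, Khanna, Lagergren, and Panconesi \cite{KKLP97}) does. There, each original vertex $v$ is replaced by a \emph{clique} $C_v$ of size $s=(q+1)/2$ (for odd $q$; $q/2$ for even $q$), each original edge $(u,v)$ is replaced by the complete bipartite graph between $C_u$ and $C_v$, and the clique edges are weighted proportionally to $d_G(v)$ so that in a near-optimal colouring each $C_v$ receives pairwise-distinct colours. A vertex of $G$ is thus represented by a colour \emph{set} of roughly half the palette, and the two sides of a $2$-cut are encoded by whether $C_u$ and $C_v$ use overlapping or disjoint palettes --- a global, not per-vertex, tie-breaking. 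The resulting degree is $(s-1)+\Delta s=\tfrac{\Delta(q+1)}{2}+\tfrac{q-1}{2}$, and the loss accounting of \cite{KKLP97}, re-run with the modified parameters (the paper records that $(\alpha,\beta)$ changes from $(\tfrac{q(q-1)}{2},\tfrac{2}{q})$ to $(\tfrac{(q+1)^2}{2},\tfrac{2}{q+1})$), gives the error $\tfrac{\delta}{2(q+1)}$; the edge weights are then removed via the Crescenzi--Silvestri--Trevisan equivalence of weighted and unweighted \MaxqCut{} \cite{CST01}.
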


The original reduction in \cite[Theorem 1]{KKLP97} works for only even $q$ and gives relative error lower bound $\frac{\delta}{2(q-1)}$ instead.
For odd $q$ they used a different reduction to achieve the same lower bound but it does not keep the degrees bounded.
Here we briefly describe how to modify the reduction in \cite[Theorem 1]{KKLP97} such that it works for odd $q$ as well, albeit with a slightly worse relative error lower bound $\frac{\delta}{2(q+1)}$.
For odd $q$, given an instance $G=(V,E)$ for \MaxCut{},
we replace each vertex $v\in V$ by a clique $C_v$ of size $\frac{q+1}{2}$ (instead of $\frac{q}{2}$ in the original reduction),
and replace each edge $(u,v)\in E$ by a bipartite complete graph between $C_v$ and $C_u$.
Moreover, give weight $\frac{q+1}{q-1}d_G(v)$ for edges inside $C_v$ (instead of $d_G(v)$) and keep weight $1$ for all other edges.
It can be verified straightforwardly that the proof still works, except that the parameters $\alpha$ and $\beta$ changed from $(\frac{q(q-1)}{2},\frac{2}{q})$ to $(\frac{(q+1)^2}{2},\frac{2}{q+1})$,
which leads to the worse lower bound $\frac{\delta}{2(q+1)}$.
Finally, Crescenzi, Silvestri, and Trevisan \cite{CST01} showed that for a general class of combinatorial optimization problem, 
including \MaxqCut{}, the weighted and unweighted versions have the same approximation complexity.

\begin{lemma}
  There is a constant $0<C_0<1$ such that for any $q\ge 2$, 
  there is no FPRAS for the $q$-state Potts model with weights $B<q^{-1/C_0}$ in $(2q+1)$-regular graphs unless $\NP=\RP$.
  \label{lem:Potts-bounded}
\end{lemma}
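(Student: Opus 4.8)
The plan is to reduce the approximation of $Z_B$ to approximating \MaxqCut{} on $(2q+1)$-regular graphs, exploiting the elementary fact that for small $B$ the Potts partition function is dominated by the maximum-weight colourings, that is, those with the fewest monochromatic edges. The hardness input comes from composing Propositions~\ref{prop:max-cut} and~\ref{prop:max-q-cut}: instantiating Proposition~\ref{prop:max-q-cut} with $\Delta=3$, so that $\frac{\Delta(q+1)}{2}+\frac{q-1}{2}=2q+1$, and $\delta=\delta_0$, and noting that the reduction there is deterministic, a randomized polynomial-time algorithm approximating \MaxqCut{} on $(2q+1)$-regular graphs within relative error $\eps_q\defeq\frac{\delta_0}{2(q+1)}$ would yield a randomized polynomial-time algorithm approximating \MaxCut{} on cubic graphs within relative error $\delta_0$, which by Proposition~\ref{prop:max-cut} forces $\NP=\RP$. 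So it is enough to show that, once $B$ is small enough, an FPRAS for $Z_B$ on $(2q+1)$-regular graphs gives such an approximation scheme for \MaxqCut{}.

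For the reduction, recall from \eqref{eqn:Z_B} that with the antiferromagnetic Potts interaction matrix the weight of a $q$-colouring $\sigma$ of $G=(V,E)$ is $B^{\Mono(G,\sigma)}$, so that $Z_B(G)=\sum_{j\geq 0}N_j(G)\,B^j$, where $N_j(G)$ is the number of colourings with exactly $j$ monochromatic edges. Put $m^*\defeq\min_\sigma\Mono(G,\sigma)=\abs E-\MaxqCut(G)$. Since $0<B<1$, every colouring contributes at most $B^{m^*}$ to $Z_B(G)$ while at least one contributes exactly $B^{m^*}$, so $B^{m^*}\leq Z_B(G)\leq q^{\abs V}B^{m^*}$; taking logarithms and dividing by $\log B<0$,
\[
0\ \leq\ m^*-\frac{\log Z_B(G)}{\log B}\ \leq\ \frac{\abs V\log q}{\log(1/B)}.
\]
Given an FPRAS for $Z_B$, I would run it with fixed accuracy $\eps=\tfrac12$ to obtain, with probability at least $\tfrac34$, a value $\widehat Z$ with $\bigl|\log\widehat Z-\log Z_B(G)\bigr|\leq\log 2$, and output $\widetilde M\defeq\abs E-\frac{\log\widehat Z}{\log B}$ as an estimate of $\MaxqCut(G)$. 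The displayed bounds, together with $\log 2\leq\abs V\log q$, then give $\bigl|\widetilde M-\MaxqCut(G)\bigr|\leq\frac{2\abs V\log q}{\log(1/B)}$.

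It remains to pick $C_0$ so that this additive error lies below $\eps_q\MaxqCut(G)$ whenever $B<q^{-1/C_0}$. As a uniformly random colouring has $\abs E/q$ monochromatic edges in expectation, $\MaxqCut(G)\geq\abs E(1-1/q)=\frac{(2q+1)(q-1)}{2q}\abs V$, and an elementary check gives $\frac{(2q+1)(q-1)}{4q(q+1)}\geq\frac{5}{24}$ for all $q\geq 2$, so $\eps_q\MaxqCut(G)\geq\frac{5\delta_0}{24}\abs V$. On the other hand, $B<q^{-1/C_0}$ forces $\log(1/B)>\frac{1}{C_0}\log q$, hence $\frac{2\abs V\log q}{\log(1/B)}<2C_0\abs V$. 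Choosing $C_0$ to be a small absolute constant in $(0,1)$ with $2C_0<\frac{5\delta_0}{24}$, for instance $C_0=\delta_0/20$, we conclude $\bigl|\widetilde M-\MaxqCut(G)\bigr|<\eps_q\MaxqCut(G)$. Thus the procedure above approximates \MaxqCut{} on $(2q+1)$-regular graphs within relative error $\eps_q$ in randomized polynomial time, contradicting the hardness obtained in the first paragraph unless $\NP=\RP$, which proves the lemma.

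This is the ``easy'' direction, where the partition function concentrates on optimal configurations, and I do not expect a genuine obstacle. The only step that needs care is the error budget in the third paragraph: the additive slack $\frac{\abs V\log q}{\log(1/B)}$ incurred in reading $m^*$ off $\log Z_B(G)$ must be dominated by $\eps_q\MaxqCut(G)$. This succeeds precisely because \MaxqCut{} on bounded-degree graphs is linear in $\abs V$, so that even a constant-factor approximation of $Z_B$ is enough; and balancing the two sides is exactly what fixes the order $1/C_0=\Theta(1/\delta_0)$ of the exponent in the threshold $B<q^{-1/C_0}$.
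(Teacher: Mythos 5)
Your proof is correct, and the error budget you were worried about does close. The paper's proof of this lemma also reduces to \MaxqCut{} via Propositions~\ref{prop:max-cut} and \ref{prop:max-q-cut} and relies on the same two inequalities — $B^{m^*}\le Z_B(G)\le q^n B^{m^*}$ and $\Opt\ge(1-1/q)m=\frac{(2q+1)(q-1)n}{2q}$ — and the same balancing of $n\log q$ against $\Opt\cdot\log(1/B)$, so the quantitative heart is identical. The route differs in one real way: the paper converts the FPRAS into an approximate \emph{sampler} (citing \cite{JPV20} for the counting-to-sampling reduction in the local lemma setting), draws a colouring proportional to its weight, and bounds the probability that the sampled colouring's cut falls below $(1-\eps_0)\Opt$; you instead work directly with the oracle's estimate of $\log Z_B(G)$ and read off $m^*$ as $\frac{\log\widehat{Z}}{\log B}$ up to additive error $O\bigl(\tfrac{n\log q}{\log(1/B)}\bigr)$. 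Your version is more self-contained (no sampling reduction is needed) and conceptually cleaner; the paper's version is the one more commonly deployed in this literature. Both yield a valid constant: the paper takes $C_0=\frac{5\delta_0}{24}$ while you pay an extra factor of~$2$ in the additive error (from the FPRAS relative error plus the rounding term) and thus need $2C_0<\frac{5\delta_0}{24}$, so your choice $C_0=\delta_0/20$ is correct but not tight. One small omission: Proposition~\ref{prop:max-cut} concerns randomised algorithms with a fixed success probability, so you should note (as the paper does) that the $3/4$ success of a single FPRAS call can be boosted by standard repetition; this is routine and not a gap.
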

\begin{proof}
  Let $C_0\defeq\frac{5\delta_0}{24}$, where $\delta_0$ is from \Cref{prop:max-cut}.
  We claim that for any $q\ge 2$,
  an FPRAS for $Z_B(G)$ with weight $B<q^{-1/C_0}$ in graphs with degree bound $2q+1$ implies 
  an efficient approximation of \MaxqCut{} within relative error $\eps_0\defeq\frac{\delta_0}{2(q+1)}$ in graphs with the same degree bound.
  Then \Cref{prop:max-cut} and \Cref{prop:max-q-cut} (with $\Delta=3$) imply the lemma.

  Given an instance $G=(V,E)$ to \MaxqCut{},
  assume the maximum value of $q$-cut is $\Opt$.
  Let $n\defeq\abs{V}$ and $m\defeq\abs{E}$.
  Then $m=\frac{(2q+1)n}{2}$.
  If we had an FPRAS for the $q$-state Potts model,
  then we can efficiently sample a colouring proportional to its weight.
  (In the local lemma setting, one such reduction is given in \cite{JPV20}.)
  The probability that the cut value of the colouring is less than $(1-\eps_0)\Opt$ is at most
  \begin{align*}
    \frac{q^nB^{m-(1-\eps_0)\Opt}}{B^{m-\Opt} + q^n B^{m-(1-\eps_0)\Opt}}.
  \end{align*}
  In particular, this probability is at most $1/2$ if 
  \begin{align*}
    B^{m-\Opt} \ge q^n B^{m-(1-\eps_0)\Opt},
  \end{align*}
  which is equivalent to $B^{-\eps_0\Opt}\ge q^n$.
  On the other hand,
  notice that a uniformly at random colouring achieves cut value $(1-\frac{1}{q})m$ in expectation.
  Thus, $\Opt\ge(1-\frac{1}{q})m=\frac{(2q+1)(q-1)n}{2q}$.
  Consequently, for any $q\ge 2$, $B^{-\eps_0\Opt}\ge B^{-\delta_0 n \frac{(2q+1)(q-1)}{4q(q+1)}}\ge B^{-C_0n}$,
  since $C_0=\frac{5\delta_0}{24}\le \frac{(2q+1)(q-1)\delta_0}{4q(q+1)}$ for $q\ge 2$.
  Thus if $B<q^{-1/C_0}$, $B^{-\eps_0\Opt}\ge q^n$ as desired.
  Standard methods can boost the success probability from $1/2$ to arbitrarily close to $1$.
\end{proof}

Now we are ready to show \Cref{lem:Potts}.

\begin{proof}[Proof of \Cref{lem:Potts}]
  Given a $(2q+1)$-regular graph $G=(V,E)$, we replace each edge by $s\defeq\floor{\frac{\Delta}{2q+1}}$ parallel edges to get a new graph $G'$ whose degree is at most $(2q+1)s\le \Delta$.
  As $q\ge 2$, $C_1\ge 5$, and $\Delta\ge 2C_1q\ln q$, $s\ge \frac{\Delta}{2q+1} - 1 > 0.63\frac{\Delta}{2q+1}$.

  If we have a Potts model with edge weight $B$ on $G'$,
  then effectively, this is a Potts model on $G$ with $B'=B^s$.
  Thus if $B < 1 - \frac{C_1 q\ln q}{\Delta}$ for $C_1 = 5/C_0$,
  where $C_0$ is from \Cref{lem:Potts-bounded},
  then
  \begin{align*}
    B^{-sC_0} & > \left(1 + \frac{C_1 q\ln q}{\Delta}\right)^{sC_0} \ge e^{\frac{0.8 s C_0C_1 q\ln q}{\Delta}} \ge  e^{\frac{0.8*0.63 C_0C_1 q\ln q}{2q+1}} > e^{\frac{C_0C_1 q\ln q}{2(2q+1)}} \ge q^{0.2C_0C_1} \ge q,
  \end{align*}
  where in the first line we used $1+x\ge e^{0.8 x}$ for $x\le 0.5$.
  Thus this parallel construction can reduce from the Potts model satisfying the conditions of \Cref{lem:Potts-bounded},
  which is $\NP$-hard to approximate.
\end{proof}

\subsection{Proof of Lemmas \ref{lem:dphi_dqi_expr} and \ref{lem:maximal_to_fixpoint}} \label{sec:proof_d_expr_equations}

\begin{proof}[Proof of Lemmas \ref{lem:dphi_dqi_expr} and \ref{lem:maximal_to_fixpoint}]
We first prove Lemma \ref{lem:dphi_dqi_expr}. Let 
\begin{gather*}
    S:=R_0C_0t^2+\left(\mbox{$\sum_{j=1}^{3}$\,}C_jq_j\right)R_0t+\left(\mbox{$\sum_{j=1}^{3}$\,}R_jq_j\right)C_0t+\left(\mbox{$\sum_{j=1}^{3}$\,}R_jq_j\right)\left(\mbox{$\sum_{j=1}^{3}$\,}C_jq_j\right)-\left(\mbox{$\sum_{j=1}^{3}$\,}R_jC_jq_j\right),\\
    R:=R_0^{(d+1)/d}+\left(\mbox{$\sum_{j=1}^{3}$\,}R_j^{(d+1)/d}q_j\right),\quad C:=C_0^{(d+1)/d}+\mbox{$\sum_{j=1}^{3}$\,}C_j^{(d+1)/d}q_j
\end{gather*}
By direct calculation, 
\begin{align*}
  \frac{\partial\overline{\Phi^S}}{\partial q_i }=\tfrac{(d+1)}{S}\left[R_iC_0t+R_0C_it-R_iC_i+R_i\left(\mbox{$\sum_{j=1}^{3}$\,}C_jq_j\right)+C_i\left(\mbox{$\sum_{j=1}^{3}$\,}R_jq_j\right)\right]-d\big(\tfrac{ R_i^{(d+1)/d}}{R}+\tfrac{ C_i^{(d+1)/d}}{C}\big).
\end{align*}
Note that if $q_i>0$ and $R_i\neq 0$, then it must holds that $\partial\overline{\Phi^S}/\partial R_i=0$, and hence (\ref{equ:phi_deri_zero_R}) applies, which gives 
\begin{equation} \label{equ:r_i_d1_d}
\begin{aligned}
  R_0^{(d+1)/d}&\propto R_0(C_0t^2+(q_1C_1+q_2C_2+q_3C_3)t),\quad 
  R_i^{(d+1)/d}&\propto R_i(C_0t+q_1C_1+q_2C_2+q_3C_3-C_i).
\end{aligned}
\end{equation}
Therefore, $\frac{R_i^{(d+1)/d}}{R}=\frac{R_iC_0t+R_i\left(\sum_{j=1}^{3}C_jq_j\right)-R_iC_i}{S}$ and, similarly, $\frac{C_i^{(d+1)/d}}{C}=\frac{C_iR_0t+C_i\left(\sum_{j=1}^{3}R_jq_j\right)-R_iC_i}{S}$. Note that these two equations also hold trivially when $R_i=0$ or $C_i=0$, respectively. Putting these together yields the desired expression for $\frac{\partial\overline{\Phi^S}}{\partial q_i}$ in Lemma~\ref{lem:dphi_dqi_expr}.  

For the second part of Lemma~\ref{lem:dphi_dqi_expr}, without loss of generality, suppose $q_1,q_2>0$ and $\partial\overline{\Phi^S}/\partial q_1-\partial\overline{\Phi^S}/\partial q_2>0$. Take a positive $\varepsilon$ and consider $(q_1+\varepsilon,q_2-\varepsilon,q_3)$. When $\varepsilon$ is small enough, the entries $q_1+\varepsilon$ and $q_2-\varepsilon$ are positive, the value of $\overline{\Phi^S}$ increases, and feasibility in (\ref{equ:phi_cond}) still holds. Hence $(q_1,q_2,q_3)$ does not maximize $\overline{\Phi}$.

Finally we prove Lemma \ref{lem:maximal_to_fixpoint}. Here we have an extra condition that $\mathbf{q}$ is $m$-maximal. This means there exists a maximizer $\mathbf{r},\mathbf{c}$ such that for every $i\neq j$ such that $q_i,q_j>0$, it holds that $R_i\neq R_j$ and $C_i\neq C_j$. From  \eqref{equ:phi_deri_zero_R} and \eqref{equ:phi_deri_zero_C}, we obtain that  $\mathbf{r},\mathbf{c}$ specify an $m$-supported fixpoint of the tree recursion~\eqref{equ:recursion}. 
\end{proof}

\subsection{Proof of Lemma \ref{lem:bad_triple}} \label{sec:bad_triple}

\begin{proof}[Proof of Lemma \ref{lem:bad_triple}] We first show that the maximum in (\ref{equ:phi_q_def}) cannot be achieved at $R_0=0$ or $C_0=0$. Assume otherwise. If $R_0=0$, we have that 
\[
\frac{\partial\overline{\Phi^S}}{\partial R_0}\bigg|_{R_0=0}=\frac{(d+1)t}{S}\cdot(C_0t+q_1C_1+q_2C_2+q_3C_3)>0
\]
where $S>0$. 
Therefore, increasing $R_0$ by a sufficiently small amount increases also the value of $\overline{\Phi^S}$, contradiction. An analogous argument applies for $C_0$. 

Next, we show that at least one of $R_1,R_2,R_3,C_1,C_2,C_3$ are non-zero. Assume otherwise, then

\[
  \frac{\partial\overline{\Phi^S}}{\partial R_1}\bigg|_{R_1=0}=\frac{d+1}{S}\cdot(C_0t+(q_1-1)C_1+q_2C_2+q_3C_3)=\frac{d+1}{S}C_0t>0, 
\]
and therefore we obtain a contradiction as above.

Consider now a triple $(q_1,q_2,q_3)$ with positive entries, and assume w.l.o.g.~that the maximum is taken when $R_1=0$.  We claim that $C_1>0$. Otherwise, by the first part of Lemma \ref{lem:dphi_dqi_expr}, 
we have $\partial\overline{\Phi^S}/\partial q_1=0$, 
and $\partial\overline{\Phi^S}/\partial q_i>0$ for some $i\in\{2,3\}$ since we cannot have $R_2=R_3=C_2=C_3=0$. This yields a contradiction to the second part of Lemma \ref{lem:dphi_dqi_expr}, and therefore $C_1>0$. Observe also that
\[
\frac{\partial\overline{\Phi^S}}{\partial R_1}\bigg|_{R_1=0}=\frac{d+1}{S}\cdot(C_0t+(q_1-1)C_1+q_2C_2+q_3C_3),
\]
so by the argument above we conclude that $C_0t+(q_1-1)C_1+q_2C_2+q_3C_3\leq 0$ and therefore $q_1<1$ (since $C_0,C_1>0$). This yields that
\[
C_1\geq\frac{1}{1-q_1}(C_0t+q_2C_2+q_3C_3)>C_0.
\]
On the other hand, since both of  $C_0,C_1$ are nonzero,  to achieve the maximum, \eqref{equ:phi_deri_zero_C} must hold for $i=1$, which gives $C_0>C_1$, contradiction. Therefore we have $R_1>0$ for triples with positive entries.

Exactly the same argument works for triples of type $(q_1,0,q_3)$ with $q_1,q_3>0$. For the case $(q,0,0)$, note that $q\geq 4>1$, which means the partial derivatives with respect to both $R_1$ and $C_1$ are positive at $R_1=0$ and $C_1=0$ respectively, and hence the maximum cannot be taken at either $R_1=0$ or $C_1=0$. 

To prove the final part of the lemma, suppose that $q_i,q_j>0$. Since $R_i,C_i,R_j,C_j>0$,  we have that \eqref{equ:phi_deri_zero_R} and \eqref{equ:phi_deri_zero_C} apply, which yields that $R_i=R_j$ iff $C_i=C_j$. 
\end{proof}

\subsection{Proof of Lemma \ref{lem:r1_unique} and Lemma \ref{lem:h2_unique}} \label{sec:proof_unique_root}

\begin{proof}[Proof of Lemma \ref{lem:r1_unique}]
We put the expression of $h$ here for convenient reference. 
\begin{equation}\label{equ:r1_unique_expr}
h(x):=\left(\frac{x^{d+1}-1}{x^d-1}\right)^{d}t^{d+1}-\frac{x^d-1}{x-1}+q'+(q'-1)x^d.
\end{equation}
We have that $h$ is continuous over $x\in (1,+\infty)$ and $\lim_{x\to+\infty}h(x)=+\infty$. Using that $t^{d+1}=t^{\Delta}=q^k-q$, we have that 
\begin{align*}
\lim_{x\downarrow 1}h(x)=\big(\tfrac{d+1}{d}\big)^dt^{d+1}-d+q-1<\mathrm{e}q^k-\mathrm{e}q-d+q-1<\mathrm{e}q^k-d<0.
\end{align*}
This implies the existence of  $x$ with $h(x)=0$. To prove the uniqueness of the root, we will show that for any root $x>1$ of $h'(x)$, it holds that $h(x)<0$ (note if such $x$ does not exist then we are already done), using the fact that $h$ is differentiable and its derivative is continuous. To see the reason why it is sufficient, note that the number of roots of $h(x)$ over $x>1$ must be odd (because any critical point of $h$ has value less than zero). Assuming towards contradiction, let $x_2>x_1>1$ be the smallest two roots. Then $h'(x_1)>0$ and $h'(x_2)<0$, indicating there must be some $x^{*}\in (x_1,x_2)$ such that $h'(x^{*})=0$. However, in this case $h(x^{*})>0$, which leads to contradiction. 

Next we prove our claim. Take the derivative of $h$ and let it be zero:
\[
h'(x)=d(q'-1)x^{d-1}-\frac{dx^{d-1}}{x-1}+\frac{x^d-1}{(x-1)^2}+\frac{dt^{d+1}x^{d-1}\left(\frac{x^{d+1}-1}{x^d-1}\right)^{d-1}(d-dx+x(x^d-1))}{(x^d-1)^2}=0,
\]
or equivalently, 
\begin{equation} \label{equ:dh_zero}
\left(\frac{x^{d+1}-1}{x^d-1}\right)^{d}t^{d+1}=\frac{(x^d-1)(x^{d+1}-1)(x-x^d(d(q'(x-1)-x)(x-1)+x))}{d(x-1)^2x^d(d-dx+x(x^d-1))}.
\end{equation}
Combining (\ref{equ:r1_unique_expr}) and (\ref{equ:dh_zero}), we obtain that for any $x$ such that $h'(x)=0$, it holds that \[h(x)=\frac{g(x,d,q')}{d(x-1)^2x^{d-1}(d-dx+x(x^d-1))}\] where
\begin{equation} \label{r1_g_expr}
\begin{aligned}
g(x,d,q')&:=dq'(x-1)^2(x+1)(x^d-1)x^{d-1}-(x^d-1)^2(x^{d+1}-1)\\
&\qquad\qquad\qquad\qquad\qquad-d^2x^{d-1}(x-1)^2(1-x^{1+d}+q'(x-1)(x^d+1)).
\end{aligned}
\end{equation}
It is not hard to see that  $d-dx+x(x^d-1)>0$ for any $x>1$, so,  to show $h(x)<0$, it suffices to prove $g(x,d,q')<0$ for all $x>1$. This will follow by showing that 
\begin{equation}\label{eq:ge4455g}
\mbox{$g(x,d,0)<0$ and $g(x,d,q')$ is decreasing in $q'$, for any $x>1$ and $d\geq 3$, } 
\end{equation}
We have $g(x,d,0)/(x^{d+1}-1)=\big(d^2 (x-1)^2 x^{d-1}-(x^d-1)^2\big)$; the last quantity has been shown negative for all $x>1$ in the proof of Lemma~\ref{lem:2_max_stable}. To prove the monotonicity w.r.t. $q'$ note that
\[
\frac{\partial g}{\partial q'}=-d (x-1)^2 x^{d-1} \left(-(x+1) x^d+d (x-1) \left(x^d+1\right)+x+1\right)=:dx^{d-1}(x-1)^2g_1(x)
\]
where $g_1(x):=-\left(-(x+1) x^d+d (x-1) \left(x^d+1\right)+x+1\right)$. Note that
\[g'_1(x)=(d+1)(x^{d-1}(d+x-dx)-1)<0 \mbox{ for } x>1\] 
Since $g_1(1)=0$, we obtain $g_1(x)<0$ for all $x>1$, proving \eqref{eq:ge4455g} and concluding the proof of Lemma~\ref{lem:r1_unique}. 
\end{proof}

\begin{proof}[Proof of Lemma \ref{lem:h2_unique}]
Recall that $  h_2(x):=\left(\frac{x^{d+1}-1}{x^d-1}\right)^dt^{d+1}-\frac{x^d-1}{x-1}+(q-1)x^d
$. We adopt the same idea as the proof of Lemma \ref{lem:r1_unique} by showing that $h_2$ takes negative values at critical points. The estimation of $\lim_{x\to 1}h_2(x)$ is the same as we did in Lemma \ref{lem:r1_unique}. 

Taking the derivative of $h_2$ and setting it to zero, we get
\[
d q x^{d-1}+\frac{d t^{d+1} \left(\frac{x-1}{x^d-1}+x\right)^{d-1} \left(x \left(x^d-1\right)+d (-x)+d\right) x^{d-1}}{\left(x^d-1\right)^2}-\frac{(d+1) x^d}{x-1}+\frac{x^{d+1}-1}{(x-1)^2}=0,
\]
or equivalently,
\[
\left(\frac{x^{d+1}-1}{x^d-1}\right)^dt^{d+1}=\frac{x^{-d} \left(x^d-1\right) \left(x^{d+1}-1\right) \left(d q x^d-2 d q x^{d+1}+d q x^{d+2}+d x^{d+1}+x^{d+1}-d x^{d+2}-x\right)}{d (x-1)^2 \left(-x^{d+1}+d x-d+x\right)}.
\]
By plugging this back into the expression for $h_2(x)$ and simplifying, we obtain that for any $x$ such that $h_2'(x)=0$ it holds that 
$h_2(x)=\frac{g(x,d,q)}{d (x-1)^2 \left(d-dx+x(x^d-1)\right)}$, where
\[
g(x,d,q):=-d^2 (x-1)^2 \left(x^d (q (x-1)-x)+1\right)+d q (x-1)^2 \left(x^d-1\right)-\left(x^d-1\right)^2 \left(x^{d+1}-1\right) x^{1-d}.
\]
Since $d-dx+x(x^d-1)>0$ for any $x>1$, it remains to prove that $g(x,d,q)<0$. Note that $\frac{\partial g(x,d,q)}{\partial q}=d (x-1)^2 \left(x^d (d (-x)+d+1)-1\right)<0
$ for $x>1$ and therefore $g(x,d,q)<g(x,d,0)$. We also have that $g(x,d,0)/(x^{d+1}-1)=\left(d^2 (x-1)^2-x^{1-d} \left(x^d-1\right)^2\right)<0$, where the inequality follows from the argument below \eqref{eq:ge4455g}. Therefore $g(x,d,0)<0$ for all $x>1$, as desired, finishing the proof. 
\end{proof}

\subsection{Proof of Lemma \ref{lem:q00_2spin_eq} and Lemma \ref{lem:q00_2spin_ineq}} \label{sec:proof_q00_2spin}

We will use the following inequality. 
\begin{equation} \label{equ:exp}
\begin{aligned}
\exp\{a\}>\left(1+\frac{a}{b}\right)^b>\exp\left\{\frac{ab}{a+b}\right\} \qquad \text{for all } a,b>0. 
\end{aligned}
\end{equation}

\begin{proof}[Proof of Lemma \ref{lem:q00_2spin_eq}]
Let $p:=tx+q-1$ and assume for the sake of contradiction that $p\geq d$. Let  $w:=p/q^k$ and $c:=d/q^k$, so tha the assumptions of the lemma imply that $w\geq c\geq 5$.  (\ref{equ:sta_x_sol}) gives
\begin{equation*}
    p=q-1+t^{d+1}\left(1+\frac{1}{p}\right)^dq-1+t^{d+1}\exp\left\{\frac{d}{p}\right\}<q-1+q^k\exp\left\{\frac{c}{w}\right\}. 
\end{equation*}
Therefore, $
w<\frac{q-1}{q^k}+\exp\left\{\frac{c}{w}\right\}<\frac{1}{q^{k-1}}+\mathrm{e}<3$, contradicting $w\geq 5$. 
\end{proof}

\begin{proof}[Proof of Lemma \ref{lem:q00_2spin_ineq}]
For any solution $(x,y)$ of (\ref{equ:q00_2spin}), $x$ satisfies the two-step recursion $f(x)=0$, where
\[
f(z)\defeq t^d\left(1+\frac{1}{t\cdot t^{d}\left(1+\frac{1}{tz+q-1}\right)^d+q-1}\right)^d-z.
\]
Take $x$ as the largest root of $f$. Define $c:=d/q^k$. Because $\lim_{x\to\infty}f(x)=-\infty$, to show (b), it suffices to prove $f\left(c^2q^k\frac{q^k}{q^k-q}\right)>0$, or equivalently, 
\begin{equation} \label{equ:eig_proof_val}
\begin{aligned}
\left(1+\frac{1}{(q^k-q)D+q-1}\right)^d>tc^2\left(\frac{q^k}{q^k-q}\right)^2 \qquad\text{ where }\qquad D:=\left(1+\frac{1}{t\left(c^2q^k\frac{q^k}{q^k-q}\right)+q-1}\right)^d. 
\end{aligned}
\end{equation}
Because $D<\exp\left\{\frac{d}{c^2q^k}\right\}<\exp\{\frac{1}{c}\}<1.2215$, 
\begin{align*}
\text{LHS of (\ref{equ:eig_proof_val})}&>\left(1+\frac{1}{1.2215(q^k-q)+q-1}\right)^{cq^k}>2.2674^c,
\end{align*}
where the last inequality follows from \eqref{equ:exp}. 
Moreover, for any $q\geq 4,k\geq 2,d\geq 5q^k$, we have $(q^k/(q^k-q))^2<1.7778$ and $t<1.0312$. Therefore, RHS of \eqref{equ:eig_proof_val} $<1.8332c^2$, which is smaller than $2.2674^c$ whenever $c\geq 5$. This concludes (b). Part (a) follows from (b) and Lemma \ref{lem:q00_2spin_eq}.
\end{proof}

\subsection{Proof of Lemma \ref{lem:exterior}} \label{sec:proof_interior_exterior}

\begin{proof}[Proof of Lemma \ref{lem:exterior}]
Define $s:=\frac{d}{t^{d+1}-1}$. By Proposition \ref{prop:f1_f2_boundary}, any point on $x=1+\frac{s}{d}$ must be on the right of $\mathcal{P}_1^{+}$. Therefore we are interested in the point $(x,y)$ where $x=1+\frac{s}{d}$ and $y=1+\frac{s}{2d}$. Specifically, we will show $f_2(x,y)<0$, which, together with the fact that $\lim_{x\to +\infty}f_2(x,y)=+\infty$ for any fixed $y>1$, implies the existence of $x_E>x$ such that $f_2(x_E,y)=0$. However, in order to apply Lemma \ref{lem:p2_shape}, we further need to show $y<x^{**}$. The latter can be done by proving $f_2(y,y)<0$ due to Lemma \ref{lem:h2_unique}. 

We deal with the latter one first. Assume $q\geq 4,k\geq 3$, or $q\geq 12,k\geq 2$. Then $5<s<5.4962$ , $\frac{q^k-q}{2(q^k-q-1)}<0.5085$ and $\left(1-\frac{q-1}{2(q^k-q-1)}\right)>0.9580$. Set $D:=\left(1+\frac{s}{2d}\right)^d$. By using \eqref{equ:exp}, one can show $D>\exp\{5/2\}>12.1824$. Therefore, 
\begin{align*}
f_2\left(1+\frac{s}{2d},1+\frac{s}{2d}\right)&=1+\left(1+\frac{s\left(1+\frac{1}{-1+D}\right)}{2d}\right)^d\frac{q^k-q}{2(q^k-q-1)}-D\left(1-\frac{q-1}{2(q^k-q-1)}\right)\\\displaybreak[2]
&<1+\exp\left\{\frac{s}{2}\left(1+\frac{1}{-1+D}\right)\right\}\frac{q^k-q}{2(q^k-q-1)}-D\left(1-\frac{q-1}{2(q^k-q-1)}\right)\\\displaybreak[2]
&<1+0.5085\exp\left\{\frac{5.4962}{2}\left(1+\frac{1}{-1+D}\right)\right\}-0.9580D<0,
\end{align*}
where in the last inequality we use the fact that the function is decreasing in $D$. The cases $(q,k)=(4,2),(6,2),(8,2),(10,2)$ also holds by directly computing $f_2$.

The first one can be handled similarly. Denote $E:=\left(1+\frac{s}{d}\right)^d$. Then $D>E^{1/2}$. By using \eqref{equ:exp} again, $E>\exp\{5\}$. Consider the case $q\geq 8, k\geq 3$, or $q\geq 28, k\geq 2$. Then $5<s<5.1921$, $\frac{q^k-q}{2(q^k-q-1)}<0.5010$ and $\left(1-\frac{q-1}{2(q^k-q-1)}\right)>0.9821$. Therefore, 
\begin{align*}
f_2\left(1+\frac{s}{d},1+\frac{s}{2d}\right)&=1+\left(1+\frac{s\left(1+\frac{1}{-1+D}\right)}{d}\right)^d\frac{q^k-q}{2(q^k-q-1)}-E\left(1-\frac{q-1}{2(q^k-q-1)}\right)\\\displaybreak[2]
&<1+\exp\left\{s\left(1+\frac{1}{-1+D}\right)\right\}\frac{q^k-q}{2(q^k-q-1)}-E\left(1-\frac{q-1}{2(q^k-q-1)}\right)\\\displaybreak[2]
&<1+\exp\left\{s\left(1+\frac{1}{-1+E^{1/2}}\right)\right\}\frac{q^k-q}{2(q^k-q-1)}-E\left(1-\frac{q-1}{2(q^k-q-1)}\right)\\\displaybreak[2]
&<1+0.5010\exp\left\{5.1921\left(1+\frac{1}{-1+E^{1/2}}\right)\right\}-0.9821E<0,
\end{align*}
where in the last inequality we use the fact that the function is decreasing in $E$. The remaining cases $(q,k)=(4,3),(6,3),(4,2),(6,2),\cdots,(26,2)$ also holds by directly computing $f_2$.
\end{proof}

\bibliographystyle{alpha} 
\bibliography{refs.bib}

\newcommand{\etalchar}[1]{$^{#1}$}
\begin{thebibliography}{BGG{\etalchar{+}}19}

\bibitem[AK00]{AK00}
Paola Alimonti and Viggo Kann.
\newblock Some {APX}-completeness results for cubic graphs.
\newblock {\em Theor. Comput. Sci.}, 237(1-2):123--134, 2000.

\bibitem[Alo91]{Alon91}
Noga Alon.
\newblock A parallel algorithmic version of the local lemma.
\newblock {\em Random Struct. Algorithms}, 2(4):367--378, 1991.

\bibitem[Bec91]{Beck91}
J{\'{o}}zsef Beck.
\newblock An algorithmic approach to the {L}ov{\'{a}}sz local lemma. {I}.
\newblock {\em Random Struct. Algorithms}, 2(4):343--366, 1991.

\bibitem[BGG{\etalchar{+}}19]{BGGGS19}
Ivona Bez{\'{a}}kov{\'{a}}, Andreas Galanis, Leslie~Ann Goldberg, Heng Guo, and
  Daniel {\v{S}}tefankovi{\v{c}}.
\newblock Approximation via correlation decay when strong spatial mixing fails.
\newblock {\em {SIAM} J. Comput.}, 48(2):279--349, 2019.

\bibitem[CCGL12]{CCGL12}
Jin{-}Yi Cai, Xi~Chen, Heng Guo, and Pinyan Lu.
\newblock Inapproximability after uniqueness phase transition in two-spin
  systems.
\newblock In {\em {COCOA}}, volume 7402 of {\em Lecture Notes in Computer
  Science}, pages 336--347. Springer, 2012.

\bibitem[CFMR96]{randomhyper}
Colin Cooper, Alan Frieze, Michael Molloy, and Bruce Reed.
\newblock Perfect matchings in random $r$-regular, $s$-uniform hypergraphs.
\newblock {\em Combinatorics, Probability and Computing}, 5(1):1--14, 1996.

\bibitem[CGSV21]{polymerschen}
Zongchen Chen, Andreas Galanis, Daniel Stefankovic, and Eric Vigoda.
\newblock Sampling colorings and independent sets of random regular bipartite
  graphs in the non-uniqueness region.
\newblock {\em arXiv}, abs/2105.01784, 2021.

\bibitem[CS00]{CS00}
Artur Czumaj and Christian Scheideler.
\newblock Coloring nonuniform hypergraphs: {A} new algorithmic approach to the
  general {L}ov{\'{a}}sz local lemma.
\newblock {\em Random Struct. Algorithms}, 17(3-4):213--237, 2000.

\bibitem[CST01]{CST01}
Pierluigi Crescenzi, Riccardo Silvestri, and Luca Trevisan.
\newblock On weighted vs unweighted versions of combinatorial optimization
  problems.
\newblock {\em Inf. Comput.}, 167(1):10--26, 2001.

\bibitem[DD20]{DD20}
Andreas Darmann and Janosch D{\"{o}}cker.
\newblock On a simple hard variant of {Not-All-Equal} $3$-{SAT}.
\newblock {\em Theor. Comput. Sci.}, 815:147--152, 2020.

\bibitem[DFJ02]{DFJ02}
Martin~E. Dyer, Alan~M. Frieze, and Mark Jerrum.
\newblock On counting independent sets in sparse graphs.
\newblock {\em {SIAM} J. Comput.}, 31(5):1527--1541, 2002.

\bibitem[EHK98]{EHK98}
Thomas {Emden-Weinert}, Stefan Hougardy, and Bernd Kreuter.
\newblock Uniquely colourable graphs and the hardness of colouring graphs of
  large girth.
\newblock {\em Combin. Probab. Comput.}, 7(4):375--386, 1998.

\bibitem[EL75]{EL75}
P.~Erd\H{o}s and L.~Lov\'{a}sz.
\newblock Problems and results on {$3$}-chromatic hypergraphs and some related
  questions.
\newblock In {\em Infinite and finite sets ({C}olloq., {K}eszthely, 1973;
  dedicated to {P}. {E}rd\H{o}s on his 60th birthday), {V}ol. {II}}, pages
  609--627. Colloq. Math. Soc. J\'{a}nos Bolyai, Vol. 10. 1975.

\bibitem[FGYZ21]{FGYZ21}
Weiming Feng, Heng Guo, Yitong Yin, and Chihao Zhang.
\newblock Fast sampling and counting $k$-{SAT} solutions in the local lemma
  regime.
\newblock {\em J. {ACM}}, 2021.
\newblock to appear.

\bibitem[FHY20]{FHY20}
Weiming Feng, Kun He, and Yitong Yin.
\newblock Sampling constraint satisfaction solutions in the local lemma regime.
\newblock {\em arXiv}, abs/2011.03915, 2020.

\bibitem[FM13]{FM13}
Alan Frieze and Dhruv Mubayi.
\newblock Coloring simple hypergraphs.
\newblock {\em J. Combin. Theory Ser. B}, 103(6):767--794, 2013.

\bibitem[GG16]{hyper2spin}
Andreas Galanis and Leslie~Ann Goldberg.
\newblock The complexity of approximately counting in 2-spin systems on
  $k$-uniform bounded-degree hypergraphs.
\newblock {\em Information and Computation}, 251:36--66, 2016.

\bibitem[GJL19]{GJL19}
Heng Guo, Mark Jerrum, and Jingcheng Liu.
\newblock Uniform sampling through the {L}ov{\'{a}}sz local lemma.
\newblock {\em J. {ACM}}, 66(3):18:1--18:31, 2019.

\bibitem[GLLZ19]{GLLZ19}
Heng Guo, Chao Liao, Pinyan Lu, and Chihao Zhang.
\newblock Counting hypergraph colorings in the local lemma regime.
\newblock {\em {SIAM} J. Comput.}, 48(4):1397--1424, 2019.

\bibitem[GST16]{GST16}
Heidi Gebauer, Tibor Szab{\'{o}}, and G{\'{a}}bor Tardos.
\newblock The local lemma is asymptotically tight for {SAT}.
\newblock {\em J. {ACM}}, 63(5):43:1--43:32, 2016.

\bibitem[G{\v{S}}V15]{galanis2015inapproximability}
Andreas Galanis, Daniel {\v{S}}tefankovi{\v{c}}, and Eric Vigoda.
\newblock Inapproximability for antiferromagnetic spin systems in the tree
  nonuniqueness region.
\newblock {\em J. ACM}, 62(6):50, 2015.

\bibitem[G{\v{S}}V16]{galanis2016inapproximability}
Andreas Galanis, Daniel {\v{S}}tefankovi{\v{c}}, and Eric Vigoda.
\newblock Inapproximability of the partition function for the antiferromagnetic
  {I}sing and hard-core models.
\newblock {\em Combin. Probab. Comput.}, 25(4):500--559, 2016.

\bibitem[HSW21]{he2021perfect}
Kun He, Xiaoming Sun, and Kewen Wu.
\newblock Perfect sampling for (atomic) {L}ov\'{a}sz local lemma.
\newblock {\em arXiv}, abs/{2107.03932}, 2021.

\bibitem[HSZ19]{HSZ19}
Jonathan Hermon, Allan Sly, and Yumeng Zhang.
\newblock Rapid mixing of hypergraph independent sets.
\newblock {\em Random Struct. Algorithms}, 54(4):730--767, 2019.

\bibitem[JPV21a]{JPV21}
Vishesh Jain, Huy~Tuan Pham, and Thuy~Duong Vuong.
\newblock On the sampling {L}ov{\'{a}}sz local lemma for atomic constraint
  satisfaction problems.
\newblock {\em arXiv}, abs/2102.08342, 2021.

\bibitem[JPV21b]{JPV20}
Vishesh Jain, Huy~Tuan Pham, and Thuy~Duong Vuong.
\newblock Towards the sampling {L}ov{\'{a}}sz local lemma.
\newblock In {\em 62nd {IEEE} Annual Symposium on Foundations of Computer
  Science, {FOCS} 2021, Denver, CO, USA, February 7-10, 2022}, pages 173--183.
  {IEEE}, 2021.

\bibitem[JVV86]{JVV86}
Mark Jerrum, Leslie~G. Valiant, and Vijay~V. Vazirani.
\newblock Random generation of combinatorial structures from a uniform
  distribution.
\newblock {\em Theoretical Computer Science}, 43:169--188, 1986.

\bibitem[KKLP97]{KKLP97}
Viggo Kann, Sanjeev Khanna, Jens Lagergren, and Alessandro Panconesi.
\newblock On the hardness of approximating max $k$-cut and its dual.
\newblock {\em Chic. J. Theor. Comput. Sci.}, 1997, 1997.

\bibitem[KST93]{onemore}
Jan Kratochv\'{i}l, Petr Savick\'{y}, and Zsolt Tuza.
\newblock One more occurrence of variables makes satisfiability jump from
  trivial to {NP}-complete.
\newblock {\em SIAM Journal on Computing}, 22(1):203--210, 1993.

\bibitem[LLY13]{LLY13}
Liang Li, Pinyan Lu, and Yitong Yin.
\newblock Correlation decay up to uniqueness in spin systems.
\newblock In {\em {SODA}}, pages 67--84. {SIAM}, 2013.
\newblock Full version from arXiv at \texttt{abs/1111.7064}.

\bibitem[Moi19]{Moi19}
Ankur Moitra.
\newblock Approximate counting, the {L}ov{\'{a}}sz local lemma, and inference
  in graphical models.
\newblock {\em J. {ACM}}, 66(2):10:1--10:25, 2019.

\bibitem[Mos09]{Mos09}
Robin~A. Moser.
\newblock A constructive proof of the {L}ov{\'{a}}sz local lemma.
\newblock In {\em {STOC}}, pages 343--350. {ACM}, 2009.

\bibitem[MR98]{MR98}
Michael Molloy and Bruce~A. Reed.
\newblock Further algorithmic aspects of the local lemma.
\newblock In {\em {STOC}}, pages 524--529. {ACM}, 1998.

\bibitem[MSW07]{martinelli2007fast}
Fabio Martinelli, Alistair Sinclair, and Dror Weitz.
\newblock Fast mixing for independent sets, colorings, and other models on
  trees.
\newblock {\em Random Structures \& Algorithms}, 31(2):134--172, 2007.

\bibitem[MT10]{MT10}
Robin~A. Moser and G{\'a}bor Tardos.
\newblock A constructive proof of the general {L}ov{\'a}sz local lemma.
\newblock {\em J. {ACM}}, 57(2):11, 2010.

\bibitem[MWW09]{mossel2009hardness}
Elchanan Mossel, Dror Weitz, and Nicholas Wormald.
\newblock On the hardness of sampling independent sets beyond the tree
  threshold.
\newblock {\em Probability Theory and Related Fields}, 143(3-4):401--439, 2009.

\bibitem[PP19]{panagiotou}
Konstantinos Panagiotou and Matija Pasch.
\newblock Satisfiability thresholds for regular occupation problems.
\newblock In {\em {ICALP}}, volume 132 of {\em LIPIcs}, pages 90:1--90:14.
  Schloss Dagstuhl - Leibniz-Zentrum f{\"{u}}r Informatik, 2019.

\bibitem[Sly10]{Sly10}
Allan Sly.
\newblock Computational transition at the uniqueness threshold.
\newblock In {\em {FOCS}}, pages 287--296. {IEEE} Computer Society, 2010.

\bibitem[Sri08]{Sri08}
Aravind Srinivasan.
\newblock Improved algorithmic versions of the {L}ov{\'{a}}sz local lemma.
\newblock In {\em {SODA}}, pages 611--620. {SIAM}, 2008.

\bibitem[SS14]{SS14}
Allan Sly and Nike Sun.
\newblock Counting in two-spin models on {$d$}-regular graphs.
\newblock {\em Ann. Probab.}, 42(6):2383--2416, 2014.

\end{thebibliography}

\end{document}